\documentclass[11pt,letterpaper]{article}
\usepackage[T1]{fontenc}
\usepackage[letterpaper,margin=1in]{geometry}
\usepackage{graphicx} % support the \includegraphics command and options
\usepackage{array} % for better arrays (eg matrices) in maths
\usepackage{setspace}
\usepackage{algorithm}
\usepackage{makecell}
\usepackage{tikz}

\usepackage{wrapfig}

\usepackage{amsmath, amssymb, amsfonts, verbatim}
\usepackage{amsthm}
\usepackage{hyphenat,epsfig,subcaption,multirow}
\usepackage{nicefrac}
\usepackage{paralist}
\usepackage{bbold}
\usepackage{mathrsfs}

\usepackage[font=footnotesize,labelfont=bf]{caption}

\usepackage[usenames,dvipsnames]{xcolor}

\DeclareFontFamily{U}{mathx}{\hyphenchar\font45}
\DeclareFontShape{U}{mathx}{m}{n}{
      <5> <6> <7> <8> <9> <10>
      <10.95> <12> <14.4> <17.28> <20.74> <24.88>
      mathx10
      }{}
\DeclareSymbolFont{mathx}{U}{mathx}{m}{n}
\DeclareMathSymbol{\bigtimes}{1}{mathx}{"91}

\usepackage{tcolorbox}
\tcbuselibrary{skins,breakable}
\tcbset{enhanced jigsaw}

\usepackage[normalem]{ulem}
\usepackage[compact]{titlesec}
\usepackage{thmtools}
\usepackage{thm-restate}

\definecolor{DarkRed}{rgb}{0.5,0.1,0.1}
\definecolor{DarkBlue}{rgb}{0.1,0.1,0.5}

\usepackage{nameref}
\definecolor{ForestGreen}{rgb}{0.1333,0.5451,0.1333}
\definecolor{Red}{rgb}{0.9,0,0}
\usepackage[linktocpage=true,
	pagebackref=true,colorlinks,
	urlcolor=black,
	linkcolor=DarkRed,citecolor=ForestGreen,
	bookmarks,bookmarksopen,bookmarksnumbered]
	{hyperref}
\usepackage[noabbrev,nameinlink]{cleveref}
\crefname{property}{property}{Property}
\creflabelformat{property}{(#1)#2#3}
\crefname{equation}{eq}{Eq}
\creflabelformat{equation}{(#1)#2#3}

\usepackage{bm}
\usepackage{url}
\usepackage{xspace}
\usepackage[mathscr]{euscript}

\usepackage{tikz}
\usetikzlibrary{arrows}
\usetikzlibrary{arrows.meta}
\usetikzlibrary{shapes}
\usetikzlibrary{backgrounds}
\usetikzlibrary{positioning}
\usetikzlibrary{decorations.markings}
\usetikzlibrary{patterns}
\usetikzlibrary{calc}
\usetikzlibrary{fit}
\usetikzlibrary{decorations}

\usepackage{mdframed}

\usepackage[noend]{algpseudocode}
\makeatletter
\def\BState{\State\hskip-\ALG@thistlm}
\makeatother

\usepackage[noadjust]{cite}
\usepackage{enumitem}

\usepackage[margin=1in]{geometry}

\newtheorem{lemma}{Lemma}[section]

\newtheorem{corollary}[lemma]{Corollary}
\newtheorem{theorem}[lemma]{Theorem}

\newtheorem{definition}[lemma]{Definition}

\newtheorem*{claim*}{Claim}
\newtheorem*{proposition*}{Proposition}
\newtheorem*{lemma*}{Lemma}
\newtheorem*{problem*}{Problem}

\crefname{lemma}{Lemma}{Lemmas}
\crefname{claim}{Claim}{Claims}

\newtheorem{mdresult}{Result}
\newenvironment{result}{
  \begin{mdframed}[backgroundcolor=lightgray!40,topline=false,rightline=false,leftline=false,bottomline=false,innertopmargin=2pt]
    \begin{mdresult}
}{
    \end{mdresult}
  \end{mdframed}
}

\newtheorem{remark}[lemma]{Remark}
\newtheorem{observation}[lemma]{Observation}

\DeclareMathOperator{\rk}{rk}
\DeclareMathOperator{\opt}{\textsc{opt}}
\DeclareMathOperator{\matroid}{\mathcal{M}}
\DeclareMathOperator{\groundset}{\mathcal{N}}
\DeclareMathOperator{\independentsets}{\mathcal{I}}
\newtheorem{property2}{Property}
\newtheorem{claim2}{Claim}
\usepackage{apptools}

\theoremstyle{definition}
\newtheorem{mdalg}{Algorithm}

\allowdisplaybreaks

\renewcommand{\qed}{\nobreak \ifvmode \relax \else
      \ifdim\lastskip<1.5em \hskip-\lastskip
      \hskip1.5em plus0em minus0.5em \fi \nobreak
      \vrule height0.75em width0.5em depth0.25em\fi}

\newcommand{\card}[1]{\left\vert{#1}\right\vert}

\newcommand{\set}[1]{\ensuremath{\left\{ #1 \right\}}}
\newcommand{\poly}{\mbox{\rm poly}}

\newenvironment{tbox}{\begin{tcolorbox}[
		enlarge top by=5pt,
		enlarge bottom by=5pt,
		 breakable,
		 boxsep=0pt,
                  left=4pt,
                  right=4pt,
                  top=10pt,
                  arc=0pt,
                  boxrule=1pt,toprule=1pt,
                  colback=white
                  ]%%
	}
{\end{tcolorbox}}

\newcommand{\II}{\ensuremath{\mathbb{I}}}

\newcommand{\mireal}[1][]{
  \ifx\relax#1\relax%
    \II(\mione \,; \mitwo)%
  \else%
    \II(\mione \,; \mitwo\mid #1)%
  \fi
}

\title{A Weighted-to-Unweighted Reduction for Matroid Intersection}
\date{}
\author{Aditi Dudeja\thanks{The Chinese University of Hong Kong, Shenzhen. \texttt{aditidudeja@cuhk.edu.cn}. This work was initiated while the author was affiliated with the University of Salzburg. This project has received funding from the European Research Council (ERC) under the European Union's Horizon 2020 research and innovation programme (grant agreement No 947702).} \and Mara Grilnberger\thanks{Department of Computer Science, University of Salzburg, Austria. \texttt{mara.grilnberger@plus.ac.at}. This work has been supported by the EXDIGIT (Excellence in Digital Sciences and Interdisciplinary Technologies) project, funded by Land Salzburg under grant number 20204-WISS/263/6-6022.}}
\begin{document}
\singlespacing
\begin{titlepage}
\clearpage\maketitle
\begin{abstract}
Given two matroids $\matroid_1$ and $\matroid_2$ over the same ground set, the \emph{matroid intersection} problem is to find the maximum cardinality common independent set. In the weighted version of the problem, the goal is to find a maximum weight common independent set. It has been a matter of interest to find efficient approximation algorithms for this problem in various settings. In many of these models, there is a gap between the best known results for the unweighted and weighted versions. In this work, we address the question of closing this gap. Our main result is a \emph{reduction} which converts any $\alpha$-approximate unweighted matroid intersection algorithm into an $\alpha(1-\varepsilon)$-approximate weighted matroid intersection algorithm, while increasing the runtime of the algorithm by a $\log W$ factor, where $W$ is the aspect ratio. Our framework is versatile and translates to settings such as streaming and one-way communication complexity where matroid intersection is well-studied.  As a by-product of our techniques, we derive new results for weighted matroid intersection in these models.

\end{abstract}
\thispagestyle{empty}
\newpage
\tableofcontents
\thispagestyle{empty}
\end{titlepage}
\newpage

\section{Introduction}
Matroid Intersection is a classical problem in combinatorial optimization for which faster algorithms have been a recent topic of interest. It generalizes many graph problems such as bipartite matching, colorful spanning trees, arborescences, tree packing, and disjoint spanning trees.

A matroid $\matroid$ is a tuple $(\groundset,\independentsets)$. Here, the groundset $\groundset$ is a finite collection of elements and $\independentsets\subseteq 2^{\groundset}$ is a family of sets that are independent in $\matroid$. The collection $\independentsets$ satisfies certain properties: \begin{inparaenum}[(i)]
    \item the empty set is independent \item every subset of an independent set is independent ({\bf downward closure}) \item if $A,B\in \mathcal{I}$ such that $|A|>|B|$ then there exist $a\in A\setminus B$ such that $B\cup\{a\}\in \independentsets$ ({\bf exchange property})
\end{inparaenum} The size of a maximum cardinality independent set is called the \emph{rank} of $\matroid$. In general, the rank also extends to arbitrary $S\subseteq \groundset$ and gives the size of the largest independent subset of $S$. Examples of matroids include the graphic matroid which is the collection of all spanning forests of a graph and the linear matroid which is the collection of all bases of a vector space.

\paragraph{Matroid Intersection.} In the problem of matroid intersection, we are given two matroids $\matroid_1=(\groundset,\independentsets_1)$ and $\matroid_2=(\groundset,\independentsets_2)$ with ranks $r_1$ and $r_2$. The goal is to obtain the \emph{maximum cardinality common independent set} of $\matroid_1$ and $\matroid_2$, which is the maximum-sized set independent in both $\matroid_1,\matroid_2$ (we denote its size by $r$). In this paper, we consider the weighted matroid intersection problem. Here, the input consists of two matroids $\matroid'_1=(\groundset',\independentsets'_1,w)$ and $\matroid'_2=(\groundset',\independentsets'_2,w)$ which are equipped with a weight function $w:\groundset \rightarrow \mathbb{R}^{> 0}$ and the goal is to compute the \emph{maximum weight common independent set} of $\matroid'_1$ and $\matroid'_2$. Note that, in general, the intersection of two matroids does not result in a matroid.

\paragraph{Accessing a Matroid.} Since a matroid in general requires exponentially (in $n$) many bits to describe, in many models of interest for this paper, the algorithms for matroid intersection assume oracle access to the matroids. For example, in the \emph{independence oracle} model, the algorithm can query if a set $S\subseteq \groundset$ is independent or not. In the \emph{rank oracle} model, the algorithm can query the rank of a set $S$. In the classical setting, the efficiency of the algorithm is measured by the number of oracle queries made. 

This paper concerns approximation algorithms for matroid intersection, which has been studied in the classical, parallel, streaming, and communication complexity settings. Steady progress (\cite{HuangKK16,ChekuriQ16,ChakrabartyLS0W19,Blikstad21}) have been made in the classical setting, in the $(1-\varepsilon)$-approximation regime, culminating in the works of \cite{BlikstadT25} and \cite{Quanrud24}. The former give an algorithm that computes a $(1-\varepsilon)$-approximate maximum cardinality independent set using $O(\frac{n\log n}{\varepsilon}+\frac{r\log^3 n}{\varepsilon^5})$ independence queries. The latter gave an algorithm that computes a $(1-\varepsilon)$-approximate maximum weight independent set using $O(\frac{n\log n}{\varepsilon}+\frac{r^{1.5}}{\varepsilon^4})$ independence queries. Beyond the classical setting, \cite{HuangS24,Terao24} studied the cardinality version of the matroid intersection problem in streaming and communication complexity settings (see \Cref{sec:applications} for a formal definition of these models). To the best of our knowledge, there are no weighted analogues of these results. 

The above discussion suggests that there is a gap between the best known results for weighted and unweighted versions of approximate matroid intersection. The question of closing this gap was first addressed by \cite{HuangKK16}, who showed that one can reduce the problem of $(1-\varepsilon)$-approximate maximum weight common independent set to the problem of solving $\frac{\log r}{\varepsilon}$ instances of maximum cardinality common independent set. \cite{CrouchS14} made progress on this question in the streaming setting. They showed that any $\alpha$-approximate maximum cardinality independent set algorithm can be converted into a $0.5\cdot (1-\varepsilon)\cdot\alpha$-approximate maximum weight independent set algorithm while preserving the pass complexity of the algorithm and incurring an additional $\log n$ factor in the space complexity. 

We continue this line of inquiry and obtain the following result (see \Cref{lem:staticreduction}, \Cref{lem:streamingreduction} and \Cref{lem:reductioncommcomplexity} for formal statements).
\begin{result}\label{result:main}
Let $\varepsilon>0$ and $\gamma_{\varepsilon}=\lceil(1/\varepsilon)^{1/\varepsilon}\rceil$. Let $W$ be the weight ratio of the matroids. We have for query, streaming, and communication models:
\begin{enumerate}
\item Given an $\alpha$-approximate unweighted matroid intersection algorithm $\mathcal{A}_u$ that makes $T_u(n,r)$ independence (resp., rank) queries, then there is an $\alpha(1-\varepsilon)$-approximate algorithm $\mathcal{A}_w$ for the weighted version that makes $O( T_{u}(n\gamma_{\varepsilon},r\gamma_{\varepsilon})\cdot\log W)$ independence (resp., rank) queries. 
\item Given an $\alpha$-approximate unweighted matroid intersection streaming algorithm $\mathcal{A}_u$ that takes space $S_u(n,r)$ and $p$ passes, there is an $\alpha(1-\varepsilon)$-approximate streaming algorithm $\mathcal{A}_w$ for the weighted version that takes space $S_u(\gamma_{\varepsilon}n,\gamma_{\varepsilon}r)$ and $p$ passes. 
\item Given $\mathcal{P}_{u}$, an $\alpha$-approximate one-way communication protocol for unweighted matroid intersection with message complexity $M_u(n,r)$, there exists an $\alpha(1-\varepsilon)$-approximate one-way communication protocol $\mathcal{P}_w$ for weighted matroid intersection with message complexity $M_u(\gamma_{\varepsilon}n,\gamma_{\varepsilon}r)$.
\end{enumerate}
In all of the above if the unweighted algorithm/protocol is deterministic, then the same is true of the weighted algorithm/protocol. 
\end{result}
\cite{HuangS24} introduced the notion of robust sparsifiers for matroid intersection. With our techniques we can extend the same notion to the problem of weighted matroid intersection. Matroid intersection captures many graph problems, and its specific instances such as bipartite $b$-matching have been studied in other settings such as MPC, distributed black-board model, and parallel shared memory work-depth model. Our structural results apply to bipartite $b$-matching as well, and therefore improve on or match the state-of-the-art for the weighted version of this problem in these settings.~\footnote{We note that it is plausible that algorithms for weighted bipartite $b$-matching can be obtained by using other techniques. We still mention it since our result offers a unified perspective on variants of the matroid intersection problem.} We discuss these applications in \Cref{sec:applications} and summarize them in \Cref{table:summary of results}. Lastly, our result builds on the work of \cite{BernsteinDL21}, who gave such a reduction for the case of bipartite matching. Their reduction has subsequently been used as a subroutine in other matching algorithms \cite{HashemiW24,LiuKK23,HuangS23}. Given the generality of matroid intersection and our reduction, we hope that our techniques will also have future applications. 

\paragraph{Related Work and Dependence on $\varepsilon$.}
We discuss some other works pertaining to the matroid intersection problem. \cite{GargJS21} gave a $(0.5-\varepsilon)$-approximate one-pass semi-streaming algorithm for weighted matroid intersection. \cite{Terao24,Quanrud24} gave multi-pass semi-streaming algorithms for the unweighted and weighted problems, respectively. In the query model, \cite{ChekuriQ16} gave a scaling algorithm for weighted matroid intersection in the $(1-\varepsilon)$-regime which uses unweighted $(1-\varepsilon)$-approximate matroid intersection algorithms which satisfy some special properties (namely, \cite{Cunningham86,Blikstad21}) in each scale, incurring an overhead of $O(\varepsilon^{-2}\log \varepsilon^{-1})$.\footnote{To the best of our knowledge, \cite{ChekuriQ16} require that the unweighted matroid intersection algorithm in each scale compute a maximal set of augmenting paths. We discuss this further in \Cref{sec:futurework}.} Given the state of affairs, it is reasonable to ask if there is a cross-paradigm weighted-to-unweighted reduction for matroid intersection applicable for \emph{arbitrary} approximation ratios with a polynomial dependence on $\varepsilon^{-1}$. We discuss some barriers to this in the Future Work section (\Cref{sec:futurework}). 

Lastly, we note that while we focus on obtaining efficient approximation algorithms, there is a long line of work that considers the exact version of matroid intersection \cite{Edmonds03,AignerD71,Lawler75,Cunningham86,ChakrabartyLSSW19,Nguyen19,BlikstadBMN21,Blikstad21}. In several instances, approximation algorithms for both the unweighted and weighted problem are used to speed up the exact algorithms. For such an approach, having a small dependence on $\varepsilon$ in the number of queries is desirable. Our reduction on the other hand, incurs a large $\varepsilon$-dependence and is therefore not suitable for such applications.  

\section{Overview of Techniques}

Before we give the high-level overview of our techniques, we introduce some notation. We also mention the dual linear program for the matroid intersection problem, which will be crucial to our proofs. 
\paragraph{Notation.} Each matroid $\matroid=(\groundset,\independentsets)$ is associated with a rank function, $\rk:2^{\groundset}\rightarrow \mathbb{N}$. For $S\subseteq \groundset$, we use $\rk(S)$ to denote the size of the maximal independent subset of $S$. The rank of a matroid $\rk(\matroid)$ denotes the size of the maximal independent set in $\matroid$. For a matroid intersection instance $\matroid_1,\matroid_2$, we will often use $r_1,r_2$ as shorthand for $\rk(\matroid_1),\rk(\matroid_2)$, respectively. Moreover, let $r$ denote the rank of the intersection. We will use $\opt$ to denote the size of the maximum cardinality common independent set of $\matroid_1,\matroid_2$. For a set $S \subseteq \mathcal{N}$, the matroid $\mathcal{M}$ restricted to $S$ is defined as $\mathcal{M} | S := (S, \mathcal{I}|S)$ where $\mathcal{I}|S := \{I \subseteq S | I \in \mathcal{I}\}$. 

Throughout this paper, we will use ``prime'' to distinguish weighted matroids from unweighted ones. An unweighted matroid will be denoted by $\matroid$, while a weighted matroid will be referred to as $\matroid'$. The same is true for associated sets (such as $\groundset'$) and functions (such as $\rk'$). We let $W=\frac{\max_{e\in \mathcal{N}'}w(e)}{\min_{e\in \mathcal{N}'}w(e)}$.

\paragraph{Linear Programming for Matroid Intersection.} In this paper, we will use duality for some of our arguments. For the unweighted matroid intersection problem over $\mathcal{M}_1=(\mathcal{N},\mathcal{I}_1)$ and $\mathcal{M}_2=(\mathcal{N},\mathcal{I}_2)$ we give the standard linear program relaxation. Let $\rk_1$ and $\rk_2$ denote the rank functions associated with $\mathcal{M}_1$ and $\mathcal{M}_2$, respectively. Let $x: \groundset \rightarrow \mathbb{R}^{\geq 0}$ and $x(S) := \sum_{e \in S} x(e)$ for all $S \subseteq \groundset$, the linear program relaxation is given by the following:
\begin{equation}\label{eqn:primalunweighted}
\begin{array}{ll@{}ll}
\text{maximize}  & x(\groundset)&\\
\text{subject to}& \\
&x(S) \leq \rk_1(S) \text{ and } x(S) \leq \rk_2(S)\text{ for all } S \subseteq \mathcal{N}.
\end{array}
\end{equation}
For the linear program relaxation of the weighted problem with weight function $w : \groundset \rightarrow \mathbb{R}^{\geq 0}$, the sum $\sum_{e \in \groundset} x(e) w(e)$ is maximized under the same constraints.
Now, we give the dual linear programs for the relaxation of the matroid intersection problem as presented in \Cref{eqn:primalunweighted} over $\mathcal{M}_1$ and $\mathcal{M}_2$. Let $y,z: 2^{\mathcal{N}}\rightarrow \mathbb{R}^{\geq 0}$, the dual problem corresponds to:
\begin{equation}\label{eqn:dualunweighted}
\begin{array}{ll@{}ll}
\text{minimize}  & \displaystyle f(y,z)=\sum_{S \subseteq \mathcal{N}} y(S) \rk_{1}(S) + z(S) \rk_{2} (S)&\\
\text{subject to}& \\
&\sum_{S: e \in S} y(S) + z(S) \geq 1 \text{ for all } e \in \mathcal{N}
\end{array}
\end{equation}

Suppose $\mathcal{M}'_1=(\mathcal{N}',\mathcal{I}'_1,w)$ and $\mathcal{M}'_2=(\mathcal{N}',\mathcal{I}'_2,w)$ is an instance of weighted matroid intersection. Let $\rk'_1$ and $\rk'_2$ be the rank functions associated with them. Let $y',z':2^{\mathcal{N}}\rightarrow \mathbb{R}^{\geq 0}$, the associated dual linear program is defined as follows:

\begin{equation}\label{eqn:dualweighted}
\begin{array}{ll@{}ll}
\text{minimize}  & \displaystyle g(y',z')=\sum_{S \subseteq \mathcal{N}} y'(S) \rk'_{1}(S) + z'(S) \rk'_{2} (S)&\\
\text{subject to}& \\
&\sum_{S: e \in S} y'(S) + z'(S) \geq w(e) \text{ for all } e \in \mathcal{N}'
\end{array}
\end{equation}

We will also need the following result on the structure of duals (see \cite[Chapter 41]{Schrijver03} for a proof).

\begin{lemma}[\cite{Edmonds03}]\label{lem:chaineddualsexistence}
    We say that a sequence of sets $S_1,S_2,\cdots, S_l$ form a chain if $S_1\subsetneq S_2\subsetneq S_3\subsetneq\cdots\subsetneq S_l$. There exists optimal solutions $y,z$ for Linear Program \ref{eqn:dualunweighted} such that $\textup{supp}(y)$ forms a chain and $\textup{supp}(z)$ forms a chain. This also applies to Linear Program \ref{eqn:dualweighted}. Furthermore, the optimal solutions to Linear Program \ref{eqn:dualunweighted} are always integral. For Linear Program \ref{eqn:dualweighted} if $w$ is integral, then there exists optimal integral dual solutions $y,z$ such that $\text{supp}(y),\text{supp}(z)$ are chains.
\end{lemma}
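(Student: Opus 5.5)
Edmonds' lemma (\Cref{lem:chaineddualsexistence}) is proved in two stages: an uncrossing argument that produces chain-supported optimal duals, and a total unimodularity argument for integrality. Throughout I work with the weighted program \ref{eqn:dualweighted}; the unweighted program \ref{eqn:dualunweighted} is the special case $w\equiv 1$.

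\emph{Uncrossing.} Among all optimal dual solutions $(y,z)$, choose one maximizing the potential $\sum_S y(S)|S|^2+\sum_S z(S)|S|^2$. An optimum exists because the primal is feasible and bounded, and the maximizer exists since the optimal face is a compact polytope (after restricting to finitely many coordinates).

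Suppose $\textup{supp}(y)$ contains two crossing sets $A,B$, meaning neither contains the other. Let $\delta=\min(y(A),y(B))$. Decrease $y(A)$ and $y(B)$ by $\delta$, and increase $y(A\cup B)$ and $y(A\cap B)$ by $\delta$.
\begin{itemize}
\item Feasibility is kept: for every element $e$, the coverage $\sum_{S\ni e}y(S)$ is unchanged, since $\mathbb{1}_A+\mathbb{1}_B=\mathbb{1}_{A\cup B}+\mathbb{1}_{A\cap B}$.
\item The objective does not increase, by submodularity $\rk_1(A\cup B)+\rk_1(A\cap B)\le \rk_1(A)+\rk_1(B)$. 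Hence the new solution is still optimal.
\item The potential strictly increases, because $|A\cup B|^2+|A\cap B|^2>|A|^2+|B|^2$ whenever $A,B$ cross. This follows from convexity with the sum $|A|+|B|$ fixed and the pair spread further apart.
\end{itemize}
This contradicts the choice of $(y,z)$. So $\textup{supp}(y)$ is laminar and pairwise comparable, i.e.\ a chain. The same argument applies to $z$.

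\emph{Integrality.} Fix the chains $\mathcal{C}_1=\textup{supp}(y)$ and $\mathcal{C}_2=\textup{supp}(z)$. Restricted to the variables indexed by $\mathcal{C}_1\cup\mathcal{C}_2$, the solution $(y,z)$ is optimal for the dual LP with only those variables, since that LP's feasible region is contained in the full one. The constraint matrix of this restricted LP has rows indexed by elements and columns indexed by the sets of the two chains. It is the union of two interval-like families, and such a matrix is totally unimodular.

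To see this, order each chain increasingly and take differences of consecutive columns within a chain. This change of basis is unimodular and turns each chain into a partition of the ground set. The result is the incidence matrix of a bipartite graph, with one side the parts of the first partition and the other side the parts of the second. That matrix is totally unimodular.

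Therefore, when $w$ is integral, the restricted LP has an integral optimal vertex. Its support lies in the two chains, and its value equals the full optimum, so it is the required integral chain-supported optimal dual.

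For \ref{eqn:dualunweighted} the right-hand side is all ones, so integral chain-supported optimal duals exist by this argument. The stated claim that \emph{all} optimal dual solutions of \ref{eqn:dualunweighted} are integral is false as literally written. I read it as asserting the existence of an integral optimum, and I would prove that version and flag the discrepancy.

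\emph{Main obstacle.} I expect the main difficulty to be the total unimodularity step: making the column-differencing argument precise, and checking that the restricted LP has the same optimum value (which the uncrossing step already guarantees). The uncrossing step itself is routine once submodularity of the rank function is invoked.
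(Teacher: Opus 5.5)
The paper gives no proof of this lemma; it defers to Edmonds and to Schrijver, Chapter~41. Your two-stage plan (uncrossing, then total unimodularity of the incidence matrix of two chains) is the standard argument from that source. Your remark that ``always integral'' is false as stated is also correct: a single non-loop element already admits the optimum $y(\{e\})=z(\{e\})=\tfrac12$. However, two steps fail as written.

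\textbf{The maximizer you uncross need not exist.} The optimal face is not compact. Any $S$ with $\rk_1(S)=0$ (the empty set, or a nonempty set of loops of $\mathcal{M}_1$) costs nothing and only helps the covering constraints, so $y(S)$ can be raised without bound while staying optimal. If $\mathcal{M}_1$ has a loop $e$, raising $y(\{e\})$ sends $\sum_S y(S)|S|^2$ to $+\infty$ on the optimal face, so there is nothing to uncross.

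The fix, as in Schrijver, is to \emph{minimize} $\Phi=\sum_S (y(S)+z(S))\,|S|\,(|\mathcal{N}|-|S|)$ instead. Since $\Phi\ge 0$, it attains its minimum on the nonempty optimal face. Uncrossing preserves $\sum_S y(S)|S|$, which is the total coverage. Hence it changes $\Phi$ by exactly $-\delta\,(|A\cup B|^2+|A\cap B|^2-|A|^2-|B|^2)<0$, and your convexity computation carries over verbatim.

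\textbf{The column-differencing does not establish total unimodularity.} TU is not invariant under unimodular column operations: adding twice the first column of the $2\times 2$ identity to the second creates an entry $2$. In LP terms, the substitution $v_i=\sum_{j\ge i}u_j$ along a chain makes the covering rows bipartite. But it turns $u\ge 0$ into the monotonicity constraints $v_1\ge v_2\ge\cdots\ge v_m\ge 0$, which your bipartite matrix omits. A vertex of $\{v\ge 0 : Bv\ge w\}$ need not be monotone, and then it corresponds to no feasible $u$. You can close this gap in either of two ways.
\begin{enumerate}
\item Keep the monotonicity rows and negate the variables of the second chain. Then every row of the whole system has at most one $+1$ and at most one $-1$, so it is the transpose of a directed-graph incidence matrix, hence TU. Integral vertices in $v$ pull back to integral $u$ because the substitution is unimodular.
\item Prove directly, via Ghouila-Houri, that the incidence matrix with rows indexed by elements and columns indexed by the sets of both chains is TU. For any chosen set of columns, sign the chosen sets of the first chain $+,-,+,\dots$ from the largest down, and those of the second chain $-,+,-,\dots$ from the largest down. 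Every element lies in a top segment of each chain, so its two partial row sums lie in $\{0,1\}$ and $\{-1,0\}$, and the total lies in $\{-1,0,1\}$.
\end{enumerate}

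With either repair, the restricted LP is pointed and bounded below (its costs are ranks). Hoffman--Kruskal then gives an integral optimal vertex supported on the two chains, and the rest of your argument goes through.
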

\subsection{Our Framework}

We follow the overall framework of \cite{BernsteinDL21}, who obtained \Cref{result:main} for the specific case of bipartite matching. This framework has since been a useful subroutine many matching algorithms \cite{HuangS23,LiuKK23}. The reduction framework will have three steps. In the following $\opt$ will refer to the size of the maximum cardinality common independent set of two unweighted matroids and $\opt'$  will refer to the weight of the maximum weight common independent set of two weighted matroids.

\begin{enumerate}
    \item\label{step:one} {\bf Aspect Ratio Reduction:} Suppose we are given two arbitrary matroids $\matroid'_1=(\groundset',\independentsets'_1,w)$ and $\matroid'_2=(\groundset',\independentsets'_2,w)$ with $w:\groundset'\rightarrow \mathbb{R}^{> 0}$. In this step we show that we can transform $\matroid'_1,\matroid'_2$ into matroids whose groundset elements have weights $\set{1,2,\cdots, \gamma_{\varepsilon}}$ and the maximum weight common independent set of the new matroids has weight at least $(1-\varepsilon)\opt'$.
    \item\label{step:weightedtounweighted} {\bf Weighted-to-Unweighted Conversion:} Suppose we are given matroids $\matroid'_1=(\groundset',\independentsets'_1,w)$ and $\matroid'_2=(\groundset',\independentsets'_1,w)$ such that $w(e)\in \{1,2,\cdots, W\}$ are integers for all $e\in \groundset'$. In this step, we will convert $\matroid'_1,\matroid'_2$ into unweighted matroids $\matroid_1=(\groundset,\independentsets_1)$ and $\matroid_2=(\groundset,\independentsets_2)$ (a process we refer to as matroid unfolding). Let $\opt$ be the size of the maximum cardinality common independent set of $\matroid_1,\matroid_2$. Then, this reduction will ensure that $\opt'=\opt$. Additionally, $|\groundset|\leq W\cdot |\groundset'|$, $\opt\leq r\cdot W$, and $\rk(\matroid_i)\leq W\cdot \rk'(\matroid'_i)$ for $i\in \{1,2\}$. 
\end{enumerate}
Before we mention the final step, we note that despite being given weighted matroids $\matroid'_1,\matroid'_2$ with an arbitrary weight function, we can assume by Step \ref{step:one} that element weights are integers in $\{1,2,\cdots, \gamma_{\varepsilon}\}$. Combining this assumption with Step \ref{step:weightedtounweighted} gives us unweighted matroids $\matroid_1,\matroid_2$ that are only a $\gamma_{\varepsilon}$-factor larger in size. Thus, we can apply any $\alpha$-approximate unweighted matroid intersection algorithm on $\matroid_1,\matroid_2$ to obtain $I\in \independentsets_1\cap\independentsets_2$ such that $|I|\geq \alpha \cdot \opt\geq \alpha\cdot (1-\varepsilon)\cdot \opt'$. 
\begin{enumerate}
    \item[3.] {\bf Refolding:} At this point assume by the previous discussion that we have $I\in \independentsets_1\cap \independentsets_2$. We would like to ``reverse'' the operation of unfolding and ``refold'' $I$ back in the original matroids $\matroid'_1,\matroid'_2$. The refolding process outputs $I'\subseteq \groundset'$. Note that $I'$ may not necessarily be an independent set, but we are able to show that there exists $J'\subseteq I'$ such that $J'\in \independentsets'_1\cap \independentsets'_2$ and $w(J')\geq (1-\varepsilon)\cdot \alpha\cdot \opt'$. The key advantage is that $\card{I'}=O(\gamma_{\varepsilon}\cdot r)$. Intuitively, it seems more efficient to extract a maximum weight independent set from a small subset of $\groundset'$.
\end{enumerate}

Our aspect ratio reduction technique also extends to the matroid parity problem, as we argue in \Cref{sec:aspectratio}. Additionally, we note that while the overall framework we follow is that of \cite{BernsteinDL21}, implementing it requires new matroid specific arguments, which we give an overview of. 

\paragraph{Aspect-Ratio Reduction. }Our starting point is the work of \cite{GuptaP13}. At a high-level our transformation to reduce the aspect-ratio of matroids $\matroid'_1=(\groundset',\independentsets'_1,w)$ and $\matroid'_2=(\groundset',\independentsets'_2,w)$ (where $w:\groundset'\rightarrow \mathbb{R}^{> 0}$) will proceed as follows. We will split $\groundset'$ into groups with geometrically increasing weights, so that within a group the ratio between weights is $\varepsilon^{-1}$. Then, we will delete one group for every $\varepsilon^{-1}$ groups (we refer to these as \emph{missing groups}), and merge consecutive groups into \emph{weight classes}. Each weight class $L_{i}$ now has a weight ratio of $\gamma_{\varepsilon}$. Within each class we would like to maintain for each $i\geq 0$, a common independent set of $\matroid'_1\mid L_i$ and $\matroid'_2\mid L_i$ called $I'_i$. The idea is to show that a) there exist $I'\subseteq \cup_{i\geq 0}I'_{i}$ such that $I'\in \independentsets'_1\cap \independentsets'_2$ and $w(I')\geq (1-\varepsilon)\cdot \sum_{i\geq 0} w(I'_i)$  and b) we can extract $I'$ from $\cup_{i\geq 0 }I'_i$ efficiently. 

\cite{GuptaP13} do the above procedure for the case of matchings. Let $\set{M_i}_{i\geq 0}$ be the matchings. They combine $M_i$'s greedily (in descending order of weight classes) to obtain $M$. Every edge in a higher weight class can be “blamed” for excluding at most two edges from each of the lower weight classes. Since there is a gap of $\varepsilon^{-1}$ between two consecutive weight classes, an edge $e$ in a higher weight class only blocks out a total weight of $4\varepsilon\cdot w(e)$ from the lower weight classes. Thus, greedy aggregation leads to a loss of weight at most $4\varepsilon\cdot \sum_{i}w(M_i)$, and therefore, $w(M)\geq (1-\varepsilon)\cdot \sum_{i}w(M_i)$.

For the case of matroids we also combine $I'_i$'s greedily (in descending order of weight classes) to obtain $I'$. However, we cannot hope to make a ``local'' charging argument as the one above. Instead, we show a structural lemma (\Cref{lem_charging}) which suggests that if $I'$ excludes $l$ elements of $I'_j$, then there are  at least $\frac{l}{2}$ elements of $I'\setminus\cup_{f\leq j} I'_f$ which can be ``blamed'' for the excluded elements. Thus, on average, each element of a higher weight class gets blamed by at most two elements from each of the lower weight classes. With this more ``global'' argument we are able to recover the charging argument mentioned above and conclude that $w(I')\geq (1-4\varepsilon)\cdot \sum_{i}w(I'_i)$. Finally, one can show by a simple pigeonhole argument that there is a choice of \emph{missing groups} such that $\sum_{i}w(I'_i)\geq (1-\varepsilon)\cdot\opt'$ and therefore, $w(I')\geq (1-5\varepsilon)\cdot \opt'$.

\paragraph{Weighted-to-Unweighted Reduction. } Our starting point is the graph unfolding technique of \cite{KaoLST01}, who give a weighted-to-unweighted reduction for matchings in integer-weighted graphs. Given a weighted graph $G'$, they create an unweighted graph $G$ by replacing each edge $e\in G'$ with $w(e)$-many edges in $G$ such that $\mu(G)=\mu_{w}(G')$ (here $\mu(G)$ refers to the size of the maximum cardinality matching and $\mu_{w}(G')$ refers to the weight of the maximum weight matching of $G'$).

Suppose we are given two matroids $\matroid'_1=(\groundset',\independentsets'_1,w)$ and $\matroid'_2=(\groundset',\independentsets'_2,w)$ with integer weights and a maximum weight common independent set of weight $\opt'$. We create matroids $\matroid_1=(\groundset,\independentsets_1)$ and $\matroid_2=(\groundset,\independentsets_2)$, with maximum cardinality common independent set of size $\opt$, by replacing each element $e\in \groundset'$ with $w(e)$ copies of the element $\{e_{1},\cdots, e_{w(e)}\}$. However, we need to impose additional structure on $\matroid_1,\matroid_2$ so that \begin{inparaenum}[(a)]
    \item\label{item:opt} $\opt=\opt'$ \item\label{item:queries} We can implement independence (resp. rank) queries to $\matroid_1,\matroid_2$ using a small number of independence (resp. rank) queries to $\matroid'_1,\matroid'_2$.
\end{inparaenum} In order to achieve this, we examine the structure of the solution for the dual program of weighted matroid intersection mentioned in LP \ref{eqn:dualweighted}. To ensure $\opt\leq \opt'$, we assume that given an optimal dual solution $y', z'$, we want to create a suitable solution $y, z$ for \ref{eqn:dualunweighted}. To this end, we can ``split'' the value of $y'(S')$ (and $z'(S')$) for each $S' \subseteq \groundset'$ among a collection of sets $\mathcal{S}_y$ (and $\mathcal{S}_z$), such that $\sum_{S \in \mathcal{S}_y} y(S) \leq y'(S')$ and $\rk_1(S) \leq \rk'_1(S')$ for all $S \in \mathcal{S}_y$ (analogously $\rk_2(T) \leq \rk'_2(T')$ for all $T \in \mathcal{T}_z$). To achieve the latter, the independence-structure of $S$ in $\matroid_1$ should match the structure of $S'$ in $\matroid_1'$ (similar for $\matroid_2$). Simultaneously, the constraints of LP \ref{eqn:dualunweighted} need to be satisfied. We can achieve this if we can also ensure $\sum_{S \in \mathcal{S}_y} y(S) \geq y'(S')$ (similarly for $z'$). Intuitively, for each matroid $\matroid_1, \matroid_2$, we want to partition the new groundset $\groundset$ into $W$ sets $N_1, N_2, \ldots, N_W$, where each set contains at most one copy of any original element in $\groundset'$. To create the new matroids $\matroid_1, \matroid_2$, the copied elements in $N_i$ inherit the independence-structure of the corresponding original elements from $\matroid_1, \matroid_2$. Further, the structure given to elements in $N_i$ should not interfere with the structure imposed upon $N_j$ for all $j \neq i$, i.e. the circuits of the resulting matroid are contained in one set $N_i$ for $i \in [W]$. The remaining question is how to create such partitions of $\groundset$, which inform the new independence structures given by $\independentsets_1, \independentsets_2$. \Cref{lem:chaineddualsexistence} implies there exist optimal integral dual solutions $y', z'$ that form chains. That is, let $\text{supp}(y')=\set{S^1,\cdots, S^l}$ and $\text{supp}(z')=\set{T^1,\cdots, T^k}$, then:
\begin{itemize}
    \item $S^l\subsetneq S^{l-1}\subsetneq S^{l-2}\subsetneq\cdots\subsetneq S^1$, and
    \item $T^{k}\subsetneq T^{k-1}\subsetneq T^{k-2}\subsetneq\cdots\subsetneq T^1$
\end{itemize}

This lemma gives us some hint about the right definition for $\independentsets_1,\independentsets_2$. For each set $S' \in \text{supp}(y')$ and each $T' \in \text{supp}(z')$, we now describe the sets in $\mathcal{S}_y$ and $\mathcal{T}_z$ from the explanation above.  For a fixed $j\in [l]$ and  $f\in \set{\sum_{i=1}^{j-1} y'(S^{i})+1,\cdots, \sum_{i=1}^{j} y'(S^{i})}$ let $S^{j}_{f}:=\set{e_f\mid e\in S^j}$ and $y(S^j_f)=1$. Similarly, for $j\in [k]$, $f\in \set{\sum_{i=1}^{j-1} z'(T^{i})+1,\cdots, \sum_{i=1}^{j} z'(T^{i})}$ let $T^{j}_{f}:=\set{e_{w(e)-f+1}\mid e\in T^j}$ and $z(T^j_f)=1$. Hence, this informs the partitions of $\groundset$, that ``inherit'' the original independence-structure of $\matroid'_1$ and $\matroid'_2$ (\Cref{def:matroidunfolding}). For $\matroid_1$, the sets in the partition are formed by elements $\{e_j | e \in \groundset' \text{ and } j \leq w(e)\}$ for all $j  \in [W]$. Similarly, $\matroid_2$ is partitioned into sets $\{e_{w(e) - j + 1} | e \in \groundset' \text{ and } j \leq w(e)\}$ for all $j  \in [W]$, as suggested by the dual. An example of this conversion process can be seen in \Cref{fig:matroidUnfoldingFigure}. After some careful arithmetic, one can show that $f(y,z)=g(y',z')$ (\Cref{thm:equivalence}). Moreover, one can also show the feasibility of the duals $y,z$ (\Cref{lem:feasibilityyz}). In particular, we show that for each $e\in \groundset'$, $e_i$ is covered by either $y$ or $z$ for all $i\in [w(e)]$.

\begin{itemize}
    \item Let $j\in [l]$ be the index of the last set $S^j$ such that $e\in S^j$. By the chain property,  $\set{e_{1},e_2,\cdots, e_{\sum_{f\geq 1}^j y'(S^f)}}$ are covered by $y$. Similarly, let $p\in [k]$ be the index of the last set $T^p$ such that $e\in T^p$. By the chain property, $\set{e_{w(e)},e_{w(e)-1},\cdots, e_{w(e)-\sum_{f\geq 1}^pz'(S^f)+1}}$ is covered by $z$. By feasibility of $y',z'$ one can conclude that $\{e_1,\cdots, e_{w(e)}\}$ is covered by either $y$ or $z$.
\end{itemize}
This motivates our \Cref{def:matroidunfolding} for the matroids $\matroid_1,\matroid_2$, describing the new independent sets for the unweighted matroids created from the original weighted matroids $\matroid'_1,\matroid'_2$. Our main technical contribution for this part of the framework is to prove various lemmas that show the two goals (\ref{item:opt}) and (\ref{item:queries}) we mentioned above can be achieved.

\begin{figure}
    \centering
            \scalebox{0.7}{
			\begin{tikzpicture}[
				roundnode/.style={circle, draw=black, thin,minimum size=1mm},
				helpnode/.style={circle, draw=white, thin,minimum size=1mm},
			]
			
				\node[roundnode]  (A) at (0, 0) {};
				\node[roundnode]   (C) at (1.5, -1) {};
				\node[roundnode]   (D) at (0, -2) {};

				\node[roundnode]  (E) at (-0.5, -3.5) {};
				\node[roundnode]   (F) at (-0.5, -5.5) {};
				\node[roundnode]   (G) at (1.2, -3.5) {};
				\node[roundnode]   (H) at (1.2, -5.5) {};
				
				\draw[dashed] (-2, -2.75) -- (12,-2.75);
				\node[label=left:$\mathcal{M}'_1:$]   (i1) at (-1, 0.3) {};
				\node[label=left:$\mathcal{M}'_2:$]   (i2) at (-1, -3.2) {};

				\node[label=left:$\mathcal{M}_1:$]   (i3) at (4.5, 0.3) {};
				\node[label=left:$\mathcal{M}_2:$]   (i4) at (4.5, -3.2) {};

				%\draw[thick, color=blue] (A) -- (B);
				\draw[thick] (A) -- (C) node [midway, above=2pt] {$b, 1$};
				\draw[thick, blue] (C) -- (D) node [midway, below=2pt] {$c, 2$};
				\draw[thick, blue] (A) -- (D) node [midway, right] {$d, 2$};

				\path (A.west) edge [out=200,in=160] node [midway, left] {$a, 3$} (D.west);

				\path (E.west) edge [out=220,in=140] node [midway, left] {$a, 3$} (F.west);
				\path (G.east) edge [out=320,in=40] node [midway, right] {$b, 1$} (H.east);
				\draw[thick, blue] (E) -- (F) node [midway, right=2pt] {$c, 2$};
				\draw[thick, blue] (G) -- (H) node [midway, left=2pt] {$d, 2$};
				%\draw[thick, color=blue] (B) -- (D);
				
				\draw[thick, ->] (3,-1) -- (3.5, -1);
				\draw[thick, ->] (3,-4.5) -- (3.5, -4.5);

				\node[roundnode]  (A1) at (5, -0.2) {};
				\node[roundnode]   (C1) at (6.3, -1) {};
				\node[roundnode]   (D1) at (5, -1.8) {};

				\draw[thick, blue] (A1) -- (C1) node [midway, above=2pt] {$b_1$};
				\draw[thick] (C1) -- (D1) node [midway, below=2pt] {$c_1$};
				\draw[thick] (A1) -- (D1) node [midway, right] {$d_1$};
				\path[blue] (A1.west) edge [out=200,in=160] node [midway, left] {$a_1$} (D1.west);

				\node[roundnode]  (A2) at (8, -0.2) {};
				\node[roundnode]   (C2) at (9.4, -1) {};
				\node[roundnode]   (D2) at (8, -1.8) {};

				\draw[thick, blue] (C2) -- (D2) node [midway, below=2pt] {$c_2$};
				\draw[thick] (A2) -- (D2) node [midway, right] {$d_2$};
				\path[blue] (A2.west) edge [out=200,in=160] node [midway, left] {$a_2$} (D2.west);

				\node[roundnode]  (A3) at (11, -0.2) {};
				\node[roundnode]   (C3) at (11, -1.8) {};

				\path (A3.west) edge [out=200,in=160] node [midway, left] {$a_3$} (C3.west);

				\node[roundnode]  (E1) at (4.8, -3.7) {};
				\node[roundnode]   (F1) at (4.8, -5.3) {};
				\node[roundnode]   (G1) at (6.1, -3.7) {};
				\node[roundnode]   (H1) at (6.1, -5.3) {};

				\path (E1.west) edge [out=220,in=140] node [midway, left] {$a_3$} (F1.west);
				\path[blue] (G1.east) edge [out=320,in=40] node [midway, right] {$b_1$} (H1.east);
				\draw[thick, blue] (E1) -- (F1) node [midway, right=2pt] {$c_2$};
				\draw[thick] (G1) -- (H1) node [midway, left=2pt] {$d_2$};

				\node[roundnode]  (E2) at (8.3, -3.7) {};
				\node[roundnode]   (F2) at (8.3, -5.3) {};
				\node[roundnode]   (G2) at (9.6, -3.7) {};
				\node[roundnode]   (H2) at (9.6, -5.3) {};

				\path[blue] (E2.west) edge [out=220,in=140] node [midway, left] {$a_2$} (F2.west);
				\draw[thick] (E2) -- (F2) node [midway, right=2pt] {$c_1$};
				\draw[thick] (G2) -- (H2) node [midway, left=2pt] {$d_1$};

				\node[roundnode]  (E3) at (11, -3.7) {};
				\node[roundnode]   (F3) at (11, -5.3) {};
				\path[blue] (E3.west) edge [out=220,in=140] node [midway, left] {$a_1$} (F3.west);
			\end{tikzpicture}}
\caption{\label{fig:matroidUnfoldingFigure} An example of matroid unfolding for two graphic matroids $\matroid'_1$, $\matroid'_2$. In a graphic matroid, the elements are given by the edges of a graph and the independent sets are the forests. The labels of edges in $\matroid'_1,\matroid'_2$ contain the name of the edge as well as the associated weight. The blue edges show a maximum weight common indpendent set in $\matroid'_1,\matroid'_2$ and a maximum cardinality common independent set with the same value in the unfolded matroids $\matroid_1$, $\matroid_2$. The cycles of the new graphic matroids correspond to cycles in the original instance.} 
		\end{figure}
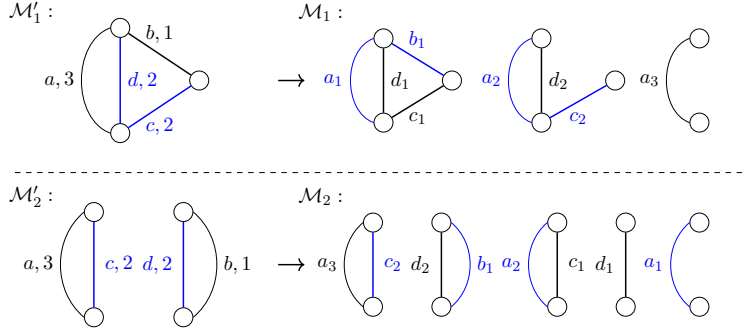

\paragraph{Refolding. } Suppose $\matroid_1,\matroid_2$ are the unfolded matroids of $\matroid'_1,\matroid'_2$. In the final step of the reduction, we are able to show that given $I\in \independentsets_1\cap\independentsets_2$, we can refold it back in $\matroid'_1,\matroid'_2$. Moreover, we show that the refolded version of $I$ denoted $\groundset'_I$ is such that a) $|\groundset'_I|\leq W\cdot |I|$ b) and there exists $I'\subseteq \groundset_i$ such that $I'\in \independentsets'_1\cap\independentsets'_2$ such that $w(I')=|I|$. To extract such a set from $\groundset'_I$, we give an efficient algorithm (\Cref{lem:weighted_additive}) based on the techniques of \cite{BlikstadT25}.

\subsection{Preliminaries}

We now give some remaining definitions (see e.g. \cite{Schrijver03}).

\begin{definition}[Span]
 We say that a set $S \subseteq \mathcal{N}$ \emph{spans} a set $T \subseteq \mathcal{N}$ if for all $e \in T$ we have $S \cup \{e\} \not \in \mathcal{I}$.
\end{definition}

\begin{definition}[Circuits]
A set $S \subseteq \mathcal{N}$ is a \emph{circuit} if and only if $S \not \in \mathcal{I}$ but for any $T\subsetneq S$, $T\in \mathcal{I}$.
\end{definition}

\pagebreak

\begin{property2}[Properties of Circuits]\label{prop:prop_circuit}
A matroid can also be characterized by its circuits. The set of all circuits of $\mathcal{M}$, denoted $C(\mathcal{M})$ satisfy the following properties.
\begin{enumerate}[label=(\alph*)]
\item For $C_1,\mathcal{C}_1\in C(\mathcal{M})$ such that $C_1\subseteq C_2$, then $C_1=C_2$. \label{prop:prop_circuit:item:prop_circuit1}
\item If there are circuits $C_1, C_2 \in C(\mathcal{M})$ with $C_1 \neq C_2$ and $x \in C_1 \cap C_2$, then for all $y\in C_1\setminus C_2$, there exist $C_y\in C(\matroid)$ such that $C_y\subseteq C_1\cup C_2\setminus \{x\}$ and $y \in C_y$.\label{prop:prop_circuit:item:prop_circuit2}
\end{enumerate}
\end{property2}

\paragraph{Roadmap.} In the subsequent section we prove various subroutines used in our reduction. In \Cref{sec:matroidunfolding}, we will discuss matroid unfolding. In \Cref{,sec:aspectratio} we prove theorems about our aspect-ratio reduction. In \Cref{sec:refolding} we discuss our refolding procedure In \Cref{,sec:mwmadditive} we give an efficient algorithm for computing a $(1-\varepsilon)$-approximation to the maximum weight independent set for the case when the groundset is small. In \Cref{sec:applications}, we will show some implications of our results in static, streaming and one-way communication complexity models. Our applications are summarized in \Cref{table:summary of results}.

\section{Ingredients for Our Framework}

In this section, we proceed by showing structural results related to our framework. We first start with matroid unfolding.

\subsection{Matroid Intersection Unfolding}
\label{sec:matroidunfolding}
Suppose we are given an instance of integer-weighted matroid intersection $\matroid'_1=(\groundset',\independentsets'_1,w)$ and $\matroid'_2=(\groundset',\independentsets'_2,w)$ such that $w:\groundset'\rightarrow\set{1,2,\cdots, W}$. We will obtain an unweighted instance $\matroid_1=(\groundset,\independentsets_1)$ and $\matroid_2=(\groundset,\independentsets_2)$, such that $\card{\groundset}\leq W\cdot \card{\groundset'}$.

The idea is based on graph unfolding~\cite{KaoLST01} and will let us calculate the value of the maximum weight common independent set $S^*$. Furthermore, it allows us to identify the most relevant elements of the matroid containing $S^*$. We begin by explaining how to unfold a weighted matroid intersection instance. Suppose we are given two matroids $\matroid'_1=(\groundset',\independentsets'_1,w)$ and $\matroid'_2=(\groundset',\independentsets'_2,w)$ with integer weights with maximum weight common independent set of weight $\opt'$. We create matroids $\matroid_1=(\groundset,\independentsets_1)$ and $\matroid_2=(\groundset,\independentsets_2)$ by replacing each element $e\in \groundset'$ with $w(e)$ copies of the element $\{e_{1},\cdots, e_{w(e)}\}$. However, we need to impose additional structure on $\matroid_1,\matroid_2$ so that \begin{inparaenum}[(a)]
    \item\label{item:optabs} $\opt=\opt'$ \item\label{item:queriesabs} We can implement independence (resp. rank) queries to $\matroid_1,\matroid_2$ using a small number of independence (resp. rank) queries to $\matroid'_1,\matroid'_2$.
\end{inparaenum} Formally, the new matroids are defined as follows.

\begin{definition}[Unfolded Matroid Intersection]\label{def:matroidunfolding}
    Suppose we have an instance $\matroid'_{1}=(\mathcal{N}',\mathcal{I}'_1,w)$ and $\matroid'_2=(\groundset',\independentsets'_2,w)$ of integer-weighted matroid intersection, such that for all $e\in \mathcal{N}'$, $w(e)\in \{1,2,\cdots, W\}$. We create an instance $\matroid_1=(\groundset,\independentsets_1)$ and $\matroid_2=(\groundset,\independentsets_2)$ of unweighted matroid intersection as follows.
    \begin{enumerate}[label=(\alph*)]
        \item\label{def:matroidunfolding:item:groundset} For each $e\in \groundset'$, add $w(e)$ many corresponding elements, $e_1,\cdots, e_{w(e)}$ to $\groundset$. %Let $\groundset_i=\set{e_i\mid e\in \groundset', w(e)\geq i}$.
        \item\label{def:matroidunfolding:item:matroid1} Consider $I\subseteq \groundset $. For all $i\in [W]$, let $I'_{1,i}=\set{e\mid e_i\in I}$. We say $I\in \independentsets_1$ if and only if $I'_{2,i}\in \independentsets'_1$ for all $i\in [W]$.
        \item\label{def:matroidunfolding:item:matroid2} Consider $I\subseteq \groundset$. For all $i\in [W]$, let $I'_{2,i}=\set{e\mid e_{w(e)-i+1}\in I}$. We say $I\in \independentsets_2$ if and only if $I'_{2,i}\in \independentsets'_2$ for all $i\in [W]$.
    \end{enumerate}
\end{definition}

Having defined the unfolded instance, we now show that they are valid matroids.  

\begin{lemma}\label{lem:validityofreduction}
Suppose we have an instance $\matroid'_1=(\groundset',\independentsets'_1,w)$, $\matroid'_2=(\groundset',\independentsets'_2,w)$ of weighted matroid intersection. Suppose that the weights $w(e)\in \set{1,2,\cdots, W}$ for all $e\in \groundset'$. Let $\matroid_1=(\groundset,\independentsets_1)$ and $\matroid_2=(\groundset,\independentsets_2)$ be the unweighted instance obtained by applying \Cref{def:matroidunfolding}. Then, $\matroid_1$ and $\matroid_2$ are matroids.
\end{lemma}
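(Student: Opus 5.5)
The plan is to recognize each unfolded matroid as a direct sum of restrictions of the original matroid, and then use the fact that direct sums of matroids are matroids. (In \Cref{def:matroidunfolding}\ref{def:matroidunfolding:item:matroid1} the set written $I'_{2,i}$ should read $I'_{1,i}$.)

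For $\matroid_1$, partition $\groundset$ into the layers $N_i=\set{e_i\mid e\in\groundset', w(e)\geq i}$ for $i\in[W]$. Each element $e_j$ lies in exactly one layer, namely $N_j$, so these sets are disjoint and cover $\groundset$. The map $\phi_i:e_i\mapsto e$ is a bijection from $N_i$ onto $G_i:=\set{e\in\groundset'\mid w(e)\geq i}$. For $I\subseteq\groundset$ we have $I'_{1,i}=\phi_i(I\cap N_i)$. Hence $I\in\independentsets_1$ if and only if $\phi_i(I\cap N_i)\in\independentsets'_1| G_i$ for every $i$.

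So $\matroid_1$ is the direct sum over $i$ of isomorphic copies of $\matroid'_1| G_i$, and each restriction is a matroid. For $\matroid_2$ the argument is identical with layers $M_i=\set{e_{w(e)-i+1}\mid e\in\groundset', w(e)\geq i}$. These again partition $\groundset$: the element $e_j$ lies only in $M_{w(e)-j+1}$, and $1\le w(e)-j+1\le w(e)$. The bijections are $e_{w(e)-i+1}\mapsto e$.

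To make the argument self-contained, I would verify the three axioms directly.
\begin{itemize}
\item Empty set: $\emptyset$ yields $I'_{1,i}=\emptyset\in\independentsets'_1$ for all $i$.
\item Downward closure: if $J\subseteq I$, then $J'_{1,i}\subseteq I'_{1,i}$ for each $i$, so independence of $I$ gives independence of $J$.
\item Exchange: let $A,B\in\independentsets_1$ with $|A|>|B|$. Since $|A|=\sum_i|A\cap N_i|=\sum_i|A'_{1,i}|$ (by injectivity of $\phi_i$) and likewise for $B$, pigeonhole gives an index $i$ with $|A'_{1,i}|>|B'_{1,i}|$. The exchange property in $\matroid'_1$ gives $e\in A'_{1,i}\setminus B'_{1,i}$ with $B'_{1,i}\cup\{e\}\in\independentsets'_1$. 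Then $e_i\in A\setminus B$, and adding $e_i$ to $B$ changes only the $i$-th projection. So $B\cup\{e_i\}\in\independentsets_1$.
\end{itemize}

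The only real obstacle is bookkeeping: checking that the layers for $\matroid_2$ (index $w(e)-i+1$) really partition $\groundset$ and that each projection map is injective on its layer. With that in place, the cardinality identity driving the pigeonhole step holds for both matroids.
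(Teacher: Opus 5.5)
Your proposal is correct and takes essentially the same route as the paper. The paper also checks the empty-set, downward-closure and exchange axioms directly, with the exchange step driven by pigeonhole over the layers $i\in[W]$ and the exchange property of $\matroid'_1$ (resp.\ $\matroid'_2$) applied within a single layer. Your direct-sum framing and layer bookkeeping are more careful than the paper's text: for $\matroid_2$ the element to add is $e_{w(e)-i+1}$, not the $x_i$ written in the paper's proof, and your layers $M_i$ handle this correctly.
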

\begin{proof}
    In order to prove the claim, it is sufficient to verify that $\matroid_1$ and $\matroid_2$ satisfy the three matroid properties. We start by showing this for $\matroid_1$.
    \begin{enumerate}[label=(\alph*)]
        \item{\bf Empty Set Property:} Note $\phi\subset \groundset$ satisfies the conditions imposed by \Cref{def:matroidunfolding}\ref{def:matroidunfolding:item:matroid1}. Thus, $\phi \in \independentsets_1$. 
        \item{\bf Downward Closure:} We show that the downward closure property holds for $\matroid_1$. Suppose $A\subseteq B\subseteq \groundset$ and $B\in \independentsets_1$. Then, we would like to show that $A\in \independentsets_1$ as well. To show this, let $A'_i$ and $B'_i$ as in \Cref{def:matroidunfolding}\ref{def:matroidunfolding:item:matroid1}. Note that since $B\in \mathcal{I}_1$, this implies that $B'_i\in \independentsets_1'$. By downward closure of $\matroid'_1$, $A'_i\in \independentsets_1'$ as well. Thus, we can conclude that $A\in \independentsets_1$. 
        \item {\bf Exchange Property:} Suppose $A,B\in \independentsets_1$ such that $\card{A}>\card{B}$. We define $A_i',B_i'\in \independentsets_1'$ for $i\in [W]$ similarly as in \Cref{def:matroidunfolding}\ref{def:matroidunfolding:item:matroid1}. There exists $i\in [W]$ such that $|A_i'|>|B_i'|$. Since $\matroid_1'$ is a matroid and $A_i'$, $B_i'$ are independent, they therefore satisfy the exchange property. This implies there exists $x\in A_i'\setminus B_i'$ such that $B_i'\cup \{x\}\in \independentsets_i'$. It must be the case that $x_i\in A\setminus B$ and $B\cup\{x\}\in \independentsets_1$.
    \end{enumerate}
We now verify these properties for $\matroid_2$. 
\begin{enumerate}[label=(\alph*)]
    \item {\bf Empty Set Property:} The set $\phi\subset \groundset$ satisfies the conditions imposed by \Cref{def:matroidunfolding}\ref{def:matroidunfolding:item:matroid2} and therefore, $\phi\in \independentsets_2$. 
    \item {\bf Downward Closure:} We show that the downward closure property holds for $\matroid_2$. Suppose $A\subseteq B\subseteq \groundset$ and $B\in \independentsets_2$. For all $i\in [W]$, let $A_i',B_i'\subseteq \groundset'$ as in \Cref{def:matroidunfolding}\ref{def:matroidunfolding:item:matroid2}. Since $B\in \independentsets_2$, $B_i'\in \independentsets_2'$ for all $i\in [W]$ as well. Since $A_i'\subseteq B_i'$ and $\matroid_2'$ is a matroid, we have $A_i'\in \independentsets'_2$. Consequently, $A\in \independentsets_2$. 
    \item {\bf Exchange Property:} Suppose $A,B\in \independentsets_2$ such that $|A|>|B|$. We define $A_i', B_i
'\in \independentsets_2'$ for $i\in [W]$ as in \Cref{def:matroidunfolding}\ref{def:matroidunfolding:item:matroid2}. There exist $i\in [W]$ such that $|A_i'|>|B_i'|$. Since $\matroid_2'$ is a matroid, and $A_i', B_i'\in \independentsets_2'$, this implies that there exists $x\in A_i'\setminus B_i'$ such that $B_i'\cup \{x\}\in \independentsets'_2$. It must be the case that $x_i\in A\setminus B$ and $B\cup\{x\}\in \independentsets_2$.  
\end{enumerate}
Hence, the result of unfolding is a valid unweighted matroid intersection instance.
\end{proof}

\subsubsection{Oracle Access to Unfolded Matroids.}

To check whether a given set is independent in the resulting matroids, it suffices to see if $O(W)$ subsets, each containing at most one copy of each element, are independent in the original matroid. Hence, we immediately have an independence oracle for the new matroid.

\begin{claim2}[Independence Queries]\label{lem:oraclecallstoimplementunfolded}
Suppose $\matroid'_1=(\groundset',\independentsets'_1,w)$ and $\matroid'_2=(\groundset',\independentsets'_2,w)$ are an instance of integer-weighted matroid intersection with weights in $\set{1,\cdots, W}$. Let $\matroid_1=(\groundset,\independentsets_1)$ and $\matroid_2=(\groundset,\independentsets_2)$ be obtained by applying the unfolding procedure in \Cref{def:matroidunfolding}\ref{def:matroidunfolding:item:matroid1} and \Cref{def:matroidunfolding}\ref{def:matroidunfolding:item:matroid2}, respectively. Then, we can implement an independence query $I\in \independentsets_1$ via at most $W$ calls to the independence oracle for $\matroid'_1$. Similarly, we can implement an independence query $I\in \independentsets_2$ via at most $W$ calls to the independence oracle for $\matroid'_2$.
\end{claim2}

Let $\matroid'_1=(\groundset',\independentsets'_1,w)$ and $\matroid'_2=(\groundset',\independentsets'_2,w)$ be weighted matroids. Let $\matroid_1=(\groundset,\independentsets_1)$ and $\matroid_2=(\groundset,\independentsets_2)$ be the unfolded versions. We can implement the rank oracle for $\matroid_1,\matroid_2$ as follows. Let $S\subseteq\groundset$ and for $i\in [W]$, consider $S'_i\subseteq \groundset$ as per definition \Cref{def:matroidunfolding}\ref{def:matroidunfolding:item:matroid1}. By definition of $\matroid_1$, we have, $\rk_1(S)=\sum_{i\in [W]}\rk_1'(S_j)$. A similar observation holds for $\matroid_2$ as well.
\begin{claim2}[Rank Queries]\label{lem:oraclecallsRanktoimplementunfolded}
Let $\matroid'_1=(\groundset',\independentsets'_1,w)$ and $\matroid'_2=(\groundset',\independentsets'_2,w)$ be an instance of integer-weighted matroid intersection with weights in $\set{1,2,\cdots, W}$. Let $\matroid_1=(\groundset,\independentsets_1)$ and $\matroid_2=(\groundset,\independentsets_2)$ be obtained by applying the unfolding procedure in \Cref{def:matroidunfolding}\ref{def:matroidunfolding:item:matroid1} and \Cref{def:matroidunfolding}\ref{def:matroidunfolding:item:matroid2}, respectively. Then, we can implement a rank query for $I\in \independentsets_1$ via at most $W$ calls to the rank oracle for $\matroid'_1$. Similarly, we can implement a rank query for $I\in \independentsets_2$ via at most $W$ calls to the rank oracle for $\matroid'_2$.
\end{claim2}

\subsubsection{Equivalence of Original and Unfolded Matroids.}

We would now like to show that the optimal solution to the unfolded matroid intersection problem has the same value as the weight of the optimal solution for the original weighted matroid intersection problem. 

\begin{theorem}\label{thm:equivalence}
Let $\matroid_1'=(\groundset', \independentsets_1',w)$ and $\matroid_2'=(\groundset', \independentsets_2',w)$ be an instance of integer-weighted matroid intersection with optimal solution of value $\opt'$. Let $\matroid_1=(\groundset, \independentsets_1)$ and $\matroid_2=(\groundset, \independentsets_2)$ be the unweighted matroid intersection instance obtained by unfolding $\matroid_1'$ and $\matroid_2'$. Suppose $\opt$ denote the optimal solution of the unweighted instance. Then $\opt=\opt'$. 
\end{theorem}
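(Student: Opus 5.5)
We prove the two inequalities $\opt\geq\opt'$ and $\opt\leq\opt'$ separately. The first is by a direct construction; the second goes through LP duality using the chained duals of \Cref{lem:chaineddualsexistence}.

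\emph{Lower bound $\opt\geq\opt'$.} Let $I'\in\independentsets'_1\cap\independentsets'_2$ be a maximum weight common independent set, and set $I=\set{e_j\mid e\in I',\,j\in[w(e)]}$, so that $|I|=w(I')=\opt'$.
\begin{itemize}
\item For $\matroid_1$: each $I'_{1,i}=\set{e\in I'\mid w(e)\geq i}$ is a subset of $I'$, hence independent in $\matroid'_1$ by downward closure.
\item For $\matroid_2$: each $I'_{2,i}=\set{e\in I'\mid w(e)-i+1\geq 1}$ is again a subset of $I'$, hence independent in $\matroid'_2$.
\end{itemize}
Therefore $I\in\independentsets_1\cap\independentsets_2$, and $\opt\geq\opt'$.

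\emph{Upper bound $\opt\leq\opt'$.} By weak duality it suffices to exhibit a feasible solution $(y,z)$ of LP \ref{eqn:dualunweighted} with $f(y,z)\leq\opt'$. By \Cref{lem:chaineddualsexistence} (weights are integral), fix an optimal integral dual $(y',z')$ of LP \ref{eqn:dualweighted} whose supports are chains $S^1\supsetneq\cdots\supsetneq S^l$ and $T^1\supsetneq\cdots\supsetneq T^k$. Strong duality gives $g(y',z')=\opt'$.

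We build $(y,z)$ as in the overview. Let $Y_j=\sum_{i\leq j}y'(S^i)$, with $Y_0=0$. For $Y_{j-1}<f\leq Y_j$, put $y(S^j_f)=1$, where
\[S^j_f=\set{e_f\mid e\in S^j,\, w(e)\geq f}.\]
Define $z$ symmetrically: with $Z_j=\sum_{i\leq j}z'(T^i)$ and $Z_{j-1}<f\leq Z_j$, put $z(T^j_f)=1$, where
\[T^j_f=\set{e_{w(e)-f+1}\mid e\in T^j,\, w(e)\geq f}.\]
All other dual values are $0$.

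\emph{Objective.} The set $S^j_f$ lies entirely in the $f$-th slice of $\matroid_1$. Since rank is additive over slices, $\rk_1(S^j_f)=\rk'_1(\set{e\in S^j\mid w(e)\geq f})\leq\rk'_1(S^j)$. The same argument gives $\rk_2(T^j_f)\leq\rk'_2(T^j)$. Each $S^j$ (resp. $T^j$) produces exactly $y'(S^j)$ (resp. $z'(T^j)$) unit-weight sets, so $f(y,z)\leq g(y',z')=\opt'$.

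\emph{Feasibility.} Fix $e$. Let $j$ be the largest index with $e\in S^j$; by the chain property $e\in S^i$ for all $i\leq j$. Hence the copies $e_1,\dots,e_{\min(Y_j,w(e))}$ are each covered once by $y$. Likewise, with $p$ the largest index such that $e\in T^p$, the copies $e_{w(e)},e_{w(e)-1},\dots,e_{w(e)-\min(Z_p,w(e))+1}$ are covered by $z$. Dual feasibility of $(y',z')$ gives $Y_j+Z_p=\sum_{S\ni e}y'(S)+\sum_{T\ni e}z'(T)\geq w(e)$. So the prefix covered by $y$ and the suffix covered by $z$ together contain every copy $e_1,\dots,e_{w(e)}$, and every dual constraint of LP \ref{eqn:dualunweighted} holds.

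\emph{Main obstacle.} The delicate step is the index bookkeeping. We need that the sets indexed by $f$ for a fixed $S^j$ really lie in slice $f$, and that copies with $f>w(e)$ are simply absent rather than breaking the argument. The chain structure is exactly what makes the covered indices of each $e$ a contiguous prefix on the $y$ side and a contiguous suffix on the $z$ side. Once this is set up, $\opt\leq f(y,z)\leq\opt'$ follows, and together with the lower bound we get $\opt=\opt'$.
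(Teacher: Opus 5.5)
Your proof is correct and takes essentially the same route as the paper. The lower bound $\opt\geq\opt'$ comes from unfolding an optimal weighted common independent set. The upper bound comes from splitting a chained integral optimal dual $(y',z')$ into unit-valued single-slice sets, exactly as the paper's dual-construction algorithm does, with feasibility from the prefix/suffix covering argument, the objective bound from $\rk_1(S^j_f)\leq\rk'_1(S^j)$, and then weak and strong duality; your explicit truncation at $\min(Y_j,w(e))$ is a slightly tidier version of the paper's index bookkeeping.
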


We will first show that the value of optimal solution in the unfolded instance provides an upper bound to the value of optimal solution in the original weighted instance, as described in the following lemma.

\begin{lemma}\label{lem:optgeqopt'}
    Let $\matroid_1'=(\groundset', \independentsets_1',w)$ and $\matroid_2'=(\groundset', \independentsets_2',w)$ be an instance of integer-weighted matroid intersection with optimal solution of value $\opt'$. Let $\matroid_1=(\groundset, \independentsets_1)$ and $\matroid_2=(\groundset, \independentsets_2)$ be the unweighted matroid intersection instance obtained by unfolding $\matroid_1'$ and $\matroid_2'$. Suppose $\opt$ denote the optimal solution of the unweighted instance. Then $\opt\geq\opt'$. 
\end{lemma}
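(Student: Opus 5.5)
Let $I'\in\independentsets'_1\cap\independentsets'_2$ be a maximum weight common independent set, so $w(I')=\opt'$. I will lift it to the unfolded instance by taking every copy of every element of $I'$:
\[
I:=\set{e_j \mid e\in I',\ 1\leq j\leq w(e)}.
\]
Then $\card{I}=\sum_{e\in I'}w(e)=\opt'$. It remains to show $I\in\independentsets_1\cap\independentsets_2$, which gives $\opt\geq \card{I}=\opt'$.

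\emph{Independence in $\matroid_1$.} Fix $i\in[W]$ and form $I'_{1,i}=\set{e\mid e_i\in I}$ as in \Cref{def:matroidunfolding}\ref{def:matroidunfolding:item:matroid1}. The copy $e_i$ exists only when $i\leq w(e)$, and it lies in $I$ only when $e\in I'$. Hence $I'_{1,i}=\set{e\in I'\mid w(e)\geq i}\subseteq I'$. Since $I'\in\independentsets'_1$, downward closure in $\matroid'_1$ gives $I'_{1,i}\in\independentsets'_1$. This holds for every $i\in[W]$, so $I\in\independentsets_1$.

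\emph{Independence in $\matroid_2$.} Fix $i\in[W]$ and form $I'_{2,i}=\set{e\mid e_{w(e)-i+1}\in I}$ as in \Cref{def:matroidunfolding}\ref{def:matroidunfolding:item:matroid2}. The index $w(e)-i+1$ is a valid copy index exactly when $i\leq w(e)$, and the copy lies in $I$ only when $e\in I'$. So again $I'_{2,i}=\set{e\in I'\mid w(e)\geq i}\subseteq I'$, and downward closure in $\matroid'_2$ gives $I'_{2,i}\in\independentsets'_2$. Hence $I\in\independentsets_2$.

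There is no real obstacle in this direction. The only point needing care is the indexing in item \ref{def:matroidunfolding:item:matroid2}: one has to check that the reversed index $w(e)-i+1$ ranges exactly over $\set{1,\dots,w(e)}$ as $i$ ranges over $\set{1,\dots,w(e)}$. This guarantees that each $I'_{2,i}$ is a subset of $I'$ and contains no element of $\groundset'\setminus I'$. The matching upper bound $\opt\leq\opt'$ is the harder direction, presumably handled afterwards via the chained dual construction.
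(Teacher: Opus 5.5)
Your proof is correct and follows the paper's argument: lift an optimal weighted common independent set by taking all $w(e)$ copies of each of its elements. Each slice $I'_{1,i}$ and $I'_{2,i}$ is then a subset of the original set, so downward closure gives independence in both unfolded matroids; your explicit check of the reversed indexing for $\mathcal{M}_2$ just spells out the step the paper calls ``analogous.''
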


\begin{proof}
    Now, we want to argue that the value of the common independent set of largest weight for weighted matroids $\mathcal{M}'_1,\mathcal{M}'_2$ is equal to the size of the largest common independent set of $\mathcal{M}_1,\mathcal{M}_2$. First, we argue that the largest common independent set in $\mathcal{M}_1,\mathcal{M}_2$ has at least $\opt'$ many elements. This will show that $\opt\geq \opt'$. Let $S'\in \independentsets'_1\cap \independentsets_2'$ be such that $w(S')=\opt'$. Let $S\subseteq \groundset$ be defined as follows: $S=\cup_{e\in S'}\cup_{i\in [w(e)]}\{e_i\}$. Note that $\card{S}=w(S')$. We now only need to argue that $S\in \independentsets_1\cap \independentsets_2$. To show $S\in \independentsets_1$, we define for all $i\in [W]$, $S_i'\subseteq \groundset'$ as in \Cref{def:matroidunfolding}\ref{def:matroidunfolding:item:matroid1}. Since $S_i'\subseteq S'$, we can conclude by downward closure property of matroids that $S_i'\in \independentsets'_1$ for all $i\in [W]$. Thus, we can conclude that $S\in \independentsets_1$. The argument for showing $S\in \independentsets_2$ follows analogously.
\end{proof}
In order to prove the converse, we will take an optimal dual solution for the weighted problem $y',z'$ and input them to \Cref{alg:dualsolution}.  The algorithm outputs $y,z$, which we will show are feasible solutions to the unfolded matroid intersection problem (\Cref{lem:feasibilityyz}). Additionally, $f(y',z')\geq g(y,z)$ (\Cref{lem:valueofoutputduals}). Then, by weak duality, we can conclude that $\opt\leq \opt'$. We start by giving some intuition about \Cref{alg:dualsolution}. The algorithm ``splits'' the value of $y'(S')$ (and $z'(S')$) for each $S' \subseteq \groundset'$ among a collection of sets $\mathcal{S}_{y}$ (and $\mathcal{S}_{z}$), such that $\sum_{S \in \mathcal{S'}_{y}} y(S) \leq y'(S')$ (and analogous for $z'$ and $\matroid_2$). For a set $S \in \textup{supp}(y')$ the sets in these collections are the \emph{relevant sets for $S$} (the exact definition is below). A single iteration of the for-loop in line \cref{line:forloopy} implements this process for $y'$ and a single set in the $\textup{supp}(y')$, while an iteration of the for-loop in line \ref{line:forloopz} is responsible for a single set in the $\textup{supp}(z')$. We process the sets in the order provided by the chain, starting with the largest set. Hence, the $i$-th iteration of the for-loop in line \ref{line:forloopy} assigns a $y$-value of 1 to exactly $y'(S^i)$ subsets of $\groundset$, containing only copies of elements in $S^i$ with the same index. Similarly, the $i$-th iteration of the for-loop in line \ref{line:forloopz} assigns a $z$-value of 1 to exactly $z'(T^i)$ subsets of $\groundset$, the relevant sets for $T^i$, containing only copies of elements in $T^i$ with the same difference between the original weight and their index. We start with giving some observations about \Cref{alg:dualsolution}.

\begin{observation}[Relevant Sets for $S^i$]\label{obs:descriptionofinheritedsetsy}
    For $i\in [l]$, and $f_y\in \set{\sum_{j=1}^{i-1}y'(S^j)+1,\cdots, \sum_{j=1}^i y'(S^j)}$, $S^i_{f_y}=\{e_{f_y}\mid e\in S^i, w(e)\geq f_y\}$. 
\end{observation}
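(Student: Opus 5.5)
This observation should follow by directly tracing the execution of \Cref{alg:dualsolution} on the chained optimal dual $y'$ from \Cref{lem:chaineddualsexistence}, with $\textup{supp}(y')=\set{S^1,\dots,S^l}$ ordered from largest to smallest. The natural route is induction on the iteration index $i$ of the for-loop in line \ref{line:forloopy}. The invariant to carry is that, just before iteration $i$ begins, the running counter used to index copies equals $\sum_{j=1}^{i-1}y'(S^j)$. The base case $i=1$ is immediate, since the counter starts at $0$.

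Within iteration $i$, the algorithm performs exactly $y'(S^i)$ inner steps, which is a nonnegative integer by the integrality part of \Cref{lem:chaineddualsexistence}. In each step it increments the counter to some value $f_y$, builds the set of $f_y$-th copies of elements of $S^i$, and sets $y$ of that set to $1$. So the values of $f_y$ produced in iteration $i$ are exactly $\sum_{j=1}^{i-1}y'(S^j)+1,\dots,\sum_{j=1}^{i}y'(S^j)$. After the last step the counter equals $\sum_{j=1}^{i}y'(S^j)$, which re-establishes the invariant for iteration $i+1$.

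It remains to identify the set built for a given $f_y$. By \Cref{def:matroidunfolding}\ref{def:matroidunfolding:item:groundset}, the copy $e_{f_y}$ exists in $\groundset$ only when $f_y\le w(e)$. Hence the set of $f_y$-th copies of elements of $S^i$ is exactly $\{e_{f_y}\mid e\in S^i,\ w(e)\geq f_y\}$, as claimed.

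The only delicate point is this existence condition on copies. Presumably the algorithm writes the set as $\{e_{f_y}\mid e\in S^i\}$, and one must note that elements with $w(e)<f_y$ contribute no copy. Beyond that, the argument is bookkeeping, provided the counter is updated as I have assumed. If instead the algorithm indexes by offsets within each iteration, the same argument goes through after rewriting the index as $\sum_{j<i}y'(S^j)+t$ for $t\in[y'(S^i)]$.
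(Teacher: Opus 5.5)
Your proposal is correct and is essentially the paper's implicit argument: the observation is read off directly from the while-loop of \Cref{alg:dualsolution}, and the paper gives no separate proof. The differences are cosmetic. The algorithm initializes $f_y\leftarrow 1$ and increments only after building $S^i_{f_y}$, so the invariant is $f_y=\sum_{j<i}y'(S^j)+1$ at the start of iteration $i$, which gives the same index range you derive. Also, the condition $w(e)\geq f_y$ is imposed explicitly by the inner for-loop, so you do not need to infer it from \Cref{def:matroidunfolding}\ref{def:matroidunfolding:item:groundset}.
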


\begin{observation}[Relevant Sets for $T^i$]\label{obs:descriptionofinheritedsetsz}
    For $i\in [l]$, and \\$f_z\in \set{\sum_{j=1}^{i-1}z'(T^j)+1,\cdots, \sum_{j=1}^i z'(T^j)}$,$T^i_{f_z}=\{e_{w(e)+1-f_z}\mid e\in T^i, w(e)\geq f_z\}$. 
\end{observation}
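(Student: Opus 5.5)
This observation follows directly from how \Cref{alg:dualsolution} builds the sets for $z$, in the same way as \Cref{obs:descriptionofinheritedsetsy}. The only index to watch is the mirrored one, $w(e)+1-f_z$. (The index should range over $i\in[k]$, since $\textup{supp}(z')=\{T^1,\dots,T^k\}$.)

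\textbf{Step 1: which values of $f_z$ are used.} The loop in line~\ref{line:forloopz} processes $T^1,T^2,\dots$ in chain order, largest set first. It keeps a running counter of how many unit $z$-values have been handed out so far. After the first $i-1$ iterations this counter equals $\sum_{j=1}^{i-1} z'(T^j)$. This is well defined because $z'$ is integral, by \Cref{lem:chaineddualsexistence}. The $i$-th iteration then creates exactly $z'(T^i)$ new sets, so its indices are precisely $f_z\in\{\sum_{j=1}^{i-1}z'(T^j)+1,\dots,\sum_{j=1}^{i}z'(T^j)\}$. I would check this by a short induction on $i$.

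\textbf{Step 2: the contents of each set.} Fix one such $f_z$. The algorithm puts into $T^i_{f_z}$ the copy $e_{w(e)-f_z+1}$ of each $e\in T^i$. This is the same indexing as $\matroid_2$ in \Cref{def:matroidunfolding}\ref{def:matroidunfolding:item:matroid2}, and the same as the informal definition in the overview. Such a copy exists in $\groundset$ exactly when $1\le w(e)-f_z+1\le w(e)$. Since $f_z\ge 1$, the upper bound always holds, and the lower bound is equivalent to $w(e)\ge f_z$. So elements with $w(e)<f_z$ contribute nothing, and we get $T^i_{f_z}=\{e_{w(e)+1-f_z}\mid e\in T^i,\ w(e)\ge f_z\}$, as claimed.

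\textbf{Main obstacle.} There is little real difficulty here. The one point needing care is matching the algorithm's counter to the interval of $f_z$ values in Step~1, which is an off-by-one check. Consistency with the $\matroid_2$ indexing is also what later lets us argue, via $\rk_2(T^i_{f_z})\le\rk'_2(T^i)$, that these relevant sets respect the rank structure of $\matroid_2$.
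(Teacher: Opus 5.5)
Your proposal is correct and is essentially the paper's own implicit justification: the observation is read directly off the $z$-loop of \Cref{alg:dualsolution}, where integrality of $z'$ makes $f_z$ sweep exactly $\{\sum_{j=1}^{i-1}z'(T^j)+1,\dots,\sum_{j=1}^{i}z'(T^j)\}$ in iteration $i$, and the inner loop adds $e_{w(e)+1-f_z}$ for each $e\in T^i$ with $w(e)\ge f_z$. You are right that the range should be $i\in[k]$; also, the condition $w(e)\ge f_z$ is imposed explicitly by the algorithm's inner loop, so your existence-of-copy argument only confirms it and is not needed.
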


\begin{observation}\label{obs:coveredelementsy}
    Suppose we are in the $i$th execution of the for loop in \Cref{alg:dualsolution} Line \ref{line:forloopy}. At the end of the execution, $f_y=\sum_{j=1}^iy'(S^j)$ and for each $e\in \groundset'$, elements $\set{e_1,\cdots,e_{f_y}}\subseteq \groundset$ are covered by $y$ and therefore, $U^y\cap \set{e_1,\cdots,e_{f_y}}=\emptyset$. This follows from the fact that the sequence $S^1,S^2,\cdots, S^l$ forms a chain. 
\end{observation}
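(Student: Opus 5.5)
The plan is to argue by induction on the iteration counter $i$ of the for loop in Line \ref{line:forloopy} of \Cref{alg:dualsolution}, proving both parts of the statement at once. The first part is the claim $f_y=\sum_{j=1}^i y'(S^j)$ at the end of iteration $i$. The second part is the claim that, for each $e\in\groundset'$, all existing copies among $e_1,\cdots,e_{f_y}$ (those $e_f$ with $f\le w(e)$) carry positive $y$-value, so that none of them lies in $U^y$. The removal from $U^y$ is immediate from the covering claim: once a set containing $e_f$ receives $y$-value $1$, the algorithm deletes the covered copies from $U^y$.

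\textbf{Base case.} Before the loop starts, $f_y=0$ and $\set{e_1,\cdots,e_0}=\emptyset$, so there is nothing to show.

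\textbf{Inductive step for the counter.} Assume the statement holds after iteration $i-1$. During iteration $i$ the algorithm creates exactly $y'(S^i)$ relevant sets. It increments $f_y$ once for each of them, so their indices run through $\sum_{j<i}y'(S^j)+1,\cdots,\sum_{j\le i}y'(S^j)$, exactly as recorded in \Cref{obs:descriptionofinheritedsetsy}. Hence $f_y=\sum_{j=1}^i y'(S^j)$ at the end of iteration $i$. This is pure bookkeeping, using that $y'$ is integral by \Cref{lem:chaineddualsexistence}.

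\textbf{Inductive step for the covering claim.} Fix $e$ and an index $f\le f_y$ with $f\le w(e)$. If $f\le\sum_{j<i}y'(S^j)$, then $e_f$ was already covered by induction, because $y$-values are never decreased. Otherwise $f$ lies in the block of iteration $i$, and by \Cref{obs:descriptionofinheritedsetsy} the set $S^i_f=\set{e_f\mid e\in S^i,\ w(e)\ge f}$ received $y(S^i_f)=1$. This set contains $e_f$ whenever $e\in S^i$.

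\textbf{Main obstacle: the chain property.} The delicate point is to guarantee that an element being processed in iteration $i$ was also present in every earlier iteration, so that its low-index copies were not skipped. The chain $S^i\subsetneq S^{i-1}\subsetneq\cdots\subsetneq S^1$ gives exactly this. Any $e\in S^i$ lies in every $S^j$ with $j\le i$, so each block of indices $\sum_{j'<j}y'(S^{j'})+1,\cdots,\sum_{j'\le j}y'(S^{j'})$ was covered by the set $S^j_f\ni e_f$. The blocks of iterations $1,\dots,i$ are consecutive and together exhaust $1,\dots,f_y$. Therefore the copies $e_1,\cdots,e_{\min(f_y,w(e))}$ are all covered, and none of them remains in $U^y$.

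For an element $e$ that has left the chain before iteration $i$, the same argument applies up to the last chain set containing $e$. Beyond that point, the copies with larger index are exactly those that the $z$-loop is responsible for, as used later in \Cref{lem:feasibilityyz}. I would state this explicitly so that the observation's ``for each $e$'' is read consistently with how $U^y$ is maintained, namely only over elements of the current chain set and over existing copies.
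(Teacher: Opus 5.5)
Your argument is correct and is the same as the paper's one-line justification. Since $S^i\subseteq S^{i-1}\subseteq\cdots\subseteq S^1$, any $e\in S^i$ lies in every earlier $S^j$, so the consecutive index blocks $1,\dots,\sum_{j\le i}y'(S^j)$ (capped at $w(e)$) are all covered. You are also right that the claim only holds for $e\in S^i$: for $e\notin S^i$, the copies $e_f$ with $f$ in block $i$ really do remain in $U^y$, so the literal ``for each $e\in\groundset'$'' is too strong.

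Two small repairs are needed. First, state your induction hypothesis for $e\in S^{i-1}$ rather than for all $e$. As written, ``already covered by induction'' appeals to a hypothesis that is false, and the inclusion $S^i\subseteq S^{i-1}$ is exactly what makes the restricted induction go through. Second, the algorithm initializes $f_y\leftarrow 1$, not $0$, so after iteration $i$ the counter equals $\sum_{j\le i}y'(S^j)+1$. It is the last index assigned, not $f_y$ itself, that equals $\sum_{j\le i}y'(S^j)$.
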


\begin{observation}\label{obs:coveredelementsz}
    Suppose we are in the $i$th execution of the for loop in \Cref{alg:dualsolution} Line \ref{line:forloopz}. At the end of the execution, $f_z=\sum_{j=1}^iz'(T^j)$ and for each $e\in \groundset'$, elements $\set{e_{w(e)},\cdots,e_{w(e)+1-f_z}}\subseteq \groundset$ are covered by $z$. Therefore, $U^z\cap \set{e_{w(e)},\cdots,e_{w(e)+1-f_z}}=\emptyset$. This follows from the fact that the sequence $T^1,T^2,\cdots, T^l$ forms a chain. 
\end{observation}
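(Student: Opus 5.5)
The plan is to prove the observation by induction on $i$, the index of the execution of the for loop in Line \ref{line:forloopz}, using two ingredients: the bookkeeping of the counter $f_z$, and the chain structure $T^k\subsetneq\cdots\subsetneq T^1$ of $\textup{supp}(z')$ from \Cref{lem:chaineddualsexistence}. The analogous \Cref{obs:coveredelementsy} should follow by the same argument, with $e_{f}$ in place of $e_{w(e)+1-f}$.

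\textbf{Step 1 (value of $f_z$).} I will read off from \Cref{alg:dualsolution} that $f_z$ starts at $0$. In the $i$th execution it is incremented exactly $z'(T^i)$ times, and each increment is accompanied by setting $z(T^i_{f_z})=1$ for the current value of $f_z$. A trivial induction then gives $f_z=\sum_{j=1}^i z'(T^j)$ at the end of the $i$th execution. Consequently, the values of $f_z$ used during the $j$th execution are exactly $\set{\sum_{m<j}z'(T^m)+1,\cdots,\sum_{m\le j}z'(T^m)}$, which matches the index range in \Cref{obs:descriptionofinheritedsetsz}.

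\textbf{Step 2 (coverage).} Fix $e$, and fix an index $f\le \sum_{j\le i}z'(T^j)$ with $f\le w(e)$; copies with non-positive index do not exist, so the claim is vacuous for them. Let $j\le i$ be the unique execution in which $f$ was used. Here the chain property is essential: the claim concerns elements $e$ lying in the current (smallest so far) set $T^i$, and for such $e$ we have $e\in T^i\subseteq T^j$. Then by \Cref{obs:descriptionofinheritedsetsz}, $e_{w(e)+1-f}\in T^j_f$ and $z(T^j_f)=1$, so $e_{w(e)+1-f}$ is covered by $z$.

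Ranging over all such $f$ shows that $\set{e_{w(e)},\cdots,e_{w(e)+1-f_z}}$ is covered. Since the algorithm removes from $U^z$ every element of each set to which it assigns $z$-value $1$, none of these copies remains in $U^z$.

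\textbf{Main obstacle.} The only delicate point is the precise quantifier on $e$. For $e\notin T^i$ the statement can fail for the later indices, and for $w(e)<f_z$ the index range runs past the existing copies. I will handle this by stating the claim for $e\in T^i$ and truncating at index $1$, which is the form actually needed in the feasibility argument (\Cref{lem:feasibilityyz}). There one takes $p$ to be the last index with $e\in T^p$.
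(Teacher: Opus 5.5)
Your argument is correct and is essentially the paper's, which simply appeals to the chain property ($e\in T^i\subseteq T^j$ for every $j\le i$). You are also right to restrict to $e\in T^i$ and truncate at index $1$: the literal statement fails for $e\notin T^i$, and this restricted form (with $p$ the last index such that $e\in T^p$, as in the overview) is what \Cref{lem:feasibilityyz} actually needs.

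One small slip: \Cref{alg:dualsolution} initializes $f_z\leftarrow 1$, not $0$. So after the $i$th execution $f_z=\sum_{j\le i}z'(T^j)+1$, and the observation's formula should be read as the last value of $f_z$ used. This does not affect your proof, because Step 2 only relies on the per-execution index range $\set{\sum_{m<j}z'(T^m)+1,\cdots,\sum_{m\le j}z'(T^m)}$, which you state correctly.
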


We are now ready to show the feasibility property of the output $y,z$ obtained by running \Cref{alg:dualsolution} on an integral optimal dual solution $y',z'$, such that \textup{supp}($y'$) and \textup{supp}($z'$) are chains. \Cref{lem:chaineddualsexistence} ensures that such $y',z'$ always exist.

\begin{lemma}[Feasibility of Output Duals]\label{lem:feasibilityyz}
    Suppose $\matroid'_1=(\groundset',\independentsets'_1,w)$ and $\matroid'_2=(\groundset',\independentsets'_2,w)$ is an instance of integer-weighted matroid intersection problem. Let $y',z'$ be optimal dual solutions, such that \textup{supp}($y'$) and \textup{supp}($z'$) are chains, input to \Cref{alg:dualsolution}. Then the output $y,z$ is a feasible dual for the $\matroid_1,\matroid_2$, the unfolded matroids.
\end{lemma}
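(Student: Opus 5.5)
To prove \Cref{lem:feasibilityyz}, we check the two requirements of LP~\ref{eqn:dualunweighted}: nonnegativity and the covering constraint. Nonnegativity is immediate, since \Cref{alg:dualsolution} only assigns values $0$ or $1$ to subsets of $\groundset$. The substance is the covering constraint. For every $e\in\groundset'$ and every $i\in[w(e)]$ we must show $\sum_{S\ni e_i} y(S)+z(S)\geq 1$. It suffices to find one relevant set, either some $S^j_{f_y}$ or some $T^j_{f_z}$, that contains $e_i$ and receives value $1$.

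Fix $e\in\groundset'$. Let $j^*$ be the largest index with $e\in S^{j^*}$ (set $j^*=0$ if $e$ lies in no set of $\textup{supp}(y')$), and let $p^*$ be the largest index with $e\in T^{p^*}$ (again $0$ if none). Put $a=\sum_{j\le j^*}y'(S^j)$ and $b=\sum_{j\le p^*}z'(T^j)$.

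Because $\textup{supp}(y')$ is a chain ordered from largest to smallest, $e$ lies in every $S^j$ with $j\le j^*$ and in none with $j>j^*$. Hence $\sum_{S'\ni e}y'(S')=a$, and likewise $\sum_{T'\ni e}z'(T')=b$. Feasibility of $y',z'$ for LP~\ref{eqn:dualweighted} therefore gives $a+b\geq w(e)$.

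By \Cref{obs:descriptionofinheritedsetsy}, for each $j\le j^*$ and each $f_y$ in the $j$-th block $\{\sum_{t<j}y'(S^t)+1,\dots,\sum_{t\le j}y'(S^t)\}$, the set $S^j_{f_y}$ contains $e_{f_y}$ whenever $f_y\le w(e)$. Since these blocks tile $\{1,\dots,a\}$, the copies $e_1,\dots,e_{\min(a,w(e))}$ are all covered by $y$; this is exactly \Cref{obs:coveredelementsy}, stated at iteration $j^*$. Symmetrically, \Cref{obs:descriptionofinheritedsetsz} and \Cref{obs:coveredelementsz} show that $e_{w(e)},\dots,e_{w(e)+1-\min(b,w(e))}$ are covered by $z$.

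The indices covered by $y$ form the prefix $[1,a]$ and those covered by $z$ form the suffix $[w(e)+1-b,\,w(e)]$. Since $a+b\geq w(e)$, we have $w(e)+1-b\le a+1$, so the prefix and suffix leave no gap and together cover all of $[w(e)]$. Every $e_i$ is therefore covered.

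The main obstacle is bookkeeping rather than conceptual. First, we must make sure the chain ordering used by the algorithm (largest set first) matches the cumulative index ranges in the observations, so that the relevant sets at iteration $j$ really contain $e$ exactly when $j\le j^*$. Second, we must handle the truncation at $w(e)$: when $f_y>w(e)$ the copy $e_{f_y}$ does not exist and contributes nothing. I would deal with both by arguing directly from \Cref{obs:descriptionofinheritedsetsy} and \Cref{obs:descriptionofinheritedsetsz} rather than relying on the looser wording of the covering observations. The edge cases $a\ge w(e)$ or $b\ge w(e)$, where one side alone covers everything, then follow from the same inequality.
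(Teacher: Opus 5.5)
Your proof is correct and is essentially the paper's argument: $y$ covers the prefix $e_1,\dots,e_{\min(a,w(e))}$, $z$ covers the matching suffix, and dual feasibility $a+b\ge w(e)$ leaves no gap between them (the paper phrases this as a contradiction using the two covering observations). Your indexing by the last chain set containing $e$, together with the explicit truncation at $w(e)$, is more careful than the paper's write-up, which sums $y'$ and $z'$ over the entire support rather than only over the sets containing $e$.
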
 
\begin{proof}
    To show this claim, it is sufficient to show that for all $e\in \groundset'$, for all $i\in[w(e)]$, $e_i\notin U^z\cap U^y$. Assume to the contrary that there exist such an $e_i$. By Observation~\ref{obs:coveredelementsy} and Observation~\ref{obs:coveredelementsz}, we have,
    \begin{align*}
        i \geq \sum_{j=1}^ly'(S^j)+1, \text{ and, }i \leq  w(e) -\sum_{j=1}^k z'(T^j) 
    \end{align*}
Combining this, we have,
\begin{align*}
    \sum_{j=1}^l y'(S^j)+\sum_{j=1}^k z'(T^j)\leq w(e)-1
\end{align*}
Since $\{S^j\}_{j\in [l]}$ and $\{T^j\}_{j\in [k]}$ form the supp($y'$) and supp($z'$), respectively, we can conclude that,
\begin{align*}
    \sum_{S:e\in S}y'(S)+z'(S)\leq w(e)-1,
\end{align*}
contradicting the fact that $y',z'$ are feasible solutions for $\matroid'_1$ and $\matroid'_2$.
\end{proof}

\begin{algorithm}
\algorithmicrequire{ Integral optimal dual solutions $y',z'$ for $\matroid'_1,\matroid'_2$,\\  \hspace*{4.4em} \textup{supp}($y'$), \textup{supp}($z'$) form a chain}\\
	\algorithmicensure{ Dual solutions $y,z$ for unfolded matroids $\matroid_1,\matroid_2$)}
	\caption{\textsc{UnweightedDual}()}
	\begin{algorithmic}[1]
    \State Let $S^{l}\subsetneq S^{l-1}\subsetneq \cdots \subsetneq S^1$ be the sets in $\text{supp}(y')$.
    \State Let $T^{k}\subsetneq T^{k-1}\subsetneq \cdots \subsetneq T^1$ be the sets in $\text{supp}(z')$.
    \State Initialize $y\leftarrow 0$, $z\leftarrow 0$.
    \State Initialize $U^y,U^z\leftarrow \groundset$. \Comment{These are elements uncovered by duals.}
    \State Initialize $f_y,f_z\leftarrow 1$.
    \For{$i=1,2,\cdots, l$}\label{line:forloopy}
    \While{$f_y\leq \sum_{j=1}^i y'(S^j)$}\label{line:whileloop}
    \State Initialize $S^i_{f_y}\leftarrow \emptyset$. \Comment{This will be the set of elements which will be covered by $y$.}
    \For{$e\in S^i$ with $w(e)\geq f_y$} 
    \State $S^i_{f_y}\leftarrow S^i_{f_y}\cup\{e_{f_y}\}$.
    \EndFor
    \State $y(S_{f_y}^i)\leftarrow 1$. \label{line:assigndualy}
    \State $U^y\leftarrow U^y\setminus S^i_{f_y}$.
    \State $f_y\leftarrow f_y+1$.
    \EndWhile\label{line:iterationIcovered}
    \State $i\leftarrow i+1$.
    \EndFor
    \For{$i=1,2,\cdots, k$}\label{line:forloopz}
    \While{$f_z\leq \sum_{j=1}^i z'(T^j)$}\label{line:whileloop2}
    \State Initialize $T^i_{f_z}\leftarrow \emptyset$. \Comment{This will be the set of elements which will be covered by $z$.}
    \For{$e\in T^i$ with $w(e)\geq f_z$}
    \State $T^i_{f_z}\leftarrow T^i_{f_z}\cup\{e_{w(e)+1-f_z}\}$.
    \EndFor
    \State $z(T_{f_z}^i)\leftarrow 1$.\label{line:assigndualz}
    \State $U^z\leftarrow U^z\setminus T^i_{f_z}$.
    \State $f_z\leftarrow f_z+1$.
    \EndWhile \label{line:iterationIcoveredbyz}
    \State $i\leftarrow i+1$.
    \EndFor
    \State Return $y,z$.
	\end{algorithmic}
     \caption{Algorithm to Obtain Feasible Dual Solution for Unfolded Matroids $\matroid_1,\matroid_2$}
      \label{alg:dualsolution}
\end{algorithm}

We are now left with the task of showing the following lemma.

\begin{lemma}\label{lem:valueofoutputduals}
    Let $\matroid'_1=(\groundset',\independentsets_1',w)$ and $\matroid'_2=(\groundset',\independentsets_2',w)$ be an instance of matroid intersection. Let $\rk_1',\rk'_2$ be the rank functions associated with them. Let $\matroid_1=(\groundset,\independentsets_1)$ and $\matroid_2=(\groundset,\independentsets_2)$ with associated rank functions $\rk_1,\rk_2$ be the matroids obtained via the unfolding process. Let $y',z'$ be optimal dual solutions for $\matroid'_1,\matroid'_2$, such that \textup{supp}($y'$) and  \textup{supp}($z'$) each form a chain. Let $y,z$ be feasible dual solutions obtained as an output by inputing to \Cref{alg:dualsolution} duals $y',z'$, such that \textup{supp}($y'$) and \textup{supp}($z'$) each form a chain. Then,
    \begin{align*}
        \sum_{S\subseteq\groundset} y(S)\rk_1(S)+z(S)\rk_2(S)\leq \sum_{S'\subseteq\groundset' }y'(S')\rk'_1(S')+z'(S')\rk'_2(S').
    \end{align*}
\end{lemma}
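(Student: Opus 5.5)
The plan is to compare the two objectives term by term. We show that the sets created for each $S^i\in\textup{supp}(y')$ contribute exactly $y'(S^i)\rk'_1(S^i)$ to the unweighted objective at most. The same will hold for each $T^i\in\textup{supp}(z')$. Summing over the two chains then gives the claim.

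By construction of \Cref{alg:dualsolution}, $y$ is supported only on the relevant sets $S^i_{f_y}$ of \Cref{obs:descriptionofinheritedsetsy}, each with $y$-value $1$. The index $f_y$ ranges over exactly $y'(S^i)$ consecutive integers for the $i$-th chain set. Since $y'$ is integral, this count is a nonnegative integer. Distinct indices $f_y$ give distinct sets, so no set receives value twice. Hence
\begin{align*}
\sum_{S\subseteq\groundset} y(S)\rk_1(S)=\sum_{i=1}^{l}\ \sum_{f_y=\sum_{j<i}y'(S^j)+1}^{\sum_{j\le i}y'(S^j)} \rk_1(S^i_{f_y}).
\end{align*}
If some $S^i_{f_y}$ is empty (no $e\in S^i$ with $w(e)\ge f_y$), its rank is $0$ and the bound below is only easier.

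The key step is to show that $\rk_1(S^i_{f})\le \rk'_1(S^i)$ for every relevant set. All elements of $S^i_f$ are copies carrying the same index $f$. So for any $I\subseteq S^i_f$, the projections $I'_{1,j}$ of \Cref{def:matroidunfolding}\ref{def:matroidunfolding:item:matroid1} are empty for $j\ne f$, and $I'_{1,f}=\{e\mid e_f\in I\}$ has the same size as $I$. Therefore $I\in\independentsets_1$ exactly when $I'_{1,f}\in\independentsets'_1$. Since $I'_{1,f}\subseteq S^i$, we get $|I|=|I'_{1,f}|\le\rk'_1(S^i)$. Equivalently, one can use the rank decomposition $\rk_1(S)=\sum_j\rk'_1(S'_j)$ noted before \Cref{lem:oraclecallsRanktoimplementunfolded}, where only the $j=f$ term survives. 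Either way, the inner sum above is at most $y'(S^i)\rk'_1(S^i)$.

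The same argument works for $z$ and $\matroid_2$. Each $T^i_{f_z}$ consists of elements $e_{w(e)+1-f_z}$, so every element lies in the single class $i'=f_z$ of \Cref{def:matroidunfolding}\ref{def:matroidunfolding:item:matroid2}: for $e_{w(e)-i'+1}$ we need $i'=f_z$. Its projection is a subset of $T^i$ of equal size, giving $\rk_2(T^i_{f_z})\le\rk'_2(T^i)$, and there are exactly $z'(T^i)$ such sets. Adding the $y$- and $z$-parts finishes the proof. I expect no real obstacle. The only points needing care are checking that distinct $f$ values never produce the same set, which would double-count its $y$-value, and the index bookkeeping for $\matroid_2$.
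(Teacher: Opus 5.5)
Your proposal is correct and is essentially the paper's proof: each chain set $S^i$ (resp.\ $T^i$) gives rise to exactly $y'(S^i)$ (resp.\ $z'(T^i)$) unit-valued sets, each of rank at most $\rk'_1(S^i)$ (resp.\ $\rk'_2(T^i)$), and summing over the two chains gives the bound. Your single-index projection argument supplies the justification of $\rk_1(S^i_f)\le\rk'_1(S^i)$ that the paper only asserts, and the one small imprecision (several $S^i_f$ could coincide as $\emptyset$, so ``distinct indices give distinct sets'' is not quite true) is harmless since, as you note, such sets have rank $0$.
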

\begin{proof}
    \sloppy Consider a set $S^i$ in supp($y'$) which is assigned dual values $y'(S^i)$. The sets $S^i_{f_y}$ for $f_y\in  \set{\sum_{j=1}^{i-1}y'(S^j)+1,\cdots, \sum_{j=1}^i y'(S^j)}$ together inherit the value from $y'(S_i)$, since $y(S^i_{f_y})=1$ for each such set. By Observation~\ref{obs:descriptionofinheritedsetsy} and \Cref{def:matroidunfolding}\ref{def:matroidunfolding:item:matroid1}, we can deduce that $\rk'_1(S^i)\geq \rk_1(S^i_{f_y})$ for all $f_y\in \set{\sum_{j=1}^{i-1}y'(S^j)+1,\cdots, \sum_{j=1}^i y'(S^j)}$. Thus, we can conclude that,
    \begin{align*}
        \sum_{S'\subseteq \groundset'} y'(S')\rk'_{1}(S')&=\sum_{j=1}^l y'(S^i)\rk'_1(S^i )\\
        &\text{(Since $\set{S^1,S^2,\cdots,S^l}=\text{supp}(y')$)}\\
        &\geq \sum_{j=1}^l \sum_{f_y=\sum_{j=1}^{i-1}y'(S^j)+1}^{\sum_{j=1}^i y'(S^j)} \rk_{1}(S^i_{f_y})\\
        &\text{(Since $\rk'_1(S^i)\geq \rk_{1}(S^i_{f_y})$)}\\
        &= \sum_{j=1}^l \sum_{f_y=\sum_{j=1}^{i-1}y'(S^j)+1}^{\sum_{j=1}^i y'(S^j)} y(S^i_{f_y})\rk_{1}(S^i_{f_y})\\
        &\text{(Since $y(S^i_{f_y})=1$ by \Cref{line:assigndualy})}\\
        &=\sum_{S\subseteq \groundset} y(S)\rk_1(S).
    \end{align*}
By a similar argument, we can conclude that,
\begin{align*}
    \sum_{S'\subseteq \groundset'} z'(S')\rk'_{2}(S')\geq\sum_{S\subseteq\groundset }z(S)\rk_2(S).
\end{align*}
Combining the two equations, we have our argument.
\end{proof}

We now show \Cref{thm:equivalence}.

\begin{proof}[Proof of \Cref{thm:equivalence}]
    Note that $\opt\geq\opt'$ follows from \Cref{lem:optgeqopt'}. To show the converse, let $y',z'$ be an optimal integral dual solution for $\matroid_1',\matroid_2'$, such that \textup{supp}($y'$) and \textup{supp}($z'$) each form a chain. Observe that we can construct a feasible dual $y,z$ solution for $\matroid_1,\matroid_2$ such that $\sum_{S\subseteq\groundset} y(S)\rk_1(S)+z(S)\rk_2(S)\leq \sum_{S'\subseteq\groundset' }y'(S')\rk'_1(S')+z'(S')\rk'_2(S')$. Thus, we have,
    \begin{align*}
        \opt&\leq \sum_{S\subseteq\groundset} y(S)\rk_1(S)+z(S)\rk_2(S)\\
        &\text{(By weak duality)}\\
    &\leq\sum_{S'\subseteq\groundset' }y'(S')\rk'_1(S')+z'(S')\rk'_2(S')\\
    &\text{(By \Cref{lem:valueofoutputduals})}\\
    &=\opt'\\
    &\text{(As $y',z'$ is optimal by strong duality)}
    \end{align*}
This concludes our proof.
\end{proof}

Finally, we note that after unfolding both the size of the groundset and the size of the set giving the optimal solution to the respective relevant problem increase as follows.
\begin{observation}[Unfolded Matroids Size]\label{obs:unfoldedmatroidsize}
    Let $\matroid'_1=(\groundset',\independentsets'_1,w),\matroid'_2=(\groundset',\independentsets'_2,w)$ be integer-weighted matroids. Let $r$ be the rank of the intersection of $\matroid'_1,\matroid'_2$. Suppose $\matroid_1=(\groundset,\independentsets_1)$ and $\matroid_2=(\groundset,\independentsets_2)$. Then, $\card{\groundset}\leq W\cdot\card{\groundset'}$ and $\opt\leq W\cdot r$.
\end{observation}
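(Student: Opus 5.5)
The bound $\card{\groundset}\leq W\cdot\card{\groundset'}$ comes straight from \Cref{def:matroidunfolding}\ref{def:matroidunfolding:item:groundset}. Each $e\in\groundset'$ contributes exactly $w(e)$ copies $e_1,\dots,e_{w(e)}$ to $\groundset$, and no other elements are added. Hence $\card{\groundset}=\sum_{e\in\groundset'}w(e)\leq W\cdot\card{\groundset'}$, since $w(e)\leq W$ for every $e$.

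For $\opt\leq W\cdot r$, I would argue in two steps.

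\emph{Step 1: reduce to $\opt'$.} By \Cref{thm:equivalence}, $\opt=\opt'$. So it suffices to bound the weight of a maximum weight common independent set $S'\in\independentsets'_1\cap\independentsets'_2$ of the original instance.

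\emph{Step 2: bound $\opt'$.} Since $S'$ is a common independent set, $\card{S'}\leq r$, where $r$ is the rank of the intersection of $\matroid'_1,\matroid'_2$, i.e.\ the maximum cardinality of a common independent set. Each element of $S'$ has weight at most $W$. Therefore $\opt'=w(S')\leq W\card{S'}\leq W\cdot r$.

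As an alternative that avoids \Cref{thm:equivalence}, one can bound $\opt$ directly. Take any $I\in\independentsets_1\cap\independentsets_2$ and partition it by copy index $i\in[W]$. Each slice $I'_{1,i}=\set{e\mid e_i\in I}$ is independent in $\matroid'_1$, but it need not be independent in $\matroid'_2$. So this route does not immediately give slices of size at most $r$, and the detour through \Cref{thm:equivalence} is the clean approach.

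The only point needing care is the meaning of $r$: it must be the intersection rank of the \emph{weighted} instance $\matroid'_1,\matroid'_2$, not $\min(r_1,r_2)$. Apart from that, the argument has no real obstacle.
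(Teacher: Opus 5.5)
Your proof is correct and follows the reasoning the paper leaves implicit; the paper states this as an observation without proof, immediately after \Cref{thm:equivalence}. The bound $\card{\groundset}=\sum_{e\in\groundset'}w(e)\le W\card{\groundset'}$ comes directly from the definition of unfolding, and $\opt=\opt'\le W\cdot r$ holds because an optimal common independent set has at most $r$ elements, each of weight at most $W$. Your remark that the copy-index slices of an $I\in\independentsets_1\cap\independentsets_2$ are independent only in $\matroid'_1$ (so they give $\opt\le W\min\{r_1,r_2\}$ rather than $W\cdot r$) is accurate, and it explains why routing through \Cref{thm:equivalence} is necessary.
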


\subsection{Aspect Ratio Reduction}
\label{sec:aspectratio}

In this section, we discuss how to reduce the aspect ratio for a special class of matroids called $\varepsilon^{-1}$-spread matroids. Subsequently, we will show how to convert arbitrary weighted matroids into $\varepsilon^{-1}$-spread matroids. 

\subsubsection{Reducing the Aspect Ratio for $\varepsilon^{-1}$-spread Matroids.}
\label{subsec:spreadmatroids}
We begin by defining the notion of $\varepsilon^{-1}$-spread matroids. This is a useful notion because it is relatively easier to find weighted common independent sets in such matroids. 

\begin{definition}[$\varepsilon^{-1}$-spread matroid]
Let $[l_1,r_1],[l_2,r_2],\cdots, [l_k,r_k]$ be disjoint weight intervals with the property that $l_{i+1}>r_{i}\varepsilon^{-1}$. Let $\matroid'_1=(\groundset',\independentsets'_1,w)$ and $\matroid'_2=(\groundset',\independentsets'_2,w)$ be a matroid intersection instance such that $\groundset'_i=\{e \in \groundset' \mid w(e) \in [l_i,r_i]\}$ form a partition of $\groundset'$. Then, $\matroid'_1,\matroid'_2$ are called $\varepsilon^{-1}$-spread matroids.
\end{definition}

Let $\matroid'_1,\matroid'_2$ be an instance of $\varepsilon^{-1}$-spread matroid intersection. For $j\in [k]$ let $L_{j}=[l_j,r_j]$ be the corresponding intervals and define $\matroid'|L_j := \matroid | \{e \in \groundset' \mid w(e) \in L_j\}$ for a matroid $\matroid' = (\groundset', \independentsets', w)$ and a weight interval $L_j$. In addition, let $I'_{j}$ be the optimal common independent set of $\matroid'_1|L_j$ and $\matroid_{2}'|L_j$. Suppose we proceed greedily to combine $\{I'_{j}\}_{j\geq 0}$, in descending order of intervals and only proceeding to add an element $e$ to the combined set $I_{\text{merge}}$ if and only if $I_{\text{merge}}\cup \{e\}$ is a common independent set of $\matroid'_1,\matroid'_2$. Now we want to bound the loss we incur due to the combining step as given in the following lemma.

\begin{lemma}\label{lem:greedymerge}
Let $\matroid_1',\matroid_2'$ be an instance of weighted matroid intersection that is $\varepsilon^{-1}$-spread. For $j\in [k]$, let $L_j=[l_j,r_j]$ be the corresponding intervals and $I_j'$ be a common independent set of $\matroid_1'|L_j$ and $\matroid_2'|L_j$.
Suppose $I_{\text{merge}}$ is obtained by greedily combining $I'_j$ for $j\in [k]$. Then,
\begin{align*}
    w(I_{\text{merge}})\geq (1-4\varepsilon)\sum_{j=1}^k w(I'_j).
\end{align*}
\end{lemma}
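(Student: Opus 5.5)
We will prove the lemma via a global charging argument based on a structural exchange statement (the \Cref{lem_charging} mentioned in the overview), which we state and prove first. Index the intervals so that $L_k$ is the heaviest class, so the greedy merge processes $I'_k, I'_{k-1},\dots,I'_1$ in this order. For $j\in[k]$ let $H_j := I_{\text{merge}}\cap \bigcup_{f>j} I'_f$ be the part of the merged set coming from strictly heavier classes. Let $D_j := I'_j\setminus I_{\text{merge}}$ be the elements of $I'_j$ that greedy rejected. The structural claim is
\[
|D_j| \le 2\,|H_j| \qquad\text{for every } j\in[k].
\]

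To prove the claim, fix $j$ and let $A$ be the greedy set at the moment class $j$ has been fully processed. Then $A = H_j \cup (I'_j\cap I_{\text{merge}})$, since later (lighter) classes only add elements. Each $e\in D_j$ was rejected, so $A\cup\{e\}\notin \independentsets'_1$ or $A\cup\{e\}\notin\independentsets'_2$; say $e\in D_j^{(1)}$ or $e\in D_j^{(2)}$ accordingly. If $e\in D_j^{(1)}$, then $A$ spans $e$ in $\matroid'_1$. Hence $A\cup D_j^{(1)}$ has the same $\matroid'_1$-rank as $A$, namely $\rk'_1(A)=|A|$. On the other hand $I'_j\subseteq A\cup D_j^{(1)}\cup D_j^{(2)}$, and $(I'_j\cap I_{\text{merge}})\cup D_j^{(1)}\subseteq I'_j$ is independent in $\matroid'_1$. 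Therefore
\[
|I'_j\cap I_{\text{merge}}| + |D_j^{(1)}| \le \rk'_1(A) = |H_j| + |I'_j\cap I_{\text{merge}}|,
\]
which gives $|D_j^{(1)}|\le |H_j|$. Symmetrically $|D_j^{(2)}|\le|H_j|$, and summing yields $|D_j|\le 2|H_j|$. This rank argument, essentially that adding spanned elements does not increase rank, is the heart of the proof. It is the step I expect to need the most care: keeping the right snapshot $A$, and noting that spanning by the snapshot persists because span is monotone.

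Next we convert the counting bound into a weight bound. Every element of $I'_j$ has weight at most $r_j$, and every element of $H_j$ lies in some class $f>j$ and has weight at least $l_f$. Write $h_f := |I_{\text{merge}}\cap I'_f|$. The spread condition gives $l_f > \varepsilon^{-1} r_{f-1}\ge \varepsilon^{-(f-j)} r_j$, using $l_{i}\le r_i$ and chaining. So
\[
w(D_j)\le r_j |D_j| \le 2 r_j\sum_{f>j} h_f .
\]
Summing over $j$ and exchanging the order of summation, each element $x\in I_{\text{merge}}\cap I'_f$ is charged
\[
2\sum_{j<f} r_j \le 2\, w(x)\sum_{t\ge1}\varepsilon^{t} = \frac{2\varepsilon}{1-\varepsilon}\,w(x)\le 4\varepsilon\, w(x)
\]
for $\varepsilon\le 1/2$.

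Hence the total lost weight satisfies $\sum_j w(D_j)\le 4\varepsilon\, w(I_{\text{merge}})\le 4\varepsilon\sum_j w(I'_j)$. Since $I_{\text{merge}}\subseteq\bigcup_j I'_j$ and the $I'_j$ are disjoint, $w(I_{\text{merge}}) = \sum_j w(I'_j) - \sum_j w(D_j)\ge (1-4\varepsilon)\sum_j w(I'_j)$. The geometric summation is routine; the only subtlety is the chained bound $r_j\le \varepsilon^{f-j} l_f$, which follows directly from the definition of $\varepsilon^{-1}$-spread.
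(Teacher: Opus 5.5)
Your proof is correct. The weight-conversion half is the same charging the paper uses: the chained bound $r_j<\varepsilon^{f-j}l_f$, summing the charges received by each heavier element, and $2\varepsilon/(1-\varepsilon)\le 4\varepsilon$. The genuine difference is the structural step.

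\textbf{The paper's route.} The paper first proves \Cref{lem_charging}: if $e_1,\dots,e_l\in S'_j$ are spanned by an independent $I'$, then the fundamental circuits $C_t\subseteq I'\cup\{e_t\}$ together meet $I'\setminus S'_j$ in at least $l$ elements. This is shown by induction on the number of circuits, using strong circuit elimination and swapping an element out of $I'$. The paper then applies it to whichever matroid spans at least half of the rejected elements of $I'_j$. That yields $\lceil l/2\rceil$ heavier elements, each of which absorbs at most two charges.

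\textbf{Your route.} You bypass this with a single rank computation. Because the snapshot $A$ spans $D^{(1)}_j$ in $\matroid'_1$, you have $\rk'_1(A\cup D^{(1)}_j)=|A|$. The independent subset $(I'_j\cap I_{\text{merge}})\cup D^{(1)}_j$ of $A\cup D^{(1)}_j$ then forces $|D^{(1)}_j|\le|H_j|$. Doing this in both matroids gives $|D_j|\le 2|H_j|$ directly. Your argument is shorter, needs only the closure property of rank, and avoids the delicate inductive bookkeeping of the paper's proof.

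\textbf{What each buys.} The paper's lemma gives a more local attribution: blame lands only on elements of the fundamental circuits, not on all of $H_j$. The weight bound uses only the count, so nothing is lost by your version. In fact your argument recovers the paper's local statement as well. Run it with $\bigl((\bigcup_t C_t)\cap(I'\setminus S'_j)\bigr)\cup(I'\cap S'_j)$ in place of $A$; this set already spans every $e_t$, and $(I'\cap S'_j)\cup\{e_1,\dots,e_l\}\subseteq S'_j$ is independent. The same rank count then gives $l\le\bigl|(\bigcup_t C_t)\cap(I'\setminus S'_j)\bigr|$.
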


In order to show this, we will prove that we can charge the weight of an element in $I_j'$, which was not included, to an element in $I_\text{merge} \cap I_i'$ for some $i > j$. Furthermore, there is an assignment such that each element in $I_\text{merge}$ is charged by at most two elements from each set $I_j'$ for $j \geq 0.$ In the argument we will use the following structural lemma.

\begin{lemma}\label{lem_charging}
    Let $\matroid'=(\groundset',\independentsets',w)$ be a $\varepsilon^{-1}$-spread matroid with corresponding weight intervals $L_{i}=[l_i,r_i]$ for $i\geq 0$. Let $I'$ be an independent set in the matroid $\mathcal{M}'$  such that $I' \subseteq \bigcup_{t \geq j} S'_t$, where $S'_t$ is an independent set in $\mathcal{M}' | L_t$. Let further $\{e_1, \cdots e_l\} \subseteq S'_j$ be the set of elements in the independent set of level $j$ that are spanned by $I'$. Let $C_t$ be the unique circuit in $I' \cup \{e_t\}$ for every $t \in [l]$, then $\card{\left(\bigcup_{t=1}^l C_t \right) \cap (I' \setminus S'_j)} \geq l$.
\end{lemma}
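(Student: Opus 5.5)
The approach is a rank-counting argument inside the independent set $I'$; the spread structure plays no role. Split $I'$ into the part coming from level $j$ and the rest: let $A := I' \cap S'_j$ and $B := I' \setminus S'_j$. Put $D := \left(\bigcup_{t=1}^l C_t\right) \cap B$, so the goal is $\card{D} \geq l$.

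First I record two basic facts. Since $e_t$ is spanned by $I'$, the set $I' \cup \{e_t\}$ is dependent. Hence $e_t \notin I'$, because otherwise $I' \cup \{e_t\} = I'$ would be independent. Consequently the elements $e_1,\dots,e_l$ are distinct from the elements of $A$. Both $A$ and $\{e_1,\dots,e_l\}$ lie in $S'_j$, so $A \cup \{e_1,\dots,e_l\} \subseteq S'_j$ is independent, being a subset of an independent set. Its size is exactly $\card{A} + l$.

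Next I use the circuits. $C_t$ is the unique circuit in $I' \cup \{e_t\}$, so $e_t \in C_t$ and $C_t \setminus \{e_t\} \subseteq I'$. Since $C_t$ is a minimal dependent set, $C_t \setminus \{e_t\}$ is independent while adding $e_t$ creates a dependency, so $C_t \setminus \{e_t\}$ spans $e_t$. Every element of $C_t \setminus \{e_t\}$ lies in $I' = A \cup B$. Those in $B$ are in $D$ by definition, so $C_t \setminus \{e_t\} \subseteq A \cup D$. Therefore $A \cup D$ spans each $e_t$, and trivially it spans $A$. In rank terms, $\rk(A \cup D) = \rk(A \cup D \cup \{e_1,\dots,e_l\})$, by the standard closure property that adding elements spanned by a set does not increase its rank.

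Finally I compare ranks. On one hand,
\[
\rk(A \cup D) = \rk(A \cup D \cup \{e_1,\dots,e_l\}) \geq \rk(A \cup \{e_1,\dots,e_l\}) = \card{A} + l,
\]
using monotonicity of rank and the independence shown above. On the other hand, $A \cup D \subseteq I'$ is independent and $A \cap D = \emptyset$, since $D \subseteq B$. So $\rk(A \cup D) = \card{A} + \card{D}$. Comparing the two gives $\card{D} \geq l$, which is the claim.

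The only step needing care is the closure fact that $A \cup D$ spanning each $e_t$ individually implies the rank is unchanged after adding all of them together. I will justify it by induction on $t$ using submodularity of the rank function, or equivalently by the standard property that the span operator is a closure operator. Everything else is direct bookkeeping.
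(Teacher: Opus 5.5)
Your proof is correct, and it takes a different route from the paper's. The paper argues by contradiction, with induction on the number of circuits. If fewer than $l$ elements of $I'\setminus S'_j$ meet $\bigcup_t C_t$, then some $b\in I'\setminus S'_j$ lies in two of the circuits. Circuit elimination against $C_1$ replaces each circuit through $b$ by one avoiding $b$, and the argument pivots to the independent set $I'\cup\{e_1\}\setminus\{b\}$ with one fewer circuit, eventually producing a circuit inside $S'_j$.

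You replace all of this with a single counting step. The independent set $A\cup\{e_1,\dots,e_l\}\subseteq S'_j$ has size $\card{A}+l$ and lies in the span of the independent set $A\cup D\subseteq I'$, which forces $\card{A}+\card{D}\geq\card{A}+l$. Your route buys brevity and robustness. The paper's induction is applied to a modified independent set and modified circuits, not to a sub-collection of the original $C_t$, so it really has to range over all configurations; its write-up leaves this implicit, and your argument needs no such bookkeeping. It also shows that neither the spread structure nor the hypothesis $I'\subseteq\bigcup_{t\geq j}S'_t$ is used.

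The closure step you flagged does not even need submodularity. If $\card{A\cup D}<\card{A}+l$, the exchange axiom applied to $A\cup D$ and $A\cup\{e_1,\dots,e_l\}$ gives some $e_t\notin A\cup D$ with $A\cup D\cup\{e_t\}$ independent. This contradicts $C_t\subseteq A\cup D\cup\{e_t\}$.

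The paper's route stays within the circuit axioms in which its charging argument is phrased, but it is longer and more delicate.
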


\begin{proof}
We proceed by contradiction to show that if $\card{\left(\bigcup_{t=1}^l C_t \right) \cap (I' \setminus S'_j)} < l$ then, there must be a circuit in $S'_j$, thus contradicting the fact that $S'_j$ is an independent set. To show this, we proceed by induction. 

First let us show the base case. Consider any two circuits and without loss of generality, let these be $C_1$ and $C_2$. Observe that for both $C_1$ and $C_2$, we have, $\card{C_1\cap I'\setminus S'_j}\geq1$ and $\card{C_2\cap I'\setminus S'_j}\geq 1$. If this is not the case, then this would imply either $C_1\subseteq S'_j$ or $C_2\subseteq S'_j$ and we would have a contradiction. Assume that $\card{(C_1\cup C_2)\cap I'\setminus S'_j}=1$. Let this element be $a$. Thus, we have, $C_1\cup C_2\setminus \{a\}\subseteq S'_j$. Since $C_1\neq C_2$ by \Cref{prop:prop_circuit}\ref{prop:prop_circuit:item:prop_circuit2}, we can conclude that $C_1\cup C_2\setminus \{a\}\subseteq S'_j$ contains a circuit. 

By inductive hypothesis, we can assume that the claim applies to any $t<l$ circuits from $\{C_1,\cdots, C_l\}$. We will now show the claim for any subset of $t+1$ circuits. Without loss of generality, let these $t+1$ circuits be $C_1,\cdots, C_{t+1}$. We additionally know that for all $s\in [t+1]$, $\card{C_s\cap I'\setminus S'_j}\geq 1$. Thus, if $\card{\cup_{i=1}^{t+1}C_{i}\cap I'\setminus S'_j}<t+1$, this implies that there exist an element $b\in I'\setminus S'_j$ such that at $b\in C_j$ for at least two $C_j$. We define $\mathcal{C}_b=\set{C_i\mid b\in C_i, i\in [t+1]}$ and without loss of generality, assume $C_1\in \mathcal{C}_b$. For $C_t\in \mathcal{C}_b\setminus \{C_1\}$, let $C_t'\subseteq (C_t\cup C_1)\setminus \{b\}$ be a circuit containing $e_t$ (this exists by \Cref{prop:prop_circuit}\ref{prop:prop_circuit:item:prop_circuit2}). Let this collection be called $\mathcal{C}'_b$. Observe that $C_s'\neq C_{t}'$ as $e_s\in C_s'$ and $e_t\in C_t'$. We now define $\mathcal{C}_{\text{new}}=(\mathcal{C}\setminus \mathcal{C}_b)\cup \mathcal{C}_b'$. Let $I'_{\text{new}}=I'\cup\{e_1\}\setminus \{b\}$ be an independent subset of $\cup_{f=j}^k S'_{f}$. Moreover, each $C_s\in \mathcal{C}_{\text{new}}\setminus \mathcal{C}_b$, is a circuit in $I'_{\text{new}}\cup\{e_s\}$ and each $C'_{s}\in \mathcal{C}_{\text{new}}\cap \mathcal{C}'_b$ is a circuit in $I'_{\text{new}}\cup \{e_s\}$. Thus, $\mathcal{C}_{\text{new}}$ and $I'_{\text{new}}$ satisfy the premise of the inductive hypothesis. However, $\card{\mathcal{C}_{\text{new}}}=t$ and $\card{\cup_{C\in \mathcal{C}_{\text{new}}}C\cap (I'_{\text{new}}\setminus S'_j)}<t$, thus we can conclude by inductive hypothesis that there must be a circuit in $S'_j$. This contradicts the assumption that $S'_j$ is an independent set and concludes the proof of the lemma.
\end{proof}
With this, we are prepared to show the proof of \Cref{lem:greedymerge} about greedily combining the common independent sets of intervals of $\varepsilon^{-1}$-spread matroids. 
\begin{proof}[Proof of \Cref{lem:greedymerge}]
    Let us consider the elements $\{e_1,\cdots,e_l\}$ that were excluded in $I'_j$. Define $I^j_{\text{merge}}=I_{\text{merge}}\cap \cup_{s=1}^j I'_s$. An element $e_i$ was not included in $I_{\text{merge}}$ since $I^j_{\text{merge}}\cup\{e_i\}$ contained a circuit in either $\matroid'_1$ or $\matroid'_2$. Thus we can assume without loss of generality, that $\{e_1,\cdots,e_{\lceil\frac{l}{2}\rceil}\}$ are spanned by $I^j_{\text{merge}}$ in $\matroid'_1$ (by \Cref{lem_charging}). Let $C_{t}$ be the unique circuit in $I^j_{\text{merge}}\cup\{e_t\}$ for $t\in \lceil\frac{l}{2}\rceil$, then $\card{(\cup_{t=1}^{\lceil\frac{l}{2}\rceil}C_t)\cap (I^{j}_{\text{merge}}\setminus I'_{j})}\geq \lceil\frac{l}{2}\rceil$. Let $S=(\cup_{t=1}^{\lceil\frac{l}{2}\rceil}C_t)\cap (I^{j}_{\text{merge}}\setminus I'_{j})$. To each $e\in S\cap L_{p}$ for $p\geq j+1$, we charge the weight of (at most) two elements from $\{e_1,\cdots, e_l\}$ to $e$. Observe that $w(e)\geq l_{p}$ and for each $e_{i}$, $w(e_i)\leq r_j$. Thus, $e$ gets at most $2\varepsilon^{p-j}\cdot w(e)$ charge from $L_{j}$. Thus, the total charge on $e\in S\cap L_{p}$ is at most:
    \begin{align*}
        \sum_{j<p}2\varepsilon^{p-j}w(e)\leq \frac{2\varepsilon w(e)}{1-\varepsilon} \leq 4\varepsilon w(e).
    \end{align*}
where the last inequality holds when $\varepsilon<\frac{1}{2}$. Thus, we can conclude that each excluded element charges its weight to some element of $I_{\text{merge}}$ and the total charge on each element $e$ of $I_{\text{merge}}$ is at most $4\varepsilon w(e)$. Consequently,
\begin{align*}
w(I_{\text{merge}}) &= \sum_{j = 1}^k w(I'_j) - \sum_{e \in \bigcup_{j=1}^k I'_j \setminus I_{\text{merge}}}w(e)\\
&\geq \sum_{j = 1}^k w(I'_j) - 4 \varepsilon\sum_{e \in I_{\text{merge}}} w(e)\\
& \geq (1 - 4\varepsilon)\sum_{j = 1}^k w(I'_j)\qedhere
\end{align*}
\end{proof}

We now show the following corollary about $\varepsilon^{-1}$-spread matroids.

\begin{corollary}\label{cor:greedymergeguarantee}
Let $\matroid_1',\matroid_2'$ be an instance of weighted matroid intersection that is $\varepsilon^{-1}$-spread. For $j\in [k]$, let $L_j=[l_j,r_j]$ be the corresponding intervals and $I_j'$ be an $\alpha$-approximate common independent set of $\matroid_1'|L_j$ and $\matroid_2'|L_j$.
Suppose $I_{\text{merge}}$ is obtained by greedily combining $I'_j$ for $j\in [k]$. Then, 
\begin{align*}
    w(I_{\text{merge}})\geq \alpha \cdot (1-4\varepsilon)\cdot w(I^*),
\end{align*}
where $I^*$ is the maximum weight common independent set of $\matroid_1',\matroid'_2$. Moreover, the time taken by greedy combine to compute $I_{\text{merge}}$ is at most $O(r\cdot k)$.
\end{corollary}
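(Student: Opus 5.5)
The proof has two parts: the approximation guarantee and the running time.

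\textbf{Approximation guarantee.} By \Cref{lem:greedymerge}, applied directly to the sets $I'_j$ (each is a common independent set of $\matroid_1'|L_j$ and $\matroid_2'|L_j$), we get
\[
w(I_{\text{merge}})\geq (1-4\varepsilon)\sum_{j=1}^k w(I'_j).
\]
It therefore suffices to show $\sum_{j} w(I'_j)\geq \alpha\cdot w(I^*)$. Write $I^*_j := I^*\cap \groundset'_j$, where $\groundset'_j$ is the set of elements with weight in $L_j$. Since the $\groundset'_j$ partition $\groundset'$ by the definition of $\varepsilon^{-1}$-spread matroids, we have $w(I^*)=\sum_j w(I^*_j)$. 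By downward closure, each $I^*_j$ is independent in both $\matroid_1'$ and $\matroid_2'$. It lies inside $\groundset'_j$, so it is a common independent set of $\matroid_1'|L_j$ and $\matroid_2'|L_j$. Hence the optimum of the restricted instance is at least $w(I^*_j)$, and the $\alpha$-approximation gives $w(I'_j)\geq \alpha\, w(I^*_j)$. Summing over $j$ and combining with \Cref{lem:greedymerge} yields the claimed bound $\alpha(1-4\varepsilon)w(I^*)$.

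\textbf{Running time.} Greedy combine processes the elements of $\bigcup_j I'_j$ one at a time, in descending order of intervals. For each element $e$ it issues one independence query to each of $\matroid_1'$ and $\matroid_2'$ on $I_{\text{merge}}\cup\{e\}$. Each $I'_j$ is common independent, so $|I'_j|\leq r$, and the total number of elements processed is at most $\sum_j |I'_j|\leq r\cdot k$. This gives $O(r\cdot k)$ queries (and time).

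\textbf{Main subtlety.} The only delicate point is the meaning of ``$\alpha$-approximate'' relative to the restricted optimum. One has to verify that the optimum of $\matroid_1'|L_j,\matroid_2'|L_j$ dominates $w(I^*\cap \groundset'_j)$; this is exactly the downward-closure argument above. Everything else follows directly from \Cref{lem:greedymerge}.
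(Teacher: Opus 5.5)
Your proof is correct and follows the paper's route: apply \Cref{lem:greedymerge}, then bound each $w(I'_j)$ below by $\alpha$ times the restricted optimum, which is at least $w(I^*\cap\groundset'_j)$. The paper uses this last fact without comment in its final inequality, and you supply the downward-closure justification. Your count of at most $\sum_j |I'_j|\le r\cdot k$ elements, each costing two independence queries, also correctly fills in the $O(r\cdot k)$ bound, which the paper states but does not prove.
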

\begin{proof}
Let $I_{j}^*$ denote the optimal common independent set of $\matroid_1'|L_j$ and $\matroid_2'|L_j$. We can conclude the following. 
\begin{align*}
    w(I_{\text{merge}})&\geq (1-4\varepsilon)\sum_{j=1}^{k}w(I'_j)\\
    &\geq \alpha\cdot (1-4\varepsilon)\cdot \sum_{j=1}^{k}w(I^*_j)\\
    &\geq \alpha\cdot (1-4\varepsilon)\cdot w(I^*).\qedhere
\end{align*}
\end{proof}
\subsubsection{Converting Arbitrary Matroids to $\varepsilon^{-1}$-Spread Matroids.}
\label{subsec:generalmatroids}

We begin by giving the following definition.

\begin{definition}[$\varepsilon^{-1}$-matroids from arbitrary matroids]\label{def:spreadmatroids}
    Suppose $\matroid'_1=(\groundset',\independentsets'_1,w)$ and $\matroid'_2=(\groundset',\independentsets'_2,w)$ is an instance of weighted matroid intersection. Let $\beta=\lceil\varepsilon^{-1}\rceil$. For $i\in [\beta]$ and $l\geq 0$, we let 
    $$\groundset'_{i,l}:=\set{e\in \groundset'\mid (\varepsilon^{-1})^{i+(l-1)\beta+1}\leq w(e)<(\varepsilon^{-1})^{i+l\beta}}.$$ Let $\groundset'_i=\cup_{l} \groundset'_{i,l}$. Define $\matroid'_{1,i}=\matroid'_{1}| \groundset'_i$ and $\matroid'_{2,i}=\matroid'_{2}| \groundset'_i$.
\end{definition}

\begin{observation}\label{obs:spreadmatroidsarespread}
Given an instance $\matroid'_1=(\groundset',\independentsets'_1,w)$ and $\matroid'_2=(\groundset',\independentsets'_2,w)$ of matroid intersection. For $i\in [\beta]$, let $\matroid'_{1,i}$ and $\matroid'_{2,i}$ be defined as above. These matroids have the following properties.
\begin{enumerate}
\item For all $i\in [\beta]$, $\matroid'_{1,i}$ and $\matroid'_{2,i}$ are $\varepsilon^{-1}$-spread. 
\item For all $i\in [\beta]$, for all $l$, $\groundset'_{i,l}$ the ratio between the maximum and minimum weight element is $\gamma_{\varepsilon}$. Moreover, by standard scaling and rounding (\Cref{sec:rescaling}), we can assume that $\matroid'_{1,i}\mid \groundset'_{i,l}$ and $\matroid'_{2,i}\mid\groundset_{i,l}$ are weighted matroids with integer element weights $\set{1,2,\cdots, \gamma_{\varepsilon}}$.
\end{enumerate}
\end{observation}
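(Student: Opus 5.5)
The plan is to verify both items directly from the exponents in \Cref{def:spreadmatroids}. Fix $i\in[\beta]$ and write $q=\varepsilon^{-1}$. For each $l\geq 0$, the group $\groundset'_{i,l}$ consists of the elements with weight in the half-open interval $[q^{\,i+(l-1)\beta+1},\,q^{\,i+l\beta})$. The elements with weight in $[q^{\,i+l\beta},\,q^{\,i+l\beta+1})$ lie in no group $\groundset'_{i,l'}$; these are the ``missing group'' for index $i$. As $l$ ranges over $l\geq 0$, the groups $\groundset'_{i,l}$ are pairwise disjoint, because their defining intervals are disjoint, and by definition they partition $\groundset'_i$.

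For item 1, I take as the spread intervals the tight ones: $L_l=[l_l,r_l]$, where $l_l$ and $r_l$ are the minimum and maximum weights actually occurring in $\groundset'_{i,l}$, discarding empty groups. These intervals are contained in the half-open intervals above, so they are disjoint and still partition $\groundset'_i$. The key inequality is the gap condition $l_{l+1}>r_l\,\varepsilon^{-1}$. Every element of $\groundset'_{i,l}$ has weight strictly less than $q^{\,i+l\beta}$, so $r_l<q^{\,i+l\beta}$. Every element of $\groundset'_{i,l+1}$ has weight at least $q^{\,i+l\beta+1}$. Hence
\[
l_{l+1}\;\geq\; q^{\,i+l\beta+1}\;=\;q\cdot q^{\,i+l\beta}\;>\;q\, r_l .
\]
If some intermediate groups are empty, the gap between the next nonempty groups is only larger, by the same computation. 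Therefore $\matroid'_{1,i}$ and $\matroid'_{2,i}$, which are the restrictions to $\groundset'_i$, satisfy the definition of $\varepsilon^{-1}$-spread matroids with these intervals.

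For item 2, the ratio of maximum to minimum weight inside $\groundset'_{i,l}$ is less than $q^{\,i+l\beta}/q^{\,i+(l-1)\beta+1}=q^{\beta-1}$. Since $\beta=\lceil\varepsilon^{-1}\rceil<\varepsilon^{-1}+1$, we have $\beta-1<\varepsilon^{-1}$, so the ratio is at most $(1/\varepsilon)^{1/\varepsilon}\leq\gamma_\varepsilon$. The integrality claim I would only reduce to \Cref{sec:rescaling}. There, each group is divided by its minimum weight, which puts the weights in $[1,\gamma_\varepsilon]$, and then rounded to integers in $\{1,\dots,\gamma_\varepsilon\}$. Rescaling does not change which sets are optimal within a group, and the rounding error is absorbed into the $(1-\varepsilon)$ factor.

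The only real obstacle is bookkeeping. One must use the actual attained extremes (or the open upper endpoint) so that the spread inequality is strict, and one must check that $\beta-1$, rather than $\beta$, is the exponent that controls the ratio. Otherwise the bound $\gamma_\varepsilon=\lceil(1/\varepsilon)^{1/\varepsilon}\rceil$ would be off by a factor of $q$.
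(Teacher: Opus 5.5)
Your treatment of item~1 and of the ratio bound is correct, and it is the direct check the paper intends; the paper states this observation without proof. Taking the attained extreme weights of each nonempty $\groundset'_{i,l}$ as closed intervals gives $l_{l+1}\ge(\varepsilon^{-1})^{i+l\beta+1}>\varepsilon^{-1}r_l$. The ratio inside a group is below $(\varepsilon^{-1})^{\beta-1}\le(\varepsilon^{-1})^{1/\varepsilon}\le\gamma_\varepsilon$.

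The gap is in the integrality part. You attribute to \Cref{sec:rescaling} the scheme of normalizing each group so its minimum weight is $1$ and then rounding into $\{1,\dots,\gamma_\varepsilon\}$. That scheme does not cost only a $(1-\varepsilon)$ factor. Take two conflicting elements of normalized weights $1$ and $1.9$: both round down to $1$, so an optimal (or $\alpha$-approximate) solution for the rounded weights can lose almost a factor $2$. Rounding up, or to the nearest integer, fails in the same way on other pairs.

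Integer rounding is only $(1-O(\varepsilon))$-accurate once every weight is $\Omega(\varepsilon^{-1})$. That is why the appendix first multiplies by $2/(\varepsilon W_{\min})$. The price is an integer range of $\lceil 2R/\varepsilon\rceil$ rather than $R$, i.e.\ up to about $2(\varepsilon^{-1})^{\beta}$ here. So a faithful appeal to the appendix does not literally give $\{1,\dots,\gamma_\varepsilon\}$.

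To get that range you must pay the extra factor $\varepsilon^{-1}$ explicitly. Scale each group by $1/(\varepsilon\cdot\min_{e\in\groundset'_{i,l}}w(e))$, so its weights lie in $[\varepsilon^{-1},(\varepsilon^{-1})^{\beta})$. Taking floors then keeps more than a $(1-\varepsilon)$ fraction of every weight, since $\lfloor x\rfloor>x-1\ge(1-\varepsilon)x$ for $x\ge\varepsilon^{-1}$. The largest rounded value is below $(\varepsilon^{-1})^{\beta}$, which equals $\gamma_\varepsilon$ when $\varepsilon^{-1}$ is an integer. In that case your ratio bound $(\varepsilon^{-1})^{\beta-1}$ sits a full factor $\varepsilon^{-1}$ below $\gamma_\varepsilon$, and that factor is exactly the headroom the rounding uses up.
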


\begin{lemma}\label{lem:oneindependentsetisgood}
Let $I^*$ be the maximum weight common independent set of matroids $\matroid'_1$ and $\matroid'_2$. For $i\in [\beta]$, let $\matroid'_{1,i}$ and $\matroid'_{2,i}$ be as defined above. Let $I^*_{i}$ be the maximum weight common independent sets of $\matroid'_{1,i},\matroid'_{2,i}$ respectively. Then, there exist $i\in [\beta]$ such that $w(I^*_i)\geq (1-\varepsilon)w(I^*)$. 
\end{lemma}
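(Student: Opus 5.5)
This is a pigeonhole (averaging) argument over the $\beta$ choices of which groups are missing. Fix $I^*$, and let $D_i := I^*\setminus \groundset'_i$ be the elements of $I^*$ dropped when we pass to $\matroid'_{1,i},\matroid'_{2,i}$.

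First I will show that $I^*\cap\groundset'_i$ is a feasible solution for the $i$-th instance. Since $I^*\cap\groundset'_i\subseteq I^*$, downward closure in $\matroid'_1$ and $\matroid'_2$ gives $I^*\cap \groundset'_i\in \independentsets'_1\cap\independentsets'_2$. This set lies inside $\groundset'_i$, so it is common independent in the restrictions $\matroid'_{1,i},\matroid'_{2,i}$. Hence
\[
w(I^*_i)\;\geq\; w(I^*\cap\groundset'_i)\;=\;w(I^*)-w(D_i).
\]

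Next I will show that every element of $\groundset'$ is dropped in at most one of the $\beta$ instances. Writing $q=\varepsilon^{-1}$, partition the positive reals into the geometric groups $G_m=[q^{m},q^{m+1})$, indexed by $m\in\mathbb{Z}$. Unwinding \Cref{def:spreadmatroids}, for each $i$ the set $\groundset'_i$ is the union of maximal runs of consecutive groups $G_m$ with $m$ ranging over $i+(l-1)\beta+1,\dots,i+l\beta-1$. Thus $\groundset'_i$ omits exactly the groups $G_m$ with $m\equiv i \pmod{\beta}$. Since each $m$ has a single residue mod $\beta$, each element $e$ (lying in a unique $G_m$) belongs to $D_i$ for at most one $i\in[\beta]$. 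Consequently the sets $D_1,\dots,D_\beta$ are pairwise disjoint subsets of $I^*$, and
\[
\sum_{i=1}^{\beta} w(D_i)\;\leq\; w(I^*).
\]

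Averaging over $i$, some $i$ satisfies $w(D_i)\leq w(I^*)/\beta\leq \varepsilon\, w(I^*)$, because $\beta=\lceil\varepsilon^{-1}\rceil\geq \varepsilon^{-1}$. For this $i$ we get $w(I^*_i)\geq(1-\varepsilon)w(I^*)$, as required.

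The only delicate step is the index bookkeeping in \Cref{def:spreadmatroids}. I need to confirm that the exponent intervals for different $l$ tile the real line except for one missing group per period, and that the missing residue class shifts with $i$, so that the missing groups across $i\in[\beta]$ are disjoint. The boundary conventions (half-open intervals, and the fact that $q$ need not be an integer) are harmless, because only the disjointness of the $D_i$ is used. If the stated ranges do not literally tile, I will read the definition as intended in the overview (one group deleted per $\beta$ consecutive groups, the deleted group shifted by $i$) and verify disjointness of the dropped sets directly.
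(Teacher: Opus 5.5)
Your proof is correct and is essentially the paper's argument: the paper sets $\mathcal{F}'_i=\groundset'\setminus\groundset'_i$, asserts that these sets partition $\groundset'$, picks $i$ with $w(I^*\cap\mathcal{F}'_i)\le\varepsilon\, w(I^*)$ by pigeonhole, and then uses that $I^*\cap\groundset'_i$ is common independent in the restricted matroids. The only differences are that you need just pairwise disjointness of the dropped sets $D_i$, and that you actually check the index bookkeeping (the missing groups are those with $m\equiv i \pmod{\beta}$), which the paper leaves implicit; since $l\ge 0$, this tiling quietly assumes the weights are normalized (e.g.\ $w\ge 1$), and the paper makes the same assumption without stating it.
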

\begin{proof}
    Let $\mathcal{F}'_i=\groundset'\setminus \groundset'_i$. Observe that $\set{\mathcal{F}'_{i}}_{i\in [\beta]}$ form a partition of $\groundset'$. Consequently, $\set{I^*\cap \mathcal{F}'_{i}}_{i\in [\beta]}$ form a partition of $I^*$. Thus, there exist $i\in [\beta]$ such that $w(I^*\cap \mathcal{F}'_{i})\leq \varepsilon\cdot w(I^*)$. Thus, for such an $i$, $w(I^*\cap \groundset'_i)\geq (1-\varepsilon)w(I^*)$. Observe that $I^*\cap \groundset'_{i}$ is a common independent set of $\matroid'_1,\matroid'_2$. Thus, we can conclude that $w(I^*_{i})\geq w(I^*\cap \groundset'_{i})\geq (1-\varepsilon)w(I^*)$. This concludes the proof.
\end{proof}

\begin{claim2}\label{claim:reductionspreadtoarbitrary}
    Suppose we have an algorithm $\mathcal{A}_w$ which computes an $\alpha$-approximate maximum weight common independent set of $\varepsilon^{-1}$-spread matroids in $T_{w}$ time. Then there is an algorithm $\mathcal{A}'_{w}$ that computes an $\alpha(1-\varepsilon)$-approximate maximum weight independent set of arbitrary matroids in total time $O(T_w\cdot \varepsilon^{-1})$.  
\end{claim2}

\begin{proof}
    Given two arbitrary weighted matroids $\matroid_1'=(\groundset',\independentsets_1',w)$ and $\matroid'_2=(\groundset',\independentsets_2',w)$ we can convert these into $\beta$ pairs of matroids $\set{(\matroid'_{1,i},\matroid'_{2,i})}_{i\in [\beta]}$ (in accordance with \Cref{def:spreadmatroids}) that are $\varepsilon^{-1}$-spread (by Observation~\ref{obs:spreadmatroidsarespread}). We run $\mathcal{A}_w$ on each of these to obtain $(1-\varepsilon)$-approximate maximum weight independent sets $I_i$. We output the one with the highest weight. By \Cref{lem:oneindependentsetisgood} we can conclude that $w(I_i)\geq (1-2\varepsilon) w(I^*)$, where $I^*$ is the maximum weight common independent set of $\matroid'_1$ and $\matroid'_2$.
\end{proof}

\subsubsection{Aspect Ratio Reduction for Matroid Parity.}

The problem of matroid parity (or matroid matching) generalizes both matching and matroid intersection. Here, we are given a matroid $\matroid = (\groundset, \independentsets)$ as well as a collection $E$ of disjoint unordered pairs of elements in $\groundset$, where each pair forms an independent set in $\independentsets$. The aim is to find a disjoint subset of $E$, such that the elements form an independent set of maximum cardinality or weight in the unweighted or weighted version of the problem, respectively. This problem is equivalent to matroid matching, where pairs in $E$ need not be disjoint~\cite{LeeSV13}.

The aspect ratio reduction presented in this section, also extends to the matroid parity problem, reducing the total weight $w(P) := w(e_1) + w(e_2)$ of a pair $P = \{e_1, e_2\} \in E$. Instead of considering individual elements, we now need the weights of all pairs to fall into disjoint weight intervals $[l_1,r_1],[l_2,r_2],\ldots, [l_k, r_k]$ such that $l_{i+1} > r_i\varepsilon^{-1}$. This can be achieved analogous to \ref{def:spreadmatroids}, setting $E_{i,l} := \{ \{e_1, e_2\} \in E \mid (\varepsilon^{-1})^{i +(l-1)\beta + 1} \le w(e_1) + w(e_2) < (\varepsilon^{-1})^{i + l\beta}\}$ and $\groundset_{i,l} := \bigcup_{P \in E_i, l} P$ for $i \in [\beta] $ and $ l \geq 0$. Then we consider the matroid $\matroid_i = \matroid | \groundset_i$ where $\groundset_i := \bigcup_{l \ge 0} \groundset_{i,l}$. 

Now, if we have an approximate matroid parity solution $I_{i,l} \subseteq E_{i,l}$ for each $\matroid_i |  \groundset_{i,l} = \matroid | \groundset_{i,l}$ and the corresponding collection $E_{i,l}$ for $l \geq 0$, we can order the pairs in $E_i := \bigcup_{l \ge 0} E_{i,l}$ according to their weight and greedily combine them starting with the highest weighted pairs to create $I_\text{merge} \subseteq E_i$ such that the elements of pairs in $I_\text{merge}$ form an independent set in  $\matroid_i$. We again get that
\begin{equation*}
    \sum_{P \in I_\text{merge}}w(P) \geq (1- 4\varepsilon) \sum_{l \geq 0} \sum_{S \in I_{i,l}} S.
\end{equation*}
This is due to a charging argument similar to \Cref{lem_charging} and \Cref{lem:greedymerge}. If there are $j$ pairs in $I_{i,l}$, that are not added during the greedy process, then each such pair contains at least one element that forms a circuit in $\matroid_i$ with elements that are part of a higher (or equally) weighted pair in $I_\text{merge}$. Analogous to \Cref{lem_charging}, there have to be at least $j$ elements in pairs in $I_\text{merge}$, the loss can be charged to. These $j$ elements are part of at least $j/2$ pairs in $I_\text{merge}$ and the lower bound for the weight of the greedy solution follows similarly to \Cref{lem:greedymerge}, by charging the weight of at most two discarded pairs per weight interval to each pair in the solution. Finally, the greedy combination of $\alpha$-approximate matroid parity solutions $I_{i,l}$, results in an $\alpha \cdot (1 - 4 \varepsilon)$ approximation of the original problem for some $i$.

\subsection{Refolding Unweighted Matroids}
\label{sec:refolding}
In this section, we show how to extract relevant elements of the original matroids from the unfolded matroid intersection. We begin with the following definition.

\begin{definition}[Refolded Matroids]\label{def:refolding}
Let $\matroid_1=(\groundset,\independentsets_1)$ and $\matroid_2=(\groundset,\independentsets_2)$ be two unweighted matroids obtained from two integer-weighted matroids $\matroid'_1=(\groundset',\independentsets'_1,w), \matroid'_2=(\groundset',\independentsets'_2,w)$ through the process of matroid unfolding as described in \Cref{def:matroidunfolding}. Let $S\in \independentsets_1\cap \independentsets_2$. Let $\groundset'_S\subseteq \groundset'$ be the set $\groundset'_S := \{e \in \groundset' | e_j \in S \text{ for some $j \in [w(e)]$}\}$. We refer to $\matroid'_{S,1} = \matroid'_1 | \groundset'_S$ and $\matroid'_{S,2} = \matroid'_2 | \groundset'_S$ as the refolded matroids.
\end{definition}

The following lemma will guarantee that the restricted matroids resulting from refolding contain a common independent set of large weight.

\begin{lemma}\label{lem:refoldguarantee}
Suppose we have an instance of weighted matroid intersection $\matroid'_{1}=(\mathcal{N}',\mathcal{I}'_1,w),\matroid'_2=(\groundset',\independentsets'_2,w)$ and a set $S \subseteq \groundset$ that is an $\alpha$-approximate maximum cardinality common independent set of the unfolded $\matroid_1, \matroid_2$. Let $\opt'$ be the weight of the maximum weight common independent set of $\matroid'_1$ and $\matroid'_2$. Let $\opt'_S$ be the weight of the maximum weight common independent set of $\matroid'_{S,1}, \matroid'_{S,2}$. Then, $\opt'_S \geq \alpha \cdot \opt'$
\end{lemma}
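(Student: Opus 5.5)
The plan is to observe that $S$ is itself a common independent set of the unfolding of the refolded matroids $\matroid'_{S,1},\matroid'_{S,2}$, and then apply \Cref{thm:equivalence} to the refolded instance. Concretely, the weight function on $\groundset'_S$ is the restriction of $w$, so it is still integer-valued in $\{1,\dots,W\}$. Let $\matroid_{S,1},\matroid_{S,2}$ be the unfolding (\Cref{def:matroidunfolding}) of $\matroid'_{S,1},\matroid'_{S,2}$. Its groundset is $\{e_j \mid e\in\groundset'_S,\ j\in[w(e)]\}\subseteq\groundset$. By the definition of $\groundset'_S$, every element $e_j\in S$ has $e\in\groundset'_S$, so $S$ lies in this smaller groundset.

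Next I would check that $S\in\independentsets_{S,1}\cap\independentsets_{S,2}$. For each $i\in[W]$, the sets $S'_{1,i}=\{e\mid e_i\in S\}$ and $S'_{2,i}=\{e\mid e_{w(e)-i+1}\in S\}$ are exactly the same sets used when testing $S\in\independentsets_1$ and $S\in\independentsets_2$. Hence $S'_{1,i}\in\independentsets'_1$ and $S'_{2,i}\in\independentsets'_2$. These sets are contained in $\groundset'_S$, and independence in a restriction $\matroid'|\groundset'_S$ of a subset of $\groundset'_S$ coincides with independence in $\matroid'$. Therefore $S'_{1,i}\in\independentsets'_1|\groundset'_S$ and $S'_{2,i}\in\independentsets'_2|\groundset'_S$, which gives $S\in\independentsets_{S,1}\cap\independentsets_{S,2}$. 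Equivalently, the unfolding of the restriction is the restriction of the unfolding $\matroid_i|\{e_j: e\in\groundset'_S\}$.

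Finally, \Cref{thm:equivalence} applied to the refolded instance says the maximum cardinality common independent set of $\matroid_{S,1},\matroid_{S,2}$ has size $\opt'_S$, so $\opt'_S\geq |S|$. Since $S$ is $\alpha$-approximate for $\matroid_1,\matroid_2$, we have $|S|\geq \alpha\cdot\opt$. By \Cref{thm:equivalence} on the original instance, $\opt=\opt'$. Chaining these gives $\opt'_S\geq\alpha\cdot\opt'$.

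There is no real obstacle in this argument. The only point needing care is the claim that "unfold then restrict" equals "restrict then unfold", which comes down to the simple fact that restriction preserves independence of subsets. All the substantive work, the dual-based equality $\opt=\opt'$, is already done in \Cref{thm:equivalence}.
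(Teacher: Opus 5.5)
Your proposal is correct and follows the paper's own proof: unfold the refolded matroids, observe that $S$ is a common independent set of that unfolding, and chain $\alpha\cdot\opt'=\alpha\cdot\opt\leq|S|\leq\opt_S=\opt'_S$ using \Cref{thm:equivalence} on both instances. Your explicit check that restricting and unfolding commute fills in the step that the paper only asserts as ``observe that $S\in\independentsets_{1,S}\cap\independentsets_{2,S}$.''
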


\begin{proof}
Let $\matroid_{1,S}=(\groundset_S,\independentsets_{1,S})$ and $\matroid_{2,S}=(\groundset_S,\independentsets_{2,S})$ be unfolded versions of $\matroid'_{1,S},\matroid'_{2,S}$. Let $\opt_S$ denote the size of the maximum cardinality common independent set of $\matroid_{1,S},\matroid_{2,S}$. Observe that $S\subseteq \groundset_S$ and $S\in \independentsets_{1,S}\cap \independentsets_{2,S}$. Therefore, we have the following line of reasoning:
\begin{equation*}
    \alpha\cdot\opt'=\alpha\cdot\opt\leq |S|\leq \opt_S=\opt'_S. \qedhere
\end{equation*}
\end{proof}
Finally, we note that the size of the groundset after refolding is reduced as follows.

\begin{observation}\label{obs:sizeofrefoldedsets}
    Suppose we have an instance of weighted matroid intersection $\matroid'_{1}=(\mathcal{N}',\mathcal{I}'_1,w)$ and $\matroid'_2=(\groundset',\independentsets'_2,w)$. Let $r$ be the rank of the intersection of $\matroid'_1,\matroid'_2$. Let $\matroid_1,\matroid_2$ be unfolded versions of $\matroid'_1,\matroid'_2$. Let $S$ be a common independent set of $\matroid_1,\matroid_2$. Then, $\card{\groundset'_S}\leq W\cdot r$.
\end{observation}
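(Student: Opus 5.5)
The plan is to bound $\card{\groundset'_S}$ by $\card{S}$, and then bound $\card{S}$ by the size of a common independent set in the unfolded instance.

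First, $\card{\groundset'_S}\leq\card{S}$. By \Cref{def:refolding}, every $e\in\groundset'_S$ has at least one copy $e_j\in S$. The map sending each $e_j\in S$ to its original element $e$ is therefore a surjection from $S$ onto $\groundset'_S$, which gives the inequality. (If we want the bound in terms of $W$ alone, note that each $e$ has at most $w(e)\leq W$ copies, so also $\card{S}\leq W\card{\groundset'_S}$; but the direction we need is the first one.)

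Second, $\card{S}\leq\opt$, since $S\in\independentsets_1\cap\independentsets_2$ and $\opt$ is the maximum cardinality of a common independent set of $\matroid_1,\matroid_2$. By \Cref{obs:unfoldedmatroidsize}, $\opt\leq W\cdot r$. Chaining the three inequalities gives $\card{\groundset'_S}\leq\card{S}\leq\opt\leq W r$.

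The only step needing care is $\opt\leq W r$, if one wants it argued directly rather than cited. I would use \Cref{thm:equivalence}, which gives $\opt=\opt'$. The maximum weight common independent set $I^*$ of $\matroid'_1,\matroid'_2$ is a common independent set, so $\card{I^*}\leq r$. Since every weight is at most $W$, $\opt'=w(I^*)\leq W\card{I^*}\leq Wr$.

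An alternative route avoids the theorem. Split $S$ by copy index, setting $S'_{1,i}=\{e\mid e_i\in S\}$ for $i\in[W]$. Each $S'_{1,i}$ is independent in $\matroid'_1$ by \Cref{def:matroidunfolding}. However, $S'_{1,i}$ need not be independent in $\matroid'_2$, so this split only yields the weaker bound $\card{S}\leq W\cdot r_1$. For that reason I would rely on the duality-based equivalence, and I do not expect any real obstacle.
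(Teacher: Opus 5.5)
Your proof is correct and is essentially the argument the paper leaves implicit. The observation is stated without proof, and the intended chain is $\card{\groundset'_S}\leq\card{S}\leq\opt\leq W\cdot r$, where the last step is Observation~\ref{obs:unfoldedmatroidsize}, which itself rests on $\opt=\opt'$ from \Cref{thm:equivalence} exactly as you argue. Your side remark that splitting $S$ by copy index only yields $W\cdot r_1$ is also right, and it is why the duality-based upper bound $\opt\leq\opt'$ is genuinely needed here.
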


\subsection{Maximum Weight Common Independent Set with Additive Error}
\label{sec:mwmadditive}
In this section, we show how to obtain an approximation to the maximum weight common independent set with additive error in $O(\varepsilon W n) $. The algorithm is a weighted version of the auction algorithm of \cite{BlikstadT25}. Their algorithm makes $\varepsilon n$ weight adjustments at once, resulting in fewer computations of maximum weight bases.

\begin{lemma}\label{lem:weighted_additive}
Let $\mathcal{M}'_a = (\mathcal{N}', \mathcal{I}'_a, w), \mathcal{M}'_b = (\mathcal{N}', \mathcal{I}'_b, w)$ be two weighted matroids where the maximum weight is $W$, $n$ is the size of the ground set and $r$ is the rank of their intersection. For a given $\varepsilon$, Algorithm \ref{alg:mi_in_level} returns a set $S \in \mathcal{I}_a \cap \mathcal{I}_b$ such that $w(S) \geq w(S^*) - 3W\varepsilon n$ where $S^*$ is a common independent set of highest weight using $O\left(\frac{n}{\varepsilon^2}\right)$ independence queries.
\end{lemma}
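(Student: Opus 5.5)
This is a weighted version of the auction algorithm of \cite{BlikstadT25}. The plan is to maintain prices on elements of one matroid and find maximum weight bases of the other. Concretely, keep a price $p(e)\ge 0$ for each $e\in\groundset'$, initially $0$, raised in increments of $\delta=\varepsilon W$. Also keep a set $S_a\in\independentsets'_a$ that ``owns'' elements, and define the reduced weights $w_p(e)=w(e)-p(e)$. Each round computes a maximum weight independent set $B$ of $\matroid'_b$ with respect to $w_p$, greedily restricted to elements of positive reduced weight. It then finds the elements of $B$ that cannot be added to $S_a$ and raises their prices by $\delta$. Following \cite{BlikstadT25}, all such elements are handled in one batch, so that a round processes $\Omega(\varepsilon n)$ price increments at once. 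Since $p(e)\le w(e)\le W$, each element is raised at most $W/\delta=\varepsilon^{-1}$ times. That caps the total number of increments at $n/\varepsilon$, and with batching at least $\varepsilon n$ increments per round (or else the algorithm terminates), there are $O(\varepsilon^{-2})$ rounds. A greedy maximum weight basis costs $O(n)$ independence queries, and with the exchange/insertion updates amortized as in \cite{BlikstadT25}, the total is $O(n/\varepsilon^2)$ queries. Algorithm~\ref{alg:mi_in_level} stops once fewer than $\varepsilon n$ elements are ``unhappy''. It then returns $S=S_a\cap B$, or more precisely the common independent part obtained by discarding the unhappy elements, which costs at most $\varepsilon n\cdot W$ weight.

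The approximation bound comes from a complementary-slackness-style argument with prices as approximate duals. Take any $S^*\in\independentsets'_a\cap\independentsets'_b$ and split
\[
w(S^*)=\sum_{e\in S^*}p(e)+\sum_{e\in S^*}w_p(e).
\]
The second sum is at most $w_p(B)$ by maximality of $B$ in $\matroid'_b$ under $w_p$. The first sum is at most $p(S_a)+\delta\cdot|S^*|$ plus lower order terms. This holds because $S_a$ is kept as a maximum-price independent set in $\matroid'_a$ over the elements with positive price: every element whose price was raised was at some point blocked by $S_a$, and prices inside $S_a$ dominate up to one increment $\delta$. So $w(S^*)\le p(S_a)+w_p(B)+\varepsilon W n$. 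Next I will show $p(S_a)+w_p(B)\le w(S_a\cap B)+(\text{weight of unhappy elements})+\varepsilon W n$. Two facts give this. Elements of $S_a\setminus B$ have price at most $\delta$ above zero-slack and are few, and elements of $B\setminus S_a$ not blocked by $S_a$ would have triggered a price increase. Combining the three $\varepsilon W n$ error terms (price granularity, unhappy elements at termination, and the mismatch between $S_a$ and $B$) yields $w(S)\ge w(S^*)-3W\varepsilon n$.

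The main obstacle is the middle inequality $\sum_{e\in S^*}p(e)\le p(S_a)+O(\varepsilon W n)$. It needs $S_a$ to remain a maximum-price independent set of $\matroid'_a$ as prices change nonuniformly, and it needs the update to cost only amortized $O(n)$ queries per round. To get this I would mimic the unweighted analysis of \cite{BlikstadT25}: process elements in price-decreasing order and maintain $S_a$ as the greedy basis, so that raising a price only moves an element earlier in the greedy order. Price levels are multiples of $\delta$, so there are only $\varepsilon^{-1}$ levels, and the greedy basis can be recomputed in $O(n)$ queries once per round. That keeps the query count at $O(n/\varepsilon^2)$ even without clever incremental updates.
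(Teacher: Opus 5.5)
Your proposal has a genuine gap. It lies in the step you treat as routine, not in the one you call the main obstacle. Since $p+w_p=w$, you have $p(S_a)+w_p(B)=w(S_a\cap B)+p(S_a\setminus B)+w_p(B\setminus S_a)$. So you need $p(S_a\setminus B)$ to be small: elements that $\matroid'_a$ holds but $\matroid'_b$ does not choose must carry essentially zero price. This is the analogue, with $a$ and $b$ swapped, of \Cref{obs:zeroWeight}, and your single-price update destroys it. Raising $p(e)$ makes $e$ more attractive in $\matroid'_a$ and less attractive in $\matroid'_b$ at the same time. Hence $e$ can enter $S_a$ and leave $B$ in the same round, and then sit in $S_a\setminus B$ with a large price that your argument never pays for.

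Concretely, let $\matroid'_a$ have $e\parallel f$ with $g$ a coloop, and let $\matroid'_b$ have $e\parallel g$ with $f$ a coloop. Take $w(e)=W$, $w(f)=w(g)=W/2$ and $\delta=\varepsilon W$.
\begin{itemize}
\item $e$ and $f$ bid each other up in $\matroid'_a$ while $e$ stays in $B$.
\item For suitable small $\varepsilon$, $f$ reaches its cap $w(f)$ first. Which parity of $1/(2\varepsilon)$ works depends on how the first tie is broken.
\item $e$ then overtakes $f$ in $\matroid'_a$ in exactly the round in which $w_p(e)$ drops below $w(g)$.
\end{itemize}
The run ends with no element left to raise, $e\in S_a\setminus B$ with $p(e)=W/2+\delta$, and output $S_a\cap B=\{g\}$ of weight $W/2$. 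But $\{e\}$ has weight $W$, and the loss $W/2$ exceeds $3\varepsilon Wn=9\varepsilon W$. So your assertion that elements of $S_a\setminus B$ have small price and are few is false for your procedure.

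The paper avoids this through the design of Algorithm~\ref{alg:mi_in_level}, which is the algorithm the lemma is actually about, not the additive-price procedure you describe. It keeps two weights and, for each $e\in X$, alternately raises $w_b(e)$ or lowers $w_a(e)$, never both in one round, with tie-breaking toward the old bases.
\begin{itemize}
\item Since $w_b$ only increases, an element enters $S_b$ only in a round where its $b$-weight rose, and it is then still in $S_a$.
\item An element leaves $S_a$ only in a round where it lies outside $S_b$, and it cannot enter $S_b$ in that same round.
\end{itemize}
This gives \Cref{obs:zeroWeight}, so $w_b(S_b\setminus S_a)=0$. The remaining term $w_a(S_a\setminus S_b)$ is bounded using $|X|\le\varepsilon n$ and the cap on $p(e)$, which forces $w_a(e)\le\varepsilon w(e)$. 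The cost of the alternation is that $(w_a,w_b)$ is only an approximate weight splitting. The paper repairs this by passing to $w'_b\le w_b$ before invoking \Cref{thm:weight_splitting}.

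By contrast, the inequality you single out, $\sum_{e\in S^*}p(e)\le p(S_a)+O(\varepsilon Wn)$, holds with no error once $S_a$ is recomputed as a maximum-price basis each round; it is just the easy direction of the weight-splitting bound. Your round count and the $O(n/\varepsilon^2)$ query bound are fine.
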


We begin by giving \Cref{alg:mi_in_level}.

\begin{algorithm}
\algorithmicrequire{ Weighted matroids $\mathcal{M}'_a = (\mathcal{N}', \mathcal{I}'_a, w), \mathcal{M}'_b = (\mathcal{N}', \mathcal{I}'_b, w)$}, parameter $\varepsilon > 0$\\
	\algorithmicensure{ Common independent set $S$ }
	%\caption{\textsc{WeightedIndSet}()}
	\begin{algorithmic}[1]
        \State Initialize $p(e) \gets 0$, $w_a(e) \gets w(e)$, $w_b(e) \gets w(e)$ for all $e \in \groundset'$.
        \State $S_a \gets$ maximum weight base of $\matroid'_a$ with respect to $w_a$, $S_b \gets$ maximum weight base of $\matroid'_b$ with respect to $w_b$.
        \While{\textbf{true}\label{line:whileTrueLoop}}{
            \State $X \gets \{ e \in S_a \setminus S_b\; |\; p(e) < 2\lceil\varepsilon^{-1}\rceil\}$.
            \If{$|X| \leq \varepsilon n$}
                \State return $S := S_a \cap S_b$.
            \EndIf
            \For{$e \in X$}{
                \State $p(e) \gets p(e) + 1$.
                \If{$w_a(e) + w_b(e) = w(e)$}
                    \State $w_b(e) \gets w_b(e) + \varepsilon (1 - \varepsilon) w(e)$,
                \Else
                    \State $w_a(e) \gets w_a(e) - \varepsilon (1 - \varepsilon) w(e)$.
                \EndIf
            }\EndFor
            \For{ $i \in \{a,b\}$}
                \State $S_i \gets $ maximum weight base of $(\groundset', \mathcal{I}'_i,w_i)$, tie-breaking prefers elements in old $S_i$.
            \EndFor
        }\EndWhile
	\end{algorithmic}
     \caption{An Additive Approximation for Weighted Matroid Intersection}
     \label{alg:mi_in_level}
\end{algorithm}

Since the algorithm is only a slightly modified version of Algorithm 1 from \cite{BlikstadT25}. In particular the following observation transfers directly.

\begin{observation}[Claim 3.1 from \cite{BlikstadT25}]\label{obs:zeroWeight}
Throughout the algorithm all elements $ e \in S_b \setminus S_a$ have $b$-weight $w_b(e) = 0$.
\end{observation}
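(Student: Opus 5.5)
We will prove the statement as an invariant maintained over the iterations of the \textbf{while}-loop in line~\ref{line:whileTrueLoop} of \Cref{alg:mi_in_level}, following the proof of Claim~3.1 in \cite{BlikstadT25}. The invariant is $(\star)$: \emph{every $e\in S_b\setminus S_a$ satisfies $w_b(e)=0$.} We check it after initialization and then show that one round of weight updates, followed by recomputing $S_a$ and $S_b$, preserves it.

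\textbf{Base case.} This is where the literal wording is delicate, and we have not resolved it. With $w_b(e)\gets w(e)>0$, the two initial maximum-weight bases $S_a$ and $S_b$ are computed for the same weight $w$, but they need not coincide. Any element of $S_b\setminus S_a$ at that point would have $w_b(e)=w(e)\neq 0$. So $(\star)$ holds at initialization only in one of two situations. Either $S_b\subseteq S_a$ initially, or $w_b$ is read with the convention of \cite{BlikstadT25}, where the $b$-weight starts at $0$ and the increments $\varepsilon(1-\varepsilon)w(e)$ are what is tracked. We would first try to justify the first situation. Using the tie-breaking rule, we would compute $S_a$ and $S_b$ from a common greedy order, and then exploit that both are maximum-weight bases for the same weights. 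If that fails, the base case of $(\star)$ as worded can only be established under the second convention, and we would state that explicitly.

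\textbf{Inductive step.} The key structural facts about one round are these.
\begin{enumerate}[label=(\roman*)]
\item Only elements of $X\subseteq S_a\setminus S_b$ have their weights changed.
\item For such an $e$, either $w_b(e)$ increases while $w_a(e)$ is unchanged, or $w_a(e)$ decreases while $w_b(e)$ is unchanged.
\item $S_a$ and $S_b$ are recomputed as maximum-weight bases, with ties broken in favour of the old bases.
\end{enumerate}

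Fix an element $f$ that lies in $S_b\setminus S_a$ after the round. We split into cases according to where $f$ was before the round.
\begin{itemize}
\item \emph{$f$ was in $S_b\setminus S_a$ before.} Then $f\notin X$, so $w_b(f)$ is unchanged and equals $0$ by the hypothesis.
\item \emph{$f$ was in $S_b$ and also in $S_a$ before.} Then $f\notin X$, so neither weight of $f$ changed. Matroid base exchange for maximum-weight bases then gives the following: $f$ can leave $S_a$ only if some element's $a$-weight rose above $w_a(f)$. But $a$-weights only decrease, so by tie-breaking $S_a$ retains $f$. This contradicts $f\notin S_a$ after the round, so the case cannot occur.
\item \emph{$f\notin S_b$ before.} Then $f$ entered $S_b$. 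An element enters a maximum-weight base under the tie-breaking rule only if its $b$-weight increased. Hence $f\in X\subseteq S_a$ had its $b$-weight raised. We then need $f$ to have left $S_a$. This is impossible by the same monotonicity-plus-tie-breaking argument, because $w_a(f)$ did not change in a round where $w_b(f)$ did.
\end{itemize}
In every possible case $w_b(f)=0$ is preserved.

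The main obstacle is the base case discussed above. The inductive step is a routine application of the exchange property together with the tie-breaking rule.
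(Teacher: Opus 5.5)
Your inductive step is correct; the gap is in the base case, and the route you say you would try first cannot succeed. $S_a$ and $S_b$ are maximum-weight bases of \emph{different} matroids $\mathcal{M}'_a$ and $\mathcal{M}'_b$, so no common greedy order can force $S_b\subseteq S_a$. For example, take $\groundset'=\{x,y\}$ with $y$ a loop of $\mathcal{M}'_a$ and $x$ a loop of $\mathcal{M}'_b$. Then initially $S_a=\{x\}$ and $S_b=\{y\}$, so $y\in S_b\setminus S_a$ with $w_b(y)=w(y)>0$.

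So with the initialization $w_b\gets w$ printed in \Cref{alg:mi_in_level}, the observation is false, not merely delicate, and you should not hedge: commit to $w_b\equiv 0$ at the start. This is not an outside convention but the only reading consistent with the paper itself. With $w_b=w$ the branch test $w_a(e)+w_b(e)=w(e)$ fails from the outset. Moreover, the proof of \Cref{lem:weighted_additive} asserts $w_a(e)+w_b(e)\in\{w(e),\,w(e)+\varepsilon(1-\varepsilon)w(e)\}$ throughout, which forces $w_a+w_b=w$ at initialization, i.e.\ $w_b\equiv 0$. Under this reading $(\star)$ holds trivially before the first round.

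Your inductive step is exactly the reasoning behind Claim~3.1 of \cite{BlikstadT25}, which the paper only cites (``transfers directly'') without proof. The two monotonicity facts you use informally can be made rigorous via the greedy algorithm with the stated tie-breaking.
\begin{itemize}
\item[(a)] Suppose $f\in S_a^{\mathrm{old}}$, $w_a(f)$ is unchanged, and no $a$-weight increases. Every element processed before $f$ falls into one of two groups. Either its new (hence old) $a$-weight exceeds $w_a(f)$, and it is spanned by $S_a^{\mathrm{old}}\setminus\{f\}$ because $S_a^{\mathrm{old}}$ was a maximum-weight base. Or it ties with $f$ and, by the tie-breaking rule, lies in $S_a^{\mathrm{old}}\setminus\{f\}$. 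Hence $f$ is not spanned by its predecessors and is kept.
\item[(b)] Suppose $f\notin S_b^{\mathrm{old}}$, $w_b(f)$ is unchanged, and no $b$-weight decreases. Every other element of the fundamental circuit of $f$ in $S_b^{\mathrm{old}}$ lies in $S_b^{\mathrm{old}}$ and has $b$-weight at least $w_b(f)$. So all of them are processed before $f$, and $f$ is rejected.
\end{itemize}
Your case~2 is (a). Your case~3 uses (b) to conclude that $f\in X$ received a $b$-increase, and then (a) to keep $f$ in $S_a$. With $w_b\equiv 0$ initially, your proof is complete.
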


Next, we want to utilize this observation to argue about the guarantees of \Cref{alg:mi_in_level}. Let us first introduce the definition of a \textbf{weight-splitting}.

\begin{definition}[Weight-splitting]
    Let $\matroid'_1 = (\groundset', \mathcal{I}'_1, w),\matroid'_2 = (\groundset', \mathcal{I}'_2, w)$ be two weighted matroids with weight function $w: \groundset' \rightarrow \mathbb{R}^{\geq 0}$. We call two functions $w_1: \groundset' \rightarrow \mathbb{R}^{\geq 0}$, $w_2: \groundset' \rightarrow \mathbb{R}^{\geq 0}$ a weight splitting of $w$ if $w_1(e) + w_2(e) = w(e)$ for all $e \in \groundset'$
\end{definition}
Now, we restate the following theorem we will utilize in the proof.
\begin{theorem}[Weight-splitting Theorem \cite{Frank81}]\label{thm:weight_splitting}
The maximum weight with respect to weight function $w$ of a common independent set of matroids $\matroid'_1$ and $\matroid'_2$ is given by 
\begin{equation*}
\min_{\substack{w_1,w_2\\ \text{weight-splitting of } w}} \max_{S_1 \in \mathcal{I}_1,S_2 \in \mathcal{I}_2} w_1(S_1) + w_2(S_2).
\end{equation*}
\end{theorem}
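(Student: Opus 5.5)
We prove the two inequalities separately. The easy direction (max $\le$ min) is a pointwise argument. The hard direction (min $\le$ max) exhibits an explicit weight-splitting built from an optimal solution of Linear Program \ref{eqn:dualweighted}, in the spirit of how \Cref{thm:equivalence} used the dual.

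\emph{Upper bound on the maximum.} Fix any weight-splitting $w_1,w_2$ of $w$ and any $S\in\independentsets'_1\cap\independentsets'_2$. Since $w_1(e)+w_2(e)=w(e)$ for every $e$, we have $w(S)=w_1(S)+w_2(S)$. Because $S$ is independent in both matroids, we can take $S_1=S_2=S$ inside the inner maximum, which gives
\[
w(S)\le \max_{S_1\in\independentsets'_1,S_2\in\independentsets'_2} w_1(S_1)+w_2(S_2).
\]
Taking the maximum over $S$ and then the minimum over splittings gives max $\le$ min. This direction does not even use nonnegativity of $w_1,w_2$.

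\emph{Constructing a good splitting.} Let $y',z'$ be an optimal solution of Linear Program \ref{eqn:dualweighted}. We use the fact, due to Edmonds and also invoked in the proof of \Cref{thm:equivalence}, that the primal relaxation of weighted matroid intersection has an integral optimum. By strong duality this gives $g(y',z')=\opt'$. Define
\[
w_1(e):=\min\Bigl(w(e),\sum_{S\ni e}y'(S)\Bigr),\qquad w_2(e):=w(e)-w_1(e).
\]
Both functions are nonnegative, since $y'\ge 0$ and $w_1(e)\le w(e)$, so they form a weight-splitting of $w$. Dual feasibility $\sum_{S\ni e}\bigl(y'(S)+z'(S)\bigr)\ge w(e)$ then gives $w_2(e)\le \sum_{T\ni e}z'(T)$: either $w_2(e)=0$, or $w_1(e)=\sum_{S\ni e}y'(S)$ and the bound is exactly the feasibility constraint.

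\emph{Bounding the inner maximum.} For any $S_1\in\independentsets'_1$, exchange the order of summation:
\[
w_1(S_1)\le \sum_{e\in S_1}\sum_{S\ni e}y'(S)=\sum_{S}y'(S)\,|S_1\cap S|\le \sum_S y'(S)\,\rk'_1(S).
\]
The last step holds because $S_1\cap S$ is an independent subset of $S$. The same argument gives $w_2(S_2)\le\sum_T z'(T)\,\rk'_2(T)$ for any $S_2\in\independentsets'_2$. Adding the two bounds, the inner maximum for this splitting is at most $g(y',z')=\opt'$, which proves min $\le$ max. It also shows that the minimum is attained.

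The main obstacle is the strong duality step, namely that the LP optimum equals the integral optimum $\opt'$ (Edmonds' matroid intersection polytope theorem). I would cite it in the same way as \Cref{lem:chaineddualsexistence}, from \cite[Chapter 41]{Schrijver03}, rather than reprove it. The chain structure of the duals is not needed for this argument.
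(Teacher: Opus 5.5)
Your proof is correct. It necessarily takes a different route from the paper, because the paper does not prove this statement: it imports it from \cite{Frank81}.

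Frank's original argument is algorithmic (primal--dual). His weighted matroid intersection algorithm maintains a common independent set $I$ together with a splitting $w=w_1+w_2$ under which $I$ is a maximum $w_i$-weight independent set of cardinality $|I|$ in $\mathcal{M}'_i$. It augments along minimum-weight paths in the exchange graph while adjusting the splitting, and the final splitting certifies optimality. That route is self-contained and constructive, and it can be used to \emph{derive} Edmonds' intersection theorem rather than assume it.

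You instead take Edmonds' integrality theorem and LP strong duality as black boxes. You then read the splitting off an optimal dual $(y',z')$ by capping each element's $y'$-coverage at $w(e)$. In this paper your route is the more economical one, because it uses only facts the paper already relies on: \Cref{lem:chaineddualsexistence} and the strong-duality step in the proof of \Cref{thm:equivalence}. Moreover, if you start from the integral optimal duals that \Cref{lem:chaineddualsexistence} guarantees for integral $w$, your $w_1,w_2$ come out integral. This recovers the integrality part of Frank's statement.

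Note also that the paper only ever uses the easy inequality. In the proof of \Cref{lem:weighted_additive} it needs $w(S^*)\le w_a(S_a)+w'_b(S'_b)$ for one specific splitting. That is exactly your first paragraph, which needs no duality at all.
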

Finally, we are ready to show \Cref{lem:weighted_additive}.
\begin{proof}
Let $S^*$ be a maximum weight common independent set of matroids $\mathcal{M}'_a$, $\mathcal{M}'_b$. By \Cref{thm:weight_splitting} we get that
\begin{align*}
w(S^*) = \min_{\substack{w_1,w_2\\ \text{weight-splitting of } w}} w_1(S_1^*) + w_2(S^*_2)
\end{align*}
for $S^*_1$ and $S^*_2$ being maximum weight bases of $(\mathcal{N}', \mathcal{I}'_a, w_1)$ and $(\mathcal{N}', \mathcal{I}'_b, w_2)$ respectively. For any $e \in \groundset'$ we have $w_a(e) + w_b(e) \in \{w(e), w(e) + \varepsilon(1- \varepsilon)w(e)\}$ throughout the algorithm and at termination. Let us in the following focus on the state of $X$, $w_a$, $w_b$ and the corresponding maximum weight bases $S_a$ and $S_b$ at the end of the execution. We can create a valid weight splitting $w_a$, $w'_b: \groundset' \rightarrow \mathbb{R}^{\geq 0}$ from $w_a$ and $w_b$ as follows:
\begin{equation*}
    w'_b(e):=\begin{cases}
w_b(e) - \varepsilon(1-\varepsilon)w(e)& \text{when } w_a(e) + w_b(e) \neq w(e),\\
w_b(e) & \text{otherwise}.
\end{cases} 
\end{equation*}

By \Cref{thm:weight_splitting} for a maximum weight base $S'_b$ of $(\groundset', \mathcal{I}'_b, w'_b)$ we get
\begin{align*}
    w(S^*) &\leq w_a(S_a) + w'_b(S'_b)\\
    &\text{(By \Cref{thm:weight_splitting})}\\
    &\leq w_a(S_a) + w_b(S'_b)\\
    &\text{(Since for all elements $w'_b(e) \leq w_b(e)$)}\\
    &\leq  w_a(S_a) + w_b(S_b)\\
    &\text{(Since $S_b$ is maximum weight basis of $\matroid'_b$)}\\
    &= \sum_{e \in S_a \cap S_b} (w_a(e) + w_b(e)) + \sum_{e \in S_b \setminus S_a} w_b(e) + \sum_{e \in S_a \setminus S_b} w_a(e)\\
    &= \sum_{e \in S_a \cap S_b} (w_a(e) + w_b(e))  + \sum_{e \in (S_a \setminus S_b) \cap X} w_a(e) +  \sum_{e \in (S_a \setminus S_b) \setminus  X} w_a(e)\\
    &\text{(By Observation~\ref{obs:zeroWeight} $w_b(e) = 0$ for all $e \in S_b \setminus S_a$)}\\
    &\leq \sum_{e \in S_a \cap S_b} (w_a(e) + w_b(e))  + \varepsilon Wn +  \sum_{e \in (S_a \setminus S_b) \setminus  X} w_a(e)\\
    &\text{(Since $|(S_a \setminus S_b) \cap X| = |X| \leq  \varepsilon n, w_a(e) \leq w(e) \leq W$ for $e \in \groundset'$)}\\
    &\leq \sum_{e \in S_a \cap S_b} (w_a(e) + w_b(e))  + 2 \varepsilon Wn \\
    &= w(S) + 3 \varepsilon W n,
\end{align*}
where it remains to reason about the last inequality. Consider an element $e \in (S_a \setminus S_b) \setminus  X$. It must be the case that $p(e) > 2\lceil\varepsilon^{-1}\rceil$ and $w_a(e)$ was decreased by $\varepsilon (1-\varepsilon)w(e)$ at least $\varepsilon^{-1}$ times during the execution. Hence, at the end of the execution we have $w_a(e) \leq w(e) - \varepsilon^{-1} \varepsilon (1-\varepsilon)w(e) = \varepsilon w(e) $. Hence $\sum_{e \in (S_a \setminus S_b) \setminus  X} w_a(e) \leq \sum_{e \in (S_a \setminus S_b) \setminus  X} \varepsilon w(e) \leq \varepsilon W n$. Finally, to get 
\begin{equation*}
    w(S^*) \leq w(S) + 3\varepsilon W n 
\end{equation*}
we need to show that $\sum_{e \in S_a \cap S_b} (w_a(e) + w_b(e)) \leq w(S) + \varepsilon W n$. This is true because for each $e \in \groundset'$ we have $w_a(e) + w_b(e) \leq w(e) + \varepsilon (1 - \varepsilon) w(e) \leq  w(e) + \varepsilon W$.

Finally, it remains to analyze the number of queries needed. As in the original algorithm, every iteration of the while loop $p(\groundset') := \sum_{e \in \groundset'} p(e)$ is increased by at least $\varepsilon n$. Since initially $p(\groundset') = 0$ and it is upper bounded by $2n\lceil\varepsilon^{-1}\rceil$. Hence, the total number of queries is in $O\left( \frac{n}{\varepsilon^2}\right)$.
\end{proof}

\subsection{Putting Things Together}

In this section, we state and prove our main reduction in the classical query model. We first state the reduction.

\begin{theorem}\label{lem:staticreduction}
   Let $\mathcal{A}_u$ be an algorithm that computes an $\alpha$-approximate maximum cardinality common independent set using $T_{u}(n,r)$ oracle queries, then there is an algorithm $\mathcal{A}_w$ that computes an $\alpha(1-\varepsilon)$-approximation to the maximum weight common independent set using $O((T_{u}(n\gamma_{\varepsilon},r\gamma_{\varepsilon})+\frac{\gamma^3_{\varepsilon}\cdot r}{\varepsilon^3})\log W)$ oracle queries. Moreover, if $\mathcal{A}_u$ is deterministic, then so is $\mathcal{A}_w$ and if $\mathcal{A}_u$ uses independence queries, then the same is true for $\mathcal{A}_w$.
\end{theorem}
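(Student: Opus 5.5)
We will assemble the reduction from the three ingredients in the order of the framework: aspect-ratio reduction, then unfolding, then refolding, with a greedy merge at the end. Set $\beta=\lceil\varepsilon^{-1}\rceil$. For each $i\in[\beta]$ we form the pair $\matroid'_{1,i},\matroid'_{2,i}$ of \Cref{def:spreadmatroids}. By Observation~\ref{obs:spreadmatroidsarespread} this pair is $\varepsilon^{-1}$-spread, and its weight classes $\groundset'_{i,l}$ each have aspect ratio $\gamma_\varepsilon$. There are $O(\log W/\log\varepsilon^{-1})$ nonempty classes per $i$, so $O(\log W)$ classes in total over all $i$.

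\textbf{Solving one class.} For a single class, after rescaling (\Cref{sec:rescaling}) the weights are integers in $\{1,\dots,\gamma_\varepsilon\}$. We then proceed in three steps.
\begin{enumerate}
\item Unfold via \Cref{def:matroidunfolding}. By \Cref{lem:validityofreduction} the result is a matroid pair, and by Observation~\ref{obs:unfoldedmatroidsize} it has at most $\gamma_\varepsilon n$ elements and intersection rank at most $\gamma_\varepsilon r$.
\item Run $\mathcal{A}_u$ on the unfolded pair to get $S$ with $|S|\ge\alpha\opt=\alpha\opt'$ (\Cref{thm:equivalence}). Each query to the unfolded matroids costs $\gamma_\varepsilon$ queries to the original ones (Claims~\ref{lem:oraclecallstoimplementunfolded} and \ref{lem:oraclecallsRanktoimplementunfolded}). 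We absorb this factor into the $O(\cdot)$ or fold it into the argument of $T_u$, since the unfolded oracle is itself an oracle on a larger instance.
\item Refold to $\groundset'_S$. By \Cref{lem:refoldguarantee}, $\groundset'_S$ contains a common independent set of weight at least $\alpha\opt'$, and by Observation~\ref{obs:sizeofrefoldedsets}, $|\groundset'_S|\le\gamma_\varepsilon r$.
\end{enumerate}

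\textbf{Extracting a set.} On $\groundset'_S$, with integer weights at most $\gamma_\varepsilon$, we run \Cref{alg:mi_in_level} with parameter $\varepsilon'=\varepsilon/\gamma_\varepsilon^{2}$. \Cref{lem:weighted_additive} then gives additive loss $3\gamma_\varepsilon\varepsilon'|\groundset'_S|\le 3\varepsilon r$, using $O(\gamma_\varepsilon r\cdot\gamma_\varepsilon^4/\varepsilon^2)$ queries.

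\textbf{Main obstacle: additive to multiplicative loss.} The additive loss must be converted into a multiplicative one, and this needs the optimum of the class to be at least comparable to its rank. I expect this to be the delicate point. I would argue as follows. Every element has weight at least $1$ after rescaling, so the optimum in a class is at least the cardinality of a maximum common independent set of that class, call it $r_{i,l}$. We may replace $r$ by $r_{i,l}$ throughout, because $S$ lies in a set of rank at most $\gamma_\varepsilon r_{i,l}$; it may be cleaner to refold only the elements of $S$ and then bound $|\groundset'_S|\le|S|\le\gamma_\varepsilon r_{i,l}$ directly. The additive loss is then at most $3\varepsilon\,\opt'_{i,l}$, yielding an $\alpha(1-O(\varepsilon))$-approximate set $I'_{i,l}$.

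The parameters have to be tuned so that the extraction cost matches the claimed $\gamma_\varepsilon^3 r/\varepsilon^3$. This may require choosing $\varepsilon'=\varepsilon/\gamma_\varepsilon$ and exploiting the lower bound $\opt'\ge|\groundset'_S|/\gamma_\varepsilon$, giving $O(\gamma_\varepsilon r\cdot\gamma_\varepsilon^2/\varepsilon^2)$ queries, which is within the stated bound. Summing over the $O(\log W)$ classes gives the query count in the statement.

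\textbf{Merging.} For each $i$, greedily merge the sets $I'_{i,l}$ in decreasing order of $l$. This costs $O(r)$ independence queries per class, and by \Cref{cor:greedymergeguarantee} gives weight at least $\alpha(1-O(\varepsilon))\opt'_i$. Finally, output the best set over $i\in[\beta]$, which by \Cref{lem:oneindependentsetisgood} (as in Claim~\ref{claim:reductionspreadtoarbitrary}) is an $\alpha(1-O(\varepsilon))$-approximation; rescaling $\varepsilon$ by a constant gives $\alpha(1-\varepsilon)$.

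\textbf{Determinism and oracle type.} Every step apart from $\mathcal{A}_u$ is deterministic and uses only independence queries: unfolding, \Cref{alg:mi_in_level} (maximum weight bases via greedy), and the greedy merge. Hence determinism and the oracle type are inherited from $\mathcal{A}_u$.
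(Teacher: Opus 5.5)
Your decomposition is the paper's own. Its proof of \Cref{lem:staticreduction} composes \Cref{red:staticaspectratio} (spread split, per-class solve, greedy merge, best $i$) with \Cref{lem:intweightedtounweighted} (unfold, run $\mathcal{A}_u$, refold, run \Cref{alg:mi_in_level} on $\groundset'_S$).

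\textbf{The gap.} It is in the extraction step, which you rightly flag but do not close. With $\varepsilon'=\varepsilon/\gamma_\varepsilon^2$ your accounting is correct, but it has two problems:
\begin{itemize}
\item It costs $O(\gamma_\varepsilon^5 r/\varepsilon^2)$ queries per class rather than $O(\gamma_\varepsilon^3 r/\varepsilon^2)$.
\item It only gives $w(T')\ge(\alpha-3\varepsilon)\opt'_{i,l}$ for the returned set $T'$. That is $\alpha(1-O(\varepsilon))\opt'_{i,l}$ only if $\alpha$ is bounded below by a constant, whereas the theorem is for arbitrary $\alpha$.
\end{itemize}
Your proposed retuning does not repair this. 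With $\varepsilon'=\varepsilon/\gamma_\varepsilon$ the additive loss from \Cref{lem:weighted_additive} is $3\gamma_\varepsilon\varepsilon'|\groundset'_S|=3\varepsilon|\groundset'_S|$. The bound $\opt'\ge|\groundset'_S|/\gamma_\varepsilon$ only turns this into $3\varepsilon\gamma_\varepsilon\opt'$, which is a factor $\gamma_\varepsilon$ too large.

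\textbf{The fix.} What is missing is a lower bound on the optimum of the refolded instance itself. If $S'$ is a maximum weight common independent set of $\matroid'_{S,1},\matroid'_{S,2}$, then $w(S')=\opt'_S\ge|S|\ge|\groundset'_S|$.
\begin{itemize}
\item The first inequality is the one inside the proof of \Cref{lem:refoldguarantee}: $S$ is a common independent set of the unfolding of the refolded pair, so $|S|\le\opt_S=\opt'_S$ by \Cref{thm:equivalence}.
\item The second holds because $e_j\mapsto e$ maps $S$ onto $\groundset'_S$.
\end{itemize}
Taking $\varepsilon'=\varepsilon/\gamma_\varepsilon$ then gives $w(T')\ge w(S')-3\varepsilon|\groundset'_S|\ge(1-3\varepsilon)w(S')\ge(1-3\varepsilon)\alpha\opt'_{i,l}$ for every $\alpha$. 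This uses $O(|\groundset'_S|\gamma_\varepsilon^2/\varepsilon^2)=O(\gamma_\varepsilon^3r/\varepsilon^2)$ queries per class.

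This is the $O(W^3r/\varepsilon^2)$ stated in \Cref{lem:intweightedtounweighted}. The paper's write-up there, with $\delta=\varepsilon/W^2$ and $w(S')\ge r$, does not by itself deliver that count. Multiplying by the $O(\varepsilon^{-1}\log W)$ classes counted in \Cref{red:staticaspectratio} gives the $\gamma_\varepsilon^3r\log W/\varepsilon^3$ term.

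\textbf{Smaller points.}
\begin{itemize}
\item Your own class count does not multiply out: $O(\log W/\log\varepsilon^{-1})$ per $i$ over $\beta$ values of $i$ is not $O(\log W)$ in total.
\item Of your two ways to handle the $\gamma_\varepsilon$ original queries per simulated unfolded query, only the second is sound. $\gamma_\varepsilon$ is not a constant and cannot be hidden in the $O(\cdot)$. The paper simply charges $T_u(nW,rW)$ for running $\mathcal{A}_u$ on the unfolded instance, i.e., it counts unfolded-oracle queries, and you should adopt that convention explicitly.
\end{itemize}
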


To prove the above reduction, we will first prove the following intermediate results. 

\begin{lemma}\label{red:staticaspectratio}
    Let $\mathcal{A}_w$ be an algorithm that computes an $\alpha$-approximate maximum weight common independent set of two integer-weighted matroids using $T(n,r,W)$ queries. Then, there is an algorithm $\mathcal{A}'_w$ that computes an $\alpha(1-\varepsilon)$-approximate maximum weight common independent set of two matroids with an arbitrary weight function using $O(\frac{(T(n,r,\gamma_{\varepsilon
    })+r)\cdot\log W}{\varepsilon})$ queries.
\end{lemma}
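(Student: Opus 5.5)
We follow the aspect-ratio reduction of \Cref{sec:aspectratio}. Given $\matroid'_1,\matroid'_2$ with arbitrary positive weights, we first form the $\beta=\lceil\varepsilon^{-1}\rceil$ instances $(\matroid'_{1,i},\matroid'_{2,i})$ of \Cref{def:spreadmatroids}. By Observation~\ref{obs:spreadmatroidsarespread} each is $\varepsilon^{-1}$-spread, and its weight classes are the sets $\groundset'_{i,l}$, each with aspect ratio at most $\gamma_\varepsilon$. Since weights lie in a range of ratio $W$, each $i$ has only $O(\log W/(\beta\log\varepsilon^{-1}))+1$ nonempty classes. Summed over all $i$, the number of nonempty pairs $(i,l)$ is therefore $O(\log W+\varepsilon^{-1})$, which is $O(\log W/\varepsilon)$. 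We only need to form these classes, so empty ones cost nothing.

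For each nonempty class $\groundset'_{i,l}$, we rescale and round the weights to integers in $\{1,\dots,\gamma_\varepsilon\}$ as in Observation~\ref{obs:spreadmatroidsarespread} (\Cref{sec:rescaling}). We then run $\mathcal{A}_w$ on $\matroid'_1|\groundset'_{i,l},\matroid'_2|\groundset'_{i,l}$, at a cost of $T(n,r,\gamma_\varepsilon)$ queries (monotonicity of $T$ in $n,r$ is assumed). The rounding loses only a $(1-\varepsilon)$ factor per class, so the output $I'_{i,l}$ is an $\alpha(1-\varepsilon)$-approximation with respect to the true weights. Any multiplicative rescaling of weights leaves the optimum set unchanged, so approximation ratios transfer.

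For each $i$, we then greedily merge $\{I'_{i,l}\}_l$ in descending order of $l$, keeping an element iff the merged set stays independent in both matroids. By \Cref{cor:greedymergeguarantee}, the result $I_{\mathrm{merge},i}$ has weight at least $\alpha(1-\varepsilon)(1-4\varepsilon)\,w(I^*_i)$. Each merge step costs two independence queries per element (or $O(1)$ rank queries). Each element in $\bigcup_l I'_{i,l}$ is tested once, so this is at most $O(r\cdot\#\text{classes})$ queries in total. This matches the additive $r\log W/\varepsilon$ term.

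Finally, we output the heaviest $I_{\mathrm{merge},i}$. By \Cref{lem:oneindependentsetisgood}, some $i$ has $w(I^*_i)\ge(1-\varepsilon)w(I^*)$, so the output is an $\alpha(1-O(\varepsilon))$-approximation. Rescaling $\varepsilon$ by a constant gives the claimed $\alpha(1-\varepsilon)$ bound. Determinism and the query type are preserved, since every step added is deterministic and uses the same kind of oracle.

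The main point requiring care is the query count: bounding the number of nonempty classes across all $i$ by $O(\log W/\varepsilon)$, and confirming that greedy merging is linear in the total size of the $I'_{i,l}$, each of which is at most $r$. The approximation chain itself follows directly from the earlier lemmas.
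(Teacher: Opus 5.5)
Your proposal is correct and follows the same route as the paper's proof. Both split the instance into the $\beta$ spread instances of \Cref{def:spreadmatroids}, run $\mathcal{A}_w$ on each class $\groundset'_{i,l}$, greedily merge for each $i$ via \Cref{cor:greedymergeguarantee}, and return the heaviest merged set, using \Cref{lem:oneindependentsetisgood}. Your accounting is slightly more careful than the paper's: you count the nonempty classes explicitly, and you track the rounding loss and the $(1-\varepsilon)$ factor from \Cref{lem:oneindependentsetisgood}, both of which the paper's final chain of inequalities silently drops.
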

\begin{proof}
    Suppose we have $\matroid'_1=(\groundset',\independentsets'_1,w)$ and $\matroid'_2=(\groundset',\independentsets'_2,w)$ be weighted matroids with $w:\groundset'\rightarrow \mathbb{R}^{\geq 0}$. By \Cref{def:spreadmatroids}, we can convert these into $\varepsilon^{-1}$-spread matroids $\matroid'_{1,i},\matroid'_{2,i}$ for $i\in[\beta]$ such that \Cref{lem:oneindependentsetisgood} holds. 
    
    The algorithm $\mathcal{A}'_w$ will proceed as follows. For each $i\in [\beta]$ and $l\geq 0$, it will run $\mathcal{A}_w$ on $\matroid'_{1,i}\mid \groundset'_{i,l}$ and $\matroid'_{1,i}\mid \groundset'_{i,l}$. Let $\opt'_{i,l}$ denote the weight of the maximum weight independent set of these matroids. Let $\opt'_{i}$ denote the maximum weight independent set of $\matroid'_{i,1},\matroid'_{i,2}$. Observe that by Observation~\ref{obs:spreadmatroidsarespread}, these matroids have integer weights in $\set{1,2,\cdots,\gamma_{\varepsilon}}$. Thus, one can compute $\alpha$-approximate maximum weight common independent sets $I_{i,l}$ for each of these matroids using $\mathcal{A}_w$ using total number of queries $\frac{T(n,r,\gamma_{\varepsilon
    })\cdot\log W}{\varepsilon}$. Note that $|I_{i,l}|\leq r$ for all $i\in [\beta]$ and $l\geq 0$. Therefore, for each $i\in [\beta]$, we can greedily combine $I_{i,l}$ to obtain the independent set $I_{i}$. This takes $O(\frac{r}{\varepsilon}\log W)$ queries. Let $I_{i^*}$ be the obtained set with the highest weight. We have,
    \begin{align*}
        w(I_{i^*})&\geq (1-4\varepsilon)\sum_{l\geq 0} w(I_{i^*,l})\\
        &\text{(By \Cref{lem:greedymerge})}\\
        &\geq (1-4\varepsilon)\cdot \alpha\sum_{l\geq 0}\opt'_{i^*,l}\\
        &\geq (1-4\varepsilon)\cdot \alpha\cdot\opt'_{i^*}\\
        &\geq (1-4\varepsilon)\alpha\cdot \opt'\\
        &\text{(By \Cref{lem:oneindependentsetisgood})}
    \end{align*}
This concludes the proof.
\end{proof}

\begin{lemma}\label{lem:intweightedtounweighted}
     Let $\mathcal{A}_u$ be an algorithm that computes an $\alpha$-approximate maximum cardinality common independent set using $T_u(n,r)$ queries. Then there is an algorithm $\mathcal{A}_w$ that computes an $\alpha$-approximation to the maximum weight common independent set for integer-weights using $O(T_{u}(nW,rW)+\frac{W^3r}{\varepsilon^2})$ queries. 
\end{lemma}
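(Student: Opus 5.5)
The plan is to chain the three ingredients of the framework: unfold, solve the unweighted problem, then refold and extract. Given integer-weighted matroids $\matroid'_1,\matroid'_2$ with weights in $\set{1,\dots,W}$, we first form the unfolded matroids $\matroid_1,\matroid_2$ of \Cref{def:matroidunfolding}. \Cref{lem:validityofreduction} guarantees these are matroids. By Observation~\ref{obs:unfoldedmatroidsize}, their ground set has size at most $nW$ and the intersection rank is at most $rW$. We run $\mathcal{A}_u$ on $\matroid_1,\matroid_2$, simulating each of its oracle queries through \Cref{lem:oraclecallstoimplementunfolded} (or \Cref{lem:oraclecallsRanktoimplementunfolded} for rank queries). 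This yields $S\in\independentsets_1\cap\independentsets_2$ with $|S|\geq\alpha\cdot\opt=\alpha\cdot\opt'$, the equality coming from \Cref{thm:equivalence}. Each simulated query costs up to $W$ original queries. To match the stated bound $T_u(nW,rW)$, I would either count queries to the unfolded oracle as the unit or absorb the factor $W$, which is a constant $\gamma_\varepsilon$ in the intended application. I will flag this accounting explicitly.

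Next we refold. We let $\groundset'_S$ be as in \Cref{def:refolding} and restrict to $\matroid'_{S,1},\matroid'_{S,2}$. By \Cref{lem:refoldguarantee}, $\opt'_S\geq\alpha\cdot\opt'$. By Observation~\ref{obs:sizeofrefoldedsets}, $|\groundset'_S|\leq |S|\leq rW$. It remains to extract a heavy common independent set from this small ground set. For this we run \Cref{alg:mi_in_level} on the refolded instance with parameter $\varepsilon'$. \Cref{lem:weighted_additive} then gives a common independent set $J'$ with
\[
w(J')\geq \opt'_S-3W\varepsilon'|\groundset'_S|\geq \opt'_S-3W^2\varepsilon' r,
\]
and it uses $O(|\groundset'_S|/\varepsilon'^2)=O(rW/\varepsilon'^2)$ queries.

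The main obstacle is converting this additive error into a multiplicative one. I would use the lower bound $\opt'\geq \opt\geq$ some fraction of $r$: every element has weight at least $1$, so $\opt'\geq r$. Taking $\varepsilon'=\varepsilon/W^2$ gives an error of at most $3\varepsilon r\leq 3\varepsilon\opt'$. Then $w(J')\geq(\alpha-3\varepsilon)\opt'$, which is an $\alpha(1-O(\varepsilon))$ approximation once we rescale $\varepsilon$ (using that $\alpha$ is bounded below by a constant, or replacing $\varepsilon$ by $\alpha\varepsilon$). The query count of the extraction step becomes $O(rW\cdot W^4/\varepsilon^2)$, which is polynomial in $W$ but larger than the stated $W^3r/\varepsilon^2$. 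To reach the stated bound I would sharpen the choice of $\varepsilon'$ using $|\groundset'_S|\leq rW$ together with $\opt'\geq r$, taking $\varepsilon'=\varepsilon/W$. This gives an error of $3W\cdot(\varepsilon/W)\cdot rW=3\varepsilon Wr$. That is fine only if $\opt'\gtrsim Wr$, which is not guaranteed in general. I therefore expect the parameter bookkeeping here to be the delicate point. The correctness argument itself is a direct composition of the cited lemmas.
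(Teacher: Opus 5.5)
Your pipeline (unfold, run $\mathcal{A}_u$, refold, then run \Cref{alg:mi_in_level} on $\groundset'_S$) is the paper's. Your first parameter choice $\varepsilon'=\varepsilon/W^2$ is exactly the paper's $\delta$. The gap is the one you flag yourself: that choice costs $O(|\groundset'_S|/\varepsilon'^2)=O(W^5r/\varepsilon^2)$ queries, and your attempt with $\varepsilon'=\varepsilon/W$ stalls. So the proposal does not establish the stated $O(W^3r/\varepsilon^2)$ term. Your $W^5$ computation is correct. The paper's proof asserts $O(W^3r/\varepsilon^2)$ for $\delta=\varepsilon/W^2$, which does not follow from \Cref{lem:weighted_additive}, so your suspicion about the bookkeeping was justified.

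The missing idea is to measure the additive error against $|S|$ rather than against $r$. You already wrote $|\groundset'_S|\le|S|$. Add $|S|\le\opt'_S$, which is the middle of the chain in the proof of \Cref{lem:refoldguarantee}: $S$ is a common independent set of the unfolded version of $\matroid'_{S,1},\matroid'_{S,2}$, so $|S|\le\opt_S=\opt'_S$ by \Cref{thm:equivalence}. With $\varepsilon'=\varepsilon/W$ the additive error is $3W\varepsilon'|\groundset'_S|=3\varepsilon|\groundset'_S|\le 3\varepsilon\,\opt'_S$, hence
\[ w(J')\ \ge\ (1-3\varepsilon)\,\opt'_S\ \ge\ (1-3\varepsilon)\,\alpha\,\opt'. \]
The query count is $O(|\groundset'_S|\,W^2/\varepsilon^2)\le O(W^3r/\varepsilon^2)$, using $|\groundset'_S|\le|S|\le rW$.

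Your attempted fix lost a factor $W$ because you paired the worst case $|\groundset'_S|\approx rW$ with the worst case $\opt'\approx r$. These cannot occur together, since both quantities are controlled by $|S|$. This argument also gives a genuinely multiplicative loss, so you need neither a lower bound on $\alpha$ nor the substitution $\varepsilon\to\alpha\varepsilon$, which would inflate the query count by $\alpha^{-2}$. Like the paper's own argument, it yields $\alpha(1-O(\varepsilon))$ rather than exactly $\alpha$, which is how the statement must be read.

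Your remark that each simulated unfolded-oracle query costs up to $W$ original queries is fair. The paper's proof absorbs that overhead silently in the same way, writing ``in $T(nW,rW)$ queries''.
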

\begin{proof}
    Given such matroids $\matroid'_1=(\groundset',\independentsets'_1,w)$ and $\matroid'_2=(\groundset',\independentsets'_2,w)$ with $W$ as the ratio between weights, we first unfold them to create matroids $\matroid_1=(\groundset,\independentsets_1)$ and $\matroid_2=(\groundset,\independentsets_2)$. Note that $\card{\groundset}\leq W\card{\groundset'}$ and similarly, $\opt\leq rW$ by Observation~\ref{obs:unfoldedmatroidsize}. Thus, in $T(nW,rW)$ queries, we can compute an $\alpha$-approximate maximum cardinality common independent set $S$ of $\matroid_1,\matroid_2$. Subsequently, we obtain refolded matroids $\matroid'_{1,S}$ and $\matroid'_{2,S}$. Note that $|\groundset'_S|\leq W\cdot r$ by \Cref{lem:refoldguarantee}. Let $S'$ be the maximum weight common independent set of $\matroid'_{1,S}$ and $\matroid'_{2,S}$. We can run \Cref{alg:mi_in_level} with $\delta=\frac{\varepsilon}{W^2}$ as a precision parameter to get an common independent set $T'$ of $\matroid'_{1,S}$ and $\matroid'_{2,S}$ such that
    \begin{align*}
        w(T')&\geq w(S')-2\delta W|\groundset'_S|\\
        &\text{(By \Cref{lem:weighted_additive})}\\
        &\geq w(S')-2\delta W^2 r\\
        &\text{(Since $|\groundset'_S|\leq r\cdot W$)}\\
        &\geq (1-\varepsilon)w(S')\\
        &\text{(Since weights are integral $w(S')\geq r$)}
    \end{align*}
This requires $O(\frac{W^3r}{\varepsilon^2})$ queries by \Cref{lem:weighted_additive}.
\end{proof}

We now show the proof of the reduction.

\begin{proof}[Proof of \Cref{lem:staticreduction}]
    The proof directly follows by \Cref{red:staticaspectratio} and \Cref{lem:intweightedtounweighted}.
\end{proof}

\section{Applications}
\label{sec:applications}

In this section, we delve into the implications of our reduction to the classical, streaming and communication complexity settings. Our contributions are summarized in the form of a table. 

\begin{table}[H]
\centering
	\caption{Summary of Prior and Our Results on Weighted Matroid Intersection}
		\begin{tabular}{c c c c c}
		\hline
		Setting	& Prior Result & Our Result\\
		\hline

\makecell{Indep. Queries \\ (Randomized)} & \makecell{$O(\frac{n\log n}{\varepsilon}+\frac{r^{1.5}}{\varepsilon^4})$\\\cite{Quanrud24}} & \makecell{$O(\gamma_{\varepsilon}(n\log n+r\log^3n +\gamma_{\varepsilon}^2r)\log W)$\\\Cref{lem:linearmatroid}}  \\
  \hline
\makecell{Rank Queries\\ (Deterministic)} & \makecell{$\tilde{O}(\frac{n \sqrt{r}}{\varepsilon^2})$\\\cite{Quanrud24}(\cite{ChekuriQ16} + \cite{Blikstad21}} & \makecell{$O(\gamma_{\varepsilon} n\log n \log W +r\gamma^3_{\varepsilon}\log W)$\\\Cref{lem:rankqueriesdeterministic}} \\
\hline
\makecell{Semi-streaming\\(Multipass)\\ $(1-\varepsilon)$-approx.} & \makecell{Space: \\$O((r_1+r_2)\poly(\varepsilon^{-1})\log W)$ \\ Passes: \\ $O(\poly(\varepsilon^{-1})\log n)$ \\ \cite{Quanrud24}} & \makecell{Space: \\ $O(\varepsilon^{-1}(r_1+r_2)\gamma_{\varepsilon}\log W)$ \\ Passes: \\ $O(\varepsilon^{-1})$ \\\Cref{lem:detstreaming}} \\
\hline
\makecell{Semi-streaming \\ (One-Pass)} & \makecell{$(0.5-\varepsilon)$-approximate \\ $O(\frac{(r_1+r_2)\cdot \log W}{\varepsilon})$\\
\cite{GargJS21}} & \makecell{ $(0.5-\varepsilon)$-approximate \\
Space: \\ $O(\gamma_{\varepsilon} r\log W)$  \\ \Cref{lem:onepassstreaming}} \\
\hline
\makecell{Semi-Streaming\\ Random-Order \\ (One-Pass)} & \makecell{No Prior Results} & \makecell{$(\frac{2}{3}-\varepsilon)$-approximation\\ Space: \\ $O(r \log n \log (\min\{r_1,r_2\}) \gamma_{\varepsilon}\log W)$  \\
\Cref{lem:randomlyorderedweighted}} \\
\hline
\makecell{One-Way CC} & \makecell{No Prior Results} & \makecell{Space: \\ $O(\gamma_{\varepsilon} r \log W)$ \\ \Cref{lem:commcomplexityresult}} \\
\hline
\end{tabular}
\label{table:summary of results}
\end{table}

In addition to these applications, also obtain some results for the case of bipartite weighted $b$-matching in the MPC, parallel shared-memory work-depth, and distributed blackboard models. We elaborate on these in \Cref{sec:bmatching}. We note that \cite{HuangS22} obtained a one-pass algorithm for $(2-\delta)$-approximate maximum weight $b$-matching in random-order streams with space $O(\max(|M_G|, |V|) \cdot\poly(\log(m), W, \frac{1}{\varepsilon}))$, where $V$ is the vertex set of $G$, $|M_G|$ refers to the cardinality of the optimal $b$-matching and $m$ is the number of edges in the graph. For the specific case of bipartite graphs, our result on random-order streams improves on theirs. 

Finally, we clarify some aspects of \Cref{table:summary of results}. In the rows which say ``No Prior Results'', we mean specifically for the case of general weighted matroid intersection. For these models, there has been a lot of progress for the special case of weighted bipartite matching. In particular, on the question of multi-pass semi-streaming algorithm for maximum weight matching problem, there is a large body of work and we refer interested reader to \cite{Assadi24} for a discussion about this problem. For random-order streams, the state of the art for weighted matching is due to \cite{AssadiB21,BernsteinDL21}. Finally, for one-way communication complexity, the best bounds are due to \cite{GoelKK12,AssadiB18,BernsteinDL21}. We also refer the reader to \cite{AzarmehrB23} for the best known result on the ``robust'' version of this problem.

\subsection{Static Setting}

Our main application of \Cref{lem:staticreduction} is the following result, which is the first linear time \emph{independence query algorithm} for $(1-\varepsilon)$-approximate maximum weight matroid intersection.

\begin{lemma}\label{lem:linearmatroid}
    There is a {\bf randomized} algorithm $\mathcal{A}_w$, which on input two matroids $\matroid'_1=(\groundset',\independentsets_1',w)$ and $\matroid_2'=(\groundset',\independentsets_2',w)$ outputs a $(1-\varepsilon)$-approximation to the maximum weight common independent set using $O(\gamma_{\varepsilon}\cdot (n\log n+r\log^3n +\gamma_{\varepsilon}^2 r )\cdot \log W)$ independence queries. 
\end{lemma}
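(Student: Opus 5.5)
The plan is to instantiate \Cref{lem:staticreduction} with the unweighted $(1-\varepsilon)$-approximation algorithm of \cite{BlikstadT25}. That algorithm is randomized and uses $O(\frac{n\log n}{\varepsilon}+\frac{r\log^3 n}{\varepsilon^5})$ independence queries, so $\alpha=1-\varepsilon$. The reduction then gives an $\alpha(1-\varepsilon)=(1-\varepsilon)^2\geq 1-2\varepsilon$ approximation, and rescaling $\varepsilon$ by a constant yields the stated $(1-\varepsilon)$ guarantee. Since $\gamma_{\varepsilon}$ changes only by a $\gamma_{O(\varepsilon)}$-type factor under this rescaling, we absorb it into the notation as the paper does. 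Independence queries are preserved by the reduction, because \Cref{lem:oraclecallstoimplementunfolded} simulates each unfolded independence query with at most $W=\gamma_\varepsilon$ original queries. \Cref{alg:mi_in_level}, the greedy merge, and the restrictions in \Cref{def:spreadmatroids} likewise use only independence queries.

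Concretely, I follow the chain \Cref{red:staticaspectratio} $\circ$ \Cref{lem:intweightedtounweighted}. First, for each of the $\beta=O(\varepsilon^{-1})$ shifts $i$ and each level $l$, the instance $\matroid'_{1,i}\mid\groundset'_{i,l}$, $\matroid'_{2,i}\mid\groundset'_{i,l}$ is rescaled to integer weights in $\{1,\dots,\gamma_\varepsilon\}$ by Observation~\ref{obs:spreadmatroidsarespread}. Next, it is unfolded into an unweighted instance with $n_{i,l}\gamma_\varepsilon$ elements and optimum at most $r\gamma_\varepsilon$ (Observation~\ref{obs:unfoldedmatroidsize}), and \cite{BlikstadT25} is run on it with oracle simulation. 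Then the result is refolded to a groundset of size at most $\gamma_\varepsilon r$ (Observation~\ref{obs:sizeofrefoldedsets}), and \Cref{alg:mi_in_level} is run there with precision $\varepsilon/\gamma_\varepsilon^2$. Finally, the pieces are merged greedily and the best shift is kept, via \Cref{lem:greedymerge} and \Cref{lem:oneindependentsetisgood}.

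For the query count, the key point is that the groundsets $\groundset'_{i,l}$ over $l$ partition $\groundset'_i\subseteq\groundset'$. Hence the unweighted calls over all $l$ for a fixed $i$ cost
\[
\sum_l \gamma_\varepsilon\Bigl(\tfrac{n_{i,l}\gamma_\varepsilon\log(n\gamma_\varepsilon)}{\varepsilon}+\tfrac{r\gamma_\varepsilon\log^3 (n\gamma_\varepsilon)}{\varepsilon^5}\Bigr),
\]
where the leading $\gamma_\varepsilon$ is the oracle simulation overhead. Summing over the $O(\log W)$ nonempty levels and the $\beta$ shifts gives a bound of the form $\gamma_\varepsilon^{O(1)}\poly(\varepsilon^{-1})(n\log n+r\log^3 n)\log W$. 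The refolding step contributes $O(\gamma_\varepsilon^3 r/\varepsilon^2)$ per level, giving the $\gamma_\varepsilon^2 r$ term after absorption. Greedy merging adds only $O(r\log W/\varepsilon)$.

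The main obstacle is bookkeeping of the $\varepsilon$ and $\gamma_\varepsilon$ factors so that the bound matches exactly $O(\gamma_\varepsilon(n\log n+r\log^3n+\gamma_\varepsilon^2r)\log W)$. Polynomial factors in $\varepsilon^{-1}$, the extra $\gamma_\varepsilon$ from unfolding size and from simulation, and $\log(n\gamma_\varepsilon)$ versus $\log n$ must all be dominated by, and absorbed into, a single $\gamma_\varepsilon$. I would handle this by noting that $\gamma_\varepsilon=\lceil(1/\varepsilon)^{1/\varepsilon}\rceil$ dominates any fixed polynomial in $\gamma_{\varepsilon'}$ for $\varepsilon'=c\varepsilon$. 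So one runs the whole construction with a smaller constant multiple of $\varepsilon$ and renames. A second point to check is that the $n$-dependent term sums over $l$ to $n$ rather than $n\log W$, which the partition argument gives. The randomness comes only from \cite{BlikstadT25}, so the resulting algorithm is randomized with the same success guarantee, boosted by a union bound over the $O(\varepsilon^{-1}\log W)$ calls if needed.
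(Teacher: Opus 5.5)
Your route is the paper's. Its omitted proof is exactly the instantiation of \Cref{lem:staticreduction} with the algorithm of \cite{BlikstadT25}. Your unpacking (oracle simulation, summing over the partition $\{\groundset'_{i,l}\}_l$ of $\groundset'_i$, refolding plus \Cref{alg:mi_in_level}, greedy merge, best shift) is a faithful expansion of that reduction, and the partition argument for the $n$-term is a nice touch.

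The step that fails is the absorption you identify as the main obstacle. Up to rounding, $\gamma_{c\varepsilon}\geq\gamma_{\varepsilon}^{1/c}$ for $c<1$; for example, $\gamma_{\varepsilon/2}\geq\gamma_\varepsilon^2$. So running the construction with a \emph{smaller} multiple of $\varepsilon$ makes the $\gamma$-factor polynomially larger, not smaller. $\gamma_\varepsilon$ dominates $\gamma_{c\varepsilon}^k$ only for $c\gtrsim k>1$, and then the guarantee degrades to $1-O(k\varepsilon)$. Renaming therefore cannot return you to a $(1-\varepsilon)$-approximation with a single factor $\gamma_\varepsilon$ and an absolute constant.

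Your own count shows the problem. You have the $\gamma_\varepsilon$ simulation overhead of Claim~\ref{lem:oraclecallstoimplementunfolded} on top of the $\gamma_\varepsilon n$-element unfolded groundset, multiplied by $\beta$ shifts and the $\varepsilon^{-1}$ of \cite{BlikstadT25}. That puts roughly $\gamma_\varepsilon^2\varepsilon^{-2}$ in front of $n\log n$. The same issue affects the constants hidden in ``$1-\varepsilon$'' throughout the reduction.

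What this route delivers, and all that the paper's one-line proof delivers, is the stated bound with $\gamma_\varepsilon$ read only up to $\varepsilon^{-O(1/\varepsilon)}$ factors. The paper itself uses this convention in its Future Work section. You should state that reading explicitly instead of invoking a domination claim that is false.

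A smaller arithmetic point concerns the refolding step. By \Cref{lem:weighted_additive}, precision $\varepsilon/\gamma_\varepsilon^2$ on $\gamma_\varepsilon r$ elements costs $O(\gamma_\varepsilon^5 r/\varepsilon^2)$ queries, not $O(\gamma_\varepsilon^3 r/\varepsilon^2)$. To get the latter, use $|\groundset'_S|\leq|S|\leq\opt'$: each refolded element has a copy in $S$, and the last inequality is \Cref{thm:equivalence}. With this bound, precision $\Theta(\varepsilon/\gamma_\varepsilon)$ suffices.
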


Our result is obtained by applying \Cref{lem:staticreduction} to the following recent result of \cite{BlikstadT25}. The proof is straightforward, so we omit it. 

\begin{lemma}[\protect{\cite[Theorem 1.2]{BlikstadT25}}]
    There is a {\bf randomized} algorithm that on input matroids $\matroid_1=(\groundset,\independentsets_1)$ and $\matroid_2=(\groundset,\independentsets_2)$ outputs $S\in \independentsets_1\cap \independentsets_2$ such that $|S|\geq (1-\varepsilon)r$ using $O(\frac{n\log n}{\varepsilon}+\frac{r\log^3n}{\varepsilon^5})$ independence oracle queries. 
\end{lemma}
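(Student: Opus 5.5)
The plan is to prove the statement in two phases, in the same spirit as the auction algorithm of \Cref{alg:mi_in_level}. First, a randomized \emph{sparsification} phase uses $O(\frac{n\log n}{\varepsilon})$ independence queries. It shrinks the ground set $\groundset$ to a set $\groundset_0$ with $|\groundset_0|=O(r/\varepsilon)$ (up to logarithmic factors), such that $\matroid_1|\groundset_0,\matroid_2|\groundset_0$ still have a common independent set of size at least $(1-\varepsilon/2)r$. Second, an unweighted auction algorithm is run on $\groundset_0$. Its additive error is converted into a multiplicative one because $|\groundset_0|$ is only a $\poly(\varepsilon^{-1})$ factor larger than $r$.

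\textbf{Sparsification.} I would grow a common independent set $S$ greedily while sampling. In round $t$ I sample a random subset $R_t\subseteq\groundset$ of geometrically increasing density. I add to $S$ the elements of $R_t$ that can be added greedily while keeping $S$ in $\independentsets_1\cap\independentsets_2$. I then discard every element $e$ that is spanned by $S$ in both $\matroid_1$ and $\matroid_2$. A standard sampling lemma in the style of Karger's or Kent--Pérez filtering then gives the following. With high probability, after $O(\log n)$ rounds the number of surviving elements not spanned in either matroid is $O(r\log n)$. Each round costs one pass over the surviving elements, and a pass costs $O(1)$ independence queries per element. The quantitative heart is this: by the exchange property and the rank bound $r_i(\cdot)\le r$, only $O(r/\varepsilon)$ elements can be \emph{free} in the sense of increasing the optimum by more than an $\varepsilon$-fraction. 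To control the additive budget, I would keep a slack of $\varepsilon$ per round, which produces the $1/\varepsilon$ in the query count. I would then set $\groundset_0$ to be $S$ together with the surviving elements and argue that $\opt(\groundset_0)\ge(1-\varepsilon/2)\opt$. The argument is via the dual in \Cref{lem:chaineddualsexistence}. Discarded elements are spanned by $S$ in both matroids, so replacing an optimal solution's discarded elements by elements of $S$ loses only the slack charged in each round.

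\textbf{Auction phase.} On $\groundset_0$ I run \Cref{alg:mi_in_level} with $w\equiv 1$, so $W=1$, and with precision $\delta=\Theta(\varepsilon^2)$. By \Cref{lem:weighted_additive} the output $T$ satisfies $|T|\ge \opt(\groundset_0)-3\delta|\groundset_0|\ge(1-\varepsilon)r$, since $|\groundset_0|=O(r/\varepsilon)$. The number of queries is stated as $O(|\groundset_0|/\delta^2)$, which is $O(r/\varepsilon^5)$. Computing each maximum-weight base from scratch, however, would cost $|\groundset_0|$ queries per round. So I would implement the base updates after each batch of price increases by binary search over the sorted order of elements. Only the $\varepsilon|\groundset_0|$ elements whose weight changed need to be exchanged, and locating each exchange partner costs $O(\log n)$ queries. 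The additional logarithmic factors come from the sampling phase's $O(\log n)$ rounds and from the failure probability; this bookkeeping is what yields the $\log^3 n$.

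\textbf{Main obstacle.} I expect the hard step to be the sparsification guarantee. One must show simultaneously that the survivors number only $\tilde O(r/\varepsilon)$ and that discarding doubly-spanned elements loses at most $\varepsilon r$. For matroid intersection, an element spanned in both matroids can still be essential in an exchange path, so a naive argument fails. My first attempt would be to charge each lost optimal element to an element of $S$ through the circuit exchange property, using the structural counting of \Cref{lem_charging}. I would only discard an element once $S$ is within a $(1-\varepsilon)$ factor of being locally maximal, as certified by short augmenting paths of length $O(1/\varepsilon)$. This is the step where I am least certain the $1/\varepsilon$ (rather than $1/\varepsilon^2$) dependence in the first term survives.
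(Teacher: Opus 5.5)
The paper does not prove this statement; it imports it as a black box from Blikstad--Tu. To my knowledge, that result runs the batch-update auction on a small core produced by Quanrud's adaptive sparsification. Your two-phase architecture is therefore the right one, and your auction phase is essentially sound. If the core has size $|\groundset_0|=O(r\log n/\varepsilon)$, take $\delta=\Theta(\varepsilon^2/\log n)$ so that $3\delta|\groundset_0|\le\varepsilon r/2$. Then \Cref{lem:weighted_additive} costs $O(|\groundset_0|/\delta^2)=O(r\log^3 n/\varepsilon^5)$ queries, with bases recomputed greedily from scratch in each of the $O(1/\delta^2)$ rounds; no binary-search exchanges are needed.

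The genuine gap is the sparsification phase. It carries the entire $O(n\log n/\varepsilon)$ term, and the rule you propose (grow a greedy common independent set $S$ from samples, discard elements spanned by $S$ in \emph{both} matroids) fails both required guarantees. First, it need not shrink the ground set. Elements spanned by $S$ in exactly one matroid are never discarded, and there can be $\Theta(n)$ of them. Take bipartite matching (two partition matroids) with edges $uv_1,\dots,uv_m$ and $u_2v_1,\dots,u_mv_1$, so $r=2$. Each $v_j$ ($j\ge 2$) and each $u_i$ has degree one, so for every common independent $S$ the only element outside $S$ that can be spanned in both matroids is $uv_1$. Hence $|\groundset_0|\ge n-1$, even when $S$ is optimal. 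The Karger-type sampling bound you invoke controls only elements spanned in \emph{neither} matroid, which is not the survivor set.

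Second, discarding doubly spanned elements can lose a constant fraction of the optimum. Take disjoint copies of the path $x_1y_1x_2y_2x_3y_3$. Nothing in your procedure prevents $S$ from containing $x_2y_1,x_3y_2$ in each copy; for instance, give these two edges many parallel copies so that an early sparse sample picks them before the three optimal edges. The optimal edge $x_2y_2$ then has both endpoints covered, so it is discarded, and the surviving instance has optimum $\frac{2}{3}\opt$. Your justification (``replace the discarded optimal elements by elements of $S$'') breaks exactly here. The exchange partners in $\matroid_1$ and $\matroid_2$ are different elements and collide with the rest of the optimum. Also, \Cref{lem_charging} is a single-matroid counting statement and provides no simultaneous two-matroid exchange.

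Your suggested repair, discarding only once $S$ admits no augmenting path of length $O(1/\varepsilon)$, is circular. Such an $S$ is already a $(1-\varepsilon)$-approximation, and producing it within $O(n\log n/\varepsilon)$ queries is what the sparsification was supposed to enable. It also does nothing for the size bound, since in the first example $S$ is optimal. What is missing is an actual sparsification theorem for matroid intersection. One example is an iterated sample--solve--reweight scheme whose filtering test comes from an approximate dual certificate of the sampled solution, rather than from spanning by a primal set. Without it, the proposal does not reach the stated query bound.
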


Our second application builds on the following result by \cite{ChakrabartyLSSW19}. 

\begin{lemma}
    There is a {\bf deterministic algorithm} that on input matroids $\matroid_1=(\groundset,\independentsets_1)$ and $\matroid_2=(\groundset,\independentsets_2)$ outputs a $(1-\varepsilon)$-approximate maximum cardinality common independent set using $O(\frac{n\log n}{\varepsilon})$ {\bf rank queries}.
\end{lemma}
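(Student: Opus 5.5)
This result is cited from \cite{ChakrabartyLSSW19}, so the plan is to recall how their deterministic rank-query algorithm works rather than build a new argument. The algorithm is a Hopcroft--Karp style augmenting-path procedure on the exchange graph. We maintain a common independent set $S\in\independentsets_1\cap\independentsets_2$. We repeatedly compute the BFS layering of the exchange graph $G(S)$ from the sources $\{e\notin S: S\cup\{e\}\in\independentsets_1\}$ to the sinks $\{e\notin S: S\cup\{e\}\in\independentsets_2\}$. We then augment along a maximal collection of shortest augmenting paths that are compatible, i.e.\ whose simultaneous augmentation keeps $S$ common independent. Once the shortest augmenting path has length more than $2/\varepsilon$, we stop.

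\emph{Approximation guarantee.} We use the standard Cunningham argument. If $|S|<(1-\varepsilon)\opt$, then the exchange-graph analysis yields an augmenting path of length $O(1/\varepsilon)$. Hence, stopping at distance $\Theta(1/\varepsilon)$ leaves a $(1-\varepsilon)$-approximate solution. Moreover, each phase strictly increases the shortest-path distance, so $O(1/\varepsilon)$ phases suffice.

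\emph{Query cost per phase.} This is the main obstacle, since the exchange graph may have $\Theta(nr)$ edges and cannot be written down explicitly. The key tool in \cite{ChakrabartyLSSW19} is a binary-search primitive: given $x\notin S$ and a subset $Y\subseteq S$, one can test with a single rank query whether $x$ has an exchange-graph neighbour in $Y$. For example, $\rk_1((S\setminus Y)\cup\{x\})$ compared with $|S\setminus Y|$ detects such a neighbour on the $\matroid_1$ side, and the $\matroid_2$ side is symmetric. Consequently, one neighbour can be located with $O(\log n)$ rank queries.

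\emph{BFS and augmentation.} Using this primitive, BFS over the implicit graph discovers each vertex once and deletes it once discovered. This costs $O(n\log n)$ rank queries per phase. The blocking-flow (maximal set of shortest paths) step is carried out with the same primitive in a DFS that deletes dead ends, again at $O(n\log n)$ rank queries per phase. Multiplying by the $O(1/\varepsilon)$ phases gives the claimed $O(\frac{n\log n}{\varepsilon})$ bound. Every step is deterministic.

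The delicate point is justifying that augmenting along many shortest paths within the same layering is simultaneously valid. This requires the shortest-path/no-shortcut lemma of Cunningham, and it is exactly where we would rely on \cite{ChakrabartyLSSW19} instead of reproving it.
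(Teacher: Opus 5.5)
Your proposal matches the paper, which gives no proof of its own and simply imports this lemma from \cite{ChakrabartyLSSW19}. Your sketch of their algorithm is an accurate summary: $O(1/\varepsilon)$ Cunningham-style phases, each implemented with $O(n\log n)$ rank queries by binary-searching for exchange-graph edges. One small imprecision: the test you wrote only searches from an element $x\notin S$ into a subset $Y\subseteq S$, and it is really just an independence query ($(S\setminus Y)\cup\{x\}\in\independentsets_i$). The BFS also needs to search from $y\in S$ into an unvisited set $X\subseteq\groundset\setminus S$, which is done via the test $\rk_i((S\setminus\{y\})\cup X)\geq |S|$. That direction is exactly where rank queries are needed, so it is not merely ``symmetric''.
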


Combining \Cref{lem:staticreduction} and the above lemma, we can obtain the following result.

\begin{lemma}\label{lem:rankqueriesdeterministic}
    There is a {\bf deterministic algorithm} that on input matroids $\matroid'_1=(\groundset',\independentsets'_1,w)$ and $\matroid'_2=(\groundset',\independentsets'_2,w)$ outputs a $(1-\varepsilon)$-approximate maximum weight common independent set using $O(\gamma_{\varepsilon}\cdot n\cdot \log n\cdot \log W +r\cdot \gamma^3_{\varepsilon}\cdot \log W)$ {\bf rank queries.}
\end{lemma}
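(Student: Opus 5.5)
The plan is to instantiate \Cref{lem:staticreduction} with the deterministic rank-query algorithm of \cite{ChakrabartyLSSW19} as $\mathcal{A}_u$, with approximation $\alpha = 1-\varepsilon$. That algorithm makes $T_u(n,r) = O(\frac{n\log n}{\varepsilon})$ rank queries. Because the reduction preserves determinism and the query type, the resulting $\mathcal{A}_w$ is deterministic and uses only rank queries. Its approximation ratio is $(1-\varepsilon)^2 \geq 1-2\varepsilon$, and rescaling $\varepsilon$ by a constant gives $(1-\varepsilon)$.

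For the query count, \Cref{lem:staticreduction} gives $O((T_u(n\gamma_\varepsilon, r\gamma_\varepsilon) + \frac{\gamma_\varepsilon^3 r}{\varepsilon^3})\log W)$. The first term is $O(\frac{\gamma_\varepsilon n \log(\gamma_\varepsilon n)}{\varepsilon})$. We have $\log(\gamma_\varepsilon n) = \log n + \varepsilon^{-1}\log\varepsilon^{-1}$, and the polynomial factors of $\varepsilon^{-1}$ are absorbed into $\gamma_\varepsilon = \lceil(1/\varepsilon)^{1/\varepsilon}\rceil$ after replacing $\varepsilon$ by a constant multiple. This requires checking that $\varepsilon^{-c}\gamma_{\varepsilon} \le \gamma_{\varepsilon/c'}$, which holds for small $\varepsilon$, and for $n \ge 2$ it turns the first term into $O(\gamma_\varepsilon n\log n\log W)$. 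The same absorption turns the second term into $O(r\gamma_\varepsilon^3\log W)$.

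The remaining step, and the only one needing care, is confirming that the refolding step of \Cref{lem:intweightedtounweighted} is deterministic and can be done with rank queries. \Cref{alg:mi_in_level} is deterministic: its only operations are maximum weight base computations with deterministic tie-breaking. Those bases come from the greedy algorithm, which uses independence tests, and each independence test is answered by a single rank query via $\rk(S) = |S|$. The greedy merges in \Cref{red:staticaspectratio} are likewise deterministic independence checks. Finally, the unfolded rank oracle is simulated with at most $W = \gamma_\varepsilon$ rank queries by Claim~\ref{lem:oraclecallsRanktoimplementunfolded}, and this factor is already included in the $T_u(n\gamma_\varepsilon,\cdot)$ accounting of \Cref{lem:staticreduction}.

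Summing the two terms gives the bound claimed in \Cref{lem:rankqueriesdeterministic}.
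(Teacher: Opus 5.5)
Your proposal is the paper's own argument. The paper proves this lemma in one line by plugging the deterministic $O(n\log n/\varepsilon)$ rank-query algorithm of \cite{ChakrabartyLSSW19} into \Cref{lem:staticreduction}, and your remarks on determinism and on answering the independence tests of \Cref{alg:mi_in_level} with rank queries only make explicit what the paper leaves implicit. One caveat: your absorption of the leftover $\varepsilon^{-1}$ and $\log\gamma_\varepsilon$ factors, and the constant-factor rescaling of $\varepsilon$, holds only in the loose sense that $\gamma_\varepsilon$ stands for $\varepsilon^{-O(1/\varepsilon)}$, because $\gamma_{\varepsilon/c'}/\gamma_\varepsilon$ is superpolynomial in $1/\varepsilon$; the paper, however, drops these factors in exactly the same way.
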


We now show how to prove \Cref{lem:staticreduction}.

\subsection{Streaming Setting}
\label{sec:streaming}

In the \emph{streaming model} the goal is to solve a problem over a stream of input while using a limited amount of memory. The algorithms may read the input once or multiple times. In the former case, they are called \emph{one-pass} and in the latter case, they are referred to as \emph{multi-pass}. We obtain the following reduction in the streaming setting. 

\begin{lemma}\label{lem:streamingreduction}
    Let $\mathcal{A}_u$ be an algorithm that computes an $\alpha$-approximation to the maximum cardinality common independent set of matroids $\matroid_1,\matroid_2$ using $S_{u}(n,r_1,r_2,r)$ space and $p$ passes. Then, there is an algorithm $\mathcal{A}_w$ that computes an $\alpha(1-\varepsilon)$-approximation to the maximum weight common independent set of $\matroid'_{1},\matroid'_2$ using $O(\frac{S_{u}(\gamma_{\varepsilon}n,\gamma_{\varepsilon}r_1,\gamma_{\varepsilon}r_2,\gamma_{\varepsilon}r)\log W}{\varepsilon})$ space and $p$ passes. 
\end{lemma}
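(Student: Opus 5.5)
We will follow the same three-step pipeline as in the proof of \Cref{lem:staticreduction}, but run all guesses and classes in parallel within the same $p$ passes. First, by \Cref{def:spreadmatroids}, we fix all $\beta=\lceil\varepsilon^{-1}\rceil$ choices of missing groups. For each $i\in[\beta]$ and each $l\ge 0$ we obtain a weight class $\groundset'_{i,l}$ with aspect ratio $\gamma_\varepsilon$, and there are $O(\log W)$ nonempty classes per $i$. After the rescaling of Observation~\ref{obs:spreadmatroidsarespread}, weights in each class lie in $\{1,\dots,\gamma_\varepsilon\}$. When a stream element $e$ arrives, its weight determines $(i,l)$ for every guess $i$ (it lies in at most one class per $i$, or in a missing group). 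We therefore feed it, unfolded per \Cref{def:matroidunfolding}, to the copy of $\mathcal{A}_u$ responsible for the pair $(i,l)$.

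Second, we simulate the unfolded stream on the fly. Each arriving $e$ is replaced by the $w(e)\le\gamma_\varepsilon$ copies $e_1,\dots,e_{w(e)}$, which are passed consecutively to the instance of $\mathcal{A}_u$ for its class. Independence and rank queries on $\matroid_1,\matroid_2$ that $\mathcal{A}_u$ makes on elements it stores are answered through Claim~\ref{lem:oraclecallstoimplementunfolded} (resp.\ Claim~\ref{lem:oraclecallsRanktoimplementunfolded}) using the oracles of $\matroid'_1,\matroid'_2$, so no extra space is needed beyond storing the elements themselves. Each instance sees a ground set of size at most $\gamma_\varepsilon n$, and by Observation~\ref{obs:unfoldedmatroidsize} and \Cref{def:matroidunfolding} its ranks are scaled by at most $\gamma_\varepsilon$. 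Each instance therefore uses $S_u(\gamma_\varepsilon n,\gamma_\varepsilon r_1,\gamma_\varepsilon r_2,\gamma_\varepsilon r)$ space and $p$ passes. Since there are $O(\beta\log W)=O(\varepsilon^{-1}\log W)$ instances running simultaneously, the space and pass bounds claimed follow.

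Third, at the end of the last pass, for each $(i,l)$ we take the output $S_{i,l}$ and refold it (\Cref{def:refolding}). The refolded set $\groundset'_{S_{i,l}}$ has size at most $\gamma_\varepsilon r$ by Observation~\ref{obs:sizeofrefoldedsets}, and these elements are already stored as part of the output. By \Cref{lem:refoldguarantee} and \Cref{thm:equivalence}, the refolded matroids contain a common independent set of weight at least $\alpha\,\opt'_{i,l}$. We extract it offline (no further passes) within a $(1-\varepsilon)$ factor, using \Cref{alg:mi_in_level} as in \Cref{lem:intweightedtounweighted}; since streaming charges only space, even exact computation is fine. For each $i$ we then greedily merge across $l$ in descending weight order, and we output the best merged set over $i$. \Cref{lem:greedymerge} (via \Cref{cor:greedymergeguarantee}) together with \Cref{lem:oneindependentsetisgood} gives an $\alpha(1-O(\varepsilon))$ approximation, and rescaling $\varepsilon$ completes the argument.

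The main obstacle is the second step: making sure that the unfolded stream can be generated online and that $\mathcal{A}_u$'s oracle accesses only ever touch elements it has stored. Unfolding is local to each element, and unfolded independence of a stored set depends only on the original elements underlying that set. So the step holds as long as the streaming model allows independence queries on stored subsets, which is the standard model of \cite{HuangS24}. A second subtlety is that the class boundaries $(\varepsilon^{-1})^{i+l\beta}$ must be computable without knowing $W$ in advance. This is fine because classes are defined by absolute weight thresholds; if a lower normalization is needed, we instead use $\min w$ with a lazy instantiation of instances only for classes that become nonempty.
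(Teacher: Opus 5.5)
Your proposal is correct and follows the paper's proof. The paper likewise feeds the copies $e_1,\dots,e_{w(e)}$ of each arriving element to $\mathcal{A}_u$, refolds the output offline, and extracts the heaviest common independent set of the refolded matroids via \Cref{lem:refoldguarantee}; it then wraps this integer-weight algorithm in the streaming aspect-ratio reduction (\Cref{lem:streamingaspectratioreduction}), which runs the $O(\varepsilon^{-1}\log W)$ class instances in parallel and greedily merges them. You inline these two steps into a single algorithm and add harmless remarks about oracle access to stored elements and lazy instantiation of classes.
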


Our main application of \Cref{lem:streamingreduction} are the following two results.

\begin{lemma}\label{lem:detstreaming}
    There is a deterministic streaming algorithm, which on input two matroids $\matroid'_1=(\groundset',\independentsets'_1,w)$ and $\matroid'_2=(\groundset',\independentsets'_2,w)$ outputs a $(\frac{2}{3}-\varepsilon$)-approximation to the maximum cardinality common independent set using $O(\varepsilon^{-1}(r_1+r_2)\gamma_{\varepsilon}\log W)$ space and $O(\varepsilon^{-1})$ passes. 
\end{lemma}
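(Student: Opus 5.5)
The plan is to apply \Cref{lem:streamingreduction} to a known multi-pass semi-streaming algorithm for unweighted matroid intersection. Note that the statement says ``maximum cardinality'' but is really about the weighted problem, and the constant $\frac{2}{3}$ should be read in light of \Cref{table:summary of results}, which lists this as a $(1-\varepsilon)$-approximation in $O(\varepsilon^{-1})$ passes. The unweighted ingredient I would invoke is the deterministic multi-pass algorithm of \cite{Terao24} (building on \cite{HuangS24}). It computes an $\alpha$-approximate maximum cardinality common independent set, with $\alpha$ either $(1-\varepsilon)$ or $\frac{2}{3}$ depending on the version used. It needs $O(\varepsilon^{-1})$ passes and space $S_u(n,r_1,r_2,r)=O(r_1+r_2)$ elements, i.e.\ it stores a constant number of independent sets of each matroid.

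First, fix the unweighted algorithm $\mathcal{A}_u$ with $p=O(\varepsilon^{-1})$ passes and $S_u=O(r_1+r_2)$. Next, invoke \Cref{lem:streamingreduction}. The unfolded matroids of each weight class have ranks at most $\gamma_{\varepsilon}r_1$ and $\gamma_{\varepsilon}r_2$, by \Cref{def:matroidunfolding} and the rank formula $\rk_1(S)=\sum_i \rk'_1(S'_i)$. Plugging in gives space $O(\varepsilon^{-1}\gamma_{\varepsilon}(r_1+r_2)\log W)$. The number of passes stays $O(\varepsilon^{-1})$, because all $O(\varepsilon^{-1}\log W)$ instances (over $i\in[\beta]$ and classes $l$) run in parallel on the same stream. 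Each arriving element $e$ is routed to its class $\groundset'_{i,l}$ and expanded into its $w(e)\le\gamma_{\varepsilon}$ unfolded copies. Independence checks in the unfolded matroids are answered with the original oracle by \Cref{lem:oraclecallstoimplementunfolded}.

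Finally, post-process without further passes. Refolding a class solution gives a ground set of size $O(\gamma_{\varepsilon}r)$ by \Cref{obs:sizeofrefoldedsets}, and these elements are already stored. We extract a near-optimal weighted common independent set offline using \Cref{alg:mi_in_level} as in \Cref{lem:intweightedtounweighted}, which costs only computation. We then greedily merge across classes and pick the best $i$, using \Cref{lem:greedymerge} and \Cref{lem:oneindependentsetisgood}. The approximation degrades only by $(1-O(\varepsilon))$, which is absorbed by rescaling $\varepsilon$.

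The main obstacle is checking that the chosen unweighted streaming algorithm has space linear in the ranks of the unfolded instance, not in $n$. We also need it to operate when the unfolded copies of an element arrive together in the stream, rather than in an adversarially chosen order. Both hold because these algorithms only maintain independent sets and treat the stream order adversarially. A second subtlety is that refolding needs the unfolded solution's elements explicitly, which the algorithm stores anyway.
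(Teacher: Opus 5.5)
Your proposal matches the paper's proof: it plugs the deterministic $(\frac{2}{3}-\varepsilon)$-approximate, $O(r_1+r_2)$-space, $O(\varepsilon^{-1})$-pass algorithm of \cite{Terao24} into \Cref{lem:streamingreduction}, and your unpacking of that reduction is accurate (parallel instances per weight class, unfolding on arrival, offline refolding, greedy merge, best shift). One correction: do not reinterpret the statement as a $(1-\varepsilon)$-approximation because of the summary table (the table's row label is what is inconsistent), since the ingredient from \cite{Terao24} only gives $\alpha=\frac{2}{3}-\varepsilon$; your argument proves exactly the stated $(\frac{2}{3}-\varepsilon)$ bound, and a $(1-\varepsilon)$ version would need an unweighted algorithm you have not supplied.
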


This result is obtained by combining \Cref{lem:streamingreduction} with the following result by \cite{Terao24}.

\begin{lemma}[\cite{Terao24}]
There is a deterministic streaming algorithm, which on input two matroids $\matroid_1=(\groundset,\independentsets_1,)$ and $\matroid_2=(\groundset,\independentsets_2)$ outputs a $(\frac{2}{3}-\varepsilon$)-approximation to the maximum cardinality common independent set using $O(r_1+r_2)$ space and $O(\varepsilon^{-1})$ passes. 
\end{lemma}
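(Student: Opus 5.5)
The plan is to mimic the classical $\frac23$-approximation for matching, in which one kills all augmenting paths of length at most $3$, and to replace the graph structure by the exchange graph of a common independent set. Fix $I\in\independentsets_1\cap\independentsets_2$ and let $D(I)$ be the standard exchange graph: sources are the elements $a\notin I$ with $I+a\in\independentsets_1$, sinks are the elements $b\notin I$ with $I+b\in\independentsets_2$, and there are exchange arcs through $I$. The key structural fact I will use is Cunningham's lemma. If the shortest augmenting path in $D(I)$ has at least $2k+1$ elements, then $|I|\ge \frac{k}{k+1}\,\opt$. More robustly, consider the decomposition of $I\triangle \opt$ into vertex-disjoint exchange paths, obtained from a perfect matching in the exchange graph restricted to $I\triangle\opt$. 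In that decomposition, the number of \emph{disjoint} augmenting paths of length $1$ or $3$ is at least $\opt-\frac32|I|$. Consequently, if a maximal family of disjoint short augmentations has size at most $\varepsilon\,\opt$, then $|I|\ge(\frac23-O(\varepsilon))\opt$.

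\textbf{Algorithm.} The first pass builds a maximal common independent set $I$ greedily; it is a $\frac12$-approximation and uses $O(r)$ space. Each subsequent \emph{phase} then uses $O(1)$ passes to find a large collection of disjoint length-$3$ augmentations $(a,x,b)$, meaning $I-x+a+b\in\independentsets_1\cap\independentsets_2$ with $a,b\notin I$ and $x\in I$. To do this, a first pass greedily collects a set $A$ of ``$\matroid_1$-side'' candidates $a$ with $x$-swaps: for each $x\in I$ it keeps at most one $a$ such that $I-x+a\in\independentsets_1$, maintained so that $A$ together with $I$ remains structured by the $\matroid_1$ span, which gives $|A|\le r_1$. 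A second pass, symmetrically, greedily attaches sinks $b$ with $I\cup A_{\text{chosen}}-X+b\in\independentsets_2$; these are stored with $O(r_2)$ memory. The augmentations found are applied simultaneously. Throughout, the memory is $I$ plus two auxiliary sets, each of size at most $r_1$ or $r_2$, so the space is $O(r_1+r_2)$.

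\textbf{Analysis of a phase.} I will argue that a greedy maximal choice is a constant fraction of a maximum disjoint family, which is the analogue of greedy matching on the ``$3$-augmentation hypergraph'', where each chosen augmentation blocks $O(1)$ others. The blocking claim is the delicate step: in matroids, applying one augmentation can destroy others non-locally. I would prove it with the same circuit-exchange counting as in \Cref{lem_charging}, namely that the circuits created by the $a$'s and $b$'s selected in the greedy phase can be charged injectively, up to a factor $2$, to distinct elements. Hence each phase either increases $|I|$ by $\Omega(\varepsilon\,\opt)$ or certifies, via Cunningham's lemma, that $|I|\ge(\frac23-\varepsilon)\opt$. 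Since $|I|\le\opt$ and we start from $\frac12\opt$, there are at most $O(\varepsilon^{-1})$ improving phases, which gives $O(\varepsilon^{-1})$ passes in total. Everything is deterministic.

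\textbf{Main obstacle.} I expect the hard part to be showing that simultaneously applied augmentations keep $I$ common independent, and that the greedy family is a constant-factor approximation to the optimal disjoint family despite non-local interference. If the direct simultaneous application fails, I would fall back to applying the augmentations one at a time within the stored set $I\cup A\cup B$. This is possible offline, since all the relevant elements are in memory, and it preserves both the space bound and the pass count.
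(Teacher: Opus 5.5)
The paper does not prove this lemma; it imports it from \cite{Terao24}, so your argument has to stand on its own. Your preliminaries are sound: Cunningham's bound, the count of at least $\opt-\frac32|I|$ short paths in $I\triangle\opt$, and the $O(\varepsilon^{-1})$ phase count \emph{given} a per-phase gain of $\Omega(\varepsilon\opt)$. That per-phase gain, however, is the whole difficulty, and you assert it rather than prove it. For the procedure you describe it is false already for bipartite matching (two partition matroids on the edge set). The reason is that your first pass commits to candidates $a$ without knowing whether the corresponding $x$ admits a compatible $b$.

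Here is a concrete instance. For $i\in[k]$ let $x_i=p_iq_i$, $y_i=p'_iq'_i$, $z_i=p''_iq''_i$ be the edges of $I$, with the $p$'s on the left and the $q$'s on the right. Let $a_i,c_i$ be free left vertices and $b_i,d_i$ free right vertices. Add the edges $a_iq'_i$, $c_iq'_i$, $a_iq_i$, $p_ib_i$, $p'_iq''_i$, $p''_id_i$, streamed in this order. Then $I$ is maximal, $|I|=3k$, $\opt=5k$, and the only length-$3$ augmentations are $(a_iq_i,x_i,p_ib_i)$.
\begin{itemize}
\item If your stored candidates must be jointly compatible, pass one gives $y_i$ the candidate $a_iq'_i$ and therefore rejects $a_iq_i$. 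Pass two finds no sink for $y_i$, so the phase finds nothing. The algorithm is deterministic and the stream is fixed, so every later phase repeats this and you stay at $\frac35\opt$.
\item If candidates need not be compatible, take instead $I=\{p_iq_i\}_{i\in[k]}$, free left hubs $h_1,\dots,h_k$ adjacent to every $q_i$, and edges $p_ib_i$. Stream the edges of $h_1$ first, then those of $h_2$, and so on. In every phase all remaining $x_i$ store candidates at the same free hub, and these pairwise conflict in the left partition matroid. Only one augmentation per phase is applied, so you need $\Omega(\opt)$ phases rather than $O(\varepsilon^{-1})$.
\end{itemize}

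Your fallback of applying stored augmentations one at a time keeps $I$ feasible, but it does not help the count. \Cref{lem_charging} does not supply the missing step either. It is a rank count for elements spanned by one fixed independent set. Here the loss comes from an oblivious choice of candidates (first instance) or from exchanges that add and delete elements in both matroids at once (second instance). Moreover, your two passes do not output a maximal family, so even a ``maximal implies constant fraction'' lemma would not apply.

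What is missing is a phase that grows an approximately maximal \emph{augmenting set} of length-$3$ paths in the sense of \cite{ChekuriQ16}, revising candidate choices across passes rather than committing in the first one. A maximal such set pushes the shortest augmenting path to at least $5$ elements, which gives $\frac23$ by Cunningham's bound. The real work is showing that $O(1)$ passes either enlarge the set by $\Omega(\varepsilon\opt)$ or leave at most $\varepsilon\opt$ disjoint short augmenting paths. As far as I know, the cited result is obtained essentially this way, by stopping Blikstad's Hopcroft--Karp-style algorithm \cite{Blikstad21} after its first phases. Your proposal leaves exactly this part open.
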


Our second application is the following result for $(0.5-\varepsilon)$-approximate maximum weight independent set. Prior to this, \cite{GargJS21} obtained an algorithm with the same approximation ratio that had a space complexity of $O(\frac{(r_1+r_2)\cdot\log W}{\varepsilon})$.

\begin{lemma}\label{lem:onepassstreaming}
    There is a streaming algorithm $\mathcal{A}_w$ which on input matroids $\matroid'_{1}=(\groundset', \independentsets'_1,w)$ and $\matroid'_{2}=(\groundset', \independentsets'_2,w)$ outputs a $(0.5-\varepsilon)$-approximate maximum weight common independent set in $O(\gamma_{\varepsilon}\cdot r\cdot \log W)$ space and one pass. 
\end{lemma}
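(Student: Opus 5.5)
The plan is to obtain \Cref{lem:onepassstreaming} by plugging a simple one-pass unweighted algorithm into the streaming reduction \Cref{lem:streamingreduction}. For the unweighted algorithm $\mathcal{A}_u$ we take the greedy algorithm. It keeps a set $I$, initially empty. When an element $x$ arrives, it adds $x$ to $I$ if and only if $I\cup\{x\}\in\independentsets_1\cap\independentsets_2$.

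\textbf{Step 1: greedy is a one-pass $\tfrac12$-approximation.} Greedy clearly uses one pass, and it stores only $I$, which is a common independent set. Hence its space is $O(r)$ elements, or $O(\min\{r_1,r_2\})$.

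To show the approximation ratio $\tfrac12$, note that the output $I$ is a maximal common independent set. Let $O$ be an optimal common independent set. Each $o\in O\setminus I$ was rejected, so $I\cup\{o\}$ is dependent in $\matroid_1$ or in $\matroid_2$. That is, every such $o$ is spanned by $I$ in $\matroid_1$ or in $\matroid_2$.

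Let $O_1$ be the elements of $O$ spanned by $I$ in $\matroid_1$, and $O_2$ those spanned in $\matroid_2$. Then $|O_i|\le \rk_i(I)=|I|$, because $O_i$ is independent and lies in the span of $I$. The elements of $O\cap I$ are covered as well, since each lies in the span of $I$. Therefore
\[
|O|\le |O_1|+|O_2|\le 2|I|.
\]

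\textbf{Step 2: apply the reduction.} Plug greedy into \Cref{lem:streamingreduction} with $\alpha=\tfrac12$ and $S_u(n,r_1,r_2,r)=O(r)$. This gives a one-pass algorithm with approximation ratio $\tfrac12(1-\varepsilon)\ge 0.5-\varepsilon$.

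The space bound from the reduction is $O(\gamma_\varepsilon r\log W/\varepsilon)$. The extra $\varepsilon^{-1}$ comes from running the $\beta=\lceil\varepsilon^{-1}\rceil$ shifted instances of \Cref{def:spreadmatroids} in parallel. I would absorb this factor by reparametrizing $\varepsilon$, since $\varepsilon^{-1}\gamma_{\varepsilon}\le\gamma_{\varepsilon'}$ for a constant-factor smaller $\varepsilon'$. This yields the stated $O(\gamma_\varepsilon r\log W)$.

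\textbf{Step 3: check that the streaming implementation works.} I need to verify that greedy runs correctly on the unfolded, aspect-reduced instances. When an element $e$ arrives, the algorithm first determines its class $\groundset'_{i,l}$ from its weight. It then rescales $e$ to an integer weight in $\{1,\dots,\gamma_\varepsilon\}$ and feeds the copies $e_1,\dots,e_{w(e)}$ to the greedy instance for $(i,l)$.

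Each independence test on the unfolded matroids reduces, by \Cref{lem:oraclecallstoimplementunfolded}, to at most $\gamma_\varepsilon$ tests of sets of original elements. All of these sets consist of stored elements plus $e$, so the tests fit the streaming model.

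By \Cref{obs:unfoldedmatroidsize}, each unfolded instance has intersection rank at most $\gamma_\varepsilon r$, which bounds the greedy storage. At the end of the stream, we refold each stored $I_{i,l}$ and extract a heavy common independent set, whose weight is guaranteed by \Cref{lem:refoldguarantee}; this costs no further passes. We then merge across $l$ greedily as in \Cref{lem:greedymerge}, and output the best result over $i$.

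\textbf{Main obstacle.} The delicate point is the accounting over the $\log W$ classes $l$. Instantiating one greedy per class would cost $\gamma_\varepsilon r$ per class, which matches the claimed $O(\gamma_\varepsilon r\log W)$ bound only because the number of classes is $O(\log W)$. I would double-check that \Cref{lem:streamingreduction}'s space bound is indeed additive over these classes in this way, rather than carrying an extra factor.
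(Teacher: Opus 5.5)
Your proposal is correct and follows the paper's proof, which plugs the one-pass greedy $\tfrac12$-approximation (space $O(r)$) into \Cref{lem:streamingreduction}. The details you add are sound: the span argument for greedy, the streaming implementation checks, and the reparametrization $\varepsilon'=\varepsilon/2$. The last of these uses $\tfrac12(1-\varepsilon)=\tfrac12-\varepsilon'$ and $\varepsilon^{-1}\gamma_\varepsilon=O(\gamma_{\varepsilon'})$ to absorb the extra $\varepsilon^{-1}$ from the reduction's space bound, which the paper's one-line derivation glosses over.
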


The above algorithm is obtained by combining the greedy algorithm for unweighted matroid intersection, which obtains a $0.5$-approximation in $O(r)$ space and one pass, with \Cref{lem:streamingreduction}.

We now show \Cref{lem:streamingreduction} using the following lemma.

\begin{lemma}\label{lem:streamingaspectratioreduction}
    Suppose there is a streaming algorithm $\mathcal{A}_w$ which on input integer-weighted matroids $\matroid'_1=(\groundset',\independentsets'_1,w)$ and $\matroid'_2=(\groundset',\independentsets'_2,w)$ with weight ratio $W$ gives an $\alpha$-approximation to the maximum weight common independent set in space $S(n,r_1,r_2,r,W)$ and $p$ passes. Then, there is a streaming algorithm $\mathcal{A}'_w$ which on input arbitrary weight matroids with weight ratio $W$ outputs an $\alpha(1-\varepsilon)$-approximation to the maximum weight common independent set in space $O(S(n,r,r_1,r_2,\gamma_{\varepsilon})\log W)$ space and $p$ passes. 
\end{lemma}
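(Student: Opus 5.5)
The plan is to mirror the proof of \Cref{red:staticaspectratio}, replacing sequential queries with parallel instances of the streaming algorithm that run over the same stream. We fix $\beta=\lceil\varepsilon^{-1}\rceil$ and the partition of \Cref{def:spreadmatroids}. For every pair $(i,l)$ with $i\in[\beta]$ and $l\ge 0$ we run a separate copy $\mathcal{A}_w^{(i,l)}$ of $\mathcal{A}_w$ on $\matroid'_1\mid\groundset'_{i,l}$ and $\matroid'_2\mid\groundset'_{i,l}$. When an element $e$ arrives, its weight determines a unique pair $(i,l)$ with $e\in\groundset'_{i,l}$. We rescale and round its weight to an integer in $\{1,\dots,\gamma_\varepsilon\}$ as in \Cref{obs:spreadmatroidsarespread} and \Cref{sec:rescaling}, and forward it only to $\mathcal{A}_w^{(i,l)}$. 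Each copy therefore sees a substream of the original stream, in the original order, with every pass replayed. Hence all copies finish within the same $p$ passes.

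For the space bound, only pairs $(i,l)$ whose class $\groundset'_{i,l}$ is nonempty matter. Since weights lie in $[w_{\min},W w_{\min}]$ and each class spans a factor of $(\varepsilon^{-1})^{\beta}$, the number of relevant $l$ per $i$ is $O(\log W/(\beta\log\varepsilon^{-1}))+1$. Summing over $i\in[\beta]$ gives $O(\log W)$ active copies in total, up to the additive $\beta$ term, which we absorb.

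The first obstacle is that the lower weight $w_{\min}$ is not known in advance. We plan to index classes by absolute exponent and to instantiate a copy lazily when its first element appears. The number of instantiated copies is still bounded by the number of distinct nonempty classes, which is $O(\log W)$ (or $O(\log W/\varepsilon)$ under a cruder count, matching \Cref{lem:streamingreduction}). Each copy uses at most $S(n,r_1,r_2,r,\gamma_\varepsilon)$ space, because its instance is a restriction with no larger ground set or ranks and has weight ratio at most $\gamma_\varepsilon$.

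After the last pass, each copy outputs an $\alpha$-approximate set $I_{i,l}$ with $|I_{i,l}|\le r$, and these are already stored within the stated space. For each $i$ we greedily merge $\{I_{i,l}\}_l$ in decreasing order of $l$, checking common independence with the oracle. This uses no further passes and only $O(r\log W)$ extra space. \Cref{lem:greedymerge} gives $w(I_i)\ge(1-4\varepsilon)\alpha\sum_l\opt'_{i,l}\ge(1-4\varepsilon)\alpha\opt'_i$. \Cref{lem:oneindependentsetisgood} then gives some $i$ with $\opt'_i\ge(1-\varepsilon)\opt'$, so outputting the heaviest $I_i$ achieves $\alpha(1-5\varepsilon)$, and rescaling $\varepsilon$ yields $\alpha(1-\varepsilon)$.

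The second obstacle is the rounding step, where weights must become integers in $\{1,\dots,\gamma_\varepsilon\}$ with only a $(1-\varepsilon)$ loss. This must be done online per element, using only the class boundaries and not the data. Since the class boundaries are fixed powers of $\varepsilon^{-1}$, the rounding depends only on the element's own weight, so it can be carried out online as required.
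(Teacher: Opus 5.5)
Your proposal is correct and follows the paper's proof sketch. The paper runs one copy of $\mathcal{A}_w$ per class $\groundset'_{i,l}$ in parallel over the same $p$ passes, greedily merges the outputs for each $i$ using \Cref{lem:greedymerge}, and returns the heaviest merged set using \Cref{lem:oneindependentsetisgood}; your lazy instantiation of copies and your online rounding based on fixed class boundaries fill in details the sketch leaves implicit.

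One caveat concerns the space count. Absorbing the additive $\beta=\lceil\varepsilon^{-1}\rceil$ term into $O(\log W)$ is only justified when $\varepsilon^{-1}=O(\log W)$. The paper's own sketch likewise only establishes $O(S(n,r_1,r_2,r,\gamma_{\varepsilon})\cdot\varepsilon^{-1}\log W)$ space, which is what your cruder count gives.
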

\begin{proof}[Proof Sketch]
    The algorithm $\mathcal{A}_w$ proceeds as follows. First, it applies \Cref{def:spreadmatroids} to $\matroid'_{1},\matroid'_{2}$ to obtain spread matroids $\matroid'_{1,i},\matroid'_{2,i}$ for $i\geq 0$ for $i\in [\beta]$. We apply $\mathcal{A}_w$ to the pair $\matroid'_{1,i}\mid \groundset'_{i,l}$ and $\matroid'_{2,i}\mid \groundset'_{i,l}$ (by Observation~\ref{obs:spreadmatroidsarespread}, these matroids satisfy the requirements for $\mathcal{A}_w$)and obtain the independent sets $I_{i,l}$ for $i\in [\beta]$ and $l\geq 0$. Since all independent sets are in the memory, we can combine all $I_{i,l}$ for a fixed $i$, and then take the heaviest independent set across all $i\in [\beta]$. Thus, the total space taken by our algorithm is $O(S_{u}(n,r_1,r_2,r,\gamma_{\varepsilon})\cdot \varepsilon^{-1}\cdot \log W)$ and the number of passes is $p$.
\end{proof}

\begin{remark}\label{remark:generalityreduction}
    The reduction of \Cref{lem:streamingaspectratioreduction} also applies to random order streams. 
\end{remark}

\begin{proof}[Proof of \Cref{lem:streamingreduction}]
It is sufficient to show an $S_{u}(nW,r_1W,r_2W,r)$-space and $p$-pass streaming algorithm $\mathcal{B}_w$ for computing an $\alpha$-approximate maximum weight common independent set for matroids $\matroid'_1,\matroid'_2$, that have integral element weights in $\{1,2,\cdots, W\}$. This is because we can apply \Cref{lem:streamingaspectratioreduction} to $\mathcal{B}_w$ to obtain $\mathcal{A}_w$ with the desired space and pass complexity. 

So we go on to show how $\mathcal{B}_w$ proceeds. When an element $e$ with weight $w(e)$ arrives, it feeds elements $e_1,\cdots, e_{w(e)}$ of the unfolded matroids $\matroid_1=(\groundset,I_{1})$ and $\matroid_2=(\groundset,I_{2})$ to $\mathcal{A}_u$. By Observation~\ref{obs:unfoldedmatroidsize}, we have, $\card{\groundset}\leq \card{\groundset'}\cdot W$ and $\rk_1(\matroid_1)\leq W\cdot \rk'_1(\matroid'_1)$ and $\rk_2(\matroid_2)\leq W\cdot \rk'_2(\matroid'_2)$. Thus, in space $S(nW,r_1W,r_2W,r)$ the algorithm $\mathcal{A}_u$ computes an $\alpha$-approximate maximum cardinality common independent set of $\matroid_1,\matroid_2$ denoted $S$. We are now left with the task of extracting an $\alpha$-approximate maximum weight common independent set of $\matroid'_1,\matroid'_2$ from $S$.
In order to do this, we proceed as follows. At the end of the stream (when we are offline), we create the matroids $\matroid'_{S,1},\matroid'_{S,2}$ as described in \Cref{def:refolding}. By \Cref{lem:refoldguarantee}, the maximum weight common independent set of $\matroid'_{S,1}$ and $\matroid'_{S,2}$ contains an $\alpha$-approximate maximum weight common independent set of $\matroid'_{1}$ and $\matroid'_2$. We can compute the former offline. This completes the proof.
\end{proof}

\subsection{Communication Complexity Setting}
\label{sec:communicationcomplexity}

\paragraph{Model Definition.} In the one-way two-party communication complexity model, Alice and Bob each have some portion of the input, and the goal is to compute some function of the entire input. Alice can talk to Bob, but Bob cannot talk back to Alice; all communication flows in one direction. 
Understanding problems in the one-way two-party model is often seen as a first step to understanding them in more difficult models such as the streaming model, and thus the communication complexity model has received a lot of recent attention \cite{GoelKK12,AssadiB18,AssadiB21b}.

Given a weighted matroid intersection instance $\matroid'_1=(\groundset',\independentsets'_1,w)$ and $\matroid'_2=(\groundset',\independentsets'_2,w)$ in the one-way two-party communication model, Alice and Bob are given $\groundset'_{A}\subset \groundset'$ and $\groundset'_{B}=\groundset'\setminus\groundset'_{A}$ respectively. The goal is for Alice to send a small message to Bob so that Bob can output a good approximation to the maximum weight common independent set of $\matroid'_1,\matroid'_2$. In this setting, we prove the following reduction. 

\begin{lemma}\label{lem:reductioncommcomplexity}
  Suppose we have a protocol $\mathcal{P}_u$ in the one-way two-party communication complexity model that computes an $\alpha$-approximation to the maximum cardinality common independent set of matroids $\matroid_1,\matroid_2$ using $C_{u}(n,r)$ bits of communication. Then, there is a protocol $\mathcal{P}_w$ in the one-way two-party communication complexity model that computes an $\alpha(1-\varepsilon)$-approximation to the maximum weight independent set of two matroids $\matroid_1',\matroid_2'$. The number of bits used by $\mathcal{P}_w$ is $O(C_{u}(\gamma_{\varepsilon}n,\gamma_{\varepsilon}r)\cdot \log W)$. 
\end{lemma}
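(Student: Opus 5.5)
The plan mirrors the streaming reduction (\Cref{lem:streamingreduction}) and splits into two steps: an integer-weight protocol built from $\mathcal{P}_u$ via unfolding and refolding, followed by an aspect-ratio reduction that runs many copies of it in parallel.

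\textbf{Step 1: integer weights.} First I will build a protocol $\mathcal{Q}_w$ for integer-weighted instances with weights in $\{1,\dots,W\}$ that uses $C_u(nW,rW)$ bits and is $\alpha$-approximate.
\begin{itemize}
\item Alice replaces each $e\in\groundset'_A$ by the copies $e_1,\dots,e_{w(e)}$; Bob does the same for $\groundset'_B$. This gives a partition of the unfolded groundset $\groundset$ of $\matroid_1,\matroid_2$ from \Cref{def:matroidunfolding}.
\item Each party can answer oracle queries on its own side of the unfolded matroids. By \Cref{lem:oraclecallstoimplementunfolded} and \Cref{lem:oraclecallsRanktoimplementunfolded}, a query to $\matroid_i$ decomposes into queries to $\matroid'_i$.
\item By \Cref{obs:unfoldedmatroidsize}, $|\groundset|\le nW$ and the intersection rank is at most $rW$. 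So the parties run $\mathcal{P}_u$ on the unfolded instance with $C_u(nW,rW)$ bits, and Bob obtains $S\in\independentsets_1\cap\independentsets_2$ with $|S|\ge\alpha\cdot\opt$.
\item Bob refolds $S$ to $\groundset'_S$ as in \Cref{def:refolding} and computes, offline and exactly, a maximum weight common independent set of $\matroid'_{S,1},\matroid'_{S,2}$. Computation is free in this model.
\end{itemize}
By \Cref{thm:equivalence} and \Cref{lem:refoldguarantee}, the set Bob outputs has weight at least $\alpha\cdot\opt'$.

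The point I must check is that Bob can do the refolding at all. He needs to know which original elements the copies in $S$ correspond to, and he needs oracle access to $\matroid'_1,\matroid'_2$ restricted to $\groundset'_S$. I will assume the standard convention of this model, as in \cite{HuangS24}: Bob's output is expressed in terms of element identities, and both parties have oracle access to the matroids over the full groundset. Then Bob reads off $\groundset'_S$ from the indices of the copies, and no extra communication is needed. This bookkeeping is where I expect the only real subtlety.

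\textbf{Step 2: aspect-ratio reduction.} For arbitrary weights, both parties apply \Cref{def:spreadmatroids} locally, since each can compute the weight class of its own elements. For every $i\in[\beta]$ and every $l$ with $\groundset'_{i,l}$ nonempty, they rescale and round as in \Cref{obs:spreadmatroidsarespread} so that weights lie in $\{1,\dots,\gamma_\varepsilon\}$.

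They then run $\mathcal{Q}_w$ on each instance $\matroid'_{1,i}\mid\groundset'_{i,l},\ \matroid'_{2,i}\mid\groundset'_{i,l}$. Since the protocol is one-way, Alice concatenates all her messages into a single message. There are $O(\beta\log_{\varepsilon^{-1}}W)=O(\log W)$ relevant pairs $(i,l)$, up to the $\varepsilon$-dependent factors absorbed as in \Cref{red:staticaspectratio}. Hence the total communication is $O(C_u(\gamma_\varepsilon n,\gamma_\varepsilon r)\log W)$.

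From the received messages, Bob recovers the sets $I_{i,l}$. For each $i$ he greedily merges them in descending order of weight class into $I_i$, and outputs the heaviest $I_i$. Combining \Cref{lem:greedymerge} (via \Cref{cor:greedymergeguarantee}) with \Cref{lem:oneindependentsetisgood} gives weight at least $\alpha(1-4\varepsilon)(1-\varepsilon)\opt'$. Rescaling $\varepsilon$ then yields the claimed $\alpha(1-\varepsilon)$ guarantee.

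Finally, if $\mathcal{P}_u$ is deterministic, then every step above is deterministic, so $\mathcal{P}_w$ is deterministic as well.
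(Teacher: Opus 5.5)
Your proposal is correct and follows the paper's proof essentially step for step. The paper likewise first builds a $C_u(nW,rW)$-bit protocol for integer weights: Alice runs $\mathcal{P}_u$ on her unfolded ground set, and Bob refolds the returned $S$ and solves the refolded instance offline via \Cref{lem:refoldguarantee}. It then lifts this to arbitrary weights with the spread-matroid and greedy-merge argument of \Cref{lem:communicationaspectratioreduction}. Your looseness in dropping the $\varepsilon^{-1}$ factor from the $\beta$ shifted families, and in rescaling $\varepsilon$ at the end, is shared with the paper: its own \Cref{lem:communicationaspectratioreduction} only gives $O(C(n,r,\gamma_\varepsilon)\,\varepsilon^{-1}\log W)$.
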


As an application of this reduction, we obtain the following result, which extends the result of \cite{HuangS24} to the weighted setting.

\begin{lemma}\label{lem:commcomplexityresult}
    There exists a one-way communication protocol, that given $\varepsilon>0$, computes a $(\frac{2}{3}-\varepsilon)$-approximation to the maximum weight common independent set using a message size of $O(\gamma_{\varepsilon}\cdot r\cdot \log W)$ from Alice to Bob, where $r$ is the rank of the maximum cardinality common independent set.
\end{lemma}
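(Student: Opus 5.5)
The plan is to instantiate \Cref{lem:reductioncommcomplexity} with an unweighted $(\frac{2}{3}-\varepsilon')$-approximate one-way protocol for matroid intersection. The natural source is the robust sparsifier construction of \cite{HuangS24}. In that protocol, Alice computes a matroid-intersection sparsifier of her part $\groundset_A$ of size $O(r)$ and sends it to Bob. Bob then computes a maximum cardinality common independent set of the sparsifier together with $\groundset_B$, and this is a $(\frac{2}{3}-\varepsilon')$-approximation. I will take as given that this yields $C_u(n,r)=O(r)$ elements (that is, words) of communication, with no dependence on $n$ beyond identifying elements.

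Next I will check that the weighted-to-unweighted pipeline is compatible with the one-way model. First, Alice and Bob each apply the aspect ratio reduction of \Cref{def:spreadmatroids} locally. Since weights are known with the elements, both parties can compute, for each $i\in[\beta]$ and $l\ge 0$, which of their elements lie in $\groundset'_{i,l}$. By \Cref{obs:spreadmatroidsarespread} each such piece has integer weights in $\{1,\dots,\gamma_\varepsilon\}$ after rescaling.

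Second, on each piece each party unfolds its own elements via \Cref{def:matroidunfolding}. The unfolding of an element $e$ only creates the copies $e_1,\dots,e_{w(e)}$, so the unfolded ground set splits as $\groundset_A\cup\groundset_B$ consistently. Independence in the unfolded matroids is computable from the original oracles (\Cref{lem:oraclecallstoimplementunfolded}). The unfolded instance has rank at most $\gamma_\varepsilon r$ by \Cref{obs:unfoldedmatroidsize}.

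Alice then runs the unweighted protocol's message on each of the $O(\varepsilon^{-1}\log W)$ pieces in parallel, for $O(\gamma_\varepsilon r)$ words per piece. Bob runs the unweighted protocol on each piece to get a common independent set $S_{i,l}$ of size at least $(\frac23-\varepsilon')\opt_{i,l}$. He refolds it using \Cref{def:refolding} and \Cref{lem:refoldguarantee}. He then computes offline an exact maximum weight common independent set of the refolded matroids; in the communication model Bob has unbounded computation, so no query bound is needed. Finally he greedily merges the pieces for each $i$ via \Cref{lem:greedymerge} and takes the best $i$ by \Cref{lem:oneindependentsetisgood}. Choosing $\varepsilon'$ and the internal $\varepsilon$ as constant fractions of $\varepsilon$ gives overall ratio $(\frac23-\varepsilon')(1-5\varepsilon)\ge \frac23-\varepsilon$.

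The main obstacle is the message size accounting. Naively the pieces contribute $\beta\cdot(\#l)\cdot\gamma_\varepsilon r$, which has an extra $\varepsilon^{-1}$ and also counts ranks per piece rather than the global $r$. I would try to argue as follows. For a fixed $i$, the intervals $\groundset'_{i,l}$ over $l$ have ranks that individually are at most $r$, but the number of nonempty intervals is $O(\log W/\log\gamma_\varepsilon)$ rather than $\log W$. This absorbs the $\varepsilon^{-1}$ factor from $\beta$, since $\log\gamma_\varepsilon\approx\varepsilon^{-1}\log\varepsilon^{-1}$. The count is then $\beta\cdot\frac{\log W}{\log\gamma_\varepsilon}\cdot\gamma_\varepsilon r=O(\gamma_\varepsilon r\log W)$. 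If that bookkeeping fails, I would state the bound as in \Cref{lem:reductioncommcomplexity}, namely $O(C_u(\gamma_\varepsilon n,\gamma_\varepsilon r)\log W)$ with $C_u=O(r)$.
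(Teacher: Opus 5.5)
Your proposal is correct and is essentially the paper's argument. The paper proves this lemma in one line by plugging the one-way $(\frac23-\varepsilon)$-approximate protocol of \cite{HuangS24} into \Cref{lem:reductioncommcomplexity}. Your unpacked pipeline (local weight-class splitting, unfolding, Bob refolding via \Cref{lem:refoldguarantee} and solving exactly offline, greedy merging, taking the best $i$) is exactly what that reduction does.

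One caveat, which the paper shares: the message of \cite{HuangS24} has size $O(r/\varepsilon)$, not $O(r)$. So the stated $O(\gamma_\varepsilon r\log W)$ holds only after absorbing $\poly(\varepsilon^{-1})$ factors into $\gamma_\varepsilon=\varepsilon^{-O(1/\varepsilon)}$. Under that reading your finer count of nonempty weight classes is unnecessary. That count also drops an additive $O(\beta)$ boundary classes, which is likewise absorbed.
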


The above result is obtained by combining \Cref{lem:reductioncommcomplexity} with the following result of \cite{HuangS24}.

\begin{lemma}
    There exists a one-way communication protocol that, given any $\varepsilon>0$, computes a $(2/3-\varepsilon)$-approximation to the maximum matroid intersection problem using a message of size $O(\frac{r}{\varepsilon})$ from Alice to Bob.
\end{lemma}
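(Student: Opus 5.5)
The plan is to mimic the edge-degree constrained subgraph (EDCS) approach used for one-way matching, adapted to matroid intersection via a ``layered'' structure. Let $\beta=\Theta(\varepsilon^{-1})$. Alice, who holds $\groundset_A$, computes a set $H\subseteq \groundset_A$ that is the union of at most $\beta$ common independent sets $H=J_1\cup\cdots\cup J_\beta$. Hence $|H|\le \beta r = O(r/\varepsilon)$, and she sends $H$ to Bob. Bob outputs a maximum cardinality common independent set of $\matroid_1|(H\cup\groundset_B)$ and $\matroid_2|(H\cup\groundset_B)$. The message size bound is therefore immediate; everything rests on choosing $H$ so that $H\cup \groundset_B$ retains a $(2/3-\varepsilon)$ fraction of $\opt$, \emph{whatever} $\groundset_B$ is. In other words, $H$ must be a robust sparsifier.

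\textbf{Defining $H$ via degrees.} For $e\in\groundset_A$ and $i\in\{1,2\}$, define a degree $d_i(e)$ as the number of layers $J_t$ in which $e$ is spanned in $\matroid_i$, i.e.\ $J_t\cup\{e\}\notin\independentsets_i$, counting $e$ itself if $e\in J_t$. I would impose two EDCS-style conditions:
\begin{enumerate}[label=(\roman*)]
\item every $e\in H$ has $d_1(e)+d_2(e)\le\beta$;
\item every $e\in\groundset_A\setminus H$ has $d_1(e)+d_2(e)\ge\beta-1$.
\end{enumerate}
Alice finds such an $H$ by local search. Insert violating elements of type (ii) into a layer where they are independent in both matroids, and remove violating elements of type (i). Termination follows from a quadratic potential, such as $\sum_t |J_t|$ minus a penalty on $\sum_{e\in H}(d_1(e)+d_2(e))$, which strictly increases with each fix, exactly as in the existence proof of EDCS.

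\textbf{Approximation analysis.} Fix an optimal common independent set $O=O_A\cup O_B$ with $O_A\subseteq\groundset_A$ and $O_B\subseteq\groundset_B$. I would construct a fractional point $x$ in the matroid intersection polytope of $H\cup O_B$ and apply integrality of that polytope (\Cref{lem:chaineddualsexistence}).
\begin{enumerate}[label=(\alph*)]
\item Set $x(e)=1$ for $e\in O_B$, set $x(e)=1$ for $e\in O_A\cap H$, and spread mass $1/\beta$ uniformly over each layer $J_t$.
\item Each $e\in O_A\setminus H$ is heavily spanned, by (ii), so $O_A\setminus H$ can be ``replaced'' by about $\beta$ layers' worth of mass. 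Condition (i) then ensures the rank constraints $x(S)\le\rk_i(S)$ are violated by at most an $O(\varepsilon)$ factor after scaling.
\item Run a counting argument on how many elements of $O_A\setminus H$ are spanned in $\matroid_1$ versus $\matroid_2$. As in the $2/3$ analysis for EDCS, this should show that $x(H\cup O_B)\ge (2/3-\varepsilon)|O|$.
\end{enumerate}
The charging of spanned elements to layer elements should reuse the circuit-counting argument of \Cref{lem_charging}: $l$ spanned elements force at least $l$ distinct blocking elements in each layer.

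\textbf{Main obstacle.} The hard step is (b)--(c), verifying feasibility of the fractional solution against \emph{all} rank constraints, not just local degree conditions. In matching the constraints are vertex-local; here they are submodular over arbitrary sets. My first attempt would be to prove the bound through weak duality instead. That means taking an optimal chained dual $(y,z)$ of the intersection on $H\cup O_B$, as guaranteed by \Cref{lem:chaineddualsexistence}, and showing its value is at least $(2/3-\varepsilon)|O|$. The chain structure should let me sum the degree conditions (i) and (ii) along the chain, turning the global rank constraints into a sum of per-level counting inequalities.
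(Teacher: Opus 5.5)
\textbf{There is a genuine gap: the proposal does not prove the lemma.} Your architecture is the right one: Alice sends a robust, EDCS-like sparsifier $H\subseteq\groundset_A$, and Bob solves the problem on $H\cup\groundset_B$. This is the scheme behind the lemma, which the paper does not prove but imports from \cite{HuangS24}. In that scheme Alice sends a density-constrained subset of her elements. The only nontrivial claim, that $H\cup\groundset_B$ keeps a $(2/3-\varepsilon)$ fraction of $\opt$ for every $\groundset_B$, is left as steps (b)--(c), and the tools you name for it do not work.

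\emph{The fractional point in (a) is not approximately feasible.} This already fails in the bipartite-matching special case, where your layer counts reduce to EDCS degrees. Take a vertex $v$ matched by an edge of $O_B$ and incident to $\beta-1$ edges of $H$ whose other endpoints have $H$-degree $1$. This satisfies (i), yet $v$'s constraint gets value $1+(\beta-1)/\beta\approx 2$. Restoring feasibility by scaling therefore costs a factor close to $2$, not $1+O(\varepsilon)$.

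\emph{\Cref{lem_charging} is too weak here.} It only gives a rank-type lower bound. It cannot tell you, for a cut $X$ in $\opt(H\cup O_B)=\min_X \rk_1(X)+\rk_2((H\cup O_B)\setminus X)$, whether an element of $O_A\setminus H$ lies in the $\matroid_1$-span of $X$. Knowing that $d_1(e)$ layers span $e$ says nothing about this once $X$ contains only pieces of those layers. In matching this question is vertex-local; here it is not.

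\emph{The existence argument does not transfer either.} The EDCS potential relies on $\sum_{(u,v)\in H}(\deg u+\deg v)=\sum_v \deg(v)^2$, so one insertion raises the penalty by at most $2\beta-2$. With your span-based degrees, $\sum_{f\in H}d_1(f)=\sum_t |\mathrm{span}_1(J_t)\cap H|$, and one insertion can raise this by $\Theta(|H|)$. For example, let $\matroid_1$ be uniform of rank $k$ and $|J_t|=k-1$. Adding $e$ to $J_t$ turns it into a basis, so $d_1(f)$ increases for every $f\in H\setminus(J_t\cup\{e\})$. The potential can then drop, so termination is not established.

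\emph{The missing idea is a canonical, rank-based substitute for degree.} Take the density-based decomposition of $\matroid_i|H$: repeatedly remove the densest subset of maximum cardinality in the contracted matroid. Give each element $e$ the associated density $\tilde\rho_{\matroid_i}(e)$ of the level at which $e$ first becomes spanned. A $(\beta,\beta^-)$-DCS imposes your conditions (i)--(ii) on $\tilde\rho_{\matroid_1}(e)+\tilde\rho_{\matroid_2}(e)$. It gets $|H|\le\beta\cdot\opt$ from (i) alone, with no layer decomposition. Its nested flats of non-increasing density are the structure that both the existence argument and the $2/3$ counting of \cite{HuangS24} are built on. Your layer counts depend on an arbitrary choice of layers and lack this structure, so as written neither termination nor the approximation bound is established.
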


We now show the proof of \Cref{lem:reductioncommcomplexity}. The proof will be via the following aspect ratio reduction.

\begin{lemma}\label{lem:communicationaspectratioreduction}
  Suppose there exists a one-way communication protocol, which computes an $\alpha$-approximation to the maximum weight common independent set using a message of size $C(n,r,W)$ (where the elements of the ground set have weights in $[W]$). Then there exists a protocol to compute an $\alpha(1-\varepsilon)$-approximate maximum weight common independent set of matroids with weights in $\mathbb{R}^{\geq 0}$ using a message of size $O(\frac{C(n,r,\gamma_{\varepsilon})\log W}{\varepsilon})$.
\end{lemma}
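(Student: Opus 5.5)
The plan is to mirror the streaming proof (\Cref{lem:streamingaspectratioreduction}), replacing passes with a single message. Both Alice and Bob know $\varepsilon$ and the weights of their own elements. They apply \Cref{def:spreadmatroids} locally to split their elements into the groups $\groundset'_{i,l}$, for $i\in[\beta]$ and $l\geq 0$. The only information needed to place an element $e$ into its group is $w(e)$, so no communication is required for this step. For each pair $(i,l)$ with $\groundset'_{i,l}\neq\emptyset$, the instance $\matroid'_{1,i}\mid\groundset'_{i,l},\matroid'_{2,i}\mid\groundset'_{i,l}$ is, after the rescaling and rounding of Observation~\ref{obs:spreadmatroidsarespread}, an integer-weighted instance with weights in $[\gamma_{\varepsilon}]$. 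This rescaling depends only on the fixed interval boundaries, so both parties can perform it consistently without coordination. On each such instance, Alice's portion is $\groundset'_A\cap\groundset'_{i,l}$ and Bob's portion is $\groundset'_B\cap\groundset'_{i,l}$.

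Alice then runs the integer-weighted protocol for every pair $(i,l)$ in parallel and concatenates the resulting messages, each tagged with its index $(i,l)$. There are $\beta=O(\varepsilon^{-1})$ choices of $i$. For each $i$, the number of nonempty levels $l$ is $O(\log W/(\beta\log\varepsilon^{-1}))\le O(\log W)$, because the weight ratio is $W$. Each message uses at most $C(n,r,\gamma_{\varepsilon})$ bits, since the sub-instance has at most $n$ elements and intersection rank at most $r$. This gives a total message length of $O(C(n,r,\gamma_{\varepsilon})\log W/\varepsilon)$. The tags add only $O(\log\log W+\log\varepsilon^{-1})$ bits per message, which is dominated.

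On receiving the message, Bob runs his side of each sub-protocol and obtains $\alpha$-approximate common independent sets $I_{i,l}$ of the restricted instances. The $\alpha$ guarantee carries over to the original weights because rescaling changes weights by a uniform factor, up to the rounding absorbed in Observation~\ref{obs:spreadmatroidsarespread}. For each $i$, Bob greedily merges $\{I_{i,l}\}_{l\geq 0}$ in descending order of $l$ to obtain $I_i$. He has the full description of both matroids restricted to $\groundset'_i$, since the matroids are common knowledge, so he can test independence himself. Finally he outputs the heaviest of the $I_i$.

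For the analysis, the matroids $\matroid'_{1,i},\matroid'_{2,i}$ are $\varepsilon^{-1}$-spread with intervals given by the groups $\groundset'_{i,l}$. \Cref{cor:greedymergeguarantee} then gives $w(I_i)\geq\alpha(1-4\varepsilon)\opt'_i$, and \Cref{lem:oneindependentsetisgood} gives some $i$ with $\opt'_i\geq(1-\varepsilon)\opt'$, so Bob's output is an $\alpha(1-5\varepsilon)$-approximation. Rescaling $\varepsilon$ by a constant finishes the argument, changing $\gamma_\varepsilon$ only in its constant-adjusted form. The main point needing care is whether the model lets Bob test independence for the greedy merge and make the final choice. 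I would argue this from the model convention that the matroids are known (or oracle-accessible) to both parties and only the element partition is split. If that assumption is unavailable, the fallback is to have Bob's sub-protocol outputs include the sets $I_{i,l}$ explicitly, which already suffices since the merge only needs membership tests on common independent sets.
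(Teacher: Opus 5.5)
Your proposal is correct and follows the paper's proof sketch essentially step for step. You split into the $\beta$ spread instances via \Cref{def:spreadmatroids}, run the integer-weighted protocol on every group $\groundset'_{i,l}$ in parallel, and have Bob greedily merge the $I_{i,l}$ for each fixed $i$ and output the heaviest result, with \Cref{cor:greedymergeguarantee} and \Cref{lem:oneindependentsetisgood} giving the guarantee; rescaling each group by its fixed interval boundary, so that Alice never needs the global minimum weight, is a detail the paper's sketch leaves implicit. Your closing fallback would not suffice on its own, since the greedy merge needs independence tests on unions of the $I_{i,l}$ rather than membership tests, but it is unnecessary because the paper likewise assumes Bob can query both matroids.
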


The proof is similar to \Cref{lem:streamingaspectratioreduction}.

\begin{proof}[Proof Sketch]
    Let $\mathcal{P}$ be the protocol mentioned in the premise of the lemma. Let $\mathcal{P}'$ be the protocol we desire. The protocol $\mathcal{P}'$ proceeds as follows. First, it applies \Cref{def:spreadmatroids} to $\matroid'_1,\matroid'_2$ to obtain spread matroids $\matroid'_{1,i},\matroid'_{2,i}$ for all $i\in [\beta]$. Then, we apply $\mathcal{P}$ to $\matroid'_{1,i}\mid \groundset'_{i,l}$ and $\matroid'_{2,i}\mid \groundset'_{i,l}$ for $l\geq 0$. The protocol $\mathcal{P}'$ sends messages corresponding to each of the matroid pairs to Bob, who then uses them to compute common independent sets $I_{i,l}$ for $i\in [\beta]$, $l\geq 0$. By correctness of $\mathcal{P}$, $I_{i,l}$ is an $\alpha$-approximate maximum weight common independent set of $\matroid'_{1,i}\mid \groundset'_{i,l}$ and $\matroid'_{2,i}\mid \groundset'_{i,l}$. The total size of the messages sent to Bob is $O(C(n,r,\gamma_{\varepsilon})\cdot \varepsilon^{-1}\cdot \log W)$. Once Bob has all the independent sets $I_{i,l}$, he can greedily combine all $I_{i,l}$ for a fixed $i\in [\beta]$ and finally take the heaviest independent set across all $i\in [\beta]$.
\end{proof}

We now show the proof of \Cref{lem:reductioncommcomplexity}.

\begin{proof}[Proof of \Cref{lem:reductioncommcomplexity}]
    It is sufficient to show a $C_{u}(nW,rW)$ message size communication protocol $\mathcal{P}_w$ for computing an $\alpha$-approximate maximum weight independent set for matroids $\matroid'_1,\matroid'_2$, that have element weights in $\set{1,2,\cdots, W}$. This is because we can apply \Cref{lem:communicationaspectratioreduction} to $\mathcal{P}_w$ to obtain a protocol with desired message complexity. 
    So we proceed to show how $\mathcal{P}$ works, given $\mathcal{P}_u$ for $\alpha$-approximate matroid intersection with message complexity $C_{u}(n,r)$. Alice runs $\mathcal{P}_u$ on the unfolded version $\mathcal{N}_A$ of her ground set $\mathcal{N}'_A$. She sends the message to Bob, who uses this message to compute an $\alpha$-approximate common independent set $S$ on the unfolded matroids $\matroid_1,\matroid_2$. After computing $S$, the goal is now to extract an $\alpha$-approximate maximum weight common independent set of $\matroid'_1,\matroid'_2$ from $S$. In order to do this, Bob can refold $S$ as in \Cref{def:refolding} to get matroids $\matroid'_{S,1},\matroid'_{S,2}$. By \Cref{lem:refoldguarantee}, the maximum weight common independent set of $\matroid'_{S,1},\matroid'_{S,2}$ is an $\alpha$-approximate maximum weight independent set of $\matroid'_1,\matroid'_2$. The message complexity of this protocol is $C_{u}(nW,rW)$.
\end{proof}

\subsection{Robust Sparsifiers and Random-Order Streams}
\label{sec:sparsifier}

In this section, we will focus on obtaining robust sparsifiers for weighted matroid intersection and on algorithms for this problem in random-order streams. 

\paragraph{Sparsifiers for Matroid Intersection.} In graph algorithms, ``sparsification" is a tool introduced to convert a dense graph into a sparse graph, while preserving some relevant information from the graph approximately. Inspired by matching sparsifiers, \cite{HuangS24} introduced \emph{density constrained subset}, which is a sparsifier for matroid intersection which has some special properties.

\begin{lemma}[\cite{HuangS24}]
    Let $\matroid_1=(\groundset,\independentsets_1)$ and $\matroid_2=(\groundset,\independentsets_2)$ be a matroid intersection instance. Then, there exists a $S\subseteq \groundset$ (called a density controlled subset or DCS) with the following properties.
    \begin{enumerate}
        \item {\bf(Sparseness)} $|S|=O(\frac{\opt}{\varepsilon})$.
        \item {\bf(Approximation Ratio)} $\matroid_1\mid S$ and $\matroid_2\mid S$ contain a common independent set of size at least $(\frac{2}{3}-\varepsilon)\cdot \opt$.
        \item {\bf(Robustness)}  Suppose $\groundset_A\cup\groundset_B$ be a partition of the groundset $\groundset$. Let $S$ be a DCS of $\matroid_1\mid \groundset_A$ and $\matroid_2\mid \groundset_A$. Then, $\matroid_1\mid S\cup \groundset_B$ and $\matroid_2\mid S\cup \groundset_B$ contain an independent set of size at least $(\frac{2}{3}-\varepsilon)\cdot \opt$.
    \end{enumerate}
\end{lemma}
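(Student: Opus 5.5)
Since this statement is quoted from \cite{HuangS24}, the plan is to adapt Bernstein's EDCS analysis for matchings to matroids, replacing vertex degrees by ``matroid degrees''. Fix $\beta=\Theta(\varepsilon^{-1})$. For a candidate subset $S\subseteq\groundset$ and $j\in\{1,2\}$, greedily partition $S$ into independent sets $F^j_1,F^j_2,\dots$ of $\matroid_j$ (each $F^j_t$ a maximal independent subset of $S\setminus\bigcup_{s<t}F^j_s$). For $e\in\groundset$, let $d_j(e)$ be the number of layers $F^j_t$ that span $e$ in $\matroid_j$; this plays the role of a vertex degree. Call $S$ a DCS if two conditions hold:
\begin{itemize}
\item[(i)] $d_1(e)+d_2(e)\le\beta$ for every $e\in S$;
\item[(ii)] $d_1(e)+d_2(e)\ge\beta-1$ for every $e\notin S$.
\end{itemize}
This mirrors the two EDCS constraints.

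\emph{Existence.} Run local search. While some $e\in S$ violates (i), delete it; while some $e\notin S$ violates (ii), insert it. To show termination, use a potential in the spirit of Bernstein's: a $\beta$-dependent linear term in $|S|$ minus $\sum_{e\in S}(d_1(e)+d_2(e))$. Then check, using monotonicity of spans under insertion and deletion, that every fix strictly increases it. The delicate point is that one insertion can change many layers. I would first try to make the layers canonical (for example, by a fixed element order) so that each update changes each $d_j$ by at most one per element.

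\emph{Sparseness.} Take a min-max pair $A\subseteq\groundset$ with $\rk_1(A)+\rk_2(\groundset\setminus A)=\opt$ (Edmonds). Each layer $F^1_t$ meets $A$ in at most $\rk_1(A)$ elements. By (i), an element of $S$ lies only in layers of index at most about $\beta$, so $|S\cap A|\le O(\beta)\rk_1(A)$. The symmetric argument in $\matroid_2$ on $\groundset\setminus A$ gives $|S|=O(\beta\,\opt)=O(\opt/\varepsilon)$.

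\emph{Approximation and robustness.} For robustness, note that (ii) is only a lower bound on elements outside $S$. Hence it remains true for $\groundset_B$ when $S$ is a DCS of $\groundset_A$: adding $\groundset_B$ as free extra elements does not break the argument. So it suffices to prove the approximation bound for $S\cup\groundset_B$, with an optimal set $O$ whose elements either lie in $S\cup\groundset_B$ or satisfy (ii).

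For the bound itself, I would argue by LP duality rather than augmenting paths. Take the min-max certificate of the largest common independent set inside $S\cup\groundset_B$, i.e.\ a pair $(A, (S\cup\groundset_B)\setminus A)$ with value $\rk_1(A)+\rk_2((S\cup\groundset_B)\setminus A)$. Then lower-bound this value by charging each $e\in O\setminus(S\cup\groundset_B)$ through its degrees. By (ii), $e$ is spanned by about $\beta-1$ layers. The layers are independent sets of total size comparable to $\beta$ times the relevant ranks, and (i) limits how many times any element of $S$ is counted. This gives the counting inequality $3\,\rk(\text{certificate})\ge(2-O(\varepsilon))|O|$, the matroid analogue of the EDCS ``$2/3$'' calculation.

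This last step is the main obstacle: vertex-local degree arguments must be replaced by rank inequalities, namely submodularity applied to the layer spans, and obtaining the exact constant $2/3$ needs the careful accounting from Bernstein's proof. I would first try to reproduce the case in which both matroids are partition matroids, where the argument reduces to bipartite EDCS, and then generalise each inequality via submodularity of $\rk_1,\rk_2$.
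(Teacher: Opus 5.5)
You prove sparseness correctly. An element of $S$ in layer $F^1_t$ is spanned by each of $F^1_1,\dots,F^1_t$, so (i) confines $S$ to the first $\beta$ layers of each matroid, and a min--max pair $A$ then gives $\card{S}\le\beta\cdot\opt$. Your robustness reduction also has the right shape: elements of $O\cap\groundset_A\setminus S$ obey (ii), and elements of $O\cap\groundset_B$ are simply available.

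\textbf{The approximation ratio is not proved.} This is the actual content of the lemma. You state the target $3\,\rk(\text{certificate})\ge(2-O(\varepsilon))\card{O}$ and say submodularity should give it, but you never relate the layers spanning an $e\in O\setminus S$ to the two ranks in the min--max certificate of $S\cup\groundset_B$. Nothing carries over automatically from the partition-matroid (EDCS) case, because your degree is not an invariant of $\matroid_j\mid S$. In the graphic matroid of a triangle $u,v,w$ with $uv$ doubled, the valid layering $\set{vw,uw},\set{uv_1},\set{uv_2}$ gives $d(vw)=1$ and $d(uv_i)=3$. The layering $\set{uv_1,vw},\set{uv_2,uw}$ gives every element degree $2$. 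Greedy layers need not maximise $\card{F_1\cup\dots\cup F_t}$, so $d_j$ can under- or over-count the load around an element.

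The paper does not reprove this lemma; it imports it from \cite{HuangS24}. There the DCS (reproduced in \Cref{def:DCS}) is defined through the canonical density-based decomposition and associated densities $\tilde\rho_{\matroid_j}$. It also uses slack $\beta\ge\beta^-+7$ and $(\beta^--4)(1+\varepsilon)\ge\beta$, not a gap of one.

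\textbf{Existence is also not established.} Bernstein's potential works because an update changes the degrees of only two vertices, so the quadratic term moves by $O(\beta)$. Here you can fix an element order. That does make each individual $d_j(e)$ change by at most one per update, since only one layer's span changes. Even so, a single legal insertion can raise $d_1$ for unboundedly many elements of $S$. For example:
\begin{itemize}
\item Let $\matroid_2$ be free.
\item In a graphic $\matroid_1$, let $F^1_1$ be the star from $c$ to $a,b_1,\dots,b_m$.
\item Let $F^1_2=\set{ca',b_1b_2,\dots,b_1b_m}$, with $ca'$ parallel to $ca$; the order is star first, then $F^1_2$.
\item Insert $f=ab_1$ last in the order. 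This is legal, since $d_1(f)+d_2(f)=1$.
\end{itemize}
The insertion joins the two components of the second layer, so all $m$ edges $cb_i$ go from $d_1=1$ to $d_1=2$. Your potential $c(\beta)\card{S}-\sum_{e\in S}(d_1(e)+d_2(e))$ therefore decreases once $m>c(\beta)$, and termination fails as argued.

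To make this a proof you need two things. First, a degree notion whose global change under updates is controlled, which is what the density decomposition provides. Second, an actual $2/3$ counting argument. As written, only sparseness is established.
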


Our first result is the following weighted version of the sparsifier of \cite{HuangS24}.

\begin{lemma}[Weighted Robust Sparsifier]\label{lem:weightedsparsifier}
    Let $\matroid'_1=(\groundset',\independentsets'_1,w)$ and $\matroid'_2=(\groundset',\independentsets'_2,w)$ be a weighted matroid intersection instance with $w:\groundset'\rightarrow \mathbb{R}^{\geq 0}$. Then, there exists a $\groundset'_S\subseteq \groundset'$ with the following properties (called a weighted degree controlled subset or w-DCS).
    \begin{enumerate}
        \item {\bf (Sparseness)} $|\groundset'_S|=O(r\cdot \gamma_{\varepsilon}\cdot \log W)$.
        \item {\bf (Approximation Ratio)} $\matroid'_1\mid \groundset'_S$ and $\matroid'_2\mid \groundset'_S$ contain a common independent set of weight at least $(\frac{2}{3}-\varepsilon)\cdot \opt'$.
        \item {\bf (Robustness)}  Suppose $\groundset'_A\cup\groundset'_B$ is a partition of $\groundset'$. Let $\groundset'_S$ be a DCS of $\matroid'_1\mid \groundset'_A$ and $\matroid'_2\mid \groundset'_A$. Then, $\matroid'_1\mid \groundset'_S\cup \groundset'_B$ and $\matroid'_2\mid \groundset'_S\cup \groundset'_B$ contain an independent set of weight at least $(\frac{2}{3}-\varepsilon)\cdot \opt'$.
    \end{enumerate}
\end{lemma}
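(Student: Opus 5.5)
We build the w-DCS by composing the three steps of the framework with the unweighted DCS of \cite{HuangS24}. First we apply \Cref{def:spreadmatroids}, splitting $\groundset'$ into the $\beta$ families of spread matroids $\matroid'_{1,i},\matroid'_{2,i}$ with weight classes $\groundset'_{i,l}$. By Observation~\ref{obs:spreadmatroidsarespread} and rescaling, each class has integer weights in $\{1,\dots,\gamma_{\varepsilon}\}$. For each pair $(i,l)$ we unfold $\matroid'_1\mid\groundset'_{i,l}$ and $\matroid'_2\mid\groundset'_{i,l}$ via \Cref{def:matroidunfolding} and take an unweighted DCS $S_{i,l}$ of the unfolded pair. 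We then refold $S_{i,l}$ by taking $\groundset'_{S_{i,l}}=\{e\mid e_j\in S_{i,l}\text{ for some }j\}$, and set $\groundset'_S=\bigcup_{i,l}\groundset'_{S_{i,l}}$.

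\textbf{Sparseness.} The unfolded optimum of class $(i,l)$ is at most $\gamma_{\varepsilon}r$ by Observation~\ref{obs:unfoldedmatroidsize}. Hence $|S_{i,l}|=O(\gamma_{\varepsilon}r/\varepsilon)$, and refolding does not increase size. There are $O(\varepsilon^{-1}\log W)$ classes in total, since $l$ ranges over $O(\log W/(\beta\log\varepsilon^{-1}))$ values and $i$ over $\beta$ values. Summing gives $O(r\gamma_{\varepsilon}\log W\cdot\poly(\varepsilon^{-1}))$. The $\poly(\varepsilon^{-1})$ factor is absorbed into $\gamma_{\varepsilon}=\lceil(1/\varepsilon)^{1/\varepsilon}\rceil$ after replacing $\varepsilon$ by a constant multiple of itself.

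\textbf{Approximation ratio.} This is the special case $\groundset'_B=\emptyset$ of robustness, so it suffices to prove robustness. Fix the index $i$ given by \Cref{lem:oneindependentsetisgood}, so that $\opt'_i\ge(1-\varepsilon)\opt'$. For each $l$, the class $\groundset'_{i,l}$ is split into $A_l=\groundset'_A\cap\groundset'_{i,l}$ and $B_l=\groundset'_B\cap\groundset'_{i,l}$. Unfolding commutes with this partition: $\groundset_A$ is the set of copies of elements of $A_l$. By the robustness of the unweighted DCS, the unfolded restriction to $S_{i,l}\cup\groundset_{B_l}$ contains a common independent set $T$ of size at least $(\frac23-\varepsilon)\opt_{i,l}=(\frac23-\varepsilon)\opt'_{i,l}$, using \Cref{thm:equivalence}. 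Then $T$ is a common independent set of the unfolding of $\matroid'_j\mid(\groundset'_{S_{i,l}}\cup B_l)$. We now argue as in the proof of \Cref{lem:refoldguarantee}: apply \Cref{thm:equivalence} to this restricted pair. Its unfolding contains $T$, so its weighted optimum is at least $|T|$. This yields a common independent set $J_l\subseteq\groundset'_{S_{i,l}}\cup B_l$ with $w(J_l)\ge(\frac23-\varepsilon)\opt'_{i,l}$.

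Finally, we greedily merge the sets $J_l$ in descending order of classes. By \Cref{lem:greedymerge}, the merged set lies in $\groundset'_S\cup\groundset'_B$ and has weight at least
\[(1-4\varepsilon)\Bigl(\tfrac23-\varepsilon\Bigr)\sum_l\opt'_{i,l}\ge(1-4\varepsilon)\Bigl(\tfrac23-\varepsilon\Bigr)\opt'_i\ge(1-4\varepsilon)(1-\varepsilon)\Bigl(\tfrac23-\varepsilon\Bigr)\opt'.\]
Rescaling $\varepsilon$ gives the bound $(\frac23-\varepsilon)\opt'$.

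The main obstacle is verifying that unfolding commutes with restriction and partition, so that the unweighted robustness guarantee transfers. The needed facts are these. Restricting $\matroid'$ to a set $X$ and then unfolding gives the same matroid as unfolding $\matroid'$ and then restricting to the copies of $X$, because independence in \Cref{def:matroidunfolding} is checked coordinate-wise. Also, $\opt$ of the unfolded instance restricted to $\groundset_A\cup\groundset_B$ equals $\opt'$ of the corresponding weighted instance, by \Cref{thm:equivalence}. A secondary point is that the rescaling to integer weights within each class changes weights only by a $(1-\varepsilon)$ factor, which is again absorbed.
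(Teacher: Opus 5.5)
Your proposal is correct and follows the paper's route. You split into the $\beta$ spread families, unfold each class, take an unweighted DCS and refold it (the paper packages this step as a separate integer-weight sparsifier lemma), and then get robustness and the approximation ratio by greedy merging across classes with the good index $i$. The one real difference is that you take the union over all $i\in[\beta]$ instead of the paper's single $D'_i$ for the good $i$: this costs an extra $\beta$ factor in size, absorbed like the other $\poly(\varepsilon^{-1})$ factors the paper also glosses over, but it makes the w-DCS depend only on $\groundset'_A$, whereas the paper's choice of $i$ implicitly depends on $\opt'$ of the full instance including $\groundset'_B$.
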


In order to show this, we first show the following intermediate robust sparsifier for integer weights. Subsequently, we will use this to derive a proof for \Cref{lem:weightedsparsifier}.

\begin{lemma}\label{lem:integerweightsparsifier}
    Let $\matroid'_1=(\groundset',\independentsets'_1,w)$ and $\matroid'_2=(\groundset',\independentsets'_2,w)$ be a weighted matroid intersection instance with weights in $\set{1,2,3,\cdots, W}$. Then, there exists a $\groundset'_S\subseteq \groundset'$ with the following properties (called a weighted density constrained subset or w-DCS).
    \begin{enumerate}
        \item {\bf (Sparseness)} We have that $|\groundset'_S|=O(r\cdot W)$.
        \item {\bf (Approximation Ratio)} $\matroid'_1\mid \groundset'_S$ and $\matroid'_2\mid \groundset'_S$ contain a common independent set of size at least $(\frac{2}{3}-\varepsilon)\cdot \opt'$.
        \item {\bf (Robustness)}  Suppose $\groundset'_A\cup\groundset'_B$ is a partition of $\groundset'$. Let $\groundset'_S$ be a DCS of $\matroid'_1\mid \groundset'_A$ and $\matroid'_2\mid \groundset'_A$. Then, $\matroid'_1\mid \groundset'_S\cup \groundset'_B$ and $\matroid'_2\mid \groundset'_S\cup \groundset'_B$ contain an independent set of weight at least $(\frac{2}{3}-\varepsilon)\cdot \opt'$.
    \end{enumerate}
\end{lemma}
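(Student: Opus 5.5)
The plan is to build the w-DCS by unfolding, taking the unweighted DCS of \cite{HuangS24} of the unfolded instance, and refolding. Concretely, given $\matroid'_1,\matroid'_2$ with integer weights in $\set{1,\dots,W}$, let $\matroid_1=(\groundset,\independentsets_1)$ and $\matroid_2=(\groundset,\independentsets_2)$ be the unfolded matroids of \Cref{def:matroidunfolding}. These are valid matroids by \Cref{lem:validityofreduction}. Let $S\subseteq\groundset$ be a DCS of $\matroid_1,\matroid_2$. Define $\groundset'_S:=\{e\in\groundset' \mid e_j\in S \text{ for some } j\in[w(e)]\}$, exactly as in \Cref{def:refolding}, except that $S$ is now a sparsifier rather than a common independent set.

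\textbf{Sparseness.} We have $|\groundset'_S|\le |S| = O(\opt/\varepsilon)$. By \Cref{thm:equivalence}, $\opt=\opt'$, and by Observation~\ref{obs:unfoldedmatroidsize}, $\opt\le W r$. Hence $|\groundset'_S|=O(rW)$, treating $\varepsilon$ as a constant absorbed into the $O(\cdot)$, or into $\gamma_\varepsilon$ later when this lemma is used for \Cref{lem:weightedsparsifier}.

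\textbf{Approximation ratio.} The DCS guarantees some $T\subseteq S$ with $T\in\independentsets_1\cap\independentsets_2$ and $|T|\ge(\tfrac23-\varepsilon)\opt$. Every copy $e_j\in T$ has $e\in\groundset'_S$, so $T$ lies inside the unfolded ground set of the refolded instance $\matroid'_1|\groundset'_S,\matroid'_2|\groundset'_S$. Unfolding commutes with restriction, because independence in \Cref{def:matroidunfolding} is checked per copy-index on the original elements. Therefore $T$ is common independent in the unfolding of the restricted instance. Applying \Cref{thm:equivalence} to the restricted instance shows that its maximum weight is at least $|T|\ge(\tfrac23-\varepsilon)\opt=(\tfrac23-\varepsilon)\opt'$. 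This is the argument of \Cref{lem:refoldguarantee}.

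\textbf{Robustness.} Let $\groundset'_A\cup\groundset'_B$ be a partition of $\groundset'$. Define $\groundset_A$ and $\groundset_B$ as the sets of all copies of elements of $\groundset'_A$ and $\groundset'_B$, respectively; these partition $\groundset$. The unfolding of $\matroid'_i|\groundset'_A$ is precisely $\matroid_i|\groundset_A$. So take $S$ to be a DCS of $\matroid_1|\groundset_A,\matroid_2|\groundset_A$, and let $\groundset'_S$ be its refolding. Unweighted robustness gives a common independent $T\subseteq S\cup\groundset_B$ of $\matroid_1,\matroid_2$ with $|T|\ge(\tfrac23-\varepsilon)\opt$. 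All copies in $T$ come from elements of $\groundset'_S\cup\groundset'_B$. By the same commuting-restriction argument and \Cref{thm:equivalence} applied to $\matroid'_i|(\groundset'_S\cup\groundset'_B)$, we obtain a common independent set of weight at least $(\tfrac23-\varepsilon)\opt'$.

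The main obstacle is the commuting-restriction step. We must check carefully that the unfolding of $\matroid'_i|X$ coincides with $\matroid_i$ restricted to the copies of $X$. This holds for $\matroid_1$ by index $i$ and for $\matroid_2$ by index $w(e)-i+1$, because both rules depend only on which original elements appear at each index. Once this is in place, \Cref{thm:equivalence} can be applied to the restricted instances, which is what converts the cardinality bound on $T$ into a weight bound. A secondary point is the constant in sparseness: it is $O(rW/\varepsilon)$, and the $1/\varepsilon$ needs to be hidden in the $O$-notation as stated.
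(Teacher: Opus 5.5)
Your proposal is correct and follows the paper's proof: unfold, take a DCS of the unfolded instance (for robustness, of its restriction to the copies of $\groundset'_A$), refold, and convert the cardinality guarantee into a weight guarantee by applying \Cref{thm:equivalence} to the restricted instances. The paper packages that last step as \Cref{lem:refoldguarantee} applied to $T$ together with $\groundset'_T\subseteq\groundset'_S$. Your sparseness bound $|\groundset'_S|\le|S|=O(\opt/\varepsilon)=O(rW/\varepsilon)$ is also more accurate than the paper's appeal to Observation~\ref{obs:sizeofrefoldedsets}, which is stated for common independent sets rather than for a DCS, so the extra $1/\varepsilon$ factor you flag is genuinely there.
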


\begin{proof}
     Let's start with defining the w-DCS. We take unfolded versions of the matroids, namely, $\matroid_1,\matroid_2$ and let $S$ be their DCS. Let $\groundset'_S\subseteq \groundset'$ be the refolded version of $S$. This will be our candidate w-DCS. We now show it satisfies the three properties.
    \begin{enumerate}
        \item {\bf (Sparseness)} By Observation~\ref{obs:sizeofrefoldedsets}, we can conclude that $|\groundset'_S|\leq W\cdot r$. 
        \item {\bf (Approximation Ratio)} Let $T\subseteq S$ be the maximum common independent set of $\matroid_1\mid S$ and $\matroid_2\mid S$ (and therefore the $(\frac{2}{3}-\varepsilon)$-approximate maximum common independent set of $\matroid_1,\matroid_2$ as $S$ is a DCS of $\matroid_1,\matroid_2$). Suppose matroid pairs $\matroid'_{1,S}$ and $\matroid'_{2,S}$; and $\matroid'_{1,T}$ and $\matroid'_{2,T}$ be obtained by refolding as in \Cref{def:refolding}. By \Cref{lem:refoldguarantee}, the weight of the maximum common independent set of $\matroid'_{1,T}$ and $\matroid'_{2,T}$ has weight at least $(\frac{2}{3}-\varepsilon)\opt'$. This independent set is also a common independent set of $\matroid'_{1,S}$ and $\matroid'_{2,S}$ as $\groundset'_T\subseteq \groundset'_S$.
        \item {\bf (Robustness)} Let $\groundset'_S$ be obtained by computing a DCS $S$ of $\matroid_1\mid \groundset_A$ and $\matroid_2\mid \groundset_A$ and refolding it. Let $\groundset_S\subseteq \groundset $ be obtained by unfolding $\groundset'_S$. Observe that $S\subseteq \groundset_S$. Secondly, by the fact that $S$ is a DCS of $\matroid_1\mid \groundset_A$ and $\matroid_2\mid \groundset_A$, we can conclude that $\matroid_1\mid S\cup \groundset_B$ and $\matroid_2\mid S\cup \groundset_B$ contain an independent set of size at least $(\frac{2}{3}-\varepsilon)\cdot \opt$ which is equal to $(\frac{2}{3}-\varepsilon)\cdot \opt'$. We can also conclude this for $\matroid_1\mid \groundset_S\cup \groundset_B$ and $\matroid_2\mid \groundset_S\cup \groundset_B$. Since these are unfolded versions of $\matroid'_1\mid \groundset'_S\cup\groundset'_B$ and $\matroid'_2\mid \groundset'_S\cup\groundset'_B$, we can additionally conclude that the maximum weight common independent set of $\matroid'_1\mid\groundset'_S\cup \groundset'_B$ and $\matroid'_2\mid\groundset'_S\cup \groundset'_B$ contains a $(\frac{2}{3}-\varepsilon)$-approximation $\opt'$.
    \end{enumerate}
This concludes the proof.
\end{proof}

We now proceed to prove \Cref{lem:weightedsparsifier}. 
\begin{proof}
    We proceed by creating $\beta$ weighted matroid pairs $\matroid'_{i,1}$ and $\matroid'_{i,2}$ as per \Cref{def:spreadmatroids}. Further, we consider matroids $\matroid'_{1,i}\mid \groundset'_{i,l}$ and $\matroid'_{2,i}\mid \groundset'_{i,l}$ for all $i\in [\beta]$ and $l\geq 0$. We apply \Cref{lem:integerweightsparsifier} to these to obtain w-DCS $D'_{i,l}\subseteq \groundset'_{i,l}$ for all $i\in [\beta]$ and $l\geq 0$. Our candidate w-DCS for $\matroid'_{1,i}$ and $\matroid'_{2,i}$ will be $D'_i:=\cup_{l\geq 0}D'_{i,l}$. We now show each of the relevant properties.
    \begin{enumerate}
    \item {\bf(Sparseness)} From \Cref{lem:integerweightsparsifier} we can conclude that $|D'_i|=O(\gamma_{\varepsilon}\cdot r\cdot \log W)$ for all $i\in [\beta]$.
   \item {\bf(Approximation Ratio)} From \Cref{lem:integerweightsparsifier}, we know that $D'_{i,l}$ contains a $(\frac{2}{3}-\varepsilon)$-approximation to the maximum weight common independent set of $\matroid'_{1,i}\mid \groundset'_{i,l}$ and $\matroid'_{2,i}\mid \groundset'_{i,l}$. Consequently, by \Cref{cor:greedymergeguarantee}, we can conclude that $D'_i:=\cup_{l\geq 0}D'_{i,l}$ contains a $(\frac{2}{3}-\varepsilon)$-approximate maximum weight common independent set of $\matroid'_{1,i}$ and $\matroid'_{2,i}$. 
   \item {\bf(Robustness)} Let $D'_{i,l,A}$ be the w-DCS of $\matroid'_{1,i}\mid \groundset'_{i,l}\cap \groundset'_A$ and $\matroid'_{2,i}\mid \groundset'_{i,l}\cap \groundset'_A$ and let $D'_{i,A}=\cup_{l\geq 0}D_{i,l,A}$. We want to show that $\matroid'_{1,i}\mid D'_{i,A}\cup (\groundset'_{i}\cap \groundset'_B)$ and $\matroid'_{2,i}\mid D'_{i,A}\cup (\groundset'_{i}\cap \groundset'_B)$ contain a $(\frac{2}{3}-\varepsilon)$-approximation to the maximum weight independent set of $\matroid'_{1,i}$ and $\matroid'_{2,i}$. To see this, observe that by \Cref{lem:integerweightsparsifier} we know that $D'_{i,l,A}$ is a w-DCS and therefore, $\matroid'_{1,i}\mid D'_{i,l,A}\cup (\groundset'_{i,l}\cap \groundset'_B)$ and $\matroid'_{2,i}\mid D'_{i,l,A}\cup (\groundset'_{i,l}\cap \groundset'_B)$ contain a $(\frac{2}{3}-\varepsilon)$-approximation to the maximum weight independent set of $\matroid'_{1,i}\mid \groundset'_{i,l}$ and $\matroid'_{2,i}\mid \groundset'_{i,l}$. Thus, by \Cref{cor:greedymergeguarantee} we can conclude that $\matroid'_{1,i}\mid D'_{i,A}\cup (\groundset'_{i}\cap \groundset'_B)$ and $\matroid'_{2,i}\mid D'_{i,A}\cup (\groundset'_{i}\cap \groundset'_B)$ contain a $(\frac{2}{3}-\varepsilon)$-approximation to the maximum weight independent set of $\matroid'_{1,i}$ and $\matroid'_{2,i}$.
    \end{enumerate}
    Finally, since at least one of the $\beta$ pairs of matroids among $\matroid'_{i,1}$ and $\matroid'_{i,2}$ contain a $(1-\varepsilon)$-approximate maximum weight common independent set, we can conclude that the w-DCS $D'_i$ corresponding to that particular pair satisfies all the three properties with respect to $\matroid'_1$ and $\matroid'_2$.
\end{proof}

\subsubsection{Application to Random Order Streams.}

Our unfolding technique and aspect-ratio reduction also give an algorithm for approximate maximum weight matroid intersection in random-order streams, matching the state-of-the-art for the cardinality problem. The algorithm however does not follow via the reduction we described so far, instead it follows via more white-box methods. First, we state the result of \cite{HuangS24} for unweighted matroid intersection in random-order streams.

\begin{lemma}[\cite{HuangS24}]\label{lem:huangrandomorder}
    One can extract from a randomly-ordered stream of elements a common independent set in two matroids with an approximation ratio of $(\frac{2}{3}-\varepsilon)$ using space $O(\frac{r\cdot \log n\cdot \log \min\{r_1,r_2\}}{\varepsilon^3})$.
\end{lemma}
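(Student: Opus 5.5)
This statement is quoted from \cite{HuangS24}, and inside this paper it is only used as a black box, so I would not reprove it here. If I had to prove it, I would adapt Bernstein's random-order EDCS strategy for matching, replacing the edge-degree constraints with the density constraints that define a DCS of $\matroid_1,\matroid_2$.

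\textbf{Phase one (building an approximate DCS).} The stream is split into two phases.
\begin{enumerate}
\item In the first phase, consisting of an $\varepsilon$-fraction of the stream, the algorithm maintains a candidate set $S$ with a density parameter $\beta=\poly(\varepsilon^{-1})$. I would describe each element's density by a level structure relative to nested spans in each matroid, whose number of levels is governed by $\log\min\{r_1,r_2\}$.
\item An arriving element $e$ is inserted if it is ``underfull'', meaning its combined density in $\matroid_1\mid S$ and $\matroid_2\mid S$ is below $\beta^-$.
\item After each insertion, overfull elements are evicted, as in local-search EDCS maintenance.
\item A potential-function argument bounds the total number of insertions and evictions. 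The potential is $\sum_{e\in S}$ (density terms), which is bounded by $O(r\beta^2)$. This bound shows the phase stabilises.
\item The phase is divided into $O(\log n\cdot\log\min\{r_1,r_2\}/\varepsilon)$ epochs. If an epoch sees no change, then, because elements arrive in random order, w.h.p.\ only a small fraction of the not-yet-seen elements are underfull with respect to the final $S$.
\end{enumerate}

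\textbf{Phase two (collecting the remainder).} Freeze $S$ and store every later element that is underfull with respect to $S$. By the random-order concentration from phase one, the number of stored elements is $O(r\log n\log\min\{r_1,r_2\}/\varepsilon^3)$. Together with $|S|=O(r/\varepsilon)$, this gives the claimed space bound.

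\textbf{Analysis.} At the end, $S$ is a DCS of the elements that are not underfull. The underfull elements were all stored, and so were the first-phase elements that mattered. I would therefore invoke the robustness property of the DCS from the preceding lemma, taking $\groundset_A$ to be the non-underfull elements and $\groundset_B$ to be the stored ones. This shows that $S\cup\groundset_B$ contains a $(\frac{2}{3}-\varepsilon)$-approximate common independent set, which we compute offline. The elements lost in the first $\varepsilon$-fraction cost only an $\varepsilon$-fraction of $\opt$ in expectation.

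\textbf{Main obstacle.} I expect the hard step to be the concentration argument bounding the number of underfull elements seen in phase two. Unlike in graphs, ``underfull'' depends on global span structure, so a single insertion can change the status of many elements. I would first try to show that underfullness is monotone under the level structure, and then union bound over the $O(\log\min\{r_1,r_2\})$ levels times $O(\log n)$ epochs; this union bound is where the two logarithmic factors come from.
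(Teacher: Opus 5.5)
You are right that the paper only cites this lemma, and your architecture is the right one. It is the two-phase scheme of \cite{HuangS24}, which the paper reproduces in batched form as Algorithm~\ref{algo:intersection-streaming}. Your final step also matches it: the frozen $S$ has bounded density, all later underfull elements are stored, and $E^{\mathrm{late}}$ retains a $(1-\varepsilon)$-approximate solution. The gap is the epoch schedule, which is exactly what produces the stated space bound.

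Some epoch is guaranteed to see no insertion only if the number of epochs exceeds the number of possible insertions. Your own potential argument bounds that by $O(\beta^2\opt)$, but your $O(\log n\log\min\{r_1,r_2\}/\varepsilon)$ epochs can be far fewer, so nothing forces a quiet epoch. As a check, epochs of length $\varepsilon^2 n/(\log n\log\min\{r_1,r_2\})$ would make the random-order step bound phase two by $O(\log^2 n\log\min\{r_1,r_2\}/\varepsilon^2)$, with no factor $r$ at all; this signals the count is wrong.

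Since $\opt$ is unknown, the actual algorithm guesses it geometrically:
\begin{itemize}
\item Let $k=\min\{r_1,r_2\}\geq\opt$. Round $i=0,\dots,\log_2 k$ consists of $2^{i+2}\beta^2+1$ epochs of length $\alpha_i\approx\varepsilon n/(\log k\cdot 2^{i+2}\beta^2)$, so all rounds together fit in the first $\varepsilon n$ elements.
\item The constants are chosen so that in the round with $2^i\geq\opt$ the epochs outnumber the insertions. Hence the first quiet epoch has length $\alpha_i$ with $2^i=O(\opt)$.
\item Random order then gives, w.h.p., at most $O(n\log n/\alpha_i)=O(\opt\,\beta^2\log n\log k/\varepsilon)$ underfull unseen elements. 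With $\beta=O(1/\varepsilon)$ this is the claimed $O(r\log n\log\min\{r_1,r_2\}/\varepsilon^3)$.
\end{itemize}

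Consequently the logarithms do not come from where you put them. The factor $\log\min\{r_1,r_2\}$ is the number of guessing rounds: each round gets only a $1/\log k$ share of the $\varepsilon n$ budget, which shrinks $\alpha_i$. It has nothing to do with the density decomposition, which can have rank-many levels and plays no role in the counting. The factor $\log n$ pays for making the inference ``quiet epoch of length $\alpha$ $\Rightarrow$ at most an $O(\log n/\alpha)$-fraction of unseen elements is underfull'' hold for all epochs simultaneously with failure probability $1/\poly(n)$. A union bound over $L$ events only adds $\ln L$ to the required exponent $p\alpha$; it never multiplies the bound by $L$.

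Likewise, the obstacle you anticipate is not one. Evictions only follow insertions, so $S$ is unchanged during a quiet epoch and frozen afterwards. The set of underfull unseen elements is therefore fixed by the history before the epoch, and by random order the epoch is a uniform sample of the unseen elements; no monotonicity of underfullness is needed. The genuinely matroid-specific ingredients are the ones you only assert, both taken from \cite{HuangS24}: the $O(\beta^2\opt)$ bound on insertions, and the lemma that bounded density plus all underfull elements contains a $(\frac{2}{3}-\varepsilon)$-approximation.
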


We need the following definition to describe the algorithm.

\begin{definition}[Contraction]
For a set $S \subseteq \mathcal{N}$, the contracted matroid $\mathcal{M} / S$ is $ (\mathcal{N} \setminus S, \mathcal{I}/S)$ where $\mathcal{I}/S := \{I \subseteq \mathcal{N} \setminus S | I \cup B_S \in \mathcal{I}\}$ for a base $B_S$ of $\mathcal{M}|S$.    
\end{definition}

Our main result in this section is the following result.

\begin{lemma}\label{lem:randomlyorderedweighted}
    One can extract from a randomly-ordered stream of elements a common independent set which is a $(\frac{2}{3}-\varepsilon)$-approximation to the maximum weight independent set using space $O(r\cdot \log n\cdot \log \min\{r_1,r_2\}\cdot \log W\cdot \gamma_{\varepsilon})$, where $\opt$ is the rank of the common independent set.
\end{lemma}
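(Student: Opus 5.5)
The plan is to combine the aspect-ratio reduction (\Cref{def:spreadmatroids}, \Cref{lem:oneindependentsetisgood}, \Cref{cor:greedymergeguarantee}) with the unfolding/refolding machinery, run inside the random-order algorithm of \Cref{lem:huangrandomorder}. The streaming algorithm runs $\beta\cdot O(\log W)$ independent instances in parallel, one for each pair $(i,l)$ with $i\in[\beta]$ and $l\geq 0$. On arrival of an element $e$, we compute its weight, determine the unique $(i,l)$ with $e\in\groundset'_{i,l}$, rescale its weight to an integer in $\set{1,\dots,\gamma_{\varepsilon}}$ (\Cref{obs:spreadmatroidsarespread}), and feed the unfolded copies $e_1,\dots,e_{w(e)}$ to the instance of \Cref{lem:huangrandomorder} for the unfolded matroids of $\matroid'_{1,i}\mid\groundset'_{i,l}$ and $\matroid'_{2,i}\mid\groundset'_{i,l}$. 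At the end of the stream, we take the output $S_{i,l}$ of each instance and refold it (\Cref{def:refolding}). Offline, we compute a maximum weight common independent set of the refolded matroids, greedily merge over $l$ for each fixed $i$, and return the heaviest set over $i$. The approximation guarantee then follows exactly as in \Cref{red:staticaspectratio}: \Cref{lem:refoldguarantee} gives a $(\frac23-\varepsilon)$-approximation per class, \Cref{cor:greedymergeguarantee} loses a factor of $(1-4\varepsilon)$, and \Cref{lem:oneindependentsetisgood} loses a factor of $(1-\varepsilon)$. Rescaling $\varepsilon$ absorbs these losses.

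For space, each class instance works on an unfolded ground set of size at most $\gamma_{\varepsilon}n$ and an intersection rank at most $\gamma_{\varepsilon}r$ (\Cref{obs:unfoldedmatroidsize}). The ranks $r_1,r_2$ also scale by at most $\gamma_{\varepsilon}$, so $\log n$ and $\log\min\{r_1,r_2\}$ change only by additive $O(\log\gamma_{\varepsilon})$ terms. Summed over the $O(\varepsilon^{-1}\log W)$ classes, this gives the claimed bound up to $\poly(\varepsilon^{-1})$ factors, which we absorb into $\gamma_{\varepsilon}$ since $\gamma_{\varepsilon}$ dominates any polynomial in $\varepsilon^{-1}$. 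Oracle access to the unfolded matroids is available by \Cref{lem:oraclecallstoimplementunfolded}.

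\textbf{Main obstacle: the random-order assumption.} \Cref{lem:huangrandomorder} assumes that the stream fed to each instance is uniformly random. Two things could break this. First, restricting a random-order stream to a fixed subset $\groundset'_{i,l}$ yields a random order of that subset, which is fine. Second, and more seriously, the copies $e_1,\dots,e_{w(e)}$ arrive consecutively rather than independently shuffled. This is why the paper calls the argument white-box. To handle it, I would open up the analysis of \cite{HuangS24}, which runs a DCS-building phase on an $\varepsilon$-prefix of the stream and then greedily augments on the remainder using contractions. I would check that its analysis only needs two properties. The first is that the prefix is a random subset of the \emph{original} elements: every unfolded copy of an element lands on the same side, which is compatible with the robustness property of the w-DCS in \Cref{lem:integerweightsparsifier}. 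The second is that the DCS-phase stopping rule depends on random sampling of groups rather than of individual copies.

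Concretely, I would redo the algorithm at the level of original elements. We build a robust w-DCS (\Cref{lem:weightedsparsifier}, via unfolding) on the prefix, and then argue via robustness that the prefix's w-DCS together with the suffix contains a $(\frac23-\varepsilon)$-approximation. Finally, we keep the suffix elements that can extend the structure, contracting the matroids by the appropriate sets, and invoke the space analysis of \cite{HuangS24} on the unfolded instance. Verifying that the ``few suffix elements survive'' bound of \cite{HuangS24} still holds with grouped copies is the step I expect to require the most care. The plan is to view each group as a single random arrival and to pay a factor $\gamma_{\varepsilon}$ in the count.
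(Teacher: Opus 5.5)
Your proposal is correct and follows essentially the paper's route. Treating each group $e_1,\dots,e_{w(e)}$ of consecutive copies as one random arrival and paying a factor $\gamma_{\varepsilon}$ is exactly the paper's $b$-batch model with $b=\gamma_{\varepsilon}$: \Cref{lem:bbatchtoweighted} (aspect-ratio reduction, unfold each element into one batch, refold at the end) combined with \Cref{lem:algbbatch} (the algorithm of \cite{HuangS24} with its stopping test run over $\alpha_i/b$ batches, whose lemmas the paper also re-verifies only in sketch).

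Two small corrections:
\begin{enumerate}
\item An element lies in $\groundset'_{i,l}$ for every $i\in[\beta]$ except the one whose missing group contains it. So it must be fed to one instance for each such $i$, not to a ``unique'' $(i,l)$.
\item The guarantee comes from the bounded-density set built in Phase~1, together with the underfull late elements (\Cref{lem:boundeddensityandunderfull}) and the fact that the late part contains a near-optimal solution. It does not come from robustness of a DCS of the prefix, which Phase~1 does not actually produce.
\end{enumerate}
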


In order to obtain this result, we show the following reduction which is inspired by the work of \cite{HashemiW24}, who prove the same result for matchings. Their starting point is the following model.

\begin{definition}[$b$-batch model]
    Let $\matroid_1=(\groundset,\independentsets_1)$, $\matroid_2=(\groundset,\independentsets_2)$ be an instance of matroid intersection. An adversary partitions $\groundset$ into batches $\groundset_1,\cdots, \groundset_q$ of size at most $b$ each. The arrival order of the batches $\{\groundset_{i_1},\cdots, \groundset_{i_q}\}$ is chosen uniformly at random. The elements in each batch arrive consecutively. 
\end{definition}

Our goal will be to show the following lemma which implies \Cref{lem:randomlyorderedweighted}.

\begin{lemma}[$b$-batch model to weighted random order]\label{lem:bbatchtoweighted}
    If there exists an algorithm $\mathcal{A}_u$ for unweighted matroid intersection in the $b$-batch model that computes an $\alpha$-approximate maximum cardinality common independent set using space $S(n,r,b)$, then there exists an algorithm $\mathcal{A}_w$ for the weighted matroid intersection in the random-order model that computes a $(1-\varepsilon)\alpha$-approximation to the maximum weight common independent set using space $S(n\gamma_{\varepsilon},r\gamma_{\varepsilon},\gamma_{\varepsilon})\log W$.
\end{lemma}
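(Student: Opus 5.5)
Our plan is to combine the unfolding of \Cref{def:matroidunfolding} with the aspect ratio reduction, in the same way as \Cref{lem:streamingreduction}, with one extra observation: a random-order stream of weighted elements, once unfolded, becomes a $b$-batch stream of unweighted elements.

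We first treat integer weights in $\set{1,\cdots,W}$ with $W=\gamma_{\varepsilon}$. When an element $e$ of weight $w(e)$ arrives, we feed its copies $e_1,\cdots,e_{w(e)}$ consecutively to $\mathcal{A}_u$, which runs on the unfolded matroids $\matroid_1,\matroid_2$. Take as batches the sets $\set{e_1,\cdots,e_{w(e)}}$ for $e\in\groundset'$. These partition $\groundset$ and each has size at most $W$. The partition is fixed by the instance and does not depend on the order. The weighted elements arrive in uniformly random order, so the batches do too, and the elements of each batch arrive consecutively. This is exactly the $b$-batch model with $b=W$.

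Hence $\mathcal{A}_u$ returns $S\in\independentsets_1\cap\independentsets_2$ with $|S|\geq\alpha\cdot\opt$. Its space is $S(nW,rW,W)$, using Observation~\ref{obs:unfoldedmatroidsize} for $|\groundset|\leq Wn$ and $\opt\leq Wr$. After the stream ends, we refold $S$ as in \Cref{def:refolding}. By \Cref{thm:equivalence} and \Cref{lem:refoldguarantee}, the refolded matroids $\matroid'_{S,1},\matroid'_{S,2}$ contain a common independent set of weight at least $\alpha\cdot\opt'$. We compute it offline: $\groundset'_S$ is already stored, since $|\groundset'_S|\leq|S|$, and no further passes are needed.

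For arbitrary weights we use \Cref{def:spreadmatroids}. For each $i\in[\beta]$ and each $l\geq0$, we run an independent copy of the integer-weight algorithm on the substream of elements in $\groundset'_{i,l}$, after the rescaling of Observation~\ref{obs:spreadmatroidsarespread} so that weights lie in $\set{1,\cdots,\gamma_{\varepsilon}}$. Each element's class is determined on arrival. A uniformly random order of $\groundset'$ induces a uniformly random order on every subset $\groundset'_{i,l}$, so each copy sees a genuine random-order stream; this is the content of \Cref{remark:generalityreduction}.

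At the end of the stream:
\begin{itemize}
\item for each $i$, we greedily merge the sets $I_{i,l}$ in decreasing order of $l$, which loses at most a $(1-4\varepsilon)$ factor by \Cref{cor:greedymergeguarantee};
\item we output the heaviest merged set over $i\in[\beta]$, which loses at most a further $(1-\varepsilon)$ factor by \Cref{lem:oneindependentsetisgood}.
\end{itemize}
Rescaling $\varepsilon$ by a constant gives the $(1-\varepsilon)\alpha$ guarantee.

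Space is the sum over the $O(\varepsilon^{-1}\log W)$ copies, i.e.\ the stated bound up to the $\varepsilon^{-1}$ factor, which is absorbed into $\gamma_{\varepsilon}$-type factors.

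The step needing the most care is the claim that the unfolded stream is a valid $b$-batch instance. The batch partition must be chosen obliviously: it depends only on $\groundset'$ and $w$, not on the realized order. It must also be independent of the randomness in the random-order guarantee of $\mathcal{A}_u$. A second point to check is that the $b$-batch guarantee of $\mathcal{A}_u$ is stated against worst-case partitions, so that restricting to the classes $\groundset'_{i,l}$ still yields valid instances.
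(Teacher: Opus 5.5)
Your proposal is correct and follows the paper's proof. Each arriving element is unfolded into the batch $\{e_1,\dots,e_{w(e)}\}$, so the random-order stream becomes a $\gamma_{\varepsilon}$-batch stream for $\mathcal{A}_u$; the output is refolded via \Cref{lem:refoldguarantee}, and the result is lifted to arbitrary weights by the aspect-ratio reduction of \Cref{lem:streamingaspectratioreduction} and \Cref{remark:generalityreduction}, which you write out inline. The extra $\varepsilon^{-1}$ factor you flag in the space bound is the same looseness already in the paper's proof of \Cref{lem:streamingaspectratioreduction}, so it is not a defect relative to the paper.
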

\begin{proof}[Proof Sketch]
    We will show that it is sufficient to construct an algorithm $\mathcal{A}_{W}$ for random-order streams that given a weighted matroid intersection with weights in $[W]$ computes an $\alpha$-approximate maximum weight common independent set using space $S(n,r,W)$. By \Cref{lem:streamingaspectratioreduction} and \Cref{remark:generalityreduction}, we can get the requisite $\mathcal{A}_w$ from $\mathcal{A}_W$. 
    We now proceed to describe $\mathcal{A}_W$. When an element $e\in \groundset'$ arrives, we unfold it to create the batch of elements $e_{1},\cdots, e_{W}$ and feed these to $\mathcal{A}_u$. Thus, $\mathcal{A}_u$ is applied to the unfolded versions of the matroid with batches of size $W$. Thus, $\mathcal{A}_u$ computes an $\alpha$-approximate maximum cardinality common independent set $I$ of the unfolded matroids in space $S(nW,rW,W)$. The algorithm $\mathcal{A}_W$ then at the end of the stream computes an $\alpha$-approximate maximum weight common independent set $I'$ of the original matroids by refolding $I$ (this is implied by \Cref{lem:refoldguarantee}).
\end{proof}

Thus, our goal in this subsection is to prove the following lemma.

\begin{lemma}[Algorithm for $b$-batch model]\label{lem:algbbatch}
    Given an instance of unweighted matroid intersection, there exists an algorithm which computes a $(\frac{2}{3}-\varepsilon)$-approximation to the maximum cardinality independent set in the $b$-batch model using space $O(\frac{\opt\cdot b\cdot \log n\cdot \log (\min\{r_1,r_2\})}{\varepsilon^3})$.
\end{lemma}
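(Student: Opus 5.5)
The plan is to adapt the random-order algorithm of \cite{HuangS24} (\Cref{lem:huangrandomorder}) to the $b$-batch model, following the template of \cite{HashemiW24} for matchings. The algorithm of \cite{HuangS24} has two phases. In a first phase, on an $\varepsilon$-prefix of the stream, it builds a DCS-like structure from sampled elements. It does so via $O(\log n\log\min\{r_1,r_2\}/\varepsilon^{2})$ levels of greedy maintenance, where a level only changes when a newly arriving element is ``useful'' for it. In the second phase it stores every remaining element that could augment the current solution, i.e.\ elements not spanned in the appropriate contracted matroids $\matroid_1/\cdot$, $\matroid_2/\cdot$.

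The analysis of \cite{HuangS24} uses randomness in two places. First, the prefix is a uniformly random sample of the input, so the structure built on it is (approximately) a DCS of a random subset. Second, the suffix elements are ``fresh'' relative to the prefix structure, which bounds how many of them survive the filter. I will redo both with batches as the unit of randomness. Since batches arrive in uniformly random order, the set of batches in the prefix is a uniformly random subset of the $q$ batches. The key step is to show that every probabilistic statement in \cite{HuangS24} still holds when applied to random batches rather than random elements, at the cost of a factor $b$.

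Concretely, I will treat a batch as a single ``super-element''. Consider each concentration or counting argument that bounds the number of elements inserted into a level, or kept in memory in the second phase. Each such bound counts events of the form ``the next arriving item is useful''. I will re-prove each of them as ``the next arriving batch contains a useful element''; when this happens, at most $b$ elements are charged. This gives exactly the extra factor $b$ in the space bound $O(\opt\cdot b\cdot\log n\log\min\{r_1,r_2\}/\varepsilon^3)$. The approximation guarantee relies on the robustness property of the DCS with respect to the partition $\groundset_A$ (prefix) $\cup\,\groundset_B$ (suffix). That property holds for an arbitrary partition, so $(\frac23-\varepsilon)\opt$ carries over unchanged.

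The main obstacle is the first phase. There, \cite{HuangS24} argue that after a random number of steps the level structure stabilises, i.e.\ that few further modifications are triggered. The argument takes a random stopping time and shows that a uniformly random next element is unlikely to cause an update. With batches, a single batch can trigger up to $b$ updates, and the elements within a batch are adversarially correlated. I would handle this with the potential argument of \cite{HashemiW24}. I will bound the total number of updates, over all levels, by a potential of size $O(\opt\log\min\{r_1,r_2\}/\varepsilon)$ per level. Then, choosing the phase boundary at a uniformly random batch index among $O(\log n/\varepsilon)$ epochs, I will show that in expectation the fraction of remaining batches causing an update is small. Each such batch then contributes at most $b$ stored elements. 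I will check that the contraction-based filtering in the second phase depends only on the prefix structure, so that it is unaffected by intra-batch order.
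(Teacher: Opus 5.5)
Your high-level plan matches what the paper does in \Cref{algo:intersection-streaming}: run the two-phase algorithm of \cite{HuangS24} with batches as the unit of randomness, and charge at most $b$ elements to each ``bad'' batch. The approximation argument you propose, however, does not work. You appeal to robustness of a DCS under the prefix/suffix partition, and this fails for two reasons.

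First, the Phase~1 set $\groundset'$ is not a DCS of the prefix. A prefix element that is not underfull when it arrives is discarded permanently, yet after later deletions it may become underfull. So only the upper condition of \Cref{def:DCS} (bounded density) is guaranteed.

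Second, even for a genuine DCS, robustness is a statement about $S\cup\groundset_B$ with the \emph{entire} suffix $\groundset_B$, which you cannot store. It says nothing once the suffix is filtered.

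The missing ingredient is \Cref{lem:boundeddensityandunderfull}: a bounded-density $\groundset'$ together with \emph{all} elements that are underfull with respect to $\groundset'$ contains a $(\frac23-\varepsilon)$-approximation. The paper applies it to the instance $\groundset'\cup E^{\mathrm{late}}$. This requires a separate argument that the late elements still contain a $(1-\varepsilon)$-approximate optimum with high probability. In the $b$-batch model that must be argued batch-wise, using that each batch carries at most $b$ elements of a fixed optimum. Your proposal never does this, because robustness was supposed to make it unnecessary.

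This also fixes what Phase~2 must store: exactly the elements with $\tilde\rho_{\matroid_1}(e)+\tilde\rho_{\matroid_2}(e)<\beta^-$ with respect to the final $\groundset'$. That is precisely the condition that triggers an insertion in Phase~1. Phase~1 stops at the first window of $\alpha_i/b$ random batches containing no underfull element. Only because the Phase~2 filter and the Phase~1 trigger coincide does this stopping rule imply that few late elements are stored, which gives the $O(b\cdot\opt\cdot\log n\cdot\log k\cdot\beta^2/\varepsilon)$ bound.

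Your Phase~2 filter is a different criterion: ``elements not spanned in the contracted matroids'', i.e.\ potential augmenters. So bounding how many batches would cause a Phase~1 update does not bound your Phase~2 memory. The set you store is also not one to which \Cref{lem:boundeddensityandunderfull} applies. Your batch-as-super-element charging with a factor $b$ is the right mechanism, but it has to be attached to the underfull condition. (A uniformly random phase boundary analysed in expectation would also give only an expected space bound. The window-based stopping rule with concentration gives a high-probability bound.)
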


The authors in \cite{HuangS24} use DCS to show \Cref{lem:huangrandomorder}. We state some definitions relevant to DCS and the algorithm. Then, we will give a modified version of their algorithm and main lemmas and derive a proof sketch of \Cref{lem:algbbatch}.

\begin{definition}[Density, Definition 2 of \cite{HuangS24}]
    Let $\mathcal{M}=(\groundset, \independentsets)$ be a matroid. The density of a subset $\groundset'\subseteq \groundset$ is defined as
    \begin{align*}
        \rho_{\matroid}(\groundset')=\frac{\card{\groundset'}}{\rk_{\matroid}(\groundset')}.
    \end{align*}
\end{definition}

\begin{definition}[Density-based Decomposition, Algorithm 1 of \cite{HuangS24}]
   Let $\matroid=(\groundset, \independentsets)$ be a matroid and let $\groundset'\subseteq \groundset$, $\matroid'=\matroid\mid \groundset'$. The density based decomposition of $\matroid'$ is defined as follows. For $j=1,\cdots, \rk(\matroid)$ do the following:
    \begin{enumerate}
        \item $U_j\leftarrow$ the densest subset of largest cardinality in $\matroid/\cup_{l=1}^{j-1} U_{l}$.
    \end{enumerate}
\end{definition}

\begin{definition}[Associated Density of an Element]
   Let $\matroid=(\groundset,\independentsets)$ be a matroid. Let $\groundset'\subseteq \groundset$ and suppose $\matroid'=\matroid |\groundset'$ and suppose $U_1,\cdots, U_{\rk}$ is a density based decomposition of $\matroid'$. Then, given an element $e\in \groundset$, its associated density $\tilde{\rho}_{\matroid}(e)$ is defined as
   \begin{align*}
   \begin{split}
       \tilde{\rho}_{\matroid}(e)=\begin{cases}
           \rho_{\matroid'/\cup_{i=1}^{j-1} U_i}(U_j)&\text{ for }j=\min\{j\in \set{1,\cdots, \rk(\matroid)}: e\in \text{span}_{\matroid}(\cup_{i=1}^j U_i)\}\\&\text{ if }e\in \text{span}_{\matroid}(\groundset')\\
           0&\text{ otherwise.}
       \end{cases}
    \end{split}
   \end{align*}
\end{definition}
\begin{definition}[Density Constrained Subset]\label{def:DCS}
    Let $\matroid_1=(\groundset,\independentsets_1)$ and $\matroid_2=(\groundset,\independentsets_2)$ be two matroids. Let $\beta, \beta^{-}$ be integers such that $\beta\geq \beta^-+7$. A subset $\groundset'\subseteq \groundset$ is called a $(\beta,\beta^-)$-DCS if the following two properties hold.
    \begin{enumerate}
        \item\label{property:DCSone} For any $e\in \groundset'$, $\tilde{\rho}_{\matroid_1}(e)+\tilde{\rho}_{\matroid_2}(e)\leq \beta$, 
        \item\label{property:DCStwo} For any $e\in \groundset\setminus\groundset'$, $\tilde{\rho}_{\matroid_1}(e)+\tilde{\rho}_{\matroid_2}(e)\geq \beta^-$. 
    \end{enumerate}
\end{definition}

\begin{lemma}[Properties of DCS]
We have the following implications of the properties of the density constrained subset.
\begin{enumerate}
\item    For any set $\groundset'\subset\groundset$ satisfying \Cref{def:DCS}(\ref{property:DCSone}), $\card{\groundset'}\leq \beta\cdot \opt$.
\item Let $\varepsilon>0$, $\beta, \beta^-$ be integers such that $\beta\geq \beta^-+7$ and $(\beta^--4)\cdot (1+\varepsilon)\geq \beta$. Any $(\beta,\beta^-)$-DCS contains a $(\frac{2}{3}-\varepsilon)$-approximation to the maximum cardinality independent set. 
    \end{enumerate}
\end{lemma}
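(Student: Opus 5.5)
Both parts are inherited from \cite{HuangS24}, and the plan is to re-derive them directly in the notation above.

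\textbf{Part 1 (sparseness).} The route is through the dual LP \eqref{eqn:dualunweighted}. Let $\groundset'$ satisfy \Cref{def:DCS}(\ref{property:DCSone}). Take the density-based decompositions $U^1_1,U^1_2,\dots$ of $\matroid_1\mid\groundset'$ and $U^2_1,\dots$ of $\matroid_2\mid\groundset'$. Define a fractional solution by placing weight $1/\beta$ on each element of $\groundset'$. For a set $A$ that is a union of a prefix of the decomposition, the density structure gives $\card{A\cap\groundset'}$ in terms of $\sum_j \rho_j(\rk(\cup_{i\le j}U_i)-\rk(\cup_{i<j}U_i))$. This suggests the dual certificate
\[
y\bigl(\textstyle\bigcup_{i\le j}U^1_i\bigr)=\rho^1_j-\rho^1_{j+1},
\]
which is nonnegative because densities decrease along the decomposition, and the analogous $z$ for $\matroid_2$. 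With these choices, every $e\in\groundset'$ receives coverage exactly $\tilde\rho_{\matroid_1}(e)+\tilde\rho_{\matroid_2}(e)\le\beta$ (telescoping). The equivalent primal statement is that $x\equiv1/\beta$ on $\groundset'$ is feasible for \eqref{eqn:primalunweighted}. This holds because for every $S\subseteq\groundset'$ we have $\card{S}\le \max_{e}\tilde\rho_{\matroid_1}(e)\cdot \rk_1(S)$, and the maximum density of any subset is at most the top density. Here is the subtle point: \Cref{def:DCS}(\ref{property:DCSone}) bounds the \emph{sum} of the two densities, so each single density is at most $\beta$, and therefore $x(S)=\card{S}/\beta\le\rk_i(S)$ for $i=1,2$. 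Then $\card{\groundset'}/\beta=x(\groundset')\le\opt$, since the LP optimum of matroid intersection is integral and equals $\opt$ (Edmonds; \Cref{lem:chaineddualsexistence}). The key step to verify is that the densest-subset decomposition bounds $\card{S}\le \rho_{\max}\rk(S)$ for every $S\subseteq\groundset'$, which follows from the definition of $U_1$ as densest.

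\textbf{Part 2 (approximation).} The plan follows \cite{HuangS24}. First, take the maximum common independent set $T$ inside $\groundset'$ and an optimum $O$. Second, write the min-max (Edmonds) certificate for $\groundset'$, namely $\opt(\groundset')=\rk_1(A)+\rk_2(\groundset'\setminus A)$. Third, show that the elements of $O\setminus\groundset'$ must be mostly spanned cheaply. Each such $e$ has $\tilde\rho_{\matroid_1}(e)+\tilde\rho_{\matroid_2}(e)\ge\beta^-$, so it lies in a high-density level of one of the matroids. Counting via the fractional solution $x(e)=\frac{1}{\beta}$-style weights together with property (\ref{property:DCStwo}) gives
\[
\card{O}\le \opt(\groundset')\cdot\tfrac{3}{2}\cdot\tfrac{\beta}{\beta^--4}.
\]
The constant $4$ absorbs rounding between levels, and the $3/2$ comes from a length-$3$ augmenting-path argument: each element of $T$ can be blamed by at most two elements of $O$ via circuits in the two matroids. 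The main obstacle is this charging step, in particular setting up the exchange (circuit) argument that each element of $O\setminus\groundset'$ either is spanned in a low-density level, which forces a bound through $\beta^-$, or can be matched to $T$. I would import Lemma~4 of \cite{HuangS24} essentially verbatim. Finally, the hypothesis $(\beta^--4)(1+\varepsilon)\ge\beta$ converts the bound to $\card{T}\ge(\frac23-\varepsilon)\opt$.
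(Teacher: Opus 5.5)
\paragraph{Part 1.} Your Part 1 is correct and self-contained, which is more than the paper provides: the paper states both items without proof and imports them from \cite{HuangS24}. You isolate the right point. Every $e\in U^i_1$ has $\tilde\rho_{\matroid_i}(e)=\rho(U^i_1)$, so property~1 caps the top density of $\matroid_i\mid\groundset'$ at $\beta$. Hence $\card{S}\le\beta\,\rk_i(S)$ for all $S\subseteq\groundset'$, and $x(S)=x(S\cap\groundset')\le\rk_i(S)$ for arbitrary $S$ by monotonicity. Integrality of the intersection polytope then gives $\card{\groundset'}\le\beta\cdot\opt$. The detour through $y(\bigcup_{i\le j}U^1_i)=\rho^1_j-\rho^1_{j+1}$ plays no role and should be dropped: an upper bound of $\beta$ on coverage is not a dual-feasibility statement, and it is not ``equivalent'' to primal feasibility of $x\equiv 1/\beta$. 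Alternatively, take Edmonds' min--max partition $(A,\groundset'\setminus A)$ and bound $\card{A}\le\beta\rk_1(A)$ and $\card{\groundset'\setminus A}\le\beta\rk_2(\groundset'\setminus A)$ directly.

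\paragraph{Part 2.} Your sketch does not establish Part 2. Deferring to \cite{HuangS24} is exactly what the paper does, so as a citation it is acceptable. However, the mechanism you describe would not give $\frac23$.
\begin{itemize}
\item If every element of $T$ is blamed by at most two elements of $O$, you get at best $\card{O}\le 2\card{T}$, which is a $\frac12$-type bound.
\item No augmenting-path argument is available. $T$ is maximum only inside $\groundset'$, so it can have length-$3$ augmenting paths through elements of $\groundset\setminus\groundset'$. The two DCS conditions must therefore be used quantitatively.
\end{itemize}
Concretely, the min--max partition shows that at most $\rk_1(A)+\rk_2(\groundset'\setminus A)=\card{T}$ elements of $O$ lie in $\text{span}_1(A)\cup\text{span}_2(\groundset'\setminus A)\supseteq\groundset'$. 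So at least $\card{O}-\card{T}$ elements $e\in O$ lie outside $\groundset'$ and satisfy $\tilde\rho_{\matroid_1}(e)+\tilde\rho_{\matroid_2}(e)\ge\beta^-$.

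Turning this into (roughly) $\card{T}\ge 2(\card{O}-\card{T})$ is the missing idea. It requires double counting associated densities across the levels of the two decompositions that these elements hit. That count is then played against the bound $\tilde\rho_{\matroid_1}+\tilde\rho_{\matroid_2}\le\beta$ inside $\groundset'$ via a convexity (Cauchy--Schwarz) step, in the style of the EDCS analysis of \cite{AssadiB18}. Neither ``$x\equiv 1/\beta$-style weights plus property~2'' nor ``the constant $4$ absorbs rounding'' supplies it.
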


The main structural tool the algorithm uses is as follows.

\begin{lemma}\label{lem:boundeddensityandunderfull}
    We say that $\groundset'\subseteq \groundset$ has bounded density iff $\tilde{\rho}_{\matroid_1}(e)+\tilde{\rho}_{\matroid_2}(e)\leq \beta$. Suppose $\groundset'$ is such a set. Suppose $X\subseteq \groundset\setminus \groundset'$ consist of all elements $e\in \groundset\setminus \groundset'$ that have $\tilde{\rho}_{\matroid_1}(e)+\tilde{\rho}_{\matroid_2}(e)<\beta$ (referred to as $(\groundset',\beta^-,\beta)$-underfull elements), then $\matroid_1\mid \groundset'\cup X$ and $\matroid_2\mid \groundset'\cup X$ contain a $(\frac{2}{3}-\varepsilon)$-approximation to $\opt$. 
\end{lemma}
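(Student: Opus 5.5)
The plan is to reduce the claim to the DCS approximation guarantee (the second item of the Properties of DCS lemma) by showing that $\groundset'\cup X$, viewed inside itself, contains a set with the two DCS properties for suitable parameters. The natural candidate is $D:=\groundset'\cup X$ itself, or more precisely $\groundset'$ together with the underfull elements. We then argue that, relative to the matroids restricted to $D$, every element outside $\groundset'$ looks ``full enough''.

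First, I would fix the density-based decompositions of $\matroid_1\mid\groundset'$ and $\matroid_2\mid\groundset'$; these define $\tilde\rho_{\matroid_1},\tilde\rho_{\matroid_2}$ for every element of $\groundset$. By hypothesis $\groundset'$ has bounded density, so property~(\ref{property:DCSone}) of \Cref{def:DCS} holds for all $e\in\groundset'$. By definition of $X$, every $e\in\groundset\setminus(\groundset'\cup X)$ satisfies $\tilde\rho_{\matroid_1}(e)+\tilde\rho_{\matroid_2}(e)\geq\beta^-$. This is property~(\ref{property:DCStwo}) measured with respect to $\groundset'$, reading ``underfull'' with threshold $\beta^-$ as in the parenthetical naming.

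Second, I would invoke the analysis behind the DCS approximation lemma of \cite{HuangS24}. That argument only uses two facts: bounded density on the retained set, and a lower bound on the associated densities of the discarded elements, both computed with respect to the decomposition of the retained set. It builds a fractional common independent solution on the retained set (weights roughly $1/\beta$ per element, scaled) plus an exchange/duality argument against $\opt$. To handle the extra elements of $X$, I would run the same argument on the set $\groundset'$ with $\opt$ replaced by the optimum of $\matroid_1\mid\groundset'\cup X,\matroid_2\mid\groundset'\cup X$. Alternatively, and more directly, I would note that any optimal solution $O$ of the full instance splits as $O\cap(\groundset'\cup X)$ and $O\setminus(\groundset'\cup X)$, where the latter consists of elements with associated density sum at least $\beta^-$. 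The \cite{HuangS24} charging then shows $\groundset'$ recovers $(\frac23-\varepsilon)$ of what those elements contribute, while elements of $O\cap X$ are kept verbatim. Combining the two parts yields a common independent set in $\groundset'\cup X$ of size at least $(\frac23-\varepsilon)\opt$, using $\beta\geq\beta^-+7$ and $(\beta^--4)(1+\varepsilon)\geq\beta$.

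The main obstacle is the combination step. The \cite{HuangS24} guarantee is stated for a full DCS, not for a retained set augmented by arbitrary low-density elements. I need the charging/exchange argument to be monotone: adding the elements of $X$ to the available pool can only help, and elements of $O$ landing in $X$ must not be double-counted against the fractional mass on $\groundset'$. I would first try to verify this by re-reading their proof as a fractional-matroid-polytope argument, where enlarging the ground set preserves feasibility of the constructed point. If that fails, I would fall back to treating $X$ as part of a modified retained set $\groundset'\cup X$ and checking that its associated densities stay below $\beta$ up to the $+7$ slack.
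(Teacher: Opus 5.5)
The paper states this lemma without proof (it is imported from \cite{HuangS24}), so I am judging your plan on its own. Your first step is correct, but the combination step, which has to carry the argument, does not work.

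\textbf{Why the combination fails.} The DCS approximation guarantee is not additive over pieces of an optimum $O$. It gives a common independent set inside the retained set only when \emph{every} element outside that set has density sum $\tilde\rho_{\matroid_1}(e)+\tilde\rho_{\matroid_2}(e)\ge\beta^-$. The elements of $X$ violate exactly this condition.

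Suppose you did have a common independent $J\subseteq\groundset'$ with $\card{J}\ge(\frac23-\varepsilon)\card{O\setminus(\groundset'\cup X)}$. Nothing makes $J\cup(O\cap X)$ independent in either matroid. Already in bipartite matching, $J$ can use endpoints of edges of $O\cap X$.

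Replacing $\opt$ by the optimum on $\groundset'\cup X$ is circular. Comparing that optimum with $\opt$ is the claim itself.

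\textbf{Why the fallbacks fail.} Taking all of $\groundset'\cup X$ as the retained set fails because $X$ is uncontrolled. It contains every underfull element, e.g.\ arbitrarily many parallel copies of one element, so densities relative to $\groundset'\cup X$ can exceed $\beta$ by any amount.

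The matching trick of adding only $O\cap X$ does not transfer directly either. An independent set can raise associated densities of elements of $\groundset'$ by an amount comparable to $\beta$. For example, in a linear matroid let $\groundset'$ consist of more than $t+1$ copies of each of $v_1,\dots,v_s$ plus one vector $f$ outside their span. Adding the independent set $\{f+v_1,\dots,f+v_t\}$ makes $f$ parallel to $t$ new elements after contracting $\mathrm{span}(v_1,\dots,v_s)$. So $\tilde\rho_{\matroid_1}(f)$ jumps from $1$ to $t+1$.

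\textbf{The missing observation.} Your first step already shows that $\groundset'$ is a $(\beta,\beta^-)$-DCS of the restricted instance $\matroid_1\mid(\groundset\setminus X)$, $\matroid_2\mid(\groundset\setminus X)$.
\begin{itemize}
\item Associated densities depend only on $\matroid_i\mid\groundset'$ and on spans in $\matroid_i$, so deleting $X$ leaves them unchanged.
\item Elements of $\groundset'$ satisfy the upper condition by hypothesis.
\item Every element of $(\groundset\setminus X)\setminus\groundset'$ is not underfull, so it satisfies the lower condition.
\end{itemize}
Now apply the robustness property of a DCS (the third item of the \cite{HuangS24} lemma quoted in the paper) with $\groundset_A=\groundset\setminus X$ and $\groundset_B=X$. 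It says directly that $\matroid_1\mid(\groundset'\cup X)$ and $\matroid_2\mid(\groundset'\cup X)$ contain a common independent set of size at least $(\frac23-\varepsilon)\opt$, with $\opt$ taken over the full instance. What you were trying to redo by hand, namely handling optimal elements that lie outside the sparsified part, is exactly what the robustness theorem provides.
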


We now present the modified version of Algorithm 2 in \cite{HuangS24}.

    \begin{algorithm}
	\caption{Algorithm for computing an intersection of two matroids in a $b$-batched model}\label{algo:intersection-streaming}
	\begin{algorithmic}[1]
	\State $\groundset' \gets \emptyset$
	\State $\forall\,0 \leq i \leq \log_2 k,\, \alpha_i \gets \left\lfloor\frac{\varepsilon \cdot n}{\log_2(k) \cdot (2^{i+2}\beta^2 + 1)}\right\rfloor$
    \Procedure{Phase 1}{}
	\For{$i = 0 \dots \log_2 k$}
	    \State $\textsc{ProcessStopped} \gets \textsc{False}$
	    \For{$2^{i+2}\beta^2 + 1$ iterations}
	        \State $\textsc{FoundUnderfull} \gets \textsc{False}$
	        \For{$\frac{\alpha_i}{b}$ batches}
	            \State let $e$ be the next element in the stream
	            \If{$\tilde{\rho}_{\mathcal{M}_1}(v) + \tilde{\rho}_{\mathcal{M}_2}(v) < \beta^-$}
	                \State add $e$ to $\groundset'$
	                \State $\textsc{FoundUnderfull} \gets \textsc{True}$
                    \While{there exists $e' \in \groundset' : \tilde{\rho}_{\mathcal{M}_1}(e') + \tilde{\rho}_{\mathcal{M}_2}(e') > \beta$}
                        \State remove $e'$ from $\groundset'$
                    \EndWhile
	            \EndIf
	        \EndFor
	        \If{$\textsc{FoundUnderfull} = \textsc{False}$}
	           \State Go to Phase 2
	        \EndIf
	    \EndFor
%	    \If{$\textsc{ProcessStopped} = \textsc{True}$}
%	        \State \textbf{break} from the loop
%	    \EndIf
	\EndFor
\EndProcedure
\Procedure{Phase 2}{}
	\State $X \gets \emptyset$
	\For{each $e$ remaining element in the stream}
	    \If{$\tilde{\rho}_{\mathcal{M}_1}(v) + \tilde{\rho}_{\mathcal{M}_1}(v) < \beta^-$}
	        \State add $e$ to $X$
	    \EndIf
	\EndFor
	\State \Return the maximum common independent set in $\groundset' \cup X$
\EndProcedure
	\end{algorithmic}
	\end{algorithm}

We now show the modified version of intermediate lemmas of \cite{HuangS24}. Since the proofs follow in a relatively straightforward manner, we omit them.

\begin{lemma}[Modified version of \protect{\cite[Claim 5.1]{HuangS24}}]
    Let $E^{late}$ denote the elements that appear in Phase 2. Then, $\matroid_1\mid E^{late}$ and $\matroid_2\mid E^{late}$ contain a $(1-\varepsilon)$-approximation to the maximum cardinality independent set with probability at least $1-\frac{1}{2^{\varepsilon^2\opt}}$.
\end{lemma}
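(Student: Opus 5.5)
The plan is to follow the proof of Claim 5.1 in \cite{HuangS24} and check that it survives batching. Let $I^*$ be a fixed maximum cardinality common independent set of $\matroid_1,\matroid_2$, so $|I^*|=\opt$. In the $b$-batch model the randomness is over the order of the $q$ batches, so all probabilistic statements will be made at the level of batches.

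\textbf{Step 1 (length of Phase 1).} I will bound the number of elements read during Phase 1. By the choice of $\alpha_i$, the total is at most
\begin{align*}
\sum_{i=0}^{\log_2 k}(2^{i+2}\beta^2+1)\cdot \alpha_i \leq \varepsilon n.
\end{align*}
Because every batch has size at most $b$, reading $\alpha_i/b$ batches consumes at most $\alpha_i$ elements. Phase 1 therefore consumes some prefix of the batch order covering at most $\varepsilon n$ elements. I will pay attention to one subtlety. The moment Phase 2 begins is random, since it depends on when an iteration finds no underfull element. This stopping time is a stopping time with respect to the batch arrival order, but in every case it is deterministically capped by the prefix just described.

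\textbf{Step 2 (most of $I^*$ is late).} Consider the first $m$ batches in the random order, where $m$ is the largest number of batches that could possibly be read in Phase 1. Every element of $I^*$ that lies in one of these batches is marked \emph{early}. Phase 1 is contained in this deterministic prefix, so the elements of $I^*$ in $E^{late}$ include every element of $I^*$ that is not early. Hence it suffices to show that at most $2\varepsilon\opt$ elements of $I^*$ are early, with the stated probability. After that, downward closure shows that $I^*\cap E^{late}$ is a common independent set of $\matroid_1|E^{late}$ and $\matroid_2|E^{late}$. Its size is at least $(1-2\varepsilon)\opt$, and rescaling $\varepsilon$ gives the claim.

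\textbf{Step 3 (concentration — the main obstacle).} Each batch is placed in the prefix with probability about $m/q$. A batch may hold several elements of $I^*$, so the indicators for individual elements of $I^*$ are no longer ``one per random slot''. The count of early elements is a sum over batches of $|I^*\cap \groundset_j|\cdot\mathbb{1}[\text{batch } j \text{ early}]$, where the batch indicators come from sampling without replacement and each coefficient is at most $b$. Since the batches make up all $n$ elements and the prefix holds at most $\varepsilon n$ of them, the expected count should be about $\varepsilon\opt$. For the expectation I will first try Markov-type reasoning over element positions: each element's batch index is uniform, so an element falls in a prefix of $m$ batches with probability $m/q$. I then need to bound $m/q$ by roughly $\varepsilon$. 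This requires the adversary's batches to be comparable in size, or alternatively a modified prefix definition that counts elements. I expect to handle this by counting elements, not batches, when defining $\alpha_i$. For the tail I will use a Chernoff--Hoeffding bound for sampling without replacement, which is negatively associated, with variables bounded by $b$. This gives failure probability $\exp(-\Omega(\varepsilon^2\opt/b))$. Because $b$ is treated as a constant (in our application $b=\gamma_\varepsilon$), I will absorb it, or alternatively increase the space by a factor of $b$ as \Cref{lem:algbbatch} already allows. This yields the bound $1-2^{-\varepsilon^2\opt}$ up to constants in the exponent.
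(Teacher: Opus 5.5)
Your route is the same as the paper's sketch, and Steps~1--2 are fine: fix $I^*$, observe that Phase~1 sits inside a deterministic prefix of roughly $\varepsilon n/b$ batches, use that each batch carries at most $b$ elements of $I^*$, then bound and concentrate the number of early elements.

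The genuine gap is the last move of Step~3. Let $X=\sum_j|I^*\cap\groundset_j|\,\mathbf{1}[\text{batch } j \text{ early}]$. What your concentration argument actually gives is failure probability $\exp(-\Omega(\varepsilon\opt/b))$. This comes from the multiplicative Chernoff bound applied to $X/b$, a sum of negatively associated $[0,1]$-valued terms; additive Hoeffding over the $m$ sampled batches only gives $\exp(-2\varepsilon^2\opt^2/(mb^2))$. The factor $1/b$ cannot be ``absorbed.'' First, $b$ is not an absolute constant: in the application $b=\gamma_\varepsilon\gg\varepsilon^{-2}$. Second, extra space has no influence on which batches land in Phase~1.

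In fact no argument can remove the $1/b$, because the stated bound $1-2^{-\varepsilon^2\opt}$ is false in the $b$-batch model. Take $G=\{g_1,\dots,g_b\}$ and a large set $L=L_1\cup\dots\cup L_b$. Let $\matroid_1$ be the partition matroid with blocks $\{g_t\}\cup L_t$, and $\matroid_2$ the partition matroid with blocks $\{g_1\},\dots,\{g_b\},L$, all of capacity $1$. Then $\opt=\min\{r_1,r_2\}=b$, and every common independent set disjoint from $G$ has size at most $1$. Put $G$ into one batch and cut $L$ into batches of size $b$. The first iteration of Phase~1 always reads the first $\alpha_0/b$ batches in full before any test. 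So with probability $\alpha_0/n\approx\varepsilon/\bigl((4\beta^2+1)\log_2 k\bigr)$ (here $k=\min\{r_1,r_2\}=b$), the batch $G$ is consumed in Phase~1 and $E^{late}$ contains no $(1-\varepsilon)$-approximation. This probability is only polynomially small in $\varepsilon$ and $1/\log b$. Once $b$ is large compared with $\varepsilon^{-2}\log(1/\varepsilon)$ (e.g.\ $b=\gamma_\varepsilon$), it far exceeds $2^{-\varepsilon^2\opt}=2^{-\varepsilon^2 b}$.

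So the correct conclusion of your argument is success probability $1-\exp(-\Omega(\varepsilon\opt/b))$. The paper's sketch hides exactly this point behind ``a standard application of concentration bounds.'' You should state the $b$-dependent bound rather than claim the $b$-free one.

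A smaller point: your worry that bounding $m/q$ needs comparably sized batches is unfounded. You should also not redefine $\alpha_i$ to count elements, since that changes the algorithm and detaches the lemma from the other properties imported from \cite{HuangS24}. Phase~1 reads at most $m\le\sum_{i=0}^{\log_2 k}(2^{i+2}\beta^2+1)\alpha_i/b\le(1+1/\log_2 k)\,\varepsilon n/b$ batches. Meanwhile $q\ge n/b$, because every batch has at most $b$ elements. Hence $m/q\le(1+1/\log_2 k)\,\varepsilon\le 2\varepsilon$ for every adversarial batching. By linearity, the expected number of early elements of $I^*$ is at most $2\varepsilon\opt$. The same computation shows that your Step~1 total is $(1+1/\log_2 k)\,\varepsilon n$, not $\varepsilon n$, since the sum has $\log_2 k+1$ terms; this only affects constants.
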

\begin{proof}[Proof Sketch]
    In \Cref{algo:intersection-streaming}, Phase 1 of an algorithm lasts for the first $\frac{\varepsilon n}{b}$ randomly chosen batches. Consider any arbitrary maximum cardinality common independent set $I$, each batch can contain at most $b$ elements from $I$. In expectation at most $\frac{\varepsilon|I|}{b}$ of these batches arrive in Phase 1. By a standard application of concentration bounds, one can argue this is also true with high probability.
\end{proof}

We additionally have the following properties of \Cref{algo:intersection-streaming} which follow from the analogous statement in \cite[Theorem 5.1]{HuangS24}.

\begin{property2}[Properties of \Cref{algo:intersection-streaming}]\label{lem:propalgorithm}
    The following properties hold for \Cref{algo:intersection-streaming}.
    \begin{enumerate}
        \item Phase 1 terminates within first $\frac{\varepsilon\cdot n}{b}$ batches of the stream.
        \item The set $\groundset'$ has bounded density.
        \item The total number of $(\groundset',\beta,\beta^-)$-underfull elements in $\groundset^{late}$ is at most $O(b\cdot \opt\cdot \log n\cdot\log k\cdot \beta^2\cdot\varepsilon^{-1})$.
    \end{enumerate}
\end{property2}

We will now prove \Cref{lem:algbbatch}.

\begin{proof}[Proof of \Cref{lem:algbbatch}]
    From \Cref{lem:propalgorithm} we can conclude that $\groundset'$ has bounded density. Additionally, let $X$ denote the number of $(\groundset',\beta,\beta^-)$-underfull elements in $\groundset^{late}$ is at most $O(b\cdot\opt\cdot\log n\cdot \log k\cdot\beta^2\cdot \varepsilon^{-1})$. By \Cref{lem:boundeddensityandunderfull} we can conclude that the maximum common independent set of $\matroid_1\mid \groundset'\cup X$ and $\matroid_2\mid\groundset'\cup X$ contains a $(\frac{2}{3}-\varepsilon)$-approximation to the maximum cardinality independent set of $\matroid_1,\matroid_2$. By \Cref{lem:propalgorithm} we can conclude that the number of elements stored in the memory is at most $O(b\cdot \opt\cdot \log n\cdot\log k\cdot \beta^2\cdot\varepsilon^{-1})$.
\end{proof}

Finally, we conclude this section by proving our main result.

\begin{proof}[Proof of \Cref{lem:randomlyorderedweighted}]
This follows immediately from \Cref{lem:algbbatch} and \Cref{lem:bbatchtoweighted}.  
\end{proof}

\subsection{Applications to Bipartite Weighted $b$-Matching}
\label{sec:bmatching}

In addition to these settings, specific instances of matroid intersection, namely, bipartite $b$-matching have also been studied in models such as MPC, parallel shared-memory work-depth model, distributed blackboard model, and streaming model. Our structural theorems (namely, aspect ratio reduction, matroid unfolding, refolding) also apply to $b$-matching. Using these techniques, we obtain some new results for bipartite weighted $b$-matching in these models, or match the state-of-the-art. We briefly define the $b$-matching problem, each of the above models, and summarize our results. 

\begin{definition}[$b$-matching]
Let $G=(V,E)$ be a graph, $b\in \mathbb{N}^{V}_{\geq 1}$ a vector and $M$ a set of edges. We say that $M$ is a $b$-matching if for each $v\in V$ there are at most $b_v$ edges in $M$ incident to $v$.
\end{definition}

\subsubsection{Massively Parallel Computation (MPC) Model.}
When considering a graph problem $G=(V,E)$ in the MPC model \cite{BeameKS17,GoodrichSZ11,KarloffSV10}, we consider $M$ machines, and each machine knows only a part of the graph (that is, only some of the vertices and edges). A parameter of concern in this model is $S$, the local memory per machine. Since the combined memory of the machines should be able to hold the graph, it must be that $M\cdot S\geq |V|+|E|$ and it is common that one assumes $M\cdot S=\tilde{O}(|V|+|E|)$. We refer to the combined memory as the global memory. In this model, initially the graph is partitioned among the machines. The communication graph between the machines is a complete graph, and the communication proceeds in synchronous rounds. In a round, each machine performs some (polynomial-time) computation on its memory contents. After this, the machines send messages to each other. The only restriction is that the total amount of information it sends or receives cannot be more than its memory. In this paper, we consider the {\bf sublinear} memory MPC: In this case, $S=n^{1-c}$ for some positive constant $c>0$.

\subsubsection{Parallel shared-memory work-depth model.}

This is a parallel model \cite{JaJa1992} where different machines/processors can read and write from the same shared-memory and process instructions in parallel. The computational measures of interest are {\bf work}, which is the total computation performed and {\bf depth}, which is the longest sequence of dependent instructions performed by the algorithm. 

\subsubsection{Distributed Blackboard Model.} According to \cite{DobzinskiNO19}, this model features $n$ players, each representing a vertex in a bipartite graph. These players interact through a multi-round communication protocol with a central coordinator that maintains a shared ``blackboard''. In every round, each player sends a message to the coordinator using their own private information and the contents of the blackboard. The coordinator then sends a message (not necessarily identical) to each player. The process ends when the coordinator, based on the information accumulated on the blackboard, decides to terminate and produces the final $b$-matching. The model efficiency is evaluated by the total number of communication rounds and the bit-length of the messages sent each round.

\subsubsection{Our Results.} In this section we briefly summarize our results for $b$-matching in the above mentioned models.
\begin{enumerate}
\item In the work-depth model, we are able to extend the result of \cite{LiuKK23} for unweighted $b$-matching to the weighted setting. In particular we obtain an algorithm with total work $O(\gamma_{\varepsilon}\cdot m\cdot \log n)$ and depth $O(\frac{\log^3n\log W}{\varepsilon^2})$ that computes a $(1-\varepsilon)$-approximate maximum weight bipartite b-matching. To the best of our knowledge, this is the first result for the weighted problem in this model. 
\item In the streaming model, we are able to extend the result of \cite{LiuKK23} for unweighted bipartite $(1-\varepsilon)$-approximate  $b$-matching to the weighted setting. In particular, we obtain a $O(\frac{1}{\varepsilon^2})$-pass, $O(\gamma_{\varepsilon}\cdot (\sum_{i\in L}b_i+|R|))$ space. Prior to this work, the algorithms of \cite{AhnG18,Quanrud24} took at least $\log n$ passes.  
\item In the distributed blackboard model, we are able to extend the result of \cite{LiuKK23} for unweighted $b$-matching to the weighted setting. In particular, we obtain a protocol which has total bit complexity of $O(n\max_{i\in L}b_i\cdot \gamma_{\varepsilon}\cdot \log W)$. 
\item In the MPC model, for the case of the sublinear memory regime, \cite{LackiMRS25} studied the allocation problem (where only one side of the graph has capacities greater than 1). Here, for the unweighted version, they obtained an $O_{\varepsilon}(\sqrt{\log \lambda})$ round algorithm for $(1-\varepsilon)$-approximation, using $O(\lambda n)$ total memory and sublinear local memory (here, $\lambda$ is the arboricity). We extend this to the case of weighted allocation, with the algorithm taking $O_{\varepsilon}(\sqrt{\log \lambda})$ rounds, and at a $\gamma_{\varepsilon}$ factor loss in the local and total memory\footnote{It is easy to see that the arboricity of the unfolded graph only increases by a factor of $\gamma_{\varepsilon}$.} but only being able to compute the weight of a $(1-\epsilon)$-approximate allocation.
\end{enumerate}

\begin{remark}The reason our algorithms are not able to go beyond the value version in the sublinear regime MPC model is because in this model, we don’t know how to compute a $(1-\varepsilon)$-approximate weighted $b$-matching in sparse graphs.
\end{remark}

\section{Future Work}
\label{sec:futurework}
We obtained a reduction which converts an $\alpha$-approximate unweighted matroid intersection algorithm into a weighted one with an approximation ratio of $\alpha(1-\varepsilon)$. The reduction is versatile but incurs a loss of a factor of $\varepsilon^{-O(\varepsilon^{-1})}$ in various parameters pertaining to the models of interest. An additional feature of our reduction is its \emph{non-adaptivity}, which means that in each weight class, one can pick an independent set arbitrarily and independently of other weight classes.\footnote{We recall that in our reduction, we split up the elements of the matroid into different weight classes. In each weight class, we can run the unfolding and unweighted algorithms in parallel. Therefore, we refer to our reduction as a \emph{non-adaptive} reduction from weighted to unweighted matroid intersection.} On one hand, non-adaptivity implies that in the streaming and one-way communication complexity settings we obtain a lossless (in terms of the number of rounds/passes) reduction. On the other hand, a result of \cite{BernsteinCDLST25} (Claim C.1) suggests that even for bipartite matching, one cannot have a weighted-to-unweighted reduction that \begin{inparaenum}[(a)]
    \item is non-adaptive, \item works for arbitrary approximation ratios, and \item has $\poly(\varepsilon^{-1})$ aspect ratio. 
\end{inparaenum}
Indeed, the work of \cite{ChekuriQ16}, which gives a scaling algorithm for weighted matroid intersection by using an unweighted algorithm in each scale, to the best of our knowledge, relies on \begin{inparaenum}[(i)]
    \item the unweighted algorithms computing a maximal set of short augmenting paths in the exchange graph, \item\label{item:adaptivity} is adaptive as independent sets derived in the subsequent scales depend on previous ones, and \item is only applicable in $(1-\varepsilon)$-approximation ratio regime. 
\end{inparaenum} The last two properties affect the portability of this technique to one-pass streaming and one-way communication complexity settings. This is because (\ref{item:adaptivity}) leads to an increase in the number of passes (\cite{LiuKK23,HuangS22}). Moreover, there are results which show that it is impossible to achieve an approximation ratio of $(1-\varepsilon)$ in these models \cite{Kapralov21,GoelKK12}. This leaves us with the following open question: 
\begin{mdframed}[backgroundcolor=lightgray!40,topline=false,rightline=false,leftline=false,bottomline=false,innertopmargin=2pt]
Building on the techniques of \cite{BernsteinCDLST25}, can we obtain a non-adaptive weighted-to-unweighted reduction for matroid intersection, with an aspect ratio of $\poly(\varepsilon^{-1})$, in the $(1-\varepsilon)$-approximate regime?
\end{mdframed} 

\section*{Acknowledgements}
We are grateful to Sebastian Forster for his valuable feedback and comments during this project. We thank the anonymous reviewers for their helpful feedback.

\bibliographystyle{splncs04}
\bibliography{references}

@inproceedings{ChakrabartyLSSW19,
  title={Faster matroid intersection},
  author={Chakrabarty, Deeparnab and Lee, Yin Tat and Sidford, Aaron and Singla, Sahil and Wong, Sam Chiu-wai},
  booktitle={2019 IEEE 60th Annual Symposium on Foundations of Computer Science (FOCS)},
  pages={1146--1168},
  year={2019},
  organization={IEEE}
}

@inproceedings{GargJS21,
  author       = {Paritosh Garg and
                  Linus Jordan and
                  Ola Svensson},
  title        = {Semi-streaming Algorithms for Submodular Matroid Intersection},
  booktitle    = {Integer Programming and Combinatorial Optimization - 22nd International
                  Conference ({IPCO})},
                  year = {2021}
}

@article{Terao24,
  author       = {Tatsuya Terao},
  title        = {Deterministic (2/3-{$\epsilon$})-Approximation of Matroid Intersection
                  using Nearly-Linear Independence-Oracle Queries},
  journal      = {CoRR},
  volume       = {abs/2410.18820},
  year         = {2024}
}

@inproceedings{BlikstadT25,
  author       = {Joakim Blikstad and
                  Ta{-}Wei Tu},
  title        = {Efficient Matroid Intersection via a Batch-Update Auction Algorithm},
  booktitle    = {2025 Symposium on Simplicity in Algorithms ({SOSA})},
  pages        = {226--237},
  year = {2025}}

@inproceedings{HuangS24,
  author       = {Chien{-}Chung Huang and
                  Fran{\c{c}}ois Sellier},
  title        = {Robust Sparsification for Matroid Intersection with Applications},
  booktitle    = {Proceedings of the 2024 {ACM-SIAM} Symposium on Discrete Algorithms},
  pages        = {2916--2940},
  year = {2024}
}

@inproceedings{HashemiW24,
  author       = {Diba Hashemi and
                  Weronika Wrzos{-}Kaminska},
  title        = {Weighted Matching in the Random-Order Streaming and Robust Communication
                  Models},
  booktitle    = {Approximation, Randomization, and Combinatorial Optimization. Algorithms
                  and Techniques ({APPROX/RANDOM})},
  pages        = {16:1--16:26},
  year         = {2024}
}

@inproceedings{GuptaP13,
  author       = {Manoj Gupta and
                  Richard Peng},
  title        = {Fully Dynamic {(1+} e)-Approximate Matchings},
  booktitle    = {54th Annual {IEEE} Symposium on Foundations of Computer Science ({FOCS})},
  pages        = {548--557},
  year         = {2013}
}

@inproceedings{BernsteinCDLST25,
  author       = {Aaron Bernstein and
                  Jiale Chen and
                  Aditi Dudeja and
                  Zachary Langley and
                  Aaron Sidford and
                  Ta{-}Wei Tu},
  title        = {Matching Composition and Efficient Weight Reduction in Dynamic Matching},
  booktitle    = {Proceedings of the 2025 Annual {ACM-SIAM} Symposium on Discrete Algorithms ({SODA})},
  pages        = {2991--3028},
  year         = {2025}
}

@inproceedings{HuangKK16,
  author       = {Chien{-}Chung Huang and
                  Naonori Kakimura and
                  Naoyuki Kamiyama},
  title        = {Exact and Approximation Algorithms for Weighted Matroid Intersection},
  booktitle    = {Proceedings of the Twenty-Seventh Annual {ACM-SIAM} Symposium on Discrete
                  Algorithms ({SODA})},
  pages        = {430--444},
  year         = {2016}
}

@inproceedings{Edmonds03,
  title={Submodular functions, matroids, and certain polyhedra},
  author={Edmonds, Jack},
  booktitle={Combinatorial Optimization—Eureka, You Shrink! Papers Dedicated to Jack Edmonds 5th International Workshop Aussois, France, March 5--9, 2001},
  pages={11--26},
  year={2003}}

@inproceedings{Kapralov21,
  author       = {Michael Kapralov},
  title        = {Space Lower Bounds for Approximating Maximum Matching in the Edge
                  Arrival Model},
  booktitle    = {Proceedings of the 2021 {ACM-SIAM} Symposium on Discrete Algorithms ({SODA})},
  pages        = {1874--1893},
  year         = {2021}
}

@inproceedings{BlikstadBMN21,
  title={Breaking the quadratic barrier for matroid intersection},
  author={Blikstad, Joakim and Van Den Brand, Jan and Mukhopadhyay, Sagnik and Nanongkai, Danupon},
  booktitle={Proceedings of the 53rd Annual ACM SIGACT Symposium on Theory of Computing (STOC)},
  pages={421--432},
  year={2021}
}

@article{Lawler75,
  title={Matroid intersection algorithms},
  author={Lawler, Eugene L},
  journal={Mathematical programming},
  volume={9},
  number={1},
  pages={31--56},
  year={1975}
}

@article{Nguyen19,
  title={A note on Cunningham's algorithm for matroid intersection},
  author={Nguyen, Huy L},
  journal={arXiv preprint arXiv:1904.04129},
  year={2019}
}

@article{AignerD71,
  title={Matching theory for combinatorial geometries},
  author={Aigner, Martin and Dowling, Thomas A},
  journal={Transactions of the American Mathematical Society},
  volume={158},
  number={1},
  pages={231--245},
  year={1971}
}

@inproceedings{BernsteinDL21,
  author       = {Aaron Bernstein and
                  Aditi Dudeja and
                  Zachary Langley},
  title        = {A framework for dynamic matching in weighted graphs},
  booktitle    = {Proceedings of the 53rd Annual {ACM} {SIGACT} Symposium on Theory of Computing (STOC)},
  pages        = {668--681},
  year         = {2021}
}

@book{Schrijver03,
  title={Combinatorial optimization: polyhedra and efficiency},
  author={Schrijver, Alexander},
  publisher={Springer},
  volume={24},
  year={2003}
}

@article{BeameKS17,
  author       = {Paul Beame and
                  Paraschos Koutris and
                  Dan Suciu},
  title        = {Communication Steps for Parallel Query Processing},
  journal      = {J. {ACM}},
  volume       = {64},
  number       = {6},
  pages        = {40:1--40:58},
  year         = {2017}
}

@book{JaJa1992,
  title={An Introduction to Parallel algorithms},
  author={J{\'a}J{\'a}, Joseph},
  year={1992},
  publisher = {Addison Wesley Longman Publishing Co., Inc.},
address = {USA}
}

@inproceedings{HuangS22,
  author       = {Chien{-}Chung Huang and
                  Fran{\c{c}}ois Sellier},
  title        = {Maximum Weight b-Matchings in Random-Order Streams},
  booktitle    = {30th Annual European Symposium on Algorithms ({ESA})},
  pages        = {68:1--68:14},
year = {2022}
}

@article{DobzinskiNO19,
  author       = {Shahar Dobzinski and
                  Noam Nisan and
                  Sigal Oren},
  title        = {Economic efficiency requires interaction},
  journal      = {Games Econ. Behav.},
  volume       = {118},
  pages        = {589--608},
  year         = {2019}
}

@article{AhnG18,
  title={Access to data and number of iterations: Dual primal algorithms for maximum matching under resource constraints},
  author={Ahn, Kook Jin and Guha, Sudipto},
  journal={ACM Transactions on Parallel Computing (TOPC)},
  volume={4},
  number={4},
  pages={1--40},
  year={2018}
}

@inproceedings{KarloffSV10,
  author       = {Howard J. Karloff and
                  Siddharth Suri and
                  Sergei Vassilvitskii},
  title        = {A Model of Computation for MapReduce},
  booktitle    = {Proceedings of the Twenty-First Annual {ACM-SIAM} Symposium on Discrete
                  Algorithms ({SODA})},
  pages        = {938--948},
  year         = {2010}
}

@inproceedings{GoodrichSZ11,
  author       = {Michael T. Goodrich and
                  Nodari Sitchinava and
                  Qin Zhang},
  title        = {Sorting, Searching, and Simulation in the MapReduce Framework},
  booktitle    = {Algorithms and Computation - 22nd International Symposium ({ISAAC})},
  pages        = {374--383},
  year = {2011}
  }

@inproceedings{LackiMRS25,
  author       = {Jakub Lacki and
                  Slobodan Mitrovic and
                  Srikkanth Ramachandran and
                  Wen{-}Horng Sheu},
  title        = {Faster {MPC} Algorithms for Approximate Allocation in Uniformly Sparse
                  Graphs},
  booktitle    = {Proceedings of the 37th {ACM} Symposium on Parallelism in Algorithms
                  and Architectures (SPAA)},
  pages        = {339--349},
  year         = {2025}
}

@article{AssadiB21b,
  title={On the robust communication complexity of bipartite matching},
  author={Assadi, Sepehr and Behnezhad, Soheil},
  journal={Approximation, Randomization, and Combinatorial Optimization. Algorithms and Techniques (APPROX/RANDOM)},
  year={2021}
}

@inproceedings{CrouchS14,
  author       = {Michael S. Crouch and
                  Daniel M. Stubbs},
  title        = {Improved Streaming Algorithms for Weighted Matching, via Unweighted
                  Matching},
  booktitle    = {Approximation, Randomization, and Combinatorial Optimization. Algorithms
                  and Techniques ({APPROX/RANDOM})},
  series       = {LIPIcs},
  volume       = {28},
  pages        = {96--104},
  year         = {2014}
}

@inproceedings{Quanrud24,
  author       = {Kent Quanrud},
  title        = {Adaptive Sparsification for Matroid Intersection},
  booktitle    = {51st International Colloquium on Automata, Languages, and Programming
                  ({ICALP})},
  pages        = {118:1--118:20},
  year         = {2024}
}

@inproceedings{ChakrabartyLS0W19,
  author       = {Deeparnab Chakrabarty and
                  Yin Tat Lee and
                  Aaron Sidford and
                  Sahil Singla and
                  Sam Chiu{-}wai Wong},
  title        = {Faster Matroid Intersection},
  booktitle    = {60th {IEEE} Annual Symposium on Foundations of Computer Science, ({FOCS})},
  pages        = {1146--1168},
  year         = {2019}
}

@inproceedings{Blikstad21,
  author       = {Joakim Blikstad},
  title        = {Breaking O(nr) for Matroid Intersection},
  booktitle    = {48th International Colloquium on Automata, Languages, and Programming
                  (ICALP)},
  pages        = {31:1--31:17},
  year = {2021}
}

@article{Cunningham86,
  title={Improved bounds for matroid partition and intersection algorithms},
  author={Cunningham, William H},
  journal={SIAM Journal on Computing},
  volume={15},
  number={4},
  pages={948--957},
  year={1986}
}

@inproceedings{ChekuriQ16,
  author       = {Chandra Chekuri and
                  Kent Quanrud},
  title        = {A Fast Approximation for Maximum Weight Matroid Intersection},
  booktitle    = {Proceedings of the Twenty-Seventh Annual {ACM-SIAM} Symposium on Discrete
                  Algorithms (SODA)},
  pages        = {445--457},
  year         = {2016}
}

@article{Frank81,
title = {A weighted matroid intersection algorithm},
journal = {Journal of Algorithms},
volume = {2},
number = {4},
pages = {328-336},
year = {1981},
issn = {0196-6774},
author = {András Frank}
}

@article{KaoLST01,
  author       = {Ming{-}Yang Kao and
                  Tak Wah Lam and
                  Wing{-}Kin Sung and
                  Hing{-}Fung Ting},
  title        = {A Decomposition Theorem for Maximum Weight Bipartite Matchings},
  journal      = {{SIAM} J. Comput.},
  volume       = {31},
  number       = {1},
  pages        = {18--26},
  year         = {2001}
}

@inproceedings{Assadi24,
  title={A Simple (1- $\varepsilon$)-Approximation Semi-Streaming Algorithm for Maximum (Weighted) Matching},
  author={Assadi, Sepehr},
  booktitle={7th SIAM Symposium on Simplicity in Algorithms (SOSA)},
  pages={337--354},
  year={2024}
}

@inproceedings{GoelKK12,
  title={On the communication and streaming complexity of maximum bipartite matching},
  author={Goel, Ashish and Kapralov, Michael and Khanna, Sanjeev},
  booktitle={Proceedings of the Twenty-Third Annual {ACM-SIAM} Symposium on Discrete
                  Algorithms (SODA)},
  pages={468--485},
  year={2012}
}

@InProceedings{assadiB21,
  author =	{Assadi, Sepehr and Behnezhad, Soheil},
  title =	{{Beating Two-Thirds For Random-Order Streaming Matching}},
  booktitle =	{48th International Colloquium on Automata, Languages, and Programming (ICALP)},
  pages =	{19:1--19:13},
  ISBN =	{978-3-95977-195-5},
  ISSN =	{1868-8969},
  year =	{2021}
  
}

@InProceedings{AssadiB18,
  author =	{Assadi, Sepehr and Bernstein, Aaron},
  title =	{{Towards a Unified Theory of Sparsification for Matching Problems}},
  booktitle =	{2nd Symposium on Simplicity in Algorithms (SOSA)},
  pages =	{11:1--11:20},
  ISBN =	{978-3-95977-099-6},
  ISSN =	{2190-6807},
  year =	{2019}
}

@inproceedings{HuangS23,
  title={(1-eps)-Approximate Maximum Weighted Matching in poly (1/eps, log n) Time in the Distributed and Parallel Settings},
  author={Huang, Shang-En and Su, Hsin-Hao},
  booktitle={ACM Symposium on Principles of Distributed Computing},
  year={2023}
}

@inproceedings{LiuKK23,
  author       = {Quanquan C. Liu and
                  Yiduo Ke and
                  Samir Khuller},
  title        = {Scalable Auction Algorithms for Bipartite Maximum Matching Problems},
  booktitle    = {Approximation, Randomization, and Combinatorial Optimization. Algorithms
                  and Techniques (APPROX/RANDOM)},
  pages        = {28:1--28:24},
  year = {2023}
}

@article{AzarmehrB23,
  title={Robust communication complexity of matching: EDCS achieves 5/6 approximation},
  author={Azarmehr, Amir and Behnezhad, Soheil},
  journal={arXiv preprint arXiv:2305.01070},
  year={2023}
}

@article{LeeSV13,
author = {Lee, Jon and Sviridenko, Maxim and Vondr\'{a}k, Jan},
title = {Matroid Matching: The Power of Local Search},
journal = {SIAM Journal on Computing},
volume = {42},
number = {1},
pages = {357-379},
year = {2013}
}
\appendix

\section{Rounding Arbitrary Weights to Integers}
\label{sec:rescaling}

Suppose we are given as input matroids $\matroid'_1=(\groundset',\independentsets'_1,w)$ and $\matroid'_1=(\groundset',\independentsets'_1,w)$ with $w:\groundset'\rightarrow \mathbb{R}^{> 0}$ with aspect ratio $R:=\frac{\max_{e\in \groundset'}w(e)}{\min_{e\in \groundset'}w(e)}$. In this section, we are concerned with showing that via rescaling and rounding, we can construct matroids $\matroid^R_{1}=(\groundset',\independentsets'_1,w_r)$ and $\matroid^R_2=(\groundset',\independentsets'_2,w_r)$ such that $w_r(e)\in \set{1,2,\cdots, W}$ where $W$ is an integer and $W\leq \lceil\frac{2R}{\varepsilon}\rceil$. We now describe our transformation.

\begin{itemize}
    \item {\bf Rescaling} We obtain the intermediate weight function $w_s(e)=\frac{2w(e)}{\varepsilon\cdot W_{\min}}$, where $W_{\min}=\min_{e\in \groundset'}w(e)$.
    \item {\bf Rounding} We bucket $e\in \groundset'$ in buckets with geometrically increasing boundaries. Thus, if $w_s(e)\in [(1+\varepsilon)^i,(1+\varepsilon)^{i+1})$ (the $i$th bucket), then we let $w_r(e)=\lfloor(1+\varepsilon)^i\rfloor$ be the rounded weights.
\end{itemize}

We now show that the rounded weights are not much different from the scaled weights.

\begin{lemma}\label{lem:scalingandrounding}
    For all $e\in \groundset'$, $w_r(e)\geq \frac{w_s(e)}{(1+\varepsilon)^2}$ and $w_r(e)\leq w_s(e)$.
\end{lemma}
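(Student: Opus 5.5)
The argument is a direct comparison between $w_s(e)$ and the bucket endpoint $(1+\varepsilon)^i$, followed by the effect of the floor. Fix $e\in\groundset'$ and let $i$ be its bucket index, so $(1+\varepsilon)^i\le w_s(e)<(1+\varepsilon)^{i+1}$. I take the buckets to be indexed by $i\ge 0$. This is justified by the rescaling: since $w(e)\ge W_{\min}$, we have $w_s(e)=\frac{2w(e)}{\varepsilon W_{\min}}\ge \frac{2}{\varepsilon}$, which exceeds $1$ for $\varepsilon<2$. In particular $w_s(e)\ge 2/\varepsilon$ will be the key size bound used below.

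The upper bound is immediate: $w_r(e)=\lfloor(1+\varepsilon)^i\rfloor\le(1+\varepsilon)^i\le w_s(e)$.

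For the lower bound I will lose at most one $(1+\varepsilon)$ factor from the bucket width and at most one more from the floor.

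\textbf{Bucket width.} Since $w_s(e)<(1+\varepsilon)^{i+1}$, we get $(1+\varepsilon)^i> w_s(e)/(1+\varepsilon)$.

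\textbf{Floor.} We have $\lfloor x\rfloor\ge x-1$, so it suffices to show $x-1\ge x/(1+\varepsilon)$ for $x=(1+\varepsilon)^i$. This inequality is equivalent to $x\ge (1+\varepsilon)/\varepsilon$. Here the rescaling factor $2/\varepsilon$ is used. From $x>w_s(e)/(1+\varepsilon)\ge \frac{2}{\varepsilon(1+\varepsilon)}$ we need $\frac{2}{\varepsilon(1+\varepsilon)}\ge\frac{1+\varepsilon}{\varepsilon}$, i.e.\ $(1+\varepsilon)^2\le 2$. This holds for $\varepsilon\le\sqrt2-1$, which I take as the assumed range of small $\varepsilon$; for larger $\varepsilon$ the constant $2$ in the rescaling would have to be adjusted. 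Chaining the two steps gives $w_r(e)\ge x/(1+\varepsilon)> w_s(e)/(1+\varepsilon)^2$.

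The only delicate point is the floor step. It requires the bucket's left endpoint to be large enough, about $1/\varepsilon$, so that rounding down to an integer costs only a $(1+\varepsilon)$ factor. This is exactly why the rescaling multiplies by $2/\varepsilon$. If a cleaner bound is preferred, I would instead argue directly: for $x\ge 2/(\varepsilon(1+\varepsilon))$ we have $x-1\ge x\bigl(1-\tfrac{\varepsilon(1+\varepsilon)}{2}\bigr)\ge x/(1+\varepsilon)$ whenever $\varepsilon$ is small, keeping the constraint on $\varepsilon$ explicit.
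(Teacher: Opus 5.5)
Your proof is correct and is essentially the paper's argument. The floor costs at most an additive $1$, which is at most a $(1+\varepsilon)$ factor because the $2/\varepsilon$ rescaling makes the bucket endpoint $(1+\varepsilon)^i$ large, and the bucket width costs the second $(1+\varepsilon)$ factor. Your version is more careful than the paper's: the paper misstates the needed condition as $(1+\varepsilon)^i\geq\frac{1}{\varepsilon(1+\varepsilon)}$ rather than the correct $(1+\varepsilon)^i\geq\frac{1+\varepsilon}{\varepsilon}$ that you derive, and so it never surfaces the mild smallness assumption on $\varepsilon$ (your $(1+\varepsilon)^2\leq 2$) that this estimate requires.
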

\begin{proof}
    We first show that for all relevant values of $i$, $(1+\varepsilon)^i-1\geq \frac{(1+\varepsilon)^{i+1}}{(1+\varepsilon)^2}$. Subsequently, we will show that the lemma holds. After moving around some terms, we can conclude that the first equation holds provided $(1+\varepsilon)^{i}\geq \frac{1}{\varepsilon\cdot(1+\varepsilon)}$. 
    
    Now, consider $w_s(e)$ for $e\in\groundset'$. Observe that $w_{s}(e)\geq \frac{2}{\varepsilon}$. Thus, for each $e\in \groundset'$, we can conclude that if $e$ belongs in the $i_e$th bucket, then $(1+\varepsilon)^{i_e}\geq \frac{1}{\varepsilon\cdot(1+\varepsilon)}$. Thus, we have,
    \begin{align*}
        w_{r}(e)\geq (1+\varepsilon)^{i_e}-1 \geq (1+\varepsilon)^{i_e-1}\geq w_s(e)\cdot (1+\varepsilon)^{-2}. \qedhere
    \end{align*}
\end{proof}

\begin{lemma}
   Let $S$ is an $\alpha$-approximate common independent set of $\matroid^R_1,\matroid^R_2$ then $S$ is also a $(1-2\varepsilon)\cdot \alpha$-approximate maximum weight independent set of $\matroid'_1,\matroid'_2$. 
\end{lemma}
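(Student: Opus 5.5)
The plan is a standard sandwich argument built on \Cref{lem:scalingandrounding}. The first observation is that the families of common independent sets of $\matroid^R_1,\matroid^R_2$ and of $\matroid'_1,\matroid'_2$ coincide, since only the weight function changes. Therefore $S$ is feasible for the original instance, and we only need to compare weights.

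Let $I^*$ be a maximum $w$-weight common independent set of $\matroid'_1,\matroid'_2$, and let $I^R$ be a maximum $w_r$-weight common independent set of $\matroid^R_1,\matroid^R_2$. Since $w_s = c\cdot w$ with $c=\frac{2}{\varepsilon W_{\min}}>0$, every comparison between $w$-weights is equivalent to the same comparison between $w_s$-weights. So we can work with $w_s$ throughout and rescale at the end.

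The chain of inequalities is as follows. First apply the upper bound $w_r\le w_s$ pointwise to $S$, giving $w_s(S)\ge w_r(S)$. Next use the $\alpha$-approximation hypothesis, $w_r(S)\ge \alpha\, w_r(I^R)$. Then use optimality of $I^R$ under $w_r$, since $I^*$ is also feasible for the rounded instance, to get $w_r(I^R)\ge w_r(I^*)$. Finally apply the lower bound of \Cref{lem:scalingandrounding} elementwise on $I^*$, giving $w_r(I^*)\ge (1+\varepsilon)^{-2} w_s(I^*)$. Combining these,
\begin{align*}
w_s(S)\ \ge\ \alpha\,(1+\varepsilon)^{-2}\, w_s(I^*).
\end{align*}
Dividing by $c$ gives $w(S)\ge \alpha(1+\varepsilon)^{-2}w(I^*)$.

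It remains to check that $(1+\varepsilon)^{-2}\ge 1-2\varepsilon$. This is equivalent to $(1-2\varepsilon)(1+\varepsilon)^2\le 1$. Expanding the left side gives $1-3\varepsilon^2-2\varepsilon^3\le 1$, which holds for all $\varepsilon\ge 0$.

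There is no real obstacle here. The only point needing care is that \Cref{lem:scalingandrounding} must hold for every element, including the lightest ones. This is guaranteed by the rescaling $w_s(e)\ge 2/\varepsilon$, which that lemma already uses. With that in place the whole argument is a pointwise sandwich $ (1+\varepsilon)^{-2}w_s\le w_r\le w_s$, summed over the fixed sets $S$ and $I^*$.
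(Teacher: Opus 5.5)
Your proposal is correct and follows essentially the same sandwich argument as the paper: $w_s(S)\ge w_r(S)\ge \alpha\, w_r(I^R)\ge \alpha\, w_r(I^*)\ge \alpha(1+\varepsilon)^{-2}w_s(I^*)$, then rescaling back to $w$ and using $(1+\varepsilon)^{-2}\ge 1-2\varepsilon$. You also make explicit that $S$ is feasible for the original instance and that this last inequality holds for all $\varepsilon\ge 0$; the paper leaves the former implicit and states the latter under $\varepsilon\le\frac{1}{2}$, so these are clarifications rather than a different route.
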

\begin{proof}
 Let $S^*$ and $S^r$ be a maximum weight common independent set with respect to $w$ and $w_r$ respectively. For the common independent set $S$ we have that $w_r(S) \geq \alpha w_r(S^r)$. Since, $S^*$ is also a common independent set, we get $w_r(S) \geq \alpha w_r(S^*) \geq \alpha (1+\varepsilon)^{-2} w_s(S^*)$ by \Cref{lem:scalingandrounding}. 
 
 Further, we have $w^r(S) \geq \alpha \frac{2}{\varepsilon\cdot W_{\min}}(1+\varepsilon)^{-2} w(S^*)$ by the definition of $w_s$. Finally, if $\varepsilon \leq \frac{1}{2}$, we get
  \begin{equation*}
       w(S) \geq \frac{W_{\min}\varepsilon}{2}w_s(S) \geq  \frac{W_{\min}\varepsilon}{2}w_r(S)\geq \frac{W_{\min}\varepsilon}{2}\left(\alpha \frac{2}{\varepsilon\cdot W_{\min}}(1+\varepsilon)^{-2}\right) w(S^*) 
    \end{equation*}
The rightmost side can be rewritten as follows
\begin{equation*}
    \alpha (1+\varepsilon)^{-2}w(S^*) \geq \alpha (1-2\varepsilon)w(S^*).
\end{equation*}
This concludes the proof.
\end{proof}
\end{document}